\newcommand{\aend}{\hat{a}}
\newcommand{\bend}{\hat{b}}
\renewcommand{\c}{\widehat{c}}
\newcommand{\C}{\widetilde{C}}
\newcommand{\compC}{\mathbb{C}}
\newcommand{\Ctilde}{\widetilde{C}}
\newcommand{\dhat}{\hat{d}}
\newcommand{\D}{\hat{D}}
\newcommand{\Ehat}{\hat{E}}
\newcommand{\Ehattilde}{\widetilde{\hat{E}}}
\newcommand{\fhat}{\hat{f}}
\newcommand{\ghat}{\hat{g}}
\newcommand{\ghattilde}{\widetilde{\hat{g}}}
\newcommand{\gfn}{\mathbf{g}}
\newcommand{\gfntilde}{\widetilde{\mathbf{g}}}
\newcommand{\gfnhat}{\hat{\mathbf{g}}}
\newcommand{\gfnhattilde}{\widetilde{\hat{\mathbf{g}}}}
\newcommand{\G}{\widetilde{G}}
\newcommand{\Ghat}{\hat{G}}
\newcommand{\Ghattilde}{\widetilde{\hat{G}}}
\newcommand{\hhat}{\hat{h}}
\newcommand{\Hessian}{\mathbf{H}}
\newcommand{\halfH}[1][\theta]{\mathbb{H}_{#1}}
\newcommand{\Iinv}{\mathbf{I}}
\newcommand{\Iinvhat}{\hat{\mathbf{I}}}
\newcommand{\J}{\widetilde{J}_c}
\newcommand{\ellhat}{\hat{\ell}}
\newcommand{\m}{\tilde{m}}
\newcommand{\M}{\widetilde{M}}
\newcommand{\Mcyclic}{M_{\cyclic}}
\newcommand{\n}{\widetilde{n}}
\newcommand{\N}{\widetilde{N}}
\newcommand{\Nhat}{\hat{N}}
\newcommand{\bigO}{\mathcal{O}}
\renewcommand{\P}{\mathcal{P}}
\newcommand{\Phat}{\hat{P}}
\newcommand{\Phattilde}{\widetilde{\hat{P}}}
\newcommand{\Phatsingle}{\hat{\mathcal{P}}}
\newcommand{\Phatsingletilde}{\widetilde{\hat{\mathcal{P}}}}
\newcommand{\Pb}{\widetilde P^{(b)}}
\newcommand{\tP}{\mathcal{P}}
\newcommand{\Pmodeltilde}{\widetilde{\mathsf{P}}}
\newcommand{\Pmodelhat}{\hat{\mathsf{P}}}
\newcommand{\Pmodelhattilde}{\widetilde{\hat{\mathsf{P}}}}
\newcommand{\Pinfty}{\widetilde P^{(\infty)}}
\newcommand{\Q}{\widetilde{Q}}
\newcommand{\Qhat}{\hat{Q}}
\newcommand{\Qhattilde}{\widetilde{\hat{Q}}}
\newcommand{\R}{\widetilde{R}}
\newcommand{\rhat}{\hat{r}}
\newcommand{\Rhat}{\hat{R}}
\newcommand{\Rhattilde}{\widetilde{\hat{R}}}
\newcommand{\realR}{\mathbb{R}}
\newcommand{\tR}{\mathcal{R}}
\newcommand{\tRhat}{\hat{\mathcal{R}}}
\newcommand{\tRhattilde}{\widetilde{\hat{\mathcal{R}}}}
\newcommand{\s}{\widetilde{s}}
\newcommand{\Shat}{\hat{S}}
\newcommand{\Shattilde}{\widetilde{\hat{S}}}
\newcommand{\St}{\widetilde{S}}
\newcommand{\T}{\widetilde{T}}
\newcommand{\That}{\hat{T}}
\newcommand{\Thattilde}{\widetilde{\hat{T}}}
\newcommand{\U}{\widetilde{U}}
\newcommand{\Uhat}{\hat{U}}
\newcommand{\Uhattilde}{\widetilde{\hat{U}}}
\newcommand{\V}{\widetilde{V}}
\newcommand{\Vhat}{\hat{V}}
\newcommand{\Vhattilde}{\widetilde{\hat{V}}}
\renewcommand{\v}{\widetilde{v}}
\newcommand{\Vb}{\widetilde V^{(b)}}
\newcommand{\What}{\hat{W}}
\newcommand{\Y}{\widetilde{Y}}
\newcommand{\intZ}{\mathbb{Z}}
\newcommand{\betatilde}{\tilde{\beta}}
\newcommand{\DeltaR}{\Delta_{\hat{\pmb{\mathit{\Sigma}}}}}
\newcommand{\gammahat}{\hat{\gamma}}
\newcommand{\muhat}{\hat{\mu}}
\newcommand{\phitilde}{\tilde{\phi}}
\newcommand{\Phitilde}{\widetilde{\Phi}}
\newcommand{\Phicheck}{\check{\Phi}}
\newcommand{\phihat}{\hat{\phi}}
\newcommand{\Phihat}{\hat{\Phi}}
\newcommand{\psihat}{\hat{\psi}}
\newcommand{\Psitilde}{\widetilde{\Psi}}
\newcommand{\Sigmahat}{\hat{\Sigma}}
\newcommand{\SigmahatR}{\hat{\pmb{\mathit{\Sigma}}}}
\newcommand{\Xicheck}{\check{\Xi}}
\newcommand{\gfrac}[2]{\genfrac{}{}{0pt}{}{#1}{#2}}
\newcommand{\MeijerG}[5]{G^{#1}_{#2} \left( \left. \gfrac{#3}{#4} \right\rvert {#5} \right)}
\newcommand{\choice}[1]{\overset{\bullet}{#1}}
\newtheorem{theorem}{Theorem}[section]
\newtheorem{lemma}[theorem]{Lemma}
\newtheorem{prop}[theorem]{Proposition}
\theoremstyle{remark}
\newtheorem{rmk}[theorem]{Remark}
\newtheorem{regularitycondition}[theorem]{Regularity Condition}
\DeclareMathOperator{\Ai}{Ai}
\DeclareMathOperator{\cyclic}{cyclic}
\DeclareMathOperator{\diag}{diag}
\DeclareMathOperator{\In}{in}
\DeclareMathOperator{\Mei}{Mei}
\DeclareMathOperator{\Out}{out}
\DeclareMathOperator{\pre}{pre}
\DeclareMathOperator{\supp}{supp}
\newtheorem{RHP}[theorem]{RH Problem}
\title{Hard to soft edge transition for the Muttalib-Borodin ensembles with integer parameter $\theta$}
\author{
  Dong Wang\thanks{School of Mathematical Sciences, University of Chinese Academy of Sciences, Beijing, P.~R.~China 100049 \newline
    email: \href{mailto:wangdong@wangd-math.xyz}{\protect\nolinkurl{wangdong@wangd-math.xyz}}}
   \and   
    Shuai-Xia Xu\thanks{Institut Franco-Chinois de l’Energie Nucl\'eaire, Sun Yat-sen University, Guangzhou, P.~R.~China 510275 \newline
    e-mail: \href{xushx3@mail.sysu.edu.cn}{\protect\nolinkurl{xushx3@mail.sysu.edu.cn}}}
}
\begin{document}
\maketitle
\begin{abstract}  
  We find the universal limiting correlation kernels of the Muttalib-Borodin (MB) ensembles with integer parameter $\theta \geq 2$ at $0$ in the transitive regime between the hard edge regime and the soft edge regime. This generalizes the previously studied hard edge to soft edge transition in unitarily invariant random matrix theory by Its, Kuijlaars and \"{O}stensson, which is the $\theta = 1$ special case of our MB ensemble. The derivation is based on the vector Riemann-Hilbert (RH) problems for the biorthogonal polynomials associated with the MB ensemble. In the analysis of the RH problems, we construct matrix-valued model RH problems of size $(\theta + 1) \times (\theta + 1)$, and prove the solvability of the model RH problems by a vanishing lemma. The new limiting correlation kernels are proved to be universal for a large class of potential functions, and they interpolate the Meijer G-kernels for the hard edge regime and the Airy kernel for the soft edge regime. We observe that the new limiting correlation kernels have the integrability that is not seen in previous studies in random matrix theory and determinantal point processes. In the $\theta = 2$ case, we give a detailed analysis of the Lax pair associated with the model RH problem, show that it results in the Chazy I equation and the similarity reduction of the Boussinesq equation, which has a Painlev\'{e} IV reduction, and find that the Lax pair is in the Drinfeld-Sokolov hierarchies.
\end{abstract}
\tableofcontents

\section{Introduction} \label{sec:introduction}

We consider the Muttalib-Borodin (MB) ensemble with integer parameter $\theta \geq 2$ and an analytic potential function that has transition behaviour at the edge $0$. More precisely, we consider the system of $n$ particles $x_1< \dotsb < x_n$ distributed on $[0,\infty)$ with joint probability density function
\begin{equation} \label{eq:bioLag}
  P(x_1, \dotsc, x_n) = \frac{1}{\mathcal{Z}_n} \prod_{1 \leq i < j \leq n} (x_i - x_j)(x^{\theta}_i - x^{\theta}_j) \prod^n_{i = 1} x^{\alpha} e^{-N V(x)},
\end{equation}
where $\mathcal{Z}_n$ is the normalization constant, $\alpha > -1$ and $V$ is a real analytic function on $[0, +\infty)$ that satisfies
\begin{equation}
  \lim_{x \to +\infty} \frac{V(x)}{\log x} = +\infty.
\end{equation}
We are interested in the limiting behaviour of the particles as $n, N \to \infty$ and $N/n \to 1$. We concentrate on the limiting distribution of the smallest eigenvalues.

We refer the interested readers to \cite{Wang-Zhang21} for a general discussion of this model. See also \cite{Betea-Occelli20a}, \cite{Betea-Occelli20}, \cite{Bloom-Levenberg-Totik-Wielonsky17}, \cite{Borodin99}, \cite{Butez17}, \cite{Charlier21}, \cite{Charlier-Lenells-Mauersberger19}, \cite{Cheliotis14}, \cite{Claeys-Girotti-Stivigny19}, \cite{Claeys-Romano14}, \cite{Credner-Eichelsbacher15}, \cite{Eichelsbacher-Sommerauer-Stolz11}, \cite{Forrester23}, \cite{Forrester-Wang15}, \cite{Kazi-Muttalib22}, \cite{Kuijlaars16}, \cite{Kuijlaars-Molag19}, \cite{Lambert18}, \cite{Lueck-Sommers-Zirnbauer06}, \cite{Lubinsky-Sidi22}, \cite{Molag20}, \cite{Muttalib95}, \cite{Alam-Muttalib-Wang-Yadav20}, \cite{Zhang15}, \cite{Zhang17} for research in various aspects of this model. We note that the MB ensemble is a typical biorthogonal ensemble, and it is a generalization of the classical Laguerre type Hermitian matrix model eigenvalue distribution \cite{Anderson-Guionnet-Zeitouni10,Forrester10} that is the $\theta = 1$ special case of \eqref{eq:bioLag}. It is known from the general framework of biorthogonal ensembles that the MB ensemble is a determinantal point process, so that there exists a correlation kernel $K_n(x,y) = K^{(V, N)}_n(x, y)$ defined in \eqref{eq:sum_form_K} below such that the density function can be rewritten in the following determinantal form:
\begin{equation} \label{eq:kernel_intro}
  \frac{1}{n!}\det\left(K_n(x_i,x_j)\right)_{i,j=1}^n.
\end{equation}

When $V(x)$ is a linear function, it is shown in Borodin's pioneering work \cite{Borodin99} that the scaling limit of $K_n$ near the origin (also known as the hard edge) converges to a family of limiting kernels depending on the parameter $\theta$ as $n \to \infty$. This new family of limiting kernels, which describes the limiting distribution of the left-most particles in \eqref{eq:bioLag}, does not occur in Hermitian matrix models, unless $\theta = 1$. They reduce to the classical Bessel kernel \cite{Forrester93,Tracy-Widom94b} when $\theta = 1$, and reduce to the Meijer G-kernels encountered mainly in the products of random matrices and related models (cf. \cite{Akemann-Strahov16,Bertola-Bothner14,Bertola-Gekhtman-Szmigielski14,Kuijlaars-Zhang14,Silva-Zhang20}) when $\theta$ or $1/\theta$ is a positive integer, as observed in \cite{Kuijlaars-Stivigny14}. The new family of limiting kernels are proved to be universal, in the sense that if $V$ satisfies $V''(x)x+V'(x)>0$ for all $x > 0$. For the proof, see \cite{Kuijlaars-Molag19} for $\theta = 1/2$, \cite{Molag20} for $\theta^{-1} \in \intZ$, \cite{Wang-Zhang21} for $\theta \in \intZ$ and \cite{Wang23} for general $\theta \in \realR_+$. For this kind of $V$, we say that the model is in the hard edge regime. It is shown in \cite{Claeys-Romano14} that in the hard edge regime mentioned above, the density of the equilibrium measure for the MB ensemble, which is the limiting empirical distribution of the particles \cite[Theorem 2.1 and Corollary 2.2]{Eichelsbacher-Sommerauer-Stolz11} and \cite[Theorem 1.2 and Corollary 1.4]{Butez17}, blows up at the speed of $x^{-1/(\theta + 1)}$ as $x \to 0$.

On the other hand, for a large class of $V(x)$, the equilibrium measure is supported on an interval $[a, b]$ with $a > 0$, and its density function vanishes like a square root at $a$. It implies, although a rigorous proof is missing in the literature, that the scaling limit of $K_n$ near $a$ converges to the classical Airy kernel that defines the Tracy-Widom distribution, as in $\theta = 1$ \cite{Deift-Kriecherbauer-McLaughlin-Venakides-Zhou99}. For this kind of $V$, we say that the model is in the soft edge regime.

In this paper, we consider the scaling limit of $K_n$ at $0$ when $V(x)$ is in the transition regime between the hard edge one and the soft edge one. We assume that $\theta$ is an integer greater than $1$. In this regime, the equilibrium measure is supported on an interval $[0, b]$, and its density function vanishes at the speed of $x^{(\theta - 1)/(\theta + 1)}$ as $x \to 0$. For example,
\begin{equation} \label{eq:quadratic_potential}
  V(x) = x^2 + \rho x, \quad \text{with} \quad \rho = -\frac{2\sqrt{2}}{\sqrt{\theta}},
\end{equation}
is in the transition regime. If $\theta = 2$, the limiting empirical distribution of particles is shown in Figure \ref{fig:eq_measure} \cite[Figure 6]{Claeys-Romano14} when $\rho$ in \eqref{eq:quadratic_potential} changes around $2$.

\begin{figure}[htb]
  \centering
  \begin{minipage}[b]{0.24\linewidth}
    \includegraphics[width=\linewidth]{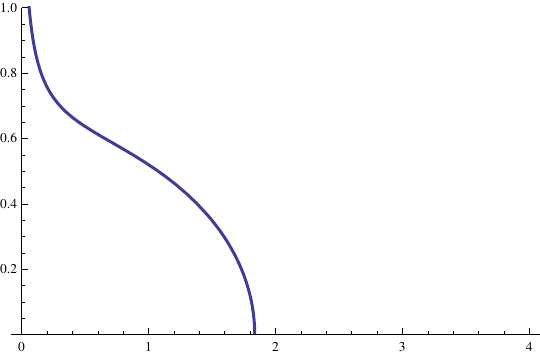}
  \end{minipage}
  \begin{minipage}[b]{0.24\linewidth}
    \includegraphics[width=\linewidth]{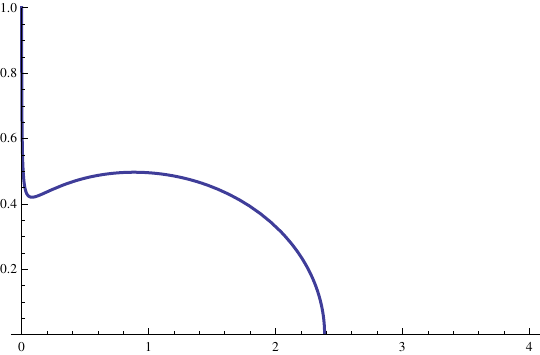}
  \end{minipage}
  \begin{minipage}[b]{0.24\linewidth}
    \includegraphics[width=\linewidth]{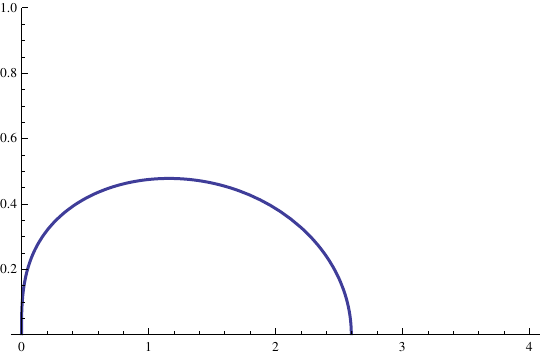}
  \end{minipage}
  \begin{minipage}[b]{0.24\linewidth}
    \includegraphics[width=\linewidth]{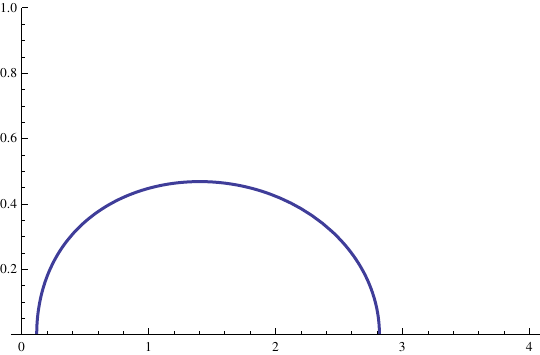}
  \end{minipage}
  \caption{The equilibrium measure of the MB ensemble with $\theta = 2$ and $V(x) = x^2 + \rho x$ with $\rho = 0$, $-1.8$, $-2$ and $-2.5$. When $\rho = 0$ and $-1.8$, the ensemble is in the hard edge regime in the sense that the density of the equilibrium measure blows up like $x^{-1/3}$. When $\rho = 0$, it is proved in \cite{Wang-Zhang21} that the left-most particles are around $0$ and their limiting distribution is given by the Meijer G-kernel. It is expected to hold also when $\rho = -1.8$. When $\rho = -2.5$, the density of the equilibrium measure vanishes like a square root at the left-end point of the support of the equilibrium measure, and we expect the left-most particles to be there and their limiting distribution to be the Tracy-Widom distribution given by the Airy kernel. When $\rho = -2$, the density of the equilibrium measure vanishes at the new speed of $x^{1/3}$ at $0$. Courtesy of Tom Claeys.}
  \label{fig:eq_measure}
\end{figure}

In this paper, we show that the correlation kernel of the MB ensemble converges to a new limit in the transition regime, and study the properties of the new limiting kernel.

\subsection{Statement of main results}

Before stating the main results, we need to impose some conditions on the potential function $V$ and introduce some other functions. The equilibrium measure associated with $V$ is the probability measure on $[0, +\infty)$ that minimizes the energy functional
\begin{equation} \label{eq:eq_measure_functional}
  I^{(V)}(\nu) := \frac{1}{2} \iint \log \frac{1}{\lvert x - y \rvert} d\nu(x) d\nu(y) + \frac{1}{2} \iint \log \frac{1}{\lvert x^{\theta} - y^{\theta} \rvert} d\nu(x) d\nu(y) + \int V(x) d\nu(x).
\end{equation}
By an argument similar to that in \cite[Section 6.6]{Deift99}, we have that the equilibrium measure $\mu = \mu^{(V)}$ is characterized by the following Euler-Lagrange conditions:
\begin{align}
  \int \log \lvert x - y \rvert d\nu(y) + \int \log \lvert x^{\theta} - y^{\theta} \rvert d\nu(y) - V(x) = {}& \ell, & x \in {}& \supp(\nu), \label{eq:E-L_1} \\
  \int \log \lvert x - y \rvert d\nu(y) + \int \log \lvert x^{\theta} - y^{\theta} \rvert d\nu(y) - V(x) \leq {}& \ell, & x \in {}& [0, +\infty) \setminus \supp(\nu), \label{eq:E-L_2}
\end{align}
where $\ell$ is a real constant.

Throughout this paper, we assume that the equilibrium measure satisfies the following regularity condition:
\begin{regularitycondition} (One-cut regularity condition away from the left-end point) \label{reg:primary}
  The equilibrium measure $\mu = \mu^{(V)}$ is supported on an interval $[a, b]$ with continuous density, where $0 \leq a < b < \infty$, such that
  \begin{equation} \label{eq:density_function}
    d\mu(x) = \psi(x) dx, \quad x \in (a, b),
  \end{equation}
  where $\psi(x)$ is a continuous function on $(a, b)$, and for small positive numbers $\epsilon_1$ and $\epsilon_2$, the following hold:
  \begin{enumerate}
  \item \label{enu:reg:primary:1}
    $\psi(x) > 0$ strictly for all $x \in (a + \epsilon_1, b - \epsilon_2)$.
  \item \label{enu:reg:primary:2_b}
    There is a positive number $d_2 = d^{(V)}_2$ and $\delta_2 > 0$ depending on $\epsilon_2$, such that
    \begin{equation} \label{eq:behaviour_of_density_b}
      \psi(x) = d_2 (b - x)^{\frac{1}{2}} (1 + h_2(x)), \quad x \in (b - \epsilon_2, b),
    \end{equation}
    $h_2(x) \to 0$ as $x \to b_-$ and $\lvert h_2(x) \rvert \leq \delta_2$ for all $x \in (b - \epsilon_2, b)$. Also \eqref{eq:E-L_2} holds strictly for all $x \in (b - \epsilon_2, b)$.
  \item \label{enu:reg:primary:3}
    With $d\mu(x) = \psi(x) dx$ supported on $[a, b]$, inequality \eqref{eq:E-L_2} holds strictly for $x \in (b, +\infty)$ and, if $a > 0$, also for $x \in (0, a)$.
  \end{enumerate}
\end{regularitycondition}

Like the Laguerre type Hermitian matrix model, whose eigenvalue distribution is the $\theta = 1$ specialization of \eqref{eq:bioLag}, the equilibrium measure of the Muttalib-Borodin ensemble has two typical behaviours at the left end point of its support:
\begin{regularitycondition} (Hard edge regime condition) \label{reg:hard-edge}
  The equilibrium measure $\mu = \mu^{(V)}$ satisfies Regularity Condition \ref{reg:primary} with $a = 0$, and the density function $\psi(x)$ blows up like $x^{-\frac{1}{\theta + 1}}$ as $x \to 0_+$.
\end{regularitycondition}

\begin{regularitycondition} (Soft edge regime condition) \label{reg:soft-edge}
  The equilibrium measure $\mu = \mu^{(V)}$ satisfies Regularity Condition \ref{reg:primary} with $a > 0$, and the density function $\psi(x)$ vanishes like $(x - a)^{\frac{1}{2}}$ as $x \to a_+$. 
\end{regularitycondition}


In this paper, we require that $V$ is one-cut regular in the transition regime. We split the regularity condition into two parts, the one-cut regularity (away from the left-end point) given in Regularity Condition \ref{reg:primary} and the transition regime condition below:

\begin{regularitycondition} (Transition regime condition) \label{enu:reg:primary:2}
  The equilibrium measure $\mu = \mu^{(V)}$ satisfies the one-cut regular condition \ref{reg:primary} with $a = 0$, and on an interval $[0, \epsilon_1]$
  \begin{equation} \label{eq:behaviour_of_density}
    \psi(x) = d_1 x^{\frac{\theta - 1}{\theta + 1}} (1 + h_1(x)), \quad x \in (0, \epsilon_1), 
  \end{equation}
  such that $d_1 = d^{(V)}_1$ is a positive number, $h_1(x) \to 0$ as $x \to 0_+$ and $\lvert h_1(x) \rvert \leq \delta_1$, a positive constant, for all $x \in (0, \epsilon_1)$.
\end{regularitycondition}

This is the transition regime between the hard-edge and the soft edge  universality classes. One concrete example is given by \eqref{eq:quadratic_potential}. If $\rho < -2\sqrt{2}/\sqrt{\theta}$, $V(x)$ in \eqref{eq:quadratic_potential} is in the soft edge regime, and if $\rho > -2\sqrt{2}/\sqrt{\theta}$, $V(x)$ in \eqref{eq:quadratic_potential} is in the hard edge regime, see \cite[Section 4.5.2]{Claeys-Romano14} and Figure \ref{fig:eq_measure} for $\theta = 2$. When $\theta = 1$, the MB ensemble becomes the Hermitian matrix model, and the transition regime has been studied in \cite{Its-Kuijlaars-Ostensson08}, \cite{Its-Kuijlaars-Ostensson09}, \cite{Claeys-Kuijlaars08}. We find and analyse new limiting correlation kernels in the transition regime for $\theta \geq 2$. 

Let $t \in (0, \infty)$ and define
\begin{equation} \label{eq:defn_Vt}
  V_t(x) = \frac{1}{t} V(x), \quad t > 0.
\end{equation}
We define two sequences of monic polynomials $\{p_j(x)=p^{(V_t)}_{n, j}(x) \}_{j=0}^{\infty}$ and $\{q_k(x)=q^{(V_t)}_{n, k}(x)\}_{k=0}^\infty$ by the biorthogonal conditions
\begin{equation} \label{eq:biorthogonality}
  \int^{\infty}_0 p_{j}(x) q_{k}(x^\theta) x^{\alpha} e^{-nV_t(x)} dx = \kappa_j \delta_{j, k},
\end{equation}
where $p_j(x)$ is of degree $j$, $q_k(x)$ is of degree $k$, and $\kappa_j=\kappa^{(V_t)}_{n,j} \neq 0$. The existence and uniqueness of the $\{ p_j(x) \}$ and $\{ q_k(x) \}$ are guaranteed by the fact that functions $\{ x^{k\theta} \}^{n - 1}_{k = 1}$ form a Chebyshev system in the sense of \cite[First definition in Section 4.4]{Nikishin-Sorokin91}. The functions $p_n(z)$ and $q_n(z^{\theta})$ can be interpreted as the averages over the Muttalib-Borodin ensemble, see \cite[Equation (1.14)]{Wang-Zhang21}. Also they are building blocks of the correlation kernel of the Muttalib-Borodin ensemble, since the correlation kernel introduced in \eqref{eq:kernel_intro} is
\begin{equation} \label{eq:sum_form_K}
  K^{(V; N)}_n(x, y) = x^{\alpha} e^{-nV_t(x)}\sum^{n - 1}_{j = 0} \frac{p^{(V_t)}_{n, j}(x)q^{(V_t)}_{n, j}(y^{\theta})}{\kappa^{(V_t)}_{n, j}}, \quad t = \frac{n}{N}.
\end{equation}

Let $V$ be a potential that is one-cut regular in the transition regime, that is, its equilibrium measure $\mu^{(V)}$ satisfies both Regularity Conditions \ref{reg:primary} and \ref{enu:reg:primary:2}. Let
\begin{equation} \label{eq:defn_c}
  c = b\theta/(1 + \theta)^{1 + 1/\theta}
\end{equation}
be a positive number depending on $V$, where $b$ is the right end of the support of $\mu^{(V)}$ as in \eqref{eq:density_function}. We denote
\begin{align} \label{eq:defn_rho}
  \rho = \rho^{(V)} = {}& \left. \pi d_1 \middle/ \sin \left( \frac{2\pi}{\theta + 1} \right) \right., & A_1 = {}& c^{\frac{2\theta}{\theta + 1}} \rho,
\end{align}
where $d_1$ is the same as in \eqref{eq:behaviour_of_density}.

Our first main result of this paper is the limit of the biorthogonal polynomials.
\begin{theorem} \label{thm:main}
  Suppose $V$ satisfies Regularity conditions \ref{reg:primary} and \ref{enu:reg:primary:2}. In addition, suppose $A_2 \neq 0$, where $A_2$ is determined by $V$ through \eqref{eq:defn_A2}. Let $\tau$ be in a compact subset of $\realR$, $x$ be in a compact subset of $\compC$, and $y$ be in a compact subset of $\overline{\halfH}$, where 
  \begin{equation}
    \halfH = \{z \in \compC : z = 0 \text{ or } \arg z \in ( -\frac{\pi}{\theta}, \frac{\pi}{\theta}) \}.
  \end{equation}
  Let
  \begin{equation} \label{eq:t_and_tau}
    t = t_n = 1 - \sqrt{\frac{A_1}{n}} \tau
  \end{equation}
  that depends on $n$. Let $p_j(x) = p^{(V_t)}_{n, k}(x)$, $q_k(x) = q^{(V_t)}_{n, k}(x)$ and $\kappa_j = \kappa^{(V_t)}_{n,j}$ be defined by \eqref{eq:biorthogonality}. As $n \to \infty$ and $j = k = n$, we have the asymptotic formulas
\begin{align}
  p_n \left( \frac{x}{(\rho n)^{\frac{\theta + 1}{2\theta}}} \right) = {}& \sqrt{\frac{\theta}{\theta + 1}} c^{\frac{2(\alpha + 1) - \theta}{2(\theta + 1)}} \left( \rho n \right)^{\frac{1}{2}(\frac{\alpha + 1}{\theta} - \frac{1}{2})} (-1)^n \exp \left( nV_t \left( \frac{x}{(\rho n)^{\frac{\theta + 1}{2\theta}}} \right) \right) e^{n (\ell_t - \Re \gfntilde_t(0))} \notag \\
  & \times \left( \phi^{(\tau)}(x) + \bigO(n^{-\frac{1}{2(2\theta + 1)}}) \right), \label{eq:asy_p_n} \\
  q_n \left( \frac{y^{\theta}}{(\rho n)^{\frac{\theta + 1}{2}}} \right) = {}& \sqrt{\theta + 1} c^{\frac{(\alpha + 1/2) \theta}{\theta + 1}} \left( \rho n \right)^{\frac{1}{2}(\alpha +  \frac{1}{2})} (-1)^n e^{n \Re \gfntilde_t(0)} (\phitilde^{(\tau)}(y) + \bigO(n^{-\frac{1}{2(2\theta + 1)}})), \label{eq:asy_q_n} \\
  \kappa_n = {}& 2\pi \theta^{-1/2} c^{\alpha + 1} e^{n \ell_t} \left( 1 + \bigO(n^{-\frac{1}{2(2\theta + 1)}}) \right), \label{eq:asy_kappa}
\end{align}
where the limit functions $\phi^{(\tau)}(x)$ and $\phitilde^{(\tau)}(y)$ are defined in \eqref{eq:defn_phi^tau} and \eqref{eq:defn_phitilde^tau} below, the constants $\Re \gfntilde_t(0)$ and $\ell_t$ are defined in \eqref{eq:gfn_gfntilde_t} and \eqref{eq:defn_phi} below, and the error terms are uniform in $\tau$ and $x$.
\end{theorem}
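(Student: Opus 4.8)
The plan is to carry out a Deift--Zhou steepest-descent analysis of the vector-valued Riemann--Hilbert (RH) problem that characterizes the biorthogonal polynomials $p_n$ and $q_n$. First I would recall this characterization: the polynomials $p_n(z)$, $q_n(z^\theta)$, together with their Cauchy transforms against the weight $x^\alpha e^{-nV_t(x)}$, assemble into the solution $Y = Y_n^{(V_t)}$ of a $(\theta+1)\times(\theta+1)$ RH problem whose jump lives on $[0,\infty)$ (equivalently, after the standard unfolding $z \mapsto z^{1/\theta}$, on the $\theta$ rays $\arg z = 2\pi k/\theta$, $k = 0, \dots, \theta-1$). The quantities we must control are then entries of $Y$: $p_n(x)$ and $q_n(y^\theta)$ are, up to explicit algebraic prefactors, specific entries of $Y$ evaluated at $x$ and at $y$, while $\kappa_n$ is read off from a subleading coefficient in the expansion of $Y$ at infinity.

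Next I would introduce the equilibrium-measure data for $V_t$: the pair of $g$-functions (equivalently, the single $\lambda$-function on the associated two-sheeted surface) coupling $x$ with $x^\theta$, the Lagrange constant $\ell_t$, and the auxiliary constant $\gfntilde_t(0)$, and perform the first transformation $Y \mapsto T$ normalizing the problem at infinity, so that the jump on $(0,b)$ becomes oscillatory and the jumps off $[0,b]$ become exponentially small. Because $t = t_n \to 1$ at the rate \eqref{eq:t_and_tau}, everything must be expanded in powers of $t-1 \sim -\sqrt{A_1/n}\,\tau$; in particular I need the local behaviour of the density and of $g_t$ near $0$ to one order beyond \eqref{eq:behaviour_of_density}, which is where $d_1$, the constant $\rho$ and $A_1$ in \eqref{eq:defn_rho} enter, and where the hypothesis $A_2 \neq 0$ (with $A_2$ as in \eqref{eq:defn_A2}) is needed to keep the local scaling nondegenerate. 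Then I would open lenses around $(0,b)$, $T \mapsto S$, so that on the lens boundaries the jump is $I$ plus an exponentially small term away from $0$ and $b$.

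The parametrix construction is the heart of the argument. The global parametrix $P^{(\infty)}$ solves the model problem with jump supported on $(0,b)$ only; it is built from $(\theta+1)$-st roots of a rational function together with the Szeg\H{o} function for the weight $x^\alpha$, with the expected $(b-z)^{-1/4}$ behaviour at $b$ and an algebraic singularity of order related to $1/(\theta+1)$ at $0$. In a fixed disk around $b$ I would use the standard Airy parametrix. In a disk around $0$ that is allowed to shrink with $n$ I would build the local parametrix out of the new $(\theta+1)\times(\theta+1)$ model RH problem, whose solvability for all real $\tau$ is supplied by the vanishing lemma; the conformal change of variable and the parameter are chosen so that the rescaled spectral variable in that disk is exactly the argument of $\phi^{(\tau)}$ and $\phitilde^{(\tau)}$, i.e.\ these limit functions are (entries of) the model solution. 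The decisive technical point is the matching: on the boundary of the $0$-disk the local parametrix must agree with $P^{(\infty)}$ up to $\bigO(n^{-1/(2(2\theta+1))})$, which requires the large-variable asymptotics of the model solution — in all $\theta+1$ sectors — to reproduce the local expansion of $P^{(\infty)}$. Balancing the shrinking radius of this disk against the decay of the neglected jumps is what produces the error exponent $1/(2(2\theta+1))$.

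Finally, the ratio $R = S P^{-1}$ of $S$ and the patched parametrix solves a small-norm RH problem with jump $I + \bigO(n^{-1/(2(2\theta+1))})$, whence $R = I + \bigO(n^{-1/(2(2\theta+1))})$ uniformly on $\compC$. Unwinding $Y = (\text{explicit})\, R\, (\text{parametrix})\, (\text{conjugations})$ and evaluating at $x/(\rho n)^{(\theta+1)/(2\theta)}$ and $y^\theta/(\rho n)^{(\theta+1)/2}$ — which lie inside the shrinking $0$-disk — yields \eqref{eq:asy_p_n} and \eqref{eq:asy_q_n}; the prefactors come from $P^{(\infty)}$ evaluated there and from the scalar conjugations $e^{nV_t}$, $e^{n\ell_t}$, $e^{n\Re\gfntilde_t(0)}$, while \eqref{eq:asy_kappa} comes from the subleading coefficient of $Y$ at infinity. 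The main obstacle I anticipate is everything surrounding the model RH problem at $0$: establishing its solvability via the vanishing lemma, extracting its asymptotics in all sectors with the correct $\tau$-dependence, and executing the matching on a disk that shrinks with $n$, all at matrix size $\theta+1$; a secondary, more bookkeeping-heavy difficulty is the expansion of the $t$-dependent equilibrium data near $0$ to the order that pins down $A_1$ and $A_2$.
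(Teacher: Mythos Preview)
Your overall strategy is correct and matches the paper's: steepest descent, equilibrium-measure normalization, lens opening, Airy parametrix at $b$, the new $(\theta+1)\times(\theta+1)$ model RH problem \ref{RHP:general_model} at $0$ in a shrinking disk, and a small-norm argument giving the error $\bigO(n^{-1/(2(2\theta+1))})$. You have also correctly located where $A_1$, $A_2$, $\rho$ and the shrinking-radius balancing enter.

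However, the technical vehicle you propose differs from what the paper actually does, in two substantive ways. First, the paper does \emph{not} assemble $p_n$, $q_n$ and their Cauchy transforms into a single $(\theta+1)\times(\theta+1)$ matrix RH problem. Instead it runs two separate \emph{vector} ($1\times 2$) RH problems, one for $Y=(p_n,\,Cp_n)$ with $Y_1$ analytic on $\compC$ and $Y_2$ on $\halfH$, and a parallel one for $\widetilde{Y}=(q_n,\,\widetilde{C}q_n)$; only inside the shrinking disk at $0$ is the $Q$-vector unfolded (via $z\mapsto z^{1/\theta}$) into a $(\theta+1)$-row $U$, and that is where the matrix model problem \ref{RHP:general_model} comes in. Second, the final small-norm step is not a matrix ratio $R=SP^{-1}$ but a \emph{scalar shifted} RH problem for a function $\tR(s)$ (resp.\ $\widetilde{\tR}(s)$), obtained by pushing $R_1,R_2$ through the conformal map $J_c$ onto a single $s$-plane; Lemmas \ref{lem:tRest} and \ref{lem:tildetRest} give $\tR,\widetilde{\tR}=1+\bigO(n^{-1/(2(2\theta+1))})$, and the asymptotics \eqref{eq:asy_p_n}--\eqref{eq:asy_kappa} are then read off by unwinding. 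Your matrix-from-the-start route is a plausible alternative, but the paper's vector/scalar machinery (inherited from \cite{Wang-Zhang21}) is what is actually used, and it avoids having to control a full $(\theta+1)\times(\theta+1)$ global parametrix and matrix small-norm problem.
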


We remark that the technical condition $A_2 \neq 0$ is probably unnecessary for Theorem \ref{thm:main} to hold. But the proof in our paper requires it.

Our next main result is the limit of $K^{(V; N)}_n(x, y)$.
\begin{theorem} \label{thm:universality}
  Suppose $V$ satisfies Regularity Conditions \ref{reg:primary} and \ref{enu:reg:primary:2} and also $A_2 \neq 0$ as in Theorem \ref{thm:main}. Furthermore, suppose for all $t \in (0, 1)$, the potential $V_t$ satisfies Regularity Conditions \ref{reg:primary} and \ref{reg:soft-edge}.

  Let $\tau$ be in a compact subset of $\realR$ and let $x$ and $y$ be in a compact subset of $[0, +\infty)$. Let $n \to \infty$ and $N = N(n) \to \infty$, in such a way that $\lim_{n \to \infty} (N - n)/\sqrt{n} = -\sqrt{A_1} \tau$ where and $A_1$ is defined in \eqref{eq:defn_rho}. Then we have the limit of the correlation kernel defined in \eqref{eq:sum_form_K}
  \begin{equation} \label{eq:kernel_limit_formula}
    \lim_{n \to \infty} \frac{1}{(\rho n)^{\frac{\theta + 1}{2\theta}}} K^{(V; N)}_n \left( \frac{x}{(\rho n)^{\frac{\theta + 1}{2\theta}}}, \frac{y}{(\rho n)^{\frac{\theta + 1}{2\theta}}} \right) = K^{(\tau)}(x, y) := \frac{\theta}{2\pi} x^{\alpha} \int^{\infty}_{\tau} \phi^{(\sigma)}(x) \phitilde^{(\sigma)}(y) d\sigma.
  \end{equation}
\end{theorem}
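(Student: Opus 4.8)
The plan is to deduce Theorem~\ref{thm:universality} from Theorem~\ref{thm:main} by a Riemann-sum argument. The structural fact to exploit is that the $j$-th summand of \eqref{eq:sum_form_K} depends only on $j$ and $N$: since $nV_t = NV$, the biorthogonality \eqref{eq:biorthogonality} determining $p^{(V_t)}_{n,j}$, $q^{(V_t)}_{n,j}$, $\kappa^{(V_t)}_{n,j}$ sees only the degree $j$ and the weight $x^\alpha e^{-NV(x)}$. Writing $p^{[N]}_j$, $q^{[N]}_j$, $\kappa^{[N]}_j$ for these objects, we have $p^{[N]}_j = p^{(V_{j/N})}_{j,j}$ and likewise for $q$ and $\kappa$, so the $j$-th term of \eqref{eq:sum_form_K} is the \emph{diagonal} biorthogonal datum of the potential $V_{j/N}$ with $j$ particles. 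When $t = n/N$ is calibrated so that Theorem~\ref{thm:main} applies, i.e.\ $n/N = 1 - \sqrt{A_1/n}\,\tau + o(n^{-1/2})$, we have $j/N < 1$ for all $0 \le j \le n-1$; the assumption that $V_s$ is one-cut soft-edge for every $s \in (0,1)$ then makes each term the diagonal datum of a soft-edge potential, so Theorem~\ref{thm:main} may be applied term by term under $n \mapsto j$, $t \mapsto j/N$, $\tau \mapsto \sigma_j$, where $j/N = 1 - \sqrt{A_1/j}\,\sigma_j$. Here $\sigma_{n-1} \to \tau$, and for $j$ in a $\bigO(\sqrt n)$-neighbourhood of $n$ the map $j \mapsto \sigma_j$ is decreasing with spacing $|\sigma_j - \sigma_{j+1}| = (1+o(1))(A_1 n)^{-1/2}$.

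For the main part I would fix a large $R > 0$ and sum over the $j$ with $\sigma_j \in [\tau, R]$, for which $n - j = \bigO(R\sqrt n)$ and $j/n = 1 + \bigO(n^{-1/2})$. Substituting \eqref{eq:asy_p_n}, \eqref{eq:asy_q_n}, \eqref{eq:asy_kappa} (under $n \mapsto j$, $t \mapsto j/N$) into the $j$-th term of \eqref{eq:sum_form_K} at the rescaled arguments, all exponentially large factors cancel in pairs: $\exp(jV_{j/N}(\cdot)) = e^{NV(\cdot)}$ from \eqref{eq:asy_p_n} against the weight $e^{-nV_t(\cdot)} = e^{-NV(\cdot)}$, the factors $e^{\mp j\Re\gfntilde_{j/N}(0)}$ from \eqref{eq:asy_p_n} and \eqref{eq:asy_q_n}, the factor $e^{j\ell_{j/N}}$ from \eqref{eq:asy_p_n} against $\kappa^{[N]}_j$ in \eqref{eq:asy_kappa}, and the two $(-1)^j$. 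Simplifying the surviving algebraic prefactors (using $\sqrt{\theta/(\theta+1)}\,\sqrt{\theta+1}\,\theta^{1/2} = \theta$) gives, uniformly for $\sigma_j \in [\tau, R]$,
\[
  x_j^{\alpha}\, e^{-NV(x_j)}\, \frac{p^{[N]}_j(x_j)\, q^{[N]}_j\!\big(y^{\theta}(\rho n)^{-(\theta+1)/2}\big)}{\kappa^{[N]}_j} = \frac{\theta}{2\pi}\, c^{-\frac{\theta}{\theta+1}}\, (\rho j)^{\frac{\alpha(\theta+1)+1}{2\theta}}\, x_j^{\alpha}\, \phi^{(\sigma_j)}(x)\,\phitilde^{(\sigma_j)}(y)\,\big(1 + \bigO(n^{-\epsilon})\big),
\]
with $x_j := x(\rho n)^{-(\theta+1)/(2\theta)}$ and $\epsilon := (2(2\theta+1))^{-1}$; here $j/n = 1 + \bigO(n^{-1/2})$ lets one replace $(\rho j)^{\cdots}$ and the arguments of $\phi^{(\sigma_j)},\phitilde^{(\sigma_j)}$ by their values at $n$, up to an $\bigO(n^{-1/2})$ error absorbed into $\bigO(n^{-\epsilon})$. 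Multiplying by the normalization $(\rho n)^{-(\theta+1)/(2\theta)}$ and summing, the spacing of the $\sigma_j$ converts the sum into $(1+o(1))\sqrt{A_1 n}\int_\tau^R(\cdots)\,d\sigma$; all powers of $n$, $\rho$ and $c$ then cancel once $\sqrt{A_1} = c^{\theta/(\theta+1)}\rho^{1/2}$ (from \eqref{eq:defn_rho}) is used, leaving the constant $\frac{\theta}{2\pi}x^{\alpha}$, so the $\sigma_j \in [\tau,R]$ part tends to $\frac{\theta}{2\pi}x^{\alpha}\int_\tau^R \phi^{(\sigma)}(x)\phitilde^{(\sigma)}(y)\,d\sigma$ with total error $\bigO(R\,n^{-\epsilon})$.

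It remains to control the tail $\sigma_j > R$ and then let $R \to \infty$. For $j$ with $j/N$ bounded away from $1$ the potential $V_{j/N}$ is soft-edge with equilibrium support $[a_{j/N}, b_{j/N}]$, $a_{j/N} \ge c_0 > 0$, so the rescaled point $x_j \to 0$ lies well to the left of the support and standard Riemann--Hilbert / potential-theoretic estimates make the corresponding terms super-exponentially small, contributing $o(1)$ regardless of $R$. The remaining crossover $j$, with $j/N \to 1$ but $\sigma_j \to \infty$, are still governed by Theorem~\ref{thm:main}-type asymptotics, now in the regime where $\phi^{(\sigma_j)}(x)\phitilde^{(\sigma_j)}(y)$ decays exponentially in $\sigma_j$ --- the model RH problem degenerating to the Airy parametrix as $\sigma \to +\infty$; the same decay gives $\int_R^\infty \phi^{(\sigma)}(x)\phitilde^{(\sigma)}(y)\,d\sigma \to 0$, so the integral in \eqref{eq:kernel_limit_formula} converges and both tails vanish as $R \to \infty$, uniformly in $n$. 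Letting $n \to \infty$ and then $R \to \infty$ yields \eqref{eq:kernel_limit_formula}.

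The hard part will be exactly this last step: producing a decay estimate for $p^{[N]}_j(x_j)\,q^{[N]}_j(\cdot)/\kappa^{[N]}_j$ at the rescaled point that is uniform over all $0 \le j \le n-1$ --- covering both the genuinely soft-edge terms (where one wants a support-based exponential bound) and the crossover terms (where one wants a bound matching the soft-edge tail of the model RH problem) --- and reconciling it with the $\sigma \to \infty$ decay of $\phi^{(\sigma)}\phitilde^{(\sigma)}$. A lighter, but still necessary, point is that the $\bigO(n^{-\epsilon})$ error in Theorem~\ref{thm:main} be uniform over the $\bigO(R\sqrt n)$ indices of the main part, so that summing it gives only $\bigO(R\,n^{-\epsilon})$.
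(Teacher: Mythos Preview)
Your outline is essentially the paper's proof: split the sum into a top window of width $\bigO(\sqrt n)$, where Theorem~\ref{thm:main} applied term by term (via the rescaling $p^{(V_t)}_{n,j} = p^{(V_{jt/n})}_{j,j}$) yields the Riemann sum, and a tail to be controlled separately. The paper formalizes this as a three-way split (Proposition~\ref{prop:universality_in_three}) at $j = \lfloor n - M\sqrt n\rfloor$ and $j = \lfloor\delta n\rfloor$. Your acknowledged hard part --- the uniform decay over the crossover range --- is exactly what Section~\ref{sec:RH_soft_edge} supplies: a \emph{separate} soft-edge Riemann--Hilbert analysis, uniform in $t = j/N \in [C^{-1}, 1 - Cn^{-1/2}]$, built on the equilibrium-measure deformation along the curve $C_2$ of Sections~\ref{subsec:deformation_along_C2_small_t}--\ref{subsec:deformation_along_C2_large_t}; its output (Lemma~\ref{lem:tail}) combined with the Euler--Lagrange inequality of Proposition~\ref{prop:phihat} gives the $e^{-\epsilon j(1-j/N)^2}$ decay you anticipate, rather than extending Theorem~\ref{thm:main} into the regime $\sigma_j \to \infty$. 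One point your sketch glosses over: for the very bottom $j \le \delta n$ this uniform RH analysis is unavailable (it requires $t \ge C^{-1}$), and ``standard RH'' is not automatic when $j$ is not large; the paper instead gives an elementary argument there, locating the zeros of $p_j$ and $q_j$ inside a fixed interval via the Chebyshev-system and interlacing properties and bounding $\kappa_j$ from below directly.
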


We remark that all conditions in Theorem \ref{thm:universality} are satisfied for the quadratic $V$ in \eqref{eq:quadratic_potential}. The condition for $V_t$ ($t \in (0, 1)$) might be avoided if we could make use of the Christoffel-Darboux type formula \cite[Theorem 1.1]{Claeys-Romano14} for the biorthogonal polynomials. This improvement will be investigated in a further publication. The advantage of our proof of Theorem \ref{thm:universality} that does not rely on the Christoffel-Darboux type formula is its potential to be generalized to all $\theta \in \realR_+$.

The theorem below shows that the limiting correlation kernel in Theorem \ref{thm:universality} interpolates the limiting correlation kernels in the hard edge regime and that in the soft edge regime.

\begin{theorem} \label{thm:kernel_asy}
  \begin{enumerate}
  \item \label{enu:thm:kernel_asy:1}
    As $\tau \to -\infty$, for any $x, y \in [0, +\infty)$, we have
    \begin{equation}
      \lim_{\tau \to -\infty} (-\tau)^{-\frac{\theta + 1}{\theta}} K^{(\tau)} \left( (-\tau)^{-\frac{\theta + 1}{\theta}} x, (-\tau)^{-\frac{\theta + 1}{\theta}} y \right) = K^{(\Mei)}(x, y),
    \end{equation}
    where
    \begin{equation}
      K^{(\Mei)}(x, y) = \theta^2 \int^1_0 (ux)^{\alpha} \phi^{(\Mei)}(ux) \phitilde^{(\Mei)}(uy) du
    \end{equation}
    is the limiting kernel $K^{(\alpha, \theta)}(x, y)$ in \cite[Theorem 1.3]{Wang-Zhang21} that defines the hard edge universality of the MB ensemble. Here $\phi^{(\Mei)}$ and $\phitilde^{(\Mei)}$ are expressed in Meijer G-functions, as in \eqref{eq:expr_phi^Mei} and \eqref{eq:expr_phitilde^Mei}.
  \item \label{enu:thm:kernel_asy:2}
    As $\tau \to +\infty$, for any $x, y \in \realR$, we have
    \begin{multline} \label{eq:thm:kernel_asy:2}
      \lim_{\tau \to +\infty} \frac{f(x; \tau)}{f(y; \tau)} (c_1 \tau)^{\frac{\theta + 1}{\theta}} c_2 \tau^{-\frac{4}{3}} K^{(\tau)} \left( (c_1 \tau)^{\frac{\theta + 1}{\theta}} (1 - c_2 \tau^{-\frac{4}{3}}x), (c_1 \tau)^{\frac{\theta + 1}{\theta}} (1 - c_2 \tau^{-\frac{4}{3}}y) \right) \\
      = K^{(\Ai)}(x, y) = \int^{\infty}_0 \Ai(x + u) \Ai(y + u) du,
    \end{multline}
    where $K^{(\Ai)}$ is the Airy kernel that defines the soft edge universality of Hermitian matrix models \cite{Deift99},
    \begin{align} \label{eq:defn_c_1_c_2}
      c_1 = {}& \frac{\theta^2}{\theta^2-1}\theta^{-\frac{1}{\theta+1}}, & c_2 = {}& \frac{(\theta - 1)^{\frac{2}{3}} (\theta + 1)^{\frac{5}{3}}}{2^{\frac{1}{3}} \theta^2}, & f(x; \tau) = e^{\frac{\tau^2}{2} (g_0(\xi) + g_1(\xi))},
    \end{align}
    such that $g_0$ and $g_1$ are defined in \eqref{eq: g_another}, and $\xi = 1 - \theta c_2 \tau^{-4/3} x$.
  \end{enumerate}
\end{theorem}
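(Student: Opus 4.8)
The plan is to reduce both statements to an asymptotic analysis, as $\tau\to\mp\infty$, of the $(\theta+1)\times(\theta+1)$ model RH problem whose solution $\Psi^{(\tau)}$ furnishes the building blocks $\phi^{(\tau)}$ and $\phitilde^{(\tau)}$, and then to pass to the limit inside the integral representation \eqref{eq:kernel_limit_formula} by dominated convergence. In both cases $|\tau|$ plays the role of a large parameter, and the mechanism is the usual one: a Deift--Zhou steepest-descent analysis of $\Psi^{(\tau)}$ produces, after an explicit rescaling of the spatial variables, a limiting RH problem that we recognize as the one governing the target kernel; the $\tau$-dependent prefactors and the Jacobian of an accompanying change of the integration variable $\sigma$ then conspire to reproduce the target integral formula.

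For part \ref{enu:thm:kernel_asy:1}, take $-\tau>0$ large and run the steepest-descent analysis of $\Psi^{(\tau)}$ of hard-edge type: a $g$-function normalization at infinity, opening of lenses along the jump contours, and a local parametrix at the origin built from the very same $(\theta+1)\times(\theta+1)$ Meijer-G model that underlies the hard-edge kernel of \cite{Wang-Zhang21}. One shows that $\Psi^{(\tau)}$, under the scaling $x\mapsto(-\tau)^{-(\theta+1)/\theta}x$ (and the companion scaling for the $y$-variable), converges, uniformly for $x$ in compact subsets of $[0,\infty)$ and uniformly in $\sigma$, to the solution $\Psi^{(\Mei)}$ of the hard-edge model RH problem. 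Reading off the entries that define $\phi^{(\tau)},\phitilde^{(\tau)}$ gives an asymptotic identity of the schematic form
$$(-\tau)^{-\frac{\theta+1}{2\theta}}\,\phi^{(\sigma)}\!\left((-\tau)^{-\frac{\theta+1}{\theta}}x\right)\;\longrightarrow\;\text{(scale)}\cdot\phi^{(\Mei)}\big(u(\sigma)\,x\big),$$
and similarly for $\phitilde$, where $\sigma\mapsto u=u(\sigma)\in(0,1)$ is the explicit change of variables that carries $\int_\tau^\infty d\sigma$ onto $\theta^2\int_0^1\,du$; substituting into \eqref{eq:kernel_limit_formula} and interchanging limit and integral (legitimate because the steepest-descent estimates are uniform in $\sigma$ and because $\phi^{(\sigma)},\phitilde^{(\sigma)}$ decay at the two ends of the $\sigma$-range) produces $K^{(\Mei)}(x,y)$.

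For part \ref{enu:thm:kernel_asy:2}, the structure is identical but the large-$\tau$ analysis is of soft-edge type: now $\tau>0$ is large, the relevant phase function acquires a square-root-vanishing critical point at the right end of the effective band, the local parametrix there is the classical Airy parametrix, and the double scaling $x\mapsto(c_1\tau)^{(\theta+1)/\theta}(1-c_2\tau^{-4/3}x)$ is precisely the Airy zoom near that point. The conjugating prefactor $f(x;\tau)/f(y;\tau)=e^{\frac{\tau^2}{2}(g_0(\xi)+g_1(\xi))-\cdots}$ is nothing but the residual $g$-function phase left over after the transformations, so it exactly cancels the non-oscillatory exponential parts of $\phi^{(\tau)},\phitilde^{(\tau)}$; what remains tends, in the scaled variables, to $\Ai$ and its companion solution, with the integration variable $\sigma$ supplying the shift $u$ in $\Ai(x+u)$. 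Passing to the limit in \eqref{eq:kernel_limit_formula}, again by dominated convergence using the uniform soft-edge estimates, gives $\int_0^\infty\Ai(x+u)\Ai(y+u)\,du=K^{(\Ai)}(x,y)$, and the constants $c_1,c_2$ in \eqref{eq:defn_c_1_c_2} are pinned down by matching the coefficients of the local Airy change of variables.

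I expect the main obstacle to be part \ref{enu:thm:kernel_asy:2}: one must carry out a genuine double-scaling steepest-descent analysis of a $(\theta+1)\times(\theta+1)$ RH problem --- tracking all $\theta+1$ sheets and the extra triangular structure --- identify the $g$-function precisely enough that the prefactor $f(x;\tau)/f(y;\tau)$ emerges cleanly, and, crucially for the dominated-convergence step, obtain bounds on $\phi^{(\sigma)}(x)$ and $\phitilde^{(\sigma)}(y)$ that are uniform over the \emph{whole} half-line $\sigma\in(\tau,\infty)$: both in the regime $\sigma\to\tau_+$, where the edge is still forming, and in the regime $\sigma\to+\infty$, where these functions must decay because the governing exponential becomes subdominant. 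Part \ref{enu:thm:kernel_asy:1} is technically lighter, since the limiting object is a model RH problem of the same size rather than a genuinely lower-rank degeneration, but it still demands the same care with uniformity. A shortcut I would pursue in parallel is to exploit any Christoffel--Darboux or closed bilinear form for $K^{(\tau)}$ obtainable from the earlier sections (expressing it through $\phi^{(\tau)},\phitilde^{(\tau)}$ and finitely many of their $x$-derivatives), which would replace the dominated-convergence argument by a direct limit of finitely many scalar functions.
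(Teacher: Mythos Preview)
Your plan is correct and matches the paper's approach closely: Sections~8 and~9 carry out exactly the steepest-descent analyses you describe (Meijer-$G$ local parametrix for $\tau\to-\infty$, $g$-function normalization plus Airy local parametrix at $\xi=1$ for $\tau\to+\infty$), obtain Lemma~\ref{lem:phi_phitilde_sy} for $\phi^{(\tau)},\phitilde^{(\tau)}$, and then split the $\sigma$-integral in \eqref{eq:kernel_limit_formula} into a main piece and a tail. One point you should make explicit: the integrable tail bound you need for both parts --- the exponential decay of $\phi^{(\sigma)}\phitilde^{(\sigma)}$ for large positive $\sigma$ (the paper's \eqref{eq:estimate_phi_phitilde_prod}) --- is not a byproduct of the hard-edge analysis but comes from the soft-edge ($\tau\to+\infty$) steepest-descent, so the two analyses are logically intertwined rather than independent; the Christoffel--Darboux shortcut you mention is \emph{not} used in the paper.
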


Here we remark that the conjugation of $K^{(\tau)}$ by $f(x; \tau)/f(y; \tau)$ does not change the probability property of $K^{(\tau)}$. 

The limit functions $\phi^{(\tau)}$ and $\phitilde^{(\tau)}$ in Theorems \ref{thm:main} and \ref{thm:universality} are expressed by a model Riemann-Hilbert (RH) problem that is stated in the next subsection.

\subsection{Model Riemann-Hilbert problem and the limit functions $\phi^{(\tau)}$ and $\phitilde^{(\tau)}$}

We denote for $\theta\in \mathbb{N}$, the $(\theta + 1) \times (\theta + 1)$ matrix
\begin{equation} \label{eq:defn_Mcyclic}
  \Mcyclic = I_1 \oplus
  \begin{pmatrix}
    0 & 0 & \cdots & 0 & 1 \\
    1 & 0 & \cdots & 0 & 0\\
    0 & 1 & \ddots & 0 & 0 \\
    \vdots& \ddots & \ddots & \ddots & \vdots\\
    0& \cdots & 0 & 1 & 0
  \end{pmatrix}.
\end{equation}
Here and below, by $I_m$, $m\in \mathbb{N}$, we mean the identity matrix of size $m$, and for two matrices $A$ and $B$, $A \oplus B$ is the matrix $(
\begin{smallmatrix}
  A & 0 \\
  0 & B
\end{smallmatrix}
)$. Below we state the following model RH problem: (with $\beta = \frac{\theta - 2(\alpha + 1)}{1 + \theta}$)
\begin{RHP} \label{RHP:general_model}
  $\Phi(\xi) := \Phi^{(\tau)}(\xi)$ is a $(\theta + 1) \times (\theta + 1)$ matrix-valued function on $\compC$ except for $\realR$ and $i\realR$. 
   \begin{enumerate} 
  \item \label{enu:RHP:general_model:1}
    $\Phi(\xi)$ satisfies the following jump conditions: with $\realR_+$ and $\realR_-$ oriented from left to right, $\{ it : t \in \realR_+ \}$ oriented upwards, and $\{ -it : t \in \realR_+ \}$ oriented downwards, as shown in Figure \ref{fig:jumps-MeiG} in Section \ref{subsec:local_para},
    \begin{equation} \label{eq:defn_J_hard_to_soft}
      \Phi_+(\xi) = \Phi_-(\xi) J^{(\theta)}_{\Phi}(\xi), \quad \text{where} \quad J^{(\theta)}_{\Phi}(\xi) =
      \begin{cases}
        \begin{pmatrix}
          1 & e^{\beta \pi i} \\
          0 & 1
        \end{pmatrix}
        \oplus I_{\theta - 1},
        & \arg \xi = \frac{\pi}{2}, \\
        \begin{pmatrix}
          1 & -e^{-\beta \pi i} \\
          0 & 1
        \end{pmatrix}
        \oplus I_{\theta - 1},
        & \arg \xi = -\frac{\pi}{2}, \\
        \begin{pmatrix}
          0 & 1 \\
          1 & 0
        \end{pmatrix}
        \oplus I_{\theta - 1},
        & \xi \in \realR_+, \\
        \Mcyclic,
            & \xi \in \realR_-.
      \end{cases}
    \end{equation}
  \item \label{enu:RHP:general_model:2}
    $\Phi(\xi)$ has the following boundary condition: as $\xi \to \infty$ for $\xi \in \compC_{\pm}$
    \begin{equation} \label{eq:asyNew}
      \Phi(\xi) = \left( I + \bigO(\xi^{-1}) \right) \Upsilon(\xi) \Omega_{\pm} e^{-\Theta(\xi)},
    \end{equation}
    where
    \begin{equation}\label{eq:asyold}
      \Theta(\xi) =
      \begin{cases}
        \diag \left( \frac{\theta + 1}{2\theta} e^{\frac{-2\pi i}{\theta + 1}} \xi^{\frac{2}{\theta + 1}} - \tau e^{\frac{-\pi i}{\theta + 1}} \xi^{\frac{1}{\theta + 1}}, \frac{\theta + 1}{2\theta} e^{\frac{2\pi i}{\theta + 1}} \xi^{\frac{2}{\theta + 1}} - \tau e^{\frac{ \pi i}{\theta + 1}} \xi^{\frac{1}{\theta + 1}} \right) & \\
        \oplus \diag \left( \frac{\theta + 1}{2\theta} e^{\frac{2(2k - 1) \pi i}{\theta + 1}} \xi^{\frac{2}{\theta + 1}} - \tau e^{\frac{(2k - 1)\pi i}{\theta + 1} \xi^{\frac{1}{\theta + 1}}} \right)^{\theta}_{k = 2}, & \xi \in \compC_+, \\
        \diag \left(  \frac{\theta + 1}{2\theta} e^{\frac{2\pi i}{\theta + 1}} \xi^{\frac{2}{\theta + 1}} - \tau e^{\frac{ \pi i}{\theta + 1}} \xi^{\frac{1}{\theta + 1}}, \frac{\theta + 1}{2\theta} e^{\frac{-2\pi i}{\theta + 1}} \xi^{\frac{2}{\theta + 1}} - \tau e^{\frac{-\pi i}{\theta + 1}} \xi^{\frac{1}{\theta + 1}} \right) & \\
        \oplus \diag \left( \frac{\theta + 1}{2\theta} e^{\frac{2(2k - 1) \pi i}{\theta + 1}} \xi^{\frac{2}{\theta + 1}} - \tau e^{\frac{(2k - 1)\pi i}{\theta + 1} \xi^{\frac{1}{\theta + 1}}} \right)^{\theta}_{k = 2}, & \xi \in \compC_-,
      \end{cases}
    \end{equation}
    \begin{align}\label{def:Lpm}
      \Upsilon(\xi) = {}& \diag \left( e^{-\frac{k}{\theta + 1} \pi i} \xi^{\frac{k}{\theta + 1}} \right)^{\theta}_{k = 0}, & \Omega_+ = {}& \left( e^{\frac{2jk}{\theta + 1} \pi i} \right)^{\theta}_{j, k = 0}, & \Omega_-= \Omega_+ \left(
                                                                                            \begin{pmatrix}
                                                                                              0 & 1 \\
                                                                                              1 & 0
                                                                                            \end{pmatrix}
                                                                                            \oplus I_{\theta - 1} \right).
    \end{align}
  \item \label{enu:RHP:general_model:3}
    $\Phi(\xi)$ has the following boundary condition as $\xi \to 0$ 
    \begin{equation} \label{eq:defn_N(xi)}
      \Phi(\xi) = N(\xi) \xi^{-\frac{\beta}{2\theta}} \diag \left( \xi^{-\frac{\alpha + 1 - \theta}{\theta}}, 1, \xi^{\frac{1}{\theta}}, \xi^{\frac{2}{\theta}}, \dotsc, \xi^{\frac{\theta - 1}{\theta}} \right)E,
    \end{equation}
    where $N(\xi) = N^{(\tau)}(\xi)$ is analytic at $0$, and $E$ is a constant (except for possibly one $\log \xi$ entry) matrix in each region between the rays. In the sector $\arg \xi \in (\pi/2, \pi)$, $E$ is defined by
    \begin{equation} \label{eq:defn_E_region_II}
      E =
      \begin{pmatrix}
        1 & \vec{0}^T \\
        \vec{u} & C_{\theta \times \theta}
      \end{pmatrix}, \quad \text{where} \quad
      \begin{aligned}
        C_{j, k} = {}& e^{\frac{2j - 2 - \beta}{\theta} (k - 1)\pi i}, & j, k = 1, \dotsc, \theta, \\
        u_j = {}& \frac{1}{e^{\beta \pi i} + e^{\frac{2j - 2 - \beta}{\theta} \pi i}}, & j = 1, \dotsc, \theta,
      \end{aligned}
    \end{equation}
    if $\alpha \notin \intZ$. Otherwise, for a unique $j \in \{ 1, 2, \dotsc, \theta \}$, $e^{\beta \pi i} + e^{\frac{2j - 2 - \beta}{\theta} \pi i} = 0$. Then the $j$-th entry of $\vec{u}$ in \eqref{eq:defn_E_region_II} is replaced by
    \begin{equation} \label{eq:E_log}
      u_j = -\frac{e^{-\beta \pi i}}{2\pi i} \log \xi,
    \end{equation}
    and all other entries of $E$ remain the same.
  \end{enumerate}
\end{RHP}

We have the following result:
\begin{theorem} \label{thm:unique_solvability}
  RH problem \ref{RHP:general_model} has a unique solution for all $\tau \in \realR$.
\end{theorem}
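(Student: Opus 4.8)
The plan is to prove existence first — by a vanishing lemma, which is where essentially all the work lies — and then deduce uniqueness by a short quotient argument.

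\emph{Reduction of existence to a vanishing lemma.} Following the standard methodology for solvability of matrix model RH problems (as in \cite{Kuijlaars-Zhang14,Wang-Zhang21}), I would transform RH Problem~\ref{RHP:general_model} by explicit conjugations into a form to which the theory of singular integral equations applies. Note first that the jump matrices \eqref{eq:defn_J_hard_to_soft} are $\tau$-independent and constant; the whole $\tau$-dependence sits in the behaviour at $\infty$ through $\Theta(\xi)$. I would set $\Psi(\xi) = \Phi(\xi)\bigl(\Upsilon(\xi)\Omega_\pm e^{-\Theta(\xi)}\bigr)^{-1}$ away from a fixed small disc about $0$, absorbing simultaneously the unbounded $\infty$-behaviour and the (non-decaying) jumps on $\realR_\pm$; near the origin I would instead conjugate by the hard-edge Meijer G-parametrix of \cite{Wang-Zhang21}, which carries the same jumps and the same $\xi\to0$ behaviour as $\Phi$ and thereby regularizes the endpoint singularity — this singularity can be stronger than $\xi^{-1}$ in individual entries when $\alpha$ is large, and here $\alpha>-1$ is used. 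The two imaginary rays survive as a jump contour, but for $\theta\ge2$ the difference $\Re(\Theta_1-\Theta_2)$ of the first two diagonal entries of $\Re\Theta$ grows like $|\xi|^{2/(\theta+1)}$ along $\{\pm it:t>0\}$ (because $\sin\frac{2\pi}{\theta+1}\neq0$, which fails precisely at $\theta=1$, consistent with the different model needed there \cite{Its-Kuijlaars-Ostensson08}), so the jump there becomes $I$ plus an exponentially small correction; the remaining jumps live on small/large circles and are bounded. The resulting RH problem has jump matrix in $I+(L^2\cap L^\infty)$ on a fixed oriented contour and is normalized to $I$ at $\infty$, hence is equivalent to a Fredholm integral equation of index $0$; consequently $\Phi$ exists if and only if the homogeneous version of RH Problem~\ref{RHP:general_model} (same jumps and same $0$-behaviour, but $\bigO(\xi^{-1})$ in place of $I$ at $\infty$) has only the zero solution.

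\emph{The vanishing lemma (the main obstacle).} Let $\Phi^{(0)}$ solve the homogeneous problem. I would study the auxiliary matrix $H(\xi) = \Phi^{(0)}(\xi)\,\mathcal C\,\Phi^{(0)}(\bar\xi)^*$ for a constant matrix $\mathcal C$ adapted to the block structure on which $\Mcyclic$ and $(\begin{smallmatrix}0&1\\1&0\end{smallmatrix})\oplus I_{\theta-1}$ act. Using that $\realR\cup i\realR$ is invariant under $\xi\mapsto\bar\xi$ and that $\beta\in\realR$ makes the two imaginary-ray jumps mutually conjugate, $\mathcal C$ can be chosen so that $H$ continues analytically across $i\realR$; moreover $H(\xi)=\bigO(\xi^{-1})$ as $\xi\to\infty$ — for which I would verify $\operatorname{tr}\Theta\equiv0$ directly from \eqref{eq:asyold}, so that the conjugating factors contribute no growth — and $H$ is integrable at $0$ (again by $\alpha>-1$). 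Integrating $H$ around a large circle and contracting onto the real axis then forces $\int(H_+-H_-)=0$, and the goal is to arrange $H_+-H_- = \Phi^{(0)}_-\,W\,(\Phi^{(0)}_-)^*$ with $W$ Hermitian and positive semidefinite on the relevant part of $\realR$, which would give $W^{1/2}(\Phi^{(0)}_-)^*\equiv0$ there and hence, by analytic continuation through the jumps, $\Phi^{(0)}\equiv0$. The genuine difficulty — and the crux of the whole theorem — is that the usual positivity device is unavailable: the jump on $\realR_+$ is Hermitian but indefinite (the coordinate swap) and the jump on $\realR_-$ is the permutation $\Mcyclic$, so the commutator-type $W$ produced by the naive choice of $\mathcal C$ is skew-Hermitian rather than positive. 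I expect one must first deform and combine the positive and negative real rays into a single ray — exploiting the exponential decay along $i\realR$ (valid because $\theta\ge2$) to discard those contributions, and using $\tau\in\realR$ to keep the $\xi^{1/(\theta+1)}$-terms of $\Theta$ under control — so that the effective jump on the surviving ray is positive semidefinite; only then does the $\Phi^{(0)}\mathcal C(\Phi^{(0)})^*$-argument close. This yields invertibility of the Fredholm operator, hence a solution of RH Problem~\ref{RHP:general_model}, for every $\tau\in\realR$.

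\emph{Uniqueness.} With a solution $\Phi$ in hand, uniqueness is routine. From \eqref{eq:asyNew} with $\operatorname{tr}\Theta\equiv0$ and $\det\Upsilon(\xi)=e^{-i\pi\theta/2}\xi^{\theta/2}$, and from \eqref{eq:defn_N(xi)} — where the diagonal exponents sum to $\theta/2$ — the scalar function $\det\Phi$ solves a scalar RH problem whose (essentially unique) solution is proportional to $\xi^{\theta/2}$, so $\det\Phi$ is zero-free on $\compC\setminus\{0\}$ and $\Phi^{-1}$ is analytic there. If $\widetilde{\Phi}$ is any other solution, then $R:=\widetilde{\Phi}\,\Phi^{-1}$ has no jumps (the jump matrices cancel on the right), is analytic on $\compC\setminus\{0\}$, satisfies $R=I+\bigO(\xi^{-1})$ at $\infty$ by \eqref{eq:asyNew}, and near $0$ equals $\widetilde{N}(\xi)N(\xi)^{-1}$ by \eqref{eq:defn_N(xi)} — the factors $\xi^{-\beta/(2\theta)}\diag(\cdots)E$ cancelling, including in the logarithmic case $\alpha\in\intZ$ — so $R$ has at most logarithmic growth at $0$. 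Hence $0$ is a removable singularity, $R$ is entire and bounded, and $R\equiv I$ by Liouville; thus $\widetilde{\Phi}=\Phi$.
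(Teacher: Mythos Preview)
Your reduction of existence to a vanishing lemma, and your uniqueness argument, are both correct and match the paper's Section~\ref{sec:vanishing_lemma} and Proposition~\ref{lem:existence_H_to_S} closely (the paper conjugates by $\Phi^{(\Mei)}$ inside $|\xi|<1$ and by $e^{\Theta-\Lambda}(\Phi^{(\Mei)})^{-1}$ outside, then invokes Fredholm theory exactly as you describe).

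The gap is in the vanishing lemma, and you have already put your finger on it: the jumps on $\realR_\pm$ are permutation matrices, hence unitary, so for any choice of $\mathcal C$ the combination $H_+-H_-$ coming from $\Phi^{(0)}\mathcal C(\Phi^{(0)})^*$ vanishes identically on $\realR$. Your proposed rescue---``deform and combine the positive and negative real rays into a single ray \ldots\ so that the effective jump on the surviving ray is positive semidefinite''---is not made concrete, and I do not see a mechanism by which contour deformation alone can manufacture positivity from unitary jumps. The paper explicitly flags this: vanishing lemmas for RH problems of size $>2$ are rare, and the standard Hermitian-pairing device is not what is used here.

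The paper's proof of Lemma~\ref{lem:vanishing} is structurally different. It works \emph{row by row}: from a single row $(\phi_0,\ldots,\phi_\theta)$ of $\Phi_0$ it assembles a scalar entire function $f(z)$ on $\compC$ and an auxiliary $g(z)$ on the sector $|\arg z|<\pi/\theta$ (equations \eqref{eq: f_Pre_definition}--\eqref{eq:g_defnition}), linked by $g_+(x)-g_-(x)=f(x)$ on $(0,\infty)$. Integrating the product $z^\alpha g(z)\prod_{j=0}^{\theta-1}f(ze^{2j\pi i/\theta})$ around a wedge of opening $2\pi/\theta$, collapsing onto the positive axis, and dividing out the finitely many zeros of $f$ on the interior rays (Lemma~\ref{lem:finite_zeros}, proved via a Carlson-type argument) yields
\[
\int_0^\infty x^\alpha\,f(x)^2\,W(x)\prod_{k=1}^m\frac{1}{x^\theta-a_k^\theta}\,dx=0,
\]
with $W$ real and of constant sign on $(0,\infty)$; the positivity comes from the \emph{square} $f(x)^2$ and the condition $\alpha>-1$, not from any matrix Hermitian structure. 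This forces $f\equiv0$, then $g\equiv0$ by a separate growth argument, hence the row vanishes. This scalar, product-over-rotations construction is the new idea your sketch is missing.
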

We also note that since
\begin{equation}
  \det \Phi(\xi) = \pm (\theta + 1)^{\frac{\theta + 1}{2}} e^{-\frac{\theta(\theta + 1)}{4} \pi i} \xi^{\frac{\theta}{2}}, \quad \xi \in \compC_{\pm},
\end{equation}
$\Phi^{-1}(\xi)$ is well defined.

We denote the analytic function $\phi^{(\tau)}(z)$ on $\compC$ and the analytic function $\phitilde^{(\tau)}(z)$ on $\halfH$ by
\begin{align}
  \phi^{(\tau)}(z) = {}& (-z)^{\frac{\beta}{2}} \times
  \begin{cases}
    \Phi_{0, k}(z^{\theta}), & k = 2, \dotsc, \theta, \text{ and } \arg z \in (\frac{2k - 3}{\theta} \pi, \frac{2k - 1}{\theta} \pi) \\
                             & \text{or } k = 1 \text{ and } \arg z \in (\frac{\pi}{2\theta}, \frac{\pi}{\theta}) \cup (-\frac{\pi}{\theta}, -\frac{\pi}{2\theta}), \\
    \Phi_{0, 1}(z^{\theta}) + e^{\beta \pi i} \Phi_{0, 0}(z^{\theta}), & \arg z \in (0, \frac{\pi}{2\theta}), \\
    \Phi_{0, 1}(z^{\theta}) + e^{-\beta \pi i} \Phi_{0, 0}(z^{\theta}), & \arg z \in (-\frac{\pi}{2\theta}, 0),
  \end{cases} \label{eq:defn_phi^tau} \\
  \phitilde^{(\tau)}(z) = {}& z^{\frac{\theta}{2}(1 + \beta)} \times
  \begin{cases}
    e^{-\frac{1}{2}(1 + \beta) \pi i} (\Phi(z^{\theta}))^{-1}_{0, \theta}, & \arg z \in (\frac{\pi}{2\theta}, \frac{\pi}{\theta}), \\
    e^{-\frac{1}{2}(1 + \beta) \pi i} (\Phi(z^{\theta}))^{-1}_{0, \theta} + e^{\frac{1}{2}(1 + \beta) \pi i} (\Phi(z^{\theta}))^{-1}_{1, \theta}, & \arg z \in (0, \frac{\pi}{2\theta}), \\
    e^{\frac{1}{2}(1 + \beta) \pi i} (\Phi(z^{\theta}))^{-1}_{0, \theta}, & \arg z \in (-\frac{\pi}{\theta}, -\frac{\pi}{2\theta}), \\
    e^{\frac{1}{2}(1 + \beta) \pi i} (\Phi(z^{\theta}))^{-1}_{0, \theta} + e^{-\frac{1}{2}(1 + \beta) \pi i} (\Phi(z^{\theta}))^{-1}_{1, \theta}, & \arg z \in (-\frac{\pi}{2\theta}, 0).
  \end{cases} \label{eq:defn_phitilde^tau}
\end{align}

We have the following limit results for $\phi^{(\tau)}$ and $\phitilde^{(\tau)}$:
\begin{lemma} \label{lem:phi_phitilde_sy}
  Suppose $x$ is in a compact subset of $\compC$ and $y$ is in a compact subset of $\overline{\halfH}$.
  \begin{enumerate}
  \item \label{enu:lem:phi_phitilde_sy:1}
    As $\tau \to -\infty$, we have
    \begin{align}
      \phi^{(\tau)}((-\tau)^{-\frac{\theta + 1}{\theta}} x) = {}& (-\tau)^{-\frac{\theta + 1}{2\theta} \beta} (2\pi)^{1 - \frac{\theta}{2}} \sqrt{\theta + 1} \phi^{(\Mei)}(x) (1 + \bigO((-\tau)^{-\frac{2}{\theta + 1}})), \label{eq:expr_phi^Mei} \\
      \phitilde^{(\tau)}((-\tau)^{-\frac{\theta + 1}{\theta}} y) = {}& (-\tau)^{\theta - \frac{\theta + 1}{2} (1 + \beta)} (2\pi)^{\frac{\theta}{2}} \sqrt{\theta + 1} \phitilde^{(\Mei)}(y) (1 + \bigO((-\tau)^{-\frac{2}{\theta + 1}})), \label{eq:expr_phitilde^Mei}
    \end{align}
    where $\phi^{(\Mei)}$ and $\phitilde^{(\Mei)}$ are the limit functions in \cite[Theorem 1.1]{Wang-Zhang21}
    \begin{align}
      \phi^{(\Mei)}(x) = {}& z^{\theta - \alpha - 1} \MeijerG{\theta, 0}{0, \theta + 1}{-}{\frac{\alpha - \theta + 1}{\theta}, \frac{\alpha - \theta + 2}{\theta}, \dotsc, \frac{\alpha - 1}{\theta}, \frac{\alpha}{\theta}, 0}{x^{\theta}}, \\
      \phitilde^{(\Mei)}(y) = {}& \MeijerG{1, 0}{0, \theta + 1}{-}{0, -\frac{\alpha}{\theta}, \frac{1 - \alpha}{\theta}, \dotsc, \frac{\theta - 1 - \alpha}{\theta}}{y^{\theta}},
    \end{align}
    where $\MeijerG{\theta, 0}{0, \theta + 1}{}{}{\cdot}$ and $\MeijerG{1, 0}{0, \theta + 1}{}{}{\cdot}$ are Meijer G-functions defined in \cite[Chapter V]{Luke69}.
  \item \label{enu:lem:phi_phitilde_sy:2}
    As $\tau \to +\infty$, we have (with $\betatilde = -\frac{2\alpha + 1}{1 + \theta}$)
    \begin{multline} \label{eq:phi_asy_airy}
      \phi^{(\tau)} \left( (c_1 \tau)^{\frac{\theta + 1}{\theta}} (1 - c_2 \tau^{-\frac{4}{3}} x) \right) = (c_1 \tau)^{\frac{\theta + 1}{\theta} \frac{\beta}{2}} 2\sqrt{\pi} \theta^{-\frac{\beta + 1}{2}} \left( \frac{\theta + 1}{2} \right)^{\frac{1}{4}} c^{-\frac{1}{4}}_2 \tau^{\frac{1}{3}} f(x; \tau) \\
      \times (\Ai(x) + \bigO(\tau^{-\frac{2}{3}})),
    \end{multline}
    \begin{multline} \label{eq:phitilde_asy_airy}
      \phitilde^{(\tau)} \left( (c_1 \tau)^{\frac{\theta + 1}{\theta}} (1 - c_2 \tau^{-\frac{4}{3}} x) \right) = (c_1 \tau)^{(\theta + 1)\betatilde} \frac{2 \sqrt{\pi}}{\theta + 1} \theta^{\frac{\betatilde}{2}} \left( \frac{\theta + 1}{2} \right)^{\frac{1}{4}} c^{-\frac{1}{4}}_2 \tau^{\frac{1}{3}} f(x; \tau)^{-1} \\
      \times (\Ai(x) + \bigO(\tau^{-\frac{2}{3}})),
    \end{multline}
    where $c_1$, $c_2$ and $f(x; \tau)$ are defined in \eqref{eq:defn_c_1_c_2}.
  \end{enumerate}
\end{lemma}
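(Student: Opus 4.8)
The plan is to treat RH Problem~\ref{RHP:general_model} as a family of RH problems with large parameter $\tau\to\pm\infty$ and to run a Deift--Zhou steepest descent analysis in each regime; once $\Phi^{(\tau)}$ (and hence $(\Phi^{(\tau)})^{-1}$, using the explicit determinant) is under control on the relevant scale, the asymptotics of $\phi^{(\tau)}$ and $\phitilde^{(\tau)}$ will follow by inserting the result into \eqref{eq:defn_phi^tau}--\eqref{eq:defn_phitilde^tau} and keeping track of the explicit $\tau$-dependent prefactors. The analysis is steered by the exponent $\Theta$ of \eqref{eq:asyold}: writing $u^{(j)}(\xi)=e^{(2j-1)\pi i/(\theta+1)}\xi^{1/(\theta+1)}$ for $j=0,1,\dots,\theta$, the $j$-th diagonal entry of $\Theta$ equals $\frac{\theta+1}{2\theta}(u^{(j)})^2-\tau\,u^{(j)}$ for $\xi\in\compC_+$ (entries $0$ and $1$ being interchanged for $\xi\in\compC_-$), so as $\tau$ ranges over $\realR$ the quadratic and linear summands trade dominance, and the passage from the Meijer G (hard-edge) behaviour to the Airy (soft-edge) behaviour is exactly that trade.

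For part~\ref{enu:lem:phi_phitilde_sy:1} ($\tau\to-\infty$) the linear term dominates. Since $\phi^{(\tau)},\phitilde^{(\tau)}$ are built from $\Phi^{(\tau)}$ evaluated at $z^\theta$ with $z=(-\tau)^{-(\theta+1)/\theta}x$, i.e.\ at $\xi=(-\tau)^{-(\theta+1)}x^\theta$, one rescales $\xi=(-\tau)^{-(\theta+1)}\zeta$, under which $-\tau\,u^{(j)}(\xi)=e^{(2j-1)\pi i/(\theta+1)}\zeta^{1/(\theta+1)}$ while the quadratic term picks up the small factor $\tfrac{\theta+1}{2\theta}(-\tau)^{-2}$. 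Conjugating $\Phi^{(\tau)}((-\tau)^{-(\theta+1)}\zeta)$ by the diagonal matrices $\diag((-\tau)^{-k})_{k=0}^{\theta}$ (from rescaling $\Upsilon$) and by the matching $\zeta$-powers at the origin, one obtains a RH problem with the same constant jump matrices as RH Problem~\ref{RHP:general_model} and with exponent equal to that of the Meijer G model RH problem of \cite{Wang-Zhang21} plus an $\bigO((-\tau)^{-2})$ perturbation. Deforming the contours onto the steepest-descent curves of the unperturbed exponent near $0$, interpolating them with those of the perturbed exponent in the far field, and comparing with the Meijer G model solution through a small-norm RH problem gives $\Phi^{(\tau)}((-\tau)^{-(\theta+1)}\zeta)=(I+\bigO((-\tau)^{-2/(\theta+1)}))\times(\text{Meijer G solution})\times(\text{diagonal prefactors})$, uniformly for $\zeta$ in compacts. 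Feeding this and the formulas of \cite{Wang-Zhang21} for $\phi^{(\Mei)},\phitilde^{(\Mei)}$ into \eqref{eq:defn_phi^tau}--\eqref{eq:defn_phitilde^tau} produces \eqref{eq:expr_phi^Mei}--\eqref{eq:expr_phitilde^Mei}: the powers of $(-\tau)$ come from the factors $(-z)^{\beta/2}$, $z^{\frac{\theta}{2}(1+\beta)}$ in the definitions and from the rescaling of $\Upsilon$, and the powers of $2\pi$ and $\theta+1$ from $\Omega_\pm$ and the normalizations of \cite{Wang-Zhang21}.

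For part~\ref{enu:lem:phi_phitilde_sy:2} ($\tau\to+\infty$) the quadratic term dominates and a genuine soft edge develops on $\realR_+$, in parallel with the $\theta=1$ analysis of \cite{Its-Kuijlaars-Ostensson08}. The steps are the standard soft-edge ones: introduce a scalar $g$-function for the model RH problem, associated with an equilibrium-type problem whose support is an interval ending at a right end point $\xi_*=(c_1\tau)^{\theta+1}$ (with $c_1$ as in \eqref{eq:defn_c_1_c_2}) at which the model density vanishes like a square root; perform the normalization $\Phi^{(\tau)}\mapsto T$ removing the growth at $\infty$; open lenses around the support; install a global parametrix away from $\xi_*$ and $0$, a local parametrix at $0$ (whose effect on the error is negligible since $\xi_*\to\infty$), and a (block) Airy parametrix in a disk about $\xi_*$ whose natural local variable scales like $\tau^{-4/3}$, which fixes $c_2$ as in \eqref{eq:defn_c_1_c_2}; and control the remaining error by $\bigO(\tau^{-2/3})$ via a small-norm RH problem. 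Reading off the row-$0$ entries of $\Phi^{(\tau)}$ and the column-$\theta$ entries of $(\Phi^{(\tau)})^{-1}$ appearing in \eqref{eq:defn_phi^tau}--\eqref{eq:defn_phitilde^tau} at $\xi=(c_1\tau)^{\theta+1}(1-c_2\tau^{-4/3}x)^\theta$, the Airy parametrix supplies $\Ai(x)$, the scalar part of the $g$-function transformation---expanded to the order that survives on the $\tau^{-4/3}$ scale---collects into exactly the conjugating factor $f(x;\tau)$ of \eqref{eq:defn_c_1_c_2} (with $g_0,g_1$ as in \eqref{eq: g_another}), and the remaining constants and $\tau$-powers come, as before, from the explicit prefactors in \eqref{eq:defn_phi^tau}--\eqref{eq:defn_phitilde^tau}, the global parametrix and $\Upsilon$; assembling these gives \eqref{eq:phi_asy_airy}--\eqref{eq:phitilde_asy_airy}.

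The hard part is part~\ref{enu:lem:phi_phitilde_sy:2}. The $\tau\to-\infty$ case is a confluent limit in which the jumps never change and only the exponent degenerates, so the small-norm comparison with a known model is relatively direct. The $\tau\to+\infty$ case requires instead (i) setting up and solving the auxiliary equilibrium/$g$-function problem for the $(\theta+1)\times(\theta+1)$ model RH problem---the phase bookkeeping with the $(\theta+1)$ roots of unity is delicate, and it is this step that actually pins down $c_1$, $c_2$, $g_0$, $g_1$ and confirms the square-root vanishing of the model density at $\xi_*$---and (ii) constructing an explicit $(\theta+1)\times(\theta+1)$ global parametrix and matching it to the Airy parametrix on a circle whose radius shrinks relative to $\xi_*\sim\tau^{\theta+1}$, which forces careful matching estimates. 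A pervasive secondary difficulty in both parts is the bookkeeping needed to extract every multiplicative constant and $\tau$-power in \eqref{eq:expr_phi^Mei}--\eqref{eq:phitilde_asy_airy} consistently through the chains of conjugations.
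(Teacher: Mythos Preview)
Your plan is essentially the same as the paper's: for $\tau\to-\infty$ the paper compares $\Phi^{(\tau)}$ directly with the Meijer~G parametrix $\Phi^{(\Mei)}$ inside a shrinking disk and with $\Upsilon\Omega_\pm e^{-\Theta}$ outside, obtaining $R=I+\bigO((-\tau)^{-2/(\theta+1)})$ exactly as you describe; for $\tau\to+\infty$ the paper also runs a steepest-descent analysis with $g$-functions, a global parametrix, an Airy parametrix at the soft edge, and a simple Cauchy-integral parametrix at~$0$. Two small points of execution differ from your sketch and are worth noting when you fill in details: first, the paper rescales at the outset by $(c_1\tau)^{\theta+1}$ so the soft edge sits at the \emph{fixed} point $\xi=1$, and then builds $\theta+1$ \emph{$\tau$-independent} $g$-functions $g_0,\dots,g_\theta$ algebraically via the limiting map $J^{(\pre)}$ of Appendix~\ref{sec:limiting_J} (not via an equilibrium problem with a moving endpoint $\xi_*$); second, the small-norm error for $R$ is actually $\bigO(\tau^{-2})$, and the $\bigO(\tau^{-2/3})$ in \eqref{eq:phi_asy_airy}--\eqref{eq:phitilde_asy_airy} arises only afterwards from expanding the Airy parametrix and $f(\xi)$ on the local scale $|\xi-1|\lesssim\tau^{-4/3}$.
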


RH problem \ref{RHP:general_model} has interesting integrable properties. In this paper, we consider only the $\theta = 2$ case and leave the general case to a further publication.

We have the following Lax pair for  $\Phi = \Phi^{(\tau)}$, whose compatibility is described by the Chazy-I equation \cite[Equation (A4)]{Cosgrove00}. 

\begin{theorem}\label{Pro:LaxPair}
Let $\Phi$ be defined in RH problem \ref{RHP:general_model} with $\theta = 2$. We have the Lax pair 
\begin{equation} \label{eq:LaxPair}
  \frac{d}{d\xi} \Phi = A \Phi, \quad  \frac{d}{d\tau} \Phi = B \Phi,
\end{equation}
where, with $D = \diag(1, \sqrt{2}, 2)$,
\begin{align}\label{eq:ABExpand}
  A = {}& D( 2^{-\frac{3}{2}} A_0 + \xi^{-1} A_{-1}) D^{-1}, & B = {}& D( \frac{1}{2} \xi B_1 + \sqrt{2} B_0) D^{-1},
\end{align}
with 
\begin{align}\label{eq:Coefficients}
  A_0 = {}&
            \begin{pmatrix}
              0 & 0 & 0 \\
              1 & 0 & 0 \\
              -\frac{\sqrt{2}}{3} \tau & 1 & 0
            \end{pmatrix},
  & A_{-1} = {}&
                 \begin{pmatrix}
                   b & c + \frac{\sqrt{2}}{3} \tau & -1 \\
                   a & -b - f + \frac{1}{3} & -c + \frac{\sqrt{2}}{3} \tau \\
                   d & k & f + \frac{2}{3}
                 \end{pmatrix}, \\
  B_1 = {}&
            \begin{pmatrix}
              0 & 0 & 0 \\
              0 & 0 & 0 \\
              -1 & 0 & 0 
            \end{pmatrix},
  & B_0 = {}&
              \begin{pmatrix}
                -c & 1 & 0 \\
                f - \frac{\sqrt{2}}{3} \tau c & 0 & 1 \\
                \frac{\sqrt{2}}{3} \tau(b + f) - (a + k) & b + \frac{\sqrt{2}}{3} \tau c & c
              \end{pmatrix}.
\end{align}
Here $a,b,c,d,f,k$ are analytic for $\tau\in\mathbb{R}$ and  satisfy the relations
\begin{equation}\label{eq:Constant_Equation_1}
f-b-\frac{\sqrt{2}}{3} c\tau+\gamma=0,
\end{equation}
\begin{equation}\label{eq:Constant_Equation_2}
a+k-c(\gamma+\frac{1}{3})=0,
\end{equation}
and
\begin{equation}\label{eq:Consatant_Equation_3}
  d+c(k-a)-\frac{\sqrt{2}}{3} \tau(a+k)-(b^2+bf+f^2)+\frac{1}{3}(b-f)-\gamma= 0,
\end{equation}
with $\gamma= \frac{1}{36} + \frac{\alpha}{12} - \frac{\alpha^2}{12}$.
The  compatibility of the Lax pair is equivalent to the following system of differential equations
\begin{align}
  \frac{c'}{\sqrt{2}} = {}& -c^2 - b - f, \label{eq:NonEqs_c} \\
  \frac{b'}{\sqrt{2}} = {}& -cf - k + \frac{\sqrt{2}}{3} \tau(b + c^2) + \frac{2}{9} \tau^2 c, \label{eq:NonEqs_b} \\
  \frac{f'}{\sqrt{2}} = {}& -bc + a - \frac{\sqrt{2}}{3} \tau(f + c^2) + \frac{2}{9} \tau^2 c, \label{eq:NonEqs_f} \\
  \frac{a'}{\sqrt{2}} = {}& 2bf + f^2 - ck + d - \frac{1}{3} f - \frac{\sqrt{2}}{3} \tau(bc - a - k - \frac{1}{3} c) - \frac{2}{9} \tau^2(b + f),\label{eq:NonEqs_a} \\
  \frac{k'}{\sqrt{2}} = {}& -b^2 - 2bf - ac - d - \frac{1}{3} b - \frac{\sqrt{2}}{3} \tau(cf + a + k + \frac{1}{3} c) + \frac{2}{9} \tau^2(b + f), \label{eq:NonEqs_k}\\
  \frac{d'}{\sqrt{2}} = {}& 2cd - bk + af + \frac{2}{3} a + \frac{2}{3} k - \frac{\sqrt{2}}{3} \tau(f^2 - b^2 - ac - ck + \frac{2}{3} b + \frac{2}{3} f).\label{eq:NonEqs_d}
 \end{align}
 Using \eqref{eq:Constant_Equation_1}-\eqref{eq:NonEqs_f} to express $a$, $b$,  $d$, $f$ and $k$ in terms of $c$, $c'$ and $c''$,  we obtain from 
\eqref{eq:NonEqs_a} and \eqref{eq:NonEqs_k} a third order differential equation 
\begin{equation}\label{eq:Third_order_Eqs_c}
c'''+3 \cdot 2^{\frac{3}{2}} c'^2+ \frac{4}{3} \tau^2c'+4 \tau c+\frac{\sqrt{2}}{9}(1+3\alpha-3\alpha^2)=0.
\end{equation}
Let $y(\tau)=c(\frac{\tau}{\sqrt{2}})+\frac{\tau^3}{108}$, then $y$ satisfies the  Chazy-I equation
\begin{equation}\label{eq:Chazy_Eqs_y}
  y'''+6y'^2+ \tau y-\frac{1}{72}\tau^4+\frac{1}{6}(\alpha-\alpha^2)=0.
\end{equation}
Furthermore, let $u(\tau)=\sqrt{2}c(\tau)+\frac{4}{27}\tau^3$, we have 
\begin{equation}\label{eq:Chazy_Eqs_u} (u'')^2+4(u')^3-4(\tau u'-u)^2+\frac{4}{3}(\alpha-\alpha^2-1)u'+\frac{4}{27}(\alpha+1)(2\alpha-1)(\alpha-2)=0, \end{equation}
which is a second-degree Chazy equation \cite [Equation (5.6)]{Cosgrove93}   and can be solved  in terms of the Painlev\'e IV equation via the correspondence given in \cite[Equations (5.41)-(5.44)]{Cosgrove93}. 
\end{theorem}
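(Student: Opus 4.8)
The plan is to derive the Lax pair by the standard isomonodromy route. Since the jump matrices $J^{(\theta)}_{\Phi}$ in \eqref{eq:defn_J_hard_to_soft} are constant (independent of both $\xi$ and $\tau$), the logarithmic derivatives
\[
  A(\xi;\tau) := \left(\frac{d}{d\xi}\Phi\right)\Phi^{-1}, \qquad B(\xi;\tau) := \left(\frac{d}{d\tau}\Phi\right)\Phi^{-1}
\]
have no jumps across $\realR\cup i\realR$, and since $\det\Phi\propto\xi^{\theta/2}$ (the formula quoted after Theorem \ref{thm:unique_solvability}) they are single-valued meromorphic matrix functions on $\compC\setminus\{0\}$. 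Here the existence of $\partial_\tau\Phi$, and the analyticity in $\tau$ of all the quantities below, follows from the unique solvability in Theorem \ref{thm:unique_solvability} together with the standard fact that a uniquely solvable RH problem whose data depend analytically on a parameter has a solution analytic in that parameter. The first task is to pin down the singularity structure of $A$ and $B$. Near $\xi=0$, substituting the local form \eqref{eq:defn_N(xi)} and using that $N(\xi)$ is analytic and invertible while $E$ is locally constant (the possible $\log\xi$ entry arranged precisely so that its contribution is conjugated to a bounded term) shows that $A$ has at most a simple pole at $0$ while $B$ is analytic at $0$. Near $\xi=\infty$, I substitute \eqref{eq:asyNew}; the crucial point is that although $\Upsilon(\xi)$ and $\Theta(\xi)$ in \eqref{eq:asyold}--\eqref{def:Lpm} carry fractional powers $\xi^{1/(\theta+1)},\xi^{2/(\theta+1)}$, the phases are arranged in the cyclic pattern encoded by $\Mcyclic$ in \eqref{eq:defn_Mcyclic} and by the Fourier-type matrices $\Omega_{\pm}$, so that $\Upsilon\Omega_{\pm}(-\partial_\xi\Theta)\Omega_{\pm}^{-1}\Upsilon^{-1}$ and $\Upsilon\Omega_{\pm}(-\partial_\tau\Theta)\Omega_{\pm}^{-1}\Upsilon^{-1}$ become, after these conjugations, polynomial in $\xi$ of low degree (for $\theta=2$, degree $0$ and $1$ respectively), up to the $\Upsilon'\Upsilon^{-1}\sim1/\xi$ term. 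By Liouville's theorem this forces $A(\xi)=D(2^{-3/2}A_0+\xi^{-1}A_{-1})D^{-1}$ and $B(\xi)=D(\tfrac12\xi B_1+\sqrt2 B_0)D^{-1}$ with $D=\diag(1,\sqrt2,2)$, as in \eqref{eq:ABExpand}.

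Next I would read off the entries. The matrices $A_0$ and $B_1$, and the purely numerical or $\tau$-dependent entries of $A_{-1}$ and $B_0$ in \eqref{eq:Coefficients}, are determined by matching the leading terms of the expansion at $\infty$ (the $-\Theta$ and $-\partial_\tau\Theta$ parts, the $\tau$'s originating from the $-\tau e^{\cdots}\xi^{1/(\theta+1)}$ terms in \eqref{eq:asyold}) together with the diagonal exponents $-\tfrac{\beta}{2\theta},-\tfrac{\alpha+1-\theta}{\theta},0,\tfrac1\theta,\dots$ from \eqref{eq:defn_N(xi)} at $0$; the trace constraints $\operatorname{tr}A_{-1}=\theta/2$ and $\operatorname{tr}(2^{-3/2}A_0)=0$ follow from $\partial_\xi\log\det\Phi=\operatorname{tr}A=\tfrac{\theta}{2\xi}$, and $\det\Phi\propto\xi^{\theta/2}$ also forces $\operatorname{tr}\Phi_1=0$ for the subleading matrix $\Phi_1$ in $\Phi=(I+\Phi_1\xi^{-1}+\cdots)\Upsilon\Omega_\pm e^{-\Theta}$. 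The remaining entries are the genuinely $\tau$-dependent coefficients $a,b,c,d,f,k$, which are the free entries of $A_{-1}$ and $B_0$ (equivalently entries of $\Phi_1$, and of $N(0),N'(0)$). Imposing the expansion at $\infty$ to one further order — i.e. requiring the constant- and $\xi^{-1/(\theta+1)}$-order corrections of $A$ and $B$ to be consistent with this single matrix $\Phi_1$ — yields the algebraic relations \eqref{eq:Constant_Equation_1}, \eqref{eq:Constant_Equation_2}, \eqref{eq:Consatant_Equation_3}, with $\gamma=\tfrac1{36}+\tfrac\alpha{12}-\tfrac{\alpha^2}{12}$ entering through the local exponent data at $0$.

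With $A$ and $B$ in hand, the compatibility of \eqref{eq:LaxPair} is the zero-curvature equation $\partial_\tau A-\partial_\xi B+[A,B]=0$; this is a Laurent polynomial in $\xi$, and equating the coefficients of each power of $\xi$ produces finitely many scalar ODEs which, after using \eqref{eq:Constant_Equation_1}--\eqref{eq:Consatant_Equation_3} to remove redundancies, are exactly \eqref{eq:NonEqs_c}--\eqref{eq:NonEqs_d}. Then I would carry out the stated elimination: \eqref{eq:Constant_Equation_1}--\eqref{eq:NonEqs_f} express $a,b,d,f,k$ rationally in terms of $c,c',c''$ (and $\tau$), and substituting into \eqref{eq:NonEqs_a} (or equivalently \eqref{eq:NonEqs_k}) collapses the system to the single third-order equation \eqref{eq:Third_order_Eqs_c}. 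Finally, the changes of variables $y(\tau)=c(\tau/\sqrt2)+\tau^3/108$ and $u(\tau)=\sqrt2\,c(\tau)+\tfrac{4}{27}\tau^3$ are substituted to verify \eqref{eq:Chazy_Eqs_y} and \eqref{eq:Chazy_Eqs_u}; comparison with Cosgrove's classifications identifies the former as the Chazy-I equation \cite[Eq.~(A4)]{Cosgrove00} and the latter as the second-degree Chazy equation \cite[Eq.~(5.6)]{Cosgrove93}, whence the Painlev\'e~IV reduction via \cite[Eqs.~(5.41)--(5.44)]{Cosgrove93}.

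I expect the main obstacle to be the step establishing the precise $\xi$-dependence of $A$ and $B$ and the explicit entries and constraints: tracking the cancellation of the fractional powers through the conjugations by $\Upsilon$, $\Omega_\pm$ and the cyclic structure of $\Mcyclic$, correctly bookkeeping the $\log\xi$ term at $0$ when $\alpha\in\intZ$, and extracting exactly which matrix entries are free versus fixed (and thereby the three constraint equations). Once the Lax pair structure \eqref{eq:ABExpand}--\eqref{eq:Consatant_Equation_3} is secured, the compatibility computation leading to \eqref{eq:NonEqs_c}--\eqref{eq:NonEqs_d}, the elimination to \eqref{eq:Third_order_Eqs_c}, and the identification with the Chazy equations are lengthy but routine symbolic computations.
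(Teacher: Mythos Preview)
Your approach matches the paper's: both exploit the constant jumps to make $A,B$ meromorphic, pin down their $\xi$-dependence from the local data at $0$ and $\infty$, then run the zero-curvature computation and elimination. Two refinements are worth noting. First, the paper obtains the constraints \eqref{eq:Constant_Equation_1}--\eqref{eq:Consatant_Equation_3} not only from higher-order matching at $\infty$ (concretely, from the vanishing of the next Laurent coefficient $A_{-2}$) but also from the characteristic polynomial of $A_{-1}$, whose eigenvalues $\tfrac12-\tfrac\alpha3,\,\tfrac\alpha6,\,\tfrac12+\tfrac\alpha6$ are precisely the local exponents at $\xi=0$ read from \eqref{eq:defn_N(xi)}; this is what injects $\gamma$. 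Second, and more substantive, the second-degree equation \eqref{eq:Chazy_Eqs_u} is one order \emph{lower} than \eqref{eq:Third_order_Eqs_c}, so a mere substitution $u=\sqrt2\,c+\tfrac4{27}\tau^3$ cannot produce it: you need a first integral with the correct constant. The paper supplies this via the explicit value $\det(A_{-1})=-(\tfrac{\alpha^3}{108}+\tfrac{\alpha^2}{72}-\tfrac\alpha{24})$, again from the exponent data at $0$, combined with \eqref{eq:Consatant_Equation_3} and the expressions of $a,b,d,f,k$ in terms of $c,c',c''$. Your plan as written would stall at this last step without that extra spectral input.
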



\begin{rmk} \label{rmk:integrability}
Let $v(\tau)=3\sqrt{6}c'\left(3^{\frac{1}{4}}\tau/2\right)$, by differentiating \eqref{eq:Third_order_Eqs_c} twice, we have
 \begin{equation}\label{eq:fourth_order_Equation}
  v^{(4)}+v'^2+vv''+\frac{\tau^2v''}{4}+ \frac{7\tau v'}{4} +2v=0,
\end{equation}
which is  the similarity reduction of the Boussinesq equation \cite[Equations (2.9)]{Clarkson-Kruskal89}.  
This equation \eqref{eq:fourth_order_Equation} has also been derived from  the similarity reduction of the Drinfeld-Sokolov  hierarchies associated to the affine Kac-Moody algebra of type $A_2^{(1)}$ in \cite[Equation (5.18)]{Liu-Wu-Zhang22}. After some gauge transformation, the Lax pair \eqref{eq:LaxPair} can be expressed in the form presented in \cite[Example 5.4]{Liu-Wu-Zhang22}. It is worth mentioning that the similarity reductions of type $A_1^{(1)}$ give the Painlev\'{e} II and  Painlev\'{e} XXXIV equations  \cite[Example 5.3]{Liu-Wu-Zhang22}, and in the transition regime of the MB ensemble with $\theta = 1$, it has been shown in \cite{Claeys-Kuijlaars08}, \cite{Its-Kuijlaars-Ostensson08} and \cite{Its-Kuijlaars-Ostensson09} that the Lax pair of the model RH problem yields the Painlev\'{e} XXXIV equation. In \cite{Liu-Wu-Zhang22}, the similarity reductions of the Drinfeld-Sokolov hierarchies associated to the other types of affine Kac-Moody algebra are also considered; see also \cite{Noumi02}. For integer $\theta > 2$, we expect that the solution $\Phi(\xi)$ for the model RH problem \ref{RHP:general_model} would be related to similarity reductions of the Drinfeld-Sokolov hierarchies of type $A^{(1)}_{\theta}$, and it will be the subject of further work.
 \end{rmk}

\subsection{Strategy and organization}\label{sec:Strategy}

The main technique in the proof of Theorem \ref{thm:main} is the analysis of vector-valued Riemann-Hilbert problems, like the proof of the hard edge universality for integer $\theta$ in \cite{Wang-Zhang21}. The new feature in our paper is that for each $\theta \geq 2$, the $(\theta + 1) \times (\theta + 1)$ model RH problem for the local parametrix at $0$, the left-end point of the equilibrium measure, is new and has new integrability, see Remark \ref{rmk:integrability}. The solvability of the new model RH problem is non-trivial, since it does not have an explicit construction as for the hard edge model RH problems in \cite{Wang-Zhang21} by Meijer G-functions. We prove a vanishing lemma to guarantee the solvability. We remark that the vanishing lemmas for RH problems of dimension greater than $2$ are rare. To our limited knowledge, we only know that Charlier and Lenells have worked out one vanishing lemma for a $3 \times 3$ RH problem in a quite different setting \cite{Charlier-Lenells22}.



This paper is organized as follows. Section \ref{sec:equilibrium} provides results about the equilibrium measure that will be used in Sections \ref{sec:RH_transitive} and \ref{sec:RH_soft_edge}. Section \ref{sec:RH_transitive} contains the RH problems for the proof of Theorem \ref{thm:main}. Section \ref{sec:RH_soft_edge} contains the additional RH problems for the proof of Theorem \ref{thm:universality}. Then in Section \ref{sec:proof_main} the two theorems are proved. Theorem \ref{thm:unique_solvability} is proved in Section \ref{sec:vanishing_lemma}. Theorem \ref{Pro:LaxPair} is proved in Section \ref{sec:Lax_pair}. Finally, Theorem \ref{thm:kernel_asy} is proved in Sections \ref{sec:asy_tau_to_negative} and \ref{sec:asy_tao_to_+infty}.


\paragraph{Notations}

Throughout this paper, the following notations are frequently used. We denote by $\compC_{\pm}=\{z\in\compC \mid \pm \Im z>0\}$, by $D(z_0, \delta)$ the open disc centred at $z_0$ with radius $\delta>0$, i.e.,
\begin{equation} \label{eq:size_U_b}
  D(z_0,\delta) := \{z\in \mathbb{C} : \lvert z - z_0 \rvert < \delta \},
\end{equation}
We use $J_n$ to denote the $n$ dimensional reversed identity matrix $(\delta_{j, n - 1 - k})^{n - 1}_{j, k = 0}$.

\subsection*{Acknowledgements}

Dong Wang was partially supported by the National Natural Science Foundation of China under grant number 12271502, and the University of Chinese Academy of Sciences start-up grant 118900M043. Shuai-Xia Xu was partially supported by the National Natural Science Foundation of China under grant numbers  12431008, 12371257 and 11971492, and by  Guangdong Basic and Applied Basic Research Foundation (Grant No. 2022B1515020063). We thank Lun Zhang for helpful discussion at the early stage of the project, Xiang-Ke Chang for helpful discussion on Chazy equations and Chao-Zhong Wu for drawing  our attention to the relations with Drinfeld-Sokolov hierarchies.

\section{Equilibrium measure in transition regime and soft edge regime} \label{sec:equilibrium}

For any $u, v, w \in \realR_+$, let, with the power function taking the principal branch,
\begin{align} \label{eq:J_function}
  J_{u, v}(s) = {}& (us + v) \left( \frac{s + 1}{s} \right)^{\frac{1}{\theta}}, & \text{and} && J_w(s) = {}& J_{w, w}(s).
\end{align}

\subsection{Equilibrium measure in transition regime without deformation} \label{subsec:exact_equilibrium_measure}

We assume that the potential $V$ satisfies Regularity Conditions \ref{reg:primary} and \ref{enu:reg:primary:2}, such that the support of its equilibrium measure is $[0, b]$. Recall $c$ defined by \eqref{eq:defn_c}.

The equilibrium measure for $V$ is described by $J_c(s)$, as shown in \cite[Theorem 1.11]{Claeys-Romano14}. Here we recall the results there and set up the notation to be used later. Let $\gamma_1 \subseteq \compC_+$ be the curve connecting $-1$ to $1/\theta$ such that $J_c$ maps $\gamma_1$ to the interval $[0, b]$. Then let $\gamma_2 = \overline{\gamma_1} \subseteq \compC_-$ and $\gamma = \gamma_1 \cup \gamma_2$ be the closed contour oriented counterclockwise and $D$ be the region enclosed by $\gamma$. Recall that $J_c$ maps $\compC \setminus \overline{D}$ to $\compC \setminus [0, b]$ and $D \setminus [-1, 0]$ to $\halfH \setminus [0, b]$ univalently. Then we define $\Iinv_1(z)$ and $\Iinv_2(z)$ to be the inverses of $J_c(z)$ on $\compC \setminus \overline{D}$ and $D \setminus [-1, 0]$ respectively. Also we define $\Iinv_+(x)$ (resp.~$\Iinv_-(x)$) to be the inverse of the map $J_c(s)$ from $\gamma_1$ (resp.~$\gamma_2$) to $[0, b]$. We have as $z\to 0$ \cite[Equations (3.26) and (3.27)]{Wang-Zhang21}
\begin{align}
  \Iinv_1(z) = {}& -1 + c^{-\frac{\theta}{\theta + 1}} \times
                   \begin{cases}
                     e^{\frac{\pi i}{\theta + 1}} z^{\frac{\theta}{\theta + 1}} (1 + \bigO(z^{\frac{\theta}{\theta + 1}})), & \arg z \in (0, \pi), \\
                     e^{-\frac{\pi i}{\theta + 1}} z^{\frac{\theta}{\theta + 1}} (1 + \bigO(z^{\frac{\theta}{\theta + 1}})), & \arg z \in (-\pi, 0),
                   \end{cases} \label{eq:I_1_at_0} \\
  \Iinv_2(z) = {}& -1 + c^{-\frac{\theta}{\theta + 1}} \times
                   \begin{cases}
                     e^{-\frac{\pi i}{\theta + 1}} z^{\frac{\theta}{\theta + 1}} (1 + \bigO(z^{\frac{\theta}{\theta + 1}})), & \arg z \in (0, \frac{\pi}{\theta}), \\
                     e^{\frac{\pi i}{\theta + 1}} z^{\frac{\theta}{\theta + 1}} (1 + \bigO(z^{\frac{\theta}{\theta + 1}})), & \arg z \in (-\frac{\pi}{\theta}, 0).
                   \end{cases} \label{eq:I_2_at_0}
\end{align}

We denote the so-called $\gfn$-functions and their derivatives \cite[Equations (4.4), (4.5) and (4.9)]{Claeys-Romano14}
\begin{align} \label{eq:defn_g_gtilde_G_Gtilde}
  \gfn(z) = {}& \int^b_0 \log(z - y) \psi(y) dy, & \gfntilde(z) = {}& \int^b_0 \log(z^{\theta} - y^{\theta}) \psi(y) dy, & G(z) = {}& \gfn'(z), & \G(z) = {}& \gfntilde'(z),
\end{align}
where the functions $\gfn(z)$ and $\gfntilde(z)$ are analytic in $\compC \setminus (-\infty, b]$ and $\halfH \setminus [0, b]$ respectively.
We let
\begin{align} \label{eq:defn_phi_t=1}
  \phi(z) = {}& \gfn(z) + \gfntilde(z) - V(z) - \ell, & \ell = {}& (\gfn)_{\pm}(0) + (\gfntilde)_{\mp}(0) -V(0),
\end{align}
where $\ell$ is the same as in \eqref{eq:E-L_1} and \eqref{eq:E-L_2}.

We define \cite[Equation (4.12)]{Claeys-Romano14}
\begin{equation} \label{eq:formula_N_in_t=1}
  N(s) = 
  \begin{cases}
    N_{\In}(s) = \frac{1}{2\pi i} \oint_{\gamma} \frac{U(J_c(\xi))}{\xi - s} d\xi - 1, & s \in D \setminus [-1, 0], \\
    N_{\Out}(s) = -\frac{1}{2\pi i} \oint_{\gamma} \frac{U(J_c(\xi))}{\xi - s} d\xi + 1, & s \in \compC \setminus \overline{D},
  \end{cases}
  \quad \text{where} \quad U(z) = zV'(z).
\end{equation}
Since $J_c(s)$ is analytic in $\compC \setminus \overline{D}$, we can enlarge $\gamma_t$ to a contour $\gamma_{\Out}$ that encloses $\gamma$ and use this contour instead of $\gamma$ to evaluate the integral for $N_{\In}(s)$ in \eqref{eq:formula_N_in_t=1}, and we find in particular that $N_{\In}(s)$ is analytic in neighbourhoods of $-1$ and $\theta^{-1}$.

By \cite[Equations (4.10) and (4.11)]{Claeys-Romano14}, $G(z)$ and $\G(z)$ in \eqref{eq:defn_g_gtilde_G_Gtilde} can be defined by $N(s)$ by
\begin{equation} \label{eq:N(s)_t=1}
  N(s) =
  \begin{cases}
    J_c(s) G(J_c(s)), & s \in \compC \setminus \overline{D}, \\
    J_c(s) \G(J_c(s)), & s \in D \setminus [-1, 0],
  \end{cases}
\end{equation}
and then the density function $\psi(x)$ in \eqref{eq:density_function} satisfies
\begin{equation} \label{eq:psi(x)_in_G_tilde_G}
  \begin{split}
    \psi(x) = {}& \frac{1}{2\pi i}(\G_-(x) - \G_+(x)) = \frac{x^{-1}}{2\pi i} (N_{\In}(\Iinv_+(x)) - N_{\In}(\Iinv_-(x))) \\
    = {}& \frac{1}{2\pi i}(G_-(x) - G_+(x)) = \frac{x^{-1}}{2\pi i} (N_{\Out}(\Iinv_-(x)) - N_{\Out}(\Iinv_+(x))).
  \end{split}
\end{equation}
The limit behaviour of $\psi(x)$ as $x \to 0$ stated in \eqref{eq:behaviour_of_density} implies 
\begin{align} \label{eq:critical_in_N_in}
  N'_{\In}(-1) = \frac{1}{2\pi i} \oint_{\gamma_{\Out}} \frac{U(J_c(s))}{(s + 1)^2} ds = {}& 0, & N''_{\In}(-1) = {}& \frac{2}{2\pi i} \oint_{\gamma_{\Out}} \frac{U(J_c(s))}{(s + 1)^3} ds = 2A_1,
\end{align}
where $A_1$ is a positive number determined by $d_1$ in \eqref{eq:defn_rho}.

Since the density $\psi(x)$ and related functions are expressed by $N(s)$, Regularity Condition \ref{reg:primary} can be expressed as conditions on $N(s)$. For example, part \ref{enu:reg:primary:2_b} of Regularity Condition \ref{reg:primary} is equivalent to
\begin{equation} \label{eq:N'(theta^-1}
  N'_{\In}(\theta^{-1}) = \frac{1}{2\pi i} \oint_{\gamma_{\Out}} \frac{U(J_c(s))}{(s - \theta^{-1})^2} ds > 0,
\end{equation}
and part \ref{enu:reg:primary:3} is equivalent to
\begin{equation}
  \phi(x) > 0, \quad x \in (b, +\infty),
\end{equation}
which is a property of $N(s)$ since $\phi(z)$ is related to $N(s)$ by \eqref{eq:defn_g_gtilde_G_Gtilde}, \eqref{eq:defn_phi_t=1} and \eqref{eq:N(s)_t=1}.

\subsection{Deformation of the equilibrium measure: algebraic setup} \label{subsec:deformation_algebraic}

Now we consider the deformation of the equilibrium measure when the potential function $V(x)$ is replaced by $V_t(x)$ such that $t$ is close to $1$.

Recall the contour $\gamma_{\Out}$ defined in Section \ref{subsec:exact_equilibrium_measure}. We define real-valued functions
\begin{align} \label{eq:defn_F_t(x)}
  E(u, v) = {}& \frac{1}{2\pi i} \oint_{\gamma_{\Out}} \frac{U(J_{u, v}(s))}{s + 1} ds, & F(u, v) = {}& \frac{1}{2\pi i} \oint_{\gamma_{\Out}} \frac{U(J_{u, v}(s))}{s} ds, 
\end{align}
We want to solve, for $t$ in a small neighbourhood of $1$, the equation that
\begin{align} \label{eq:F_E}
  F(u, v) = {}& (1 + \theta)t, & E(u, v) = {}& t,
\end{align}
such that $u, v$ are in the vicinity of $c$.

To solve \eqref{eq:F_E}, we need some preliminary results. By \cite[Equations (1.27), (1.41) and (1.42)]{Claeys-Romano14}, we have
\begin{align} \label{eq:quadratic_equilibrium_hard_edge}
  F(c, c) = {}& 1 + \theta, & E(c, c) = {}& 1.
\end{align}
Next, we denote
\begin{align}
  A_2 = {}& \frac{1}{2\pi i} \oint_{\gamma_{\Out}} J^2_c(s) V''(J_c(s)) \frac{1}{(s + 1)^2} ds = \frac{1}{2\pi i} \oint_{\gamma_{\Out}} J_c(s) U'(J_c(s)) \frac{1}{(s + 1)^2} ds, \label{eq:defn_A2} \\
  A_3 = {}& \frac{1}{2\pi i} \oint_{\gamma_{\Out}} J^2_c(s) V''(J_c(s)) \frac{1}{s + 1} ds = \frac{1}{2\pi i} \oint_{\gamma_{\Out}} J_c(s) U'(J_c(s)) \frac{1}{s + 1} ds - 1, \label{eq:defn_A3}
\end{align}
where the second identity in \eqref{eq:defn_A2} is due to \eqref{eq:critical_in_N_in}, and the second identity in \eqref{eq:defn_A3} is due to \eqref{eq:quadratic_equilibrium_hard_edge}. Using the integral representations of $A_2$ and $A_3$ involving $U'(J_c(s))$, we have that
\begin{equation} \label{eq:A_2_A_3_relation}
  \begin{split}
    A_3 + 1 - \frac{\theta + 1}{\theta} A_2 = {}& \frac{1}{2\pi i} \oint_{\gamma_{\Out}} U'(J_c(s)) J_c(s)\frac{s - \theta^{-1}}{(s + 1)^2} ds = \frac{1}{2\pi i} \oint_{\gamma_{\Out}} U'(J_c(s)) J'_c(s) \frac{s}{s + 1} ds \\
    = {}& -\frac{1}{2\pi i} \oint_{\gamma_{\Out}} U(J_c(s)) \frac{1}{(s + 1)^2} ds = 0,
  \end{split}
\end{equation}
where the last step is due to \eqref{eq:critical_in_N_in}.

Also we define
\begin{equation} \label{eq:defn_H_in_EF}
  H(u, v) = (1 + \theta^{-1}) E(u, v) - \theta^{-1} F(u, v).
\end{equation}
For any $x \in \realR_+$,
\begin{equation}
  H(x, x) = \frac{\theta^{-1}}{2\pi i} \oint_{\gamma_{\Out}} U(J_x(s)) \left( \frac{\theta + 1}{s + 1} - \frac{1}{s} \right) ds     = \frac{1}{x} \frac{1}{2\pi i} \oint_{\gamma_{\Out}} V'(J_x(s)) J'(s) ds = 0.
\end{equation}
By \eqref{eq:defn_H_in_EF}, the solution $(u, v)$ of \eqref{eq:F_E} also satisfies $H(u, v) = 0$. We have that, for any $x \in \realR_+$,
\begin{equation} \label{eq:H_along_xx}
  \left. \frac{\partial H}{\partial u} \right\rvert_{u = v = x} + \left. \frac{\partial H}{\partial v} \right\rvert_{u = v = x} = \frac{1}{x} \frac{1}{2\pi i} \oint_{\gamma_{\Out}} U'(J_x(s)) J'(s) ds = 0,
\end{equation}
and also by \eqref{eq:critical_in_N_in}
\begin{equation}
  \left. \frac{\partial H}{\partial v} \right\rvert_{u = v = c} = \frac{1}{c} \frac{1}{2\pi i} \oint_{\gamma_{\Out}} U'(J_c(s)) J'(s) \frac{1}{s + 1} ds = \frac{1}{c} \frac{1}{2\pi i} \oint_{\gamma_{\Out}} U(J_c(s)) \frac{1}{(s + 1)^2} ds = 0.
\end{equation}
Hence, $(u, v) = (c, c)$ is a critical point of $H(u, v)$. The Hessian of $H(u, v)$ at $(c, c)$ is
\begin{equation}
  \Hessian_H =
  \begin{pmatrix}
    \frac{\partial^2 H}{\partial u^2} & \frac{\partial^2 H}{\partial u \partial v} \\
    \frac{\partial^2 H}{\partial u \partial v} & \frac{\partial^2 H}{\partial v^2} 
  \end{pmatrix}
  = \frac{1}{c^2}
  \begin{pmatrix}
    h_2 & h_1 \\
    h_1 & h_0
  \end{pmatrix},
\end{equation}
where
\begin{equation}
  h_k = \frac{1}{2\pi i} \oint_{\gamma_{\Out}} \frac{d}{ds} \left( J^2_c(s) V''(J_c(s)) \right) \frac{s^k}{(s + 1)^2} ds, \quad k = 0, 1, 2.
\end{equation}
Using \eqref{eq:critical_in_N_in}, we have
\begin{equation}
  \begin{split}
    & A_2 - \frac{\theta + 1}{\theta} (A_1 + \frac{1}{2} h_0) \\
    = {}& \frac{1}{2\pi i} \oint_{\gamma_{\Out}} U'(J_c(s)) J_c(s) \frac{1}{(s + 1)^2} ds - \frac{\theta + 1}{\theta} \frac{1}{2\pi i} \oint_{\gamma_{\Out}} U'(J_c(s)) J_c(s) \frac{1}{(s + 1)^3} ds \\
    = {}& \frac{1}{2\pi i} \oint_{\gamma_{\Out}} \frac{d}{ds} \left( U(J_c(s)) \right) \frac{s}{(s + 1)^2} ds \\
    = {}& \frac{1}{2\pi i} \oint_{\gamma_{\Out}} U(J_c(s)) \frac{1}{(s + 1)^2} ds - \frac{1}{2\pi i} \oint_{\gamma_{\Out}} U(J_c(s)) \frac{2}{(s + 1)^3} ds \\
    = {}& 0 - 2A_1 = -2A_1,
  \end{split}
\end{equation}
Hence, we derive that 
\begin{equation}
  h_0 = \frac{2\theta}{\theta + 1} A_2 + \frac{2(\theta - 1)}{\theta + 1} A_1.
\end{equation}
Noting that
\begin{gather}
  h_2 + 2h_1 + h_0 = \frac{1}{2\pi i} \oint_{\gamma_{\Out}} \frac{d}{ds} \left( J^2_c(s) V''(J_c(s)) \right) ds = 0, \\
  h_1 + h_0 = \frac{1}{2\pi i} \oint_{\gamma_{\Out}} \frac{d}{ds} \left( J^2_c(s) V''(J_c(s)) \right) \frac{1}{s + 1} ds = \frac{1}{2\pi i} \oint_{\gamma_{\Out}} J^2_c(s) V''(J_c(s)) \frac{1}{(s + 1)^2} ds = A_2, 
\end{gather}
we have that
\begin{align}
  h_2 = {}& -\frac{2}{\theta + 1} A_2 + \frac{2(\theta - 1)}{\theta + 1} A_1, & h_1 = {}& -\frac{\theta - 1}{\theta + 1} A_2 - \frac{2(\theta - 1)}{\theta + 1} A_1.
\end{align}
We find that $\det(\Hessian_H) = -c^{-4} A^2_2$. Assuming the technical condition in Theorem \ref{thm:main} that $A_2 \neq 0$, we see that there are exactly two directions, along which the vectors $\vec{v}_i$ satisfy $\vec{v}_i \Hessian_H \vec{v}^T_i = 0$. It is easy to find that these two directions are represented by
\begin{align}
  \vec{v}_1 = {}& (1, 1), & \vec{v}_2 = {}& ((\theta - 1)A_1 + \theta A_2, (\theta - 1)A_1 - A_2).
\end{align}
Then for $u, v$ around $c$, the solutions to $H(u, v) = 0$ lie on two curves $C_1, C_2$ which pass $(c, c)$ and they have tangent vectors $\vec{v}_1$ and $\vec{v}_2$, respectively. Moreover, by \eqref{eq:H_along_xx}, we have that $C_1$ is the straight line $\{ u \vec{v}_1: u\in\mathbb{R} \}$.

Along $C_1$, let us denote the solution to \eqref{eq:F_E} locally around $(c, c)$ by $(c(t), c(t))$. Then $c(t)$ is analytic around $t = 1$, and its derivative at $t = 1$ satisfies
\begin{equation}
  1 = \left. \frac{d E(c(t), c(t))}{dt} \right\rvert_{t = 1} = \frac{c'(1)}{c} \frac{1}{2\pi i} \oint_{\gamma_{\Out}} \frac{U'(J_c(s)) J_c(s)}{s + 1} ds.
\end{equation}
Since by \eqref{eq:defn_A3} and \eqref{eq:A_2_A_3_relation}
\begin{equation}
  \frac{1}{2\pi i} \oint_{\gamma_{\Out}} \frac{U'(J_c(s)) J_c(s)}{s + 1} ds = 1 + A_3 = \frac{\theta + 1}{\theta} A_2,
\end{equation}
we conclude that
\begin{equation} \label{eq:c'(1)}
  c'(1) = \frac{c}{1 + A_3} = \frac{\theta c}{(\theta + 1)A_2}.
\end{equation}

Along $C_2$, let us denote the solution to \eqref{eq:F_E} locally around $(c, c)$ by $(c_1(t), c_0(t))$. Then $c_1(t)$ and $c_0(t)$ are analytic around $t = 1$, and their derivatives at $t = 1$ satisfy
\begin{align} \label{eq:derivative_c_1_c_0}
  c'_1(1) = {}& ((\theta - 1)A_1 + \theta A_2) w, & c'_0(t) = {}& ((\theta - 1)A_1 - A_2)) w
\end{align}
for some $w$. We have
\begin{align} \label{eq:w_for_c_1_c_0}
  w = \frac{\theta c}{(\theta^2 - 1) A_1 A_2},
\end{align}
because we have, with the help of \eqref{eq:defn_A3}, \eqref{eq:A_2_A_3_relation} and \eqref{eq:defn_A2}
\begin{equation}
  \begin{split}
    1 = {}& \left. \frac{d E(c_1(t), c_0(t))}{dt} \right\rvert_{t = 1} = \frac{w}{c} \frac{1}{2\pi i} \oint_{\gamma_{\Out}} U'(J_c(s)) J_c(s)\frac{((\theta - 1)A_1 + \theta A_2)s + ((\theta - 1)A_1 - A_2)}{(s + 1)^2} ds \\
    = {}& \frac{w}{c} \left[ \frac{(\theta - 1)A_1 + \theta A_2}{2\pi i} \oint_{\gamma_{\Out}} U'(J_c(s)) J_c(s)\frac{1}{s + 1} ds - \frac{(\theta + 1) A_2}{2\pi i} \oint_{\gamma_{\Out}} U'(J_c(s)) J_c(s)\frac{1}{(s + 1)^2} ds \right] \\
    = {}& \frac{w}{c} \frac{\theta^2 - 1}{\theta} A_1 A_2.
  \end{split}
\end{equation}

\subsection{Deformation along $C_1$ when $\lvert t - 1 \rvert$ is small} \label{subsec:deformation_C1}

In this subsection and the next, we assume that $t$ is in the vicinity of $1$, and denote $u = t - 1$.

Recall $c(t)$ defined in Section \ref{subsec:deformation_algebraic}. We let $b(t) = c(t)(1 + \theta)^{1 + 1/\theta}/\theta$. Then $J_{c(t)}$ maps $\compC \setminus \overline{D}$ to $\compC \setminus [0, b(t)]$ and $D \setminus [-1, 0]$ to $\halfH \setminus [0, b(t)]$ univalently. We define $\Iinv_{1, t}(z)$ and $\Iinv_{2, t}(z)$ to be the inverses of $J_{c(t)}(z)$ on $\compC \setminus \overline{D}$ and $D \setminus [-1, 0]$ respectively. Also we define $\Iinv_{+, t}(x)$ (resp.~$\Iinv_{-, t}(x)$) to be the inverse map of $J_{c(t)}(s)$ from $\gamma_1$ (resp.~$\gamma_2$) to $[0, b(t)]$. We have
\begin{align}
  J_{c(t)}(s) = {}& \frac{c(t)}{c} J_c(s), & \Iinv_{*, t}(z) = {}& \Iinv_* \left( \frac{c}{c(t)} z \right), \quad * = 1, 2, +, -.
\end{align}

Let, analogous to \eqref{eq:formula_N_in_t=1},
\begin{equation} \label{eq:formula_N_in}
  N_t(s) = 
  \begin{cases}
    N_{t, \In}(s) = \frac{1}{2\pi i} \oint_{\gamma} \frac{U_t(J_{c(t)}(\xi))}{\xi - s} d\xi - 1, & s \in D \setminus [-1, 0], \\
    N_{t, \Out}(s) = -\frac{1}{2\pi i} \oint_{\gamma} \frac{U_t(J_{c(t)}(\xi))}{\xi - s} d\xi + 1, & s \in \compC \setminus \overline{D},
  \end{cases}
  \quad \text{where} \quad U_t(z) = zV'_t(z).
\end{equation}
Like in the formula \eqref{eq:formula_N_in_t=1} for $N_{\In}(s)$ in Section \ref{subsec:exact_equilibrium_measure}, we can use $\gamma_{\Out}$ instead of $\gamma$ to evaluate $N_{t, \In}(s)$. We find that $N_{t, \In}(s)$ is analytic in neighbourhoods of $-1$ and $\theta^{-1}$.

Since $(c(t), c(t))$ satisfies \eqref{eq:F_E}, we have that
\begin{align} \label{eq:N_t_in(-1)}
  N_{t, \In}(-1) = {}& \frac{1}{t} E(c(t), c(t)) - 1 = 0, & N_{t, \In}(0) = {}& \frac{1}{t} F(c(t), c(t)) - 1 = \theta. 
\end{align}
We define the functions $G_t(z)$ and $\G_t(z)$, which are analytic functions on $\compC \setminus [0, b(t)]$ and $\halfH \setminus [0, b(t)]$ respectively, by
\begin{equation} 
  N_t(s) =
  \begin{cases}
    J_{c(t)}(s) G_t(J_{c(t)}(s)), & s \in \compC \setminus \overline{D_t}, \\
    J_{c(t)}(s) \G_t(J_{c(t)}(s)), & s \in D_t \setminus [-1, 0].
  \end{cases}
\end{equation}
Like \eqref{eq:psi(x)_in_G_tilde_G}, for all $x \in (0, b(t))$,
\begin{equation} \label{eq:psi(x)_in_G_tilde_G_extend}
  \begin{split}
    (\G_t)_-(x) - (\G_t)_+(x) = {}& x^{-1}(N_{t, \In}(\Iinv_{+, t}(x)) - N_{t, \In}(\Iinv_{-, t}(x))) \\
    = {}& x^{-1}(N_{t, \Out}(\Iinv_{-, t}(x)) - N_{t, \Out}(\Iinv_{+, t}(x))) = (G_t)_-(x) - (G_t)_+(x),
  \end{split}
\end{equation}
and then we define
\begin{equation}
  \psi_t(x) = \frac{1}{2\pi i} ((G_t)_-(x) - (G_t)_+(x)) = \frac{1}{2\pi i} ((\G_t)_-(x) - (\G_t)_+(x)),
\end{equation}
and define the functions $\gfn_t(z)$ and $\gfntilde_t(z)$, which are analytic functions on $\compC \setminus (-\infty, b(t)]$ and $\halfH \setminus [0, b(t)]$ respectively, by
\begin{align} \label{eq:gfn_gfntilde_t}
  \gfn'_t(z) = {}& G_t(z), & \gfntilde'_t(z) = \G_t(z), && \text{and} &&  \lim_{z \to \infty} \gfn_t(z)-\log z = \lim_{z \to \infty} \gfntilde_t(z)-\log z^{\theta} = {}& 0.
\end{align}
Then let
\begin{equation} \label{eq:defn_phi}
  \phi_t(z) = \gfn_t(z) + \gfntilde_t(z) - V_t(z) - \ell_t, \quad \ell_t = (\gfn_t)_{\pm}(x) + (\gfntilde_t)_{\mp}(x) -V_t(x) \quad \text{for  $x \in (0, b(t))$}.
\end{equation}
We note that
\begin{equation} \label{eq:total_measure_1}
  \begin{split}
    \int^{b(t)}_0 \psi_t(x) dx = & \frac{1}{2\pi i} \int^{b(t)}_0 (G_t)_-(x) - (G_t)_+(x) dx = \frac{1}{2\pi i} \int^{b(t)}_0 (\G_t)_-(x) - (\G_t)_+(x) dx \\
    = {}& \frac{-1}{2\pi i} \oint_{\gamma} \G_t(J_{c(t)}(s)) d J_{c(t)}(s) \\
    = {}& \theta^{-1} \frac{1}{2\pi i} \oint_{\gamma_{\Out}}\frac{N_{t, \In}(s)}{s} ds - (1 + \theta^{-1}) \frac{1}{2\pi i} \oint_{\gamma_{\Out}} \frac{N_{t, \In}(s)}{s + 1} ds \\
    = {}& \theta^{-1} N_{t, \In}(0) - (1 + \theta^{-1}) N_{t, \In}(-1) = 1.
  \end{split}
\end{equation}

When $t = 1$, all these notions above coincide with those defined in Section \ref{subsec:exact_equilibrium_measure} without $t$. Since $J_{c(t)}(s)$ depends on $t$ analytically, we find that if $\lvert t - 1 \rvert$ is small enough, Regularity Condition \ref{reg:primary} still holds. Although we have not verified that $\psi_t(x)$ defines an equilibrium measure, by the discussion at the end of Section \ref{subsec:exact_equilibrium_measure}, Regularity Condition \ref{reg:primary} can be expressed as conditions on $N(s)$, and we find that these conditions hold on $N_t(s)$ if $\lvert t - 1 \rvert$ is small enough. To be precise, we have
\begin{regularitycondition} \label{reg:one-cut}
  Let $\epsilon_1$ and $\epsilon_2$ be small enough positive numbers. There is $\delta_1 > 0$, such that for all $t \in [1 - \delta_1, 1 + \delta_1]$,
  \begin{enumerate} 
  \item 
    $\psi_t(x) > 0$ strictly for all $x \in [\epsilon_1, b(t) - \epsilon_2]$.
  \item
    \begin{equation} \label{eq:N_t_in_at_s_b}
      N'_{t, \In}(\theta^{-1}) = \frac{1}{2\pi i} \oint_{\gamma_{\Out}} \frac{U_t(J_{c(t)}(s))}{(s - \theta^{-1})^2} ds > 0,
    \end{equation}
    or equivalently, there is a positive number $d_{2, t} = d^{(V_t)}_{2, t}$ and $\delta_2 > 0$ depending on $\epsilon_2$, such that
    \begin{equation} \label{eq:psi_t_at_b}
      \psi_t(x) = d_{2, t} (b(t) - x)^{\frac{1}{2}} (1 + h_{2, t}(x)), \quad x \in (b(t) - \epsilon_2, b(t)),
    \end{equation}
    such that $h_{2, t}(x) \to 0$ as $x \to b(t)_-$ and $\lvert h_{2, t}(x) \rvert \leq \delta_2$ for all $x \in (b(t) - \epsilon_2, b(t))$. 
  \item
    \begin{equation}
      \phi_t(x) > 0, \quad x \in (b(t), +\infty).
    \end{equation}
  \end{enumerate}
\end{regularitycondition}

The function $N_{t, \In}(s)$ depends on $t$ analytically. From \eqref{eq:critical_in_N_in} we have that $N'_{1, \In}(-1) = 0$. We also have, by \eqref{eq:critical_in_N_in}, \eqref{eq:defn_A2} and \eqref{eq:c'(1)},
\begin{equation} \label{eq:N'_t_in(-1)_dt}
  \begin{split}
    \left. \frac{d}{dt} N'_{t, \In}(-1) \right\rvert_{t = 1} = {}& \frac{1}{2\pi i} \oint_{\gamma_{\Out}} \frac{\left. \frac{d}{dt} U_t(J_{c(t)}(s)) \right\rvert_{t = 1}}{(s + 1)^2} ds \\
    = {}& \left( \frac{c'(1)}{c} - 1 \right) \frac{1}{2\pi i} \oint_{\gamma_{\Out}} \frac{U(J_c(s))}{(s + 1)^2} ds + \frac{c'(1)}{c} \frac{1}{2\pi i} \oint_{\gamma_{\Out}} J_c(s)^2 V''(J_c(s)) \frac{ds}{(s + 1)^2} \\
    = {}& \left( \frac{c'(1)}{c} - 1 \right) \cdot 0 + \frac{c'(1)}{c} A_2 = \frac{\theta}{\theta + 1}.
  \end{split}
\end{equation}
By \eqref{eq:N_t_in(-1)}, \eqref{eq:critical_in_N_in} and \eqref{eq:N'_t_in(-1)_dt}, we have that when $\lvert t - 1 \rvert$ is small,
\begin{equation}
  N_{t, \In}(s) = \frac{\theta}{\theta + 1} u (s + 1) + A_1(s + 1)^2 + \bigO(u^2(s + 1)) +\bigO(u(s + 1)^2)+ \bigO((s + 1)^3), \quad s \to -1.
\end{equation}
and $N_{t, \Out}(s)$ has a similar asymptotic expansion since for $s$ in the domain of $N_{t, \Out}(s)$ and $s$ is close to $-1$, 
\begin{equation} \label{eq:relation_U_in_U_out}
  N_{t, \Out}(s) = U_t(J_{c(t)}(s)) - N_{t, \In}(s).
\end{equation}

We have that as $z \to 0$, the limits of $\Iinv_{1, t}(z)$ and $\Iinv_{2, t}(z)$ are analogous to those of $\Iinv_1(z)$ and $\Iinv_2(z)$ given in \eqref{eq:I_1_at_0} and \eqref{eq:I_2_at_0}, with $c$ replaced by $c(t)$ there.
Noting that $c(t) = c + \bigO(u)$, we have ($\rho$ is defined in \eqref{eq:defn_rho})
\begin{equation} \label{eq:formula_of_G(z)}
  \begin{split}
    G_t(z) = {}& G_t(J_{c(t)}(\Iinv_{1, t}(z))) = \frac{1}{z} N_{t, \Out}(\Iinv_{1, t}(z)) \\
    = {}& V'_t(z) -
             \begin{cases}
               \rho e^{\frac{2\pi i}{\theta + 1}} z^{\frac{\theta - 1}{\theta + 1}} (1 + \bigO(u) + \bigO(z^{\frac{\theta}{\theta + 1}})) \\
               + \frac{\theta}{\theta + 1} u c^{-\frac{\theta}{\theta + 1}} e^{\frac{\pi i}{\theta + 1}} z^{\frac{-1}{\theta + 1}} (1 + \bigO(u) + \bigO(z^{\frac{\theta}{\theta + 1}})), & \arg z \in (0, \pi), \\
               \rho e^{-\frac{2\pi i}{\theta + 1}} z^{\frac{\theta - 1}{\theta + 1}} (1 + \bigO(u) + \bigO(z^{\frac{\theta}{\theta + 1}})) & \\
               + \frac{\theta}{\theta + 1} u c^{-\frac{\theta}{\theta + 1}} e^{-\frac{\pi i}{\theta + 1}} z^{\frac{-1}{\theta + 1}} (1 + \bigO(u) + \bigO(z^{\frac{\theta}{\theta + 1}})), & \arg z \in (-\pi, 0), 
             \end{cases}
  \end{split}
\end{equation}
\begin{equation} \label{eq:formula_of_Gtilde(z)}
  \begin{split}
    \G_t(z) = {}& \G_t(J_{c(t)}(\Iinv_{2, t}(z))) = \frac{1}{z} N_{t, \In}(\Iinv_{2, t}(z)) \\
    = {}& 
        \begin{cases}
          \rho e^{-\frac{2\pi i}{\theta + 1}} z^{\frac{\theta - 1}{\theta + 1}} (1 + \bigO(u) + \bigO(z^{\frac{\theta}{\theta + 1}})) & \\
          + \frac{\theta}{\theta + 1} u c^{-\frac{\theta}{\theta + 1}} e^{-\frac{\pi i}{\theta + 1}} z^{\frac{-1}{\theta + 1}} (1 + \bigO(u) + \bigO(z^{\frac{\theta}{\theta + 1}})), & \arg z \in (0, \frac{\pi}{\theta}), \\
          \rho e^{\frac{2\pi i}{\theta + 1}} z^{\frac{\theta - 1}{\theta + 1}} (1 + \bigO(u) + \bigO(z^{\frac{\theta}{\theta + 1}})) \\
          + \frac{\theta}{\theta + 1} u c^{-\frac{\theta}{\theta + 1}} e^{\frac{\pi i}{\theta + 1}} z^{\frac{-1}{\theta + 1}} (1 + \bigO(u) + \bigO(z^{\frac{\theta}{\theta + 1}})), & \arg z \in (-\frac{\pi}{\theta}, 0).
        \end{cases}
  \end{split}
\end{equation}
From the asymptotic expansions of $G_t$ and $\G_t$, we have that for $x$ close to $0$,
\begin{equation} \label{eq:limit_psi_t_at_0}
    \psi_t(x) = d_1 x^{\frac{\theta - 1}{\theta + 1}} (1 + \bigO(u) + \bigO(x^{\frac{\theta}{\theta + 1}})) + \frac{\theta u}{(\theta + 1)\pi} c^{\frac{-\theta}{\theta + 1}} \sin \left( \frac{\pi}{\theta + 1} \right) x^{\frac{-1}{\theta + 1}} (1 + \bigO(u) + \bigO(x^{\frac{\theta}{\theta + 1}})).
\end{equation}
\begin{rmk} \label{rmk:eq_measure_or_not}
  Given a small enough $\epsilon_1 > 0$, if $u = t - 1$ is a small enough positive number, we have that $\psi_t(x)$ is positive on $(0, \epsilon_1)$ and $\psi_t(x)$ blows up as $x^{-\theta/(\theta + 1)}$ at $0$. Hence, by \cite[Theorem 1.11]{Claeys-Romano14}, $d\mu^{(V_t)} = \psi_t(x) dx$ supported on $[0, b(t)]$ is the equilibrium measure of $V_t$, and it satisfies Regularity Condition \ref{reg:hard-edge}. However, if $u = t - 1$ is a negative number with small magnitude, $\psi_t(x)$ is not the equilibrium measure for $V_t(x)$. It is not even a probability measure since it is negative in the vicinity of $0$. 
\end{rmk}

Equation \eqref{eq:total_measure_1} implies $(\gfn_t)_+(0) - (\gfn_t)_-(0) = (\gfntilde_t)_+(0) - (\gfntilde_t)_-(0) = 2 \pi i$. Then we have the asymptotic expansions for $\gfn_t(z)$ and $\gfntilde_t(z)$ defined in \eqref{eq:gfn_gfntilde_t}
\begin{multline}
  \gfn_t(z) = \Re (\gfn_t)_+(0) + V_t(z)-V_t(0) \\ -
  \begin{cases}
    -\pi i + \frac{\theta + 1}{2\theta} \rho e^{\frac{2\pi i}{\theta + 1}} z^{\frac{2\theta}{\theta + 1}} (1 + \bigO(u) + \bigO(z^{\frac{\theta}{\theta + 1}})) \\
    + u c^{-\frac{\theta}{\theta + 1}} e^{\frac{\pi i}{\theta + 1}} z^{\frac{\theta}{\theta + 1}} (1 + \bigO(u) + \bigO(z^{\frac{\theta}{\theta + 1}})), & \arg z \in (0, \pi), \\
    \pi i + \frac{\theta + 1}{2\theta} \rho e^{-\frac{2\pi i}{\theta + 1}} z^{\frac{2\theta}{\theta + 1}} (1 + \bigO(u) + \bigO(z^{\frac{\theta}{\theta + 1}})) & \\
    + u c^{-\frac{\theta}{\theta + 1}} e^{-\frac{\pi i}{\theta + 1}} z^{\frac{\theta}{\theta + 1}} (1 + \bigO(u) + \bigO(z^{\frac{\theta}{\theta + 1}})), & \arg z \in (-\pi, 0), 
  \end{cases}
\end{multline}
\begin{equation}
  \gfntilde_t(z) = \Re (\gfntilde_t)_+(0) +
  \begin{cases}
    \pi i + \frac{\theta + 1}{2\theta} \rho e^{-\frac{2\pi i}{\theta + 1}} z^{\frac{2\theta}{\theta + 1}} (1 + \bigO(u) + \bigO(z^{\frac{\theta}{\theta + 1}})) & \\
    + u c^{-\frac{\theta}{\theta + 1}} e^{-\frac{\pi i}{\theta + 1}} z^{\frac{\theta}{\theta + 1}} (1 + \bigO(u) + \bigO(z^{\frac{\theta}{\theta + 1}})), & \arg z \in (0, \frac{\pi}{\theta}), \\
    -\pi i + \frac{\theta + 1}{2\theta} \rho e^{\frac{2\pi i}{\theta + 1}} z^{\frac{2\theta}{\theta + 1}} (1 + \bigO(u) + \bigO(z^{\frac{\theta}{\theta + 1}})) \\
    + u c^{-\frac{\theta}{\theta + 1}} e^{\frac{\pi i}{\theta + 1}} z^{\frac{\theta}{\theta + 1}} (1 + \bigO(u) + \bigO(z^{\frac{\theta}{\theta + 1}})), & \arg z \in (-\frac{\pi}{\theta}, 0),
  \end{cases}
\end{equation}
where $(\gfn_t)_{\pm}(0)$ (resp.~$(\gfntilde_t)_{\pm}(0)$) is the limit of $\gfn_t(z)$ (resp.~$\gfntilde_t(z)$) as $z \to 0$ from $\compC_{\pm}$ (resp.~$(\halfH)_{\pm}$). Then $\phi_t(z)$ in \eqref{eq:defn_phi} has the asymptotic expansions
\begin{equation}
  \phi_t(z) = 
  \begin{cases}
    2\pi i - \frac{\theta + 1}{\theta} \pi d_1 i z^{\frac{2\theta}{\theta + 1}} (1 + \bigO(u) + \bigO(z^{\frac{\theta}{\theta + 1}})) & \\
    - 2u c^{-\frac{\theta}{\theta + 1}} \sin \left( \frac{\pi}{\theta + 1} \right) i z^{\frac{\theta}{\theta + 1}} (1 + \bigO(u) + \bigO(z^{\frac{\theta}{\theta + 1}})), & \arg z \in (0, \frac{\pi}{\theta}), \\
    -2\pi i + \frac{\theta + 1}{\theta} \pi d_1 i z^{\frac{2\theta}{\theta + 1}} (1 + \bigO(u) + \bigO(z^{\frac{\theta}{\theta + 1}})) \\
    + 2u c^{-\frac{\theta}{\theta + 1}} \sin \left( \frac{\pi}{\theta + 1} \right) i z^{\frac{\theta}{\theta + 1}} (1 + \bigO(u) + \bigO(z^{\frac{\theta}{\theta + 1}})), & \arg z \in (-\frac{\pi}{\theta}, 0).
  \end{cases}
\end{equation}

Similarly, from \eqref{eq:N_t_in_at_s_b} and \eqref{eq:psi_t_at_b}, let $\epsilon, \delta$ be small enough positive numbers, we have for $t \in [1 - \delta, 1 + \delta]$ and $z \in D(b(t), \epsilon) \setminus [0, b(t)]$,
\begin{equation} \label{eq:behaviour_phi_t_at_b}
  \phi_t(z) = -\frac{4\pi}{3} d_{2,t} (z - b(t))^{\frac{3}{2}} (1 + \bigO(z - b(t))),
\end{equation}
where the term $\bigO(z - b(t))$ is analytic there.

\begin{rmk} \label{rmk:about_theta=1_transitive}
  This result of the equilibrium measure holds for all $\theta > 1$, and essentially also holds for $\theta \in (0, 1)$. However, it does not hold for $\theta = 1$. For $\theta = 1$, in the hard-soft edge transition, the behaviour of the density near the edge is not $\bigO(x^{(\theta - 1)/(\theta + 1)}) = \bigO(1)$, but $\bigO(x^{1/2})$, see \cite{Its-Kuijlaars-Ostensson08}, \cite{Its-Kuijlaars-Ostensson09}. 
\end{rmk}

\subsection{Deformation along $C_2$  when $1 - t$ is small positive} \label{subsec:deformation_along_C2_small_t}
 


For $V_t(x)$ defined in \eqref{eq:defn_Vt} with $u = t - 1 < 0$, and $\lvert u \rvert$ small, by \eqref{eq:derivative_c_1_c_0} and \eqref{eq:w_for_c_1_c_0} we have that 
\begin{align}
  c_1(t) = {}& c + \frac{\theta c}{\theta^2 - 1} \frac{(\theta - 1)A_1 + \theta A_2}{A_1 A_2} u + \bigO(u^2), & c_0(t) = {}& c + \frac{\theta c}{\theta^2 - 1} \frac{(\theta - 1)A_1 - A_2}{A_1 A_2} u + \bigO(u^2).
\end{align}
We define, like \cite[Equations (1.38) and (1.39) and Theorem 1.12]{Claeys-Romano14},
\begin{align}
  s_a(t) = {}& -\frac{\theta - 1}{2\theta} - \frac{1}{2\theta c_1(t)} \sqrt{4 \theta c_0(t) c_1(t) + (\theta - 1)^2 c^2_1(t)}, & \aend(t) = {}& J_{c_1(t), c_0(t)}(s_a(t)), \label{eq:sat_and_at} \\
  s_b(t) = {}& -\frac{\theta - 1}{2\theta} + \frac{1}{2\theta c_1(t)} \sqrt{4 \theta c_0(t) c_1(t) + (\theta - 1)^2 c^2_1(t)}, & \bend(t) = {}& J_{c_1(t), c_0(t)}(s_a(t)), \label{eq:sbt_and_bt}
\end{align}
and have
\begin{align}
  s_a(t) = {}&
               -1 + \frac{\theta u}{(\theta^2 - 1)A_1} + \bigO(u^2), & \aend(t) = {}& \theta c \left( \frac{\theta}{\theta^2 - 1} \frac{-u}{A_1} \right)^{1 + \frac{1}{\theta}} (1 + \bigO(u)), \\
  s_b(t) = {}&
               \frac{1}{\theta} - \frac{\theta u}{(\theta^2 - 1)A_1} + \bigO(u^2) & \bend(t) = {}& b + (\theta + 1)^{\frac{1}{\theta}} \frac{c}{A_2}u + \bigO(u^2).
\end{align}

We define, for such $t$, $\gammahat_{1, t}$ to be the curve connecting $s_a(t)$ to $s_b(t)$ such that $J_{c_1(t), c_0(t)}$ maps $\gammahat_{1, t}$ to the interval $[\aend(t), \bend(t)]$, and then $\gammahat_{2, t} = \overline{\gammahat_{1, t}} \subseteq \compC_-$ and $\gammahat_t = \gammahat_{1, t} \cup \gammahat_{2, t}$ to be the closed contour oriented counterclockwise and $\D_t$ to be the region enclosed by $\gammahat_t$. Then $J_{c_1(t), c_0(t)}$ maps $\compC \setminus \overline{\D_t}$ to $\compC \setminus [\aend(t), \bend(t)]$ and $\D_t \setminus [-1,0]$ to $\halfH \setminus [\aend(t), \bend(t)]$ univalently. Then we define $\Iinvhat_{1, t}(z)$, $\Iinvhat_{2, t}(z)$ to be the inverses of $J_{c_1(t), c_0(t)}(z)$ on $\compC \setminus \overline{\D_t}$ and $\D_t \setminus [-1, 0]$ respectively. Also we define $\Iinvhat_{+, t}(x)$ (resp.~$\Iinvhat_{-, t}(x)$) to be the inverse map of $J_{c_1(t), c_0(t)}(s)$ from $\gammahat_{1, t}$ (resp.~$\gammahat_{2, t}$) to $[\aend(t), \bend(t)]$.

Let, analogous to \eqref{eq:formula_N_in_t=1} and \eqref{eq:formula_N_in}, with $U_t$ defined in \eqref{eq:formula_N_in},
\begin{equation} \label{eq:formula_Nhat_in}
  \Nhat_t(s) = 
  \begin{cases}
    \Nhat_{t, \In}(s) = \frac{1}{2\pi i} \oint_{\gammahat_t} \frac{U_t(J_{c_1(t), c_0(t)}(\xi))}{\xi - s} d\xi - 1, & s \in \D_t \setminus [-1, 0], \\
    \Nhat_{t, \Out}(s) = -\frac{1}{2\pi i} \oint_{\gammahat_t} \frac{U_t(J_{c_1(t), c_0(t)}(\xi))}{\xi - s} d\xi + 1, & s \in \compC \setminus \overline{\D_t}.
  \end{cases}
\end{equation}
Like in Sections \ref{subsec:exact_equilibrium_measure} and \ref{subsec:deformation_C1}, we can deform $\gammahat_t$ to a slightly larger contour $\gamma_{\Out}$ that encloses $\gammahat_t$ in the evaluation of $\Nhat_{t, \In}(s)$ in \eqref{eq:formula_Nhat_in}. For $t$ close to $1$, we can take $\gamma_{\Out}$ the same as in Sections \ref{subsec:exact_equilibrium_measure} and \ref{subsec:deformation_C1}, independent of $t$. In this way we find that  $\Nhat_{t, \In}(s)$ is analytic in a neighbourhood of $-1$ including $s_a(t)$, and a neighbourhood of $\theta^{-1}$ including $s_b(t)$. 

Since $c_1(t), c_0(t)$ satisfy \eqref{eq:F_E}, we have that, like \eqref{eq:N_t_in(-1)},
\begin{align} \label{eq:N_tilde_t_in(-1)}
  \Nhat_{t, \In}(-1) = {}& \frac{1}{t} E(c_1(t), c_0(t)) - 1 = 0, & \Nhat_{t, \In}(0) = {}& \frac{1}{t} F(c_1(t), c_0(t)) - 1 = \theta. 
\end{align}
By the identity
\begin{equation}
  \frac{1}{2\pi i} \oint_{\gamma_{\Out}} U_t(J_{c_1(t), c_0(t)}(\xi)) \left( \frac{1}{\xi + \frac{c_0(t)}{c_1(t)}} + \frac{\theta^{-1}}{s + 1} - \frac{\theta^{-1}}{s} \right) d\xi = \frac{1}{2\pi i} \oint_{\gamma_{\Out}} \frac{d}{d\xi} V_t(J_{c_1(t), c_0(t)}(\xi)) d\xi = 0,
\end{equation}
we derive that
\begin{equation}
  \begin{split}
    \Nhat_{t, \In}(-\frac{c_0(t)}{c_1(t)}) = {}& \frac{1}{2\pi i} \oint_{\gamma_{\Out}} U_t(J_{c_1(t), c_0(t)}(\xi)) \frac{1}{\xi + \frac{c_0(t)}{c_1(t)}} d\xi - 1\\
    = {}& \frac{1}{2\pi i} \oint_{\gamma_{\Out}} U_t(J_{c_1(t), c_0(t)}(\xi)) \left( -\frac{\theta^{-1}}{s + 1} + \frac{\theta^{-1}}{s} \right) d\xi - 1 \\
    = {}& \frac{1}{\theta t} \left( -E(c_1(t), c_0(t)) + F(c_1(t), c_0(t)) \right) - 1\\
    = {}& 0.
  \end{split}
\end{equation}
We define the functions $\Ghat_t(z)$ and $\Ghattilde_t(z)$, which are analytic functions on $\compC \setminus [\aend(t), \bend(t)]$ and $\halfH \setminus [\aend(t), \bend(t)]$ respectively, by
\begin{equation} \label{eq:N(s)}
  \Nhat_t(s) =
  \begin{cases}
    J_{c_1(t), c_0(t)}(s) \Ghat(J_{c_1(t), c_0(t)}(s)), & \text{outside $\gammahat_t$}, \\
    J_{c_1(t), c_0(t)}(s) \Ghattilde(J_{c_1(t), c_0(t)}(s)), & \text{inside $\gammahat_t$}.
  \end{cases}
\end{equation}

Like \eqref{eq:psi(x)_in_G_tilde_G} and \eqref{eq:psi(x)_in_G_tilde_G_extend}, for all $x \in (\aend(t), \bend(t))$,
\begin{equation}
  \begin{split}
    (\Ghat_t)_-(x) - (\Ghat_t)_+(x) = {}& x^{-1}(\Nhat_{t, \In}(\Iinvhat_{+, t}(x)) - \Nhat_{t, \In}(\Iinvhat_{-, t}(x))) \\
    = {}& x^{-1}(\Nhat_{t, \Out}(\Iinvhat_{-, t}(x)) - \Nhat_{t, \Out}(\Iinvhat_{+, t}(x))) = (\Ghattilde_t)_-(x) - (\Ghattilde_t)_+(x),
  \end{split}
\end{equation}
and then we define
\begin{equation} \label{eq:defn_psihat_small_u}
  \psihat_t(x) = \frac{1}{2\pi i} ((\Ghat_t)_-(x) - (\Ghat_t)_+(x)) = \frac{1}{2\pi i} ((\Ghattilde_t)_-(x) - (\Ghattilde_t)_+(x)).
\end{equation}
Then we define the functions $\gfnhat_t(z)$ and $\gfnhattilde_t(z)$, which are analytic functions on $\compC \setminus (-\infty, \bend(t)]$ and $\halfH \setminus [0, \bend(t)]$ respectively, by
\begin{align} \label{eq:gfnhat_gfnhattilde_t}
  \gfnhat'_t(z) = {}& \Ghat_t(z), & \gfnhattilde'_t(z) = \Ghat_t(z), && \text{and} && \lim_{z \to \infty} \gfnhat_t(z) -\log z = \lim_{z \to \infty} \gfnhattilde_t(z) -\log z^{\theta} = {}& 0.
\end{align}
From the formulas of $\Ghat_t(z)$ and $\Ghattilde_t(z)$, we can find the expressions of $\gfnhat_t(z)$ and $\gfnhattilde_t(z)$ by integration. At last, we define
\begin{equation} \label{eq:defn_ellha_t}
  \phihat_t(z) = \gfnhat_t(z) + \gfnhattilde_t(z) - V_t(z) - \ellhat_t, \quad \ellhat_t = (\gfnhat_t)_{\pm}(x) + (\gfnhattilde_t)_{\mp}(x) -V_t(x) \quad \text{for $x \in (\aend(t), \bend(t))$}.
\end{equation}
We note that, like \eqref{eq:total_measure_1},
\begin{equation} \label{eq:total_measure_1_soft}
  \begin{split}
    \int^{\bend(t)}_{\aend(t)} \psihat_t(x) dx = {}& \frac{1}{2\pi i} \int^{\bend(t)}_{\aend(t)} (\Ghat_t)_-(x) - (\Ghat_t)_+(x) dx = \frac{1}{2\pi i} \int^{\bend(t)}_{\aend(t)} (\Ghattilde_t)_-(x) - (\Ghattilde_t)_+(x) dx \\
    = {}& \frac{-1}{2\pi i} \oint_{\gammahat_t} \Ghattilde_t(J_{c_1(t), c_0(t)}(s)) d J_{c_1(t), c_0(t)}(s) \\
    = {}& \theta^{-1} \frac{1}{2\pi i} \oint_{\gamma_{\Out}}\frac{\Nhat_{t, \In}(s)}{s} ds - \theta^{-1} \frac{1}{2\pi i} \oint_{\gamma_{\Out}} \frac{\Nhat_{t, \In}(s)}{s + 1} ds - \frac{1}{2\pi i} \oint_{\gamma_{\Out}} \frac{\Nhat_{t, \In}(s)}{s + \frac{c_0(t)}{c_1(t)}} ds \\
    = {}& \theta^{-1} \Nhat_{t, \In}(0) - \theta^{-1} \Nhat_{t, \In}(-1) - \Nhat\left(-\frac{c_0(t)}{c_1(t)}\right) = 1.
  \end{split}
\end{equation}

When $t = 1$, the notions $s_{a(t)}$, $s_{b(t)}$, $\aend(t)$, $\bend(t)$, $\gammahat_{1, t}$, $\gammahat_{2, t}$,$\gammahat_t$, $\D_t$, $\Nhat_{t, \In}$, $\Nhat_{t, \Out}$, $\Nhat_t$, $\Ghat_t$, $\Ghattilde_t$ and $\psihat_t$ above coincide with $-1$, $\theta^{-1}$, $0$, $b$, $\gamma_1$, $\gamma_2$,$\gamma$, $D$, $N_{\In}$, $N_{\Out}$, $N$, $G$, $\G$ and $\psi$ occurring in Section \ref{subsec:exact_equilibrium_measure} respectively. Since $J_{c_1(t), c_0(t)}(s)$ depends on $t$ analytically, we find that if $-u = 1 - t$ is a small enough positive number, the Regularity Condition \ref{reg:primary} still holds, except for the strict negativity of \eqref{eq:E-L_2} on $(0, \aend(t))$ which will be discussed in part \ref{enu:prop:phihat_1} of Proposition \ref{prop:phihat}. To be precise, we have
\begin{regularitycondition} \label{reg:one-cut_soft}
  Let $\epsilon_1$ and $\epsilon_2$ be small enough positive numbers. There is $\delta_1 > 0$, such that for all $t \in [1 - \delta_1, 0]$,
  \begin{enumerate} 
  \item 
    $\psihat_t(x) > 0$ strictly for all $x \in [\aend(t) + \epsilon_1, \bend(t) - \epsilon_2]$.
  \item
    \begin{equation}
      \Nhat'_{t, \In}(s_{b(t)}) = \frac{1}{2\pi i} \oint_{\gamma_{\Out}} \frac{U_t(J_{c_1(t), c_2(t)}(s))}{(s - s_{b(t)})^2} ds > 0,
    \end{equation}
    or equivalently, there is a positive number $\dhat_{2, t} = \dhat^{(V_t)}_{2, t}$ and $\delta_2 > 0$ depending on $\epsilon_2$, such that
    \begin{equation} 
      \psihat_t(x) = \dhat_{2, t} (\bend(t) - x)^{\frac{1}{2}} (1 + \hhat_{2, t}(x)), \quad x \in (\bend(t) - \epsilon_2, \bend(t)),
    \end{equation}
    such that $\hhat_{2, t}(x) \to 0$ as $x \to \bend(t)_-$ and $\lvert \hhat_{2, t}(x) \rvert \leq \delta_2$ for all $x \in (\bend(t) - \epsilon_2, \bend(t))$. 
  \item
    \begin{equation}
      \phihat_t(x) > 0, \quad x \in (\bend(t), +\infty).
    \end{equation}
  \end{enumerate}
\end{regularitycondition}

Since the function $\Nhat_{t, \In}(s)$ depends on $t$ analytically, we have  $\Nhat_{t, \In}(s) = N_{\In}(s)$  if $t = 1$. Also we have (noting that $J_{c(1), c(1)}(s) = J_c(s)$)
\begin{equation} \label{eq:Ntilde'_t_in(-1)_dt}
  \begin{split}
    \left. \frac{d}{dt} \Nhat'_{t, \In}(-1) \right\rvert_{t = 1} = {}& \frac{1}{2\pi i} \oint_{\gamma_{\Out}} \frac{\left. \frac{d}{dt} U_t(J_{c_1(t), c_0(t)}(s)) \right\rvert_{t = 1}}{(s + 1)^2} ds \\
    = {}& \frac{-1}{2\pi i} \oint_{\gamma_{\Out}} \frac{U(J_c(s))}{(s + 1)^2} ds + \frac{c'_1(1)}{c} \frac{1}{2\pi i} \oint_{\gamma_{\Out}} \frac{J_c(s) U'(J_c(s))}{(s + 1)^2} ds \\
    & + \frac{c'_0(1) - c'_1(1)}{c} \frac{1}{2\pi i} \oint_{\gamma_{\Out}} \frac{J_c(s) U'(J_c(s))}{(s + 1)^3} ds \\
    = {}& \frac{c_1'(1)}{c} A_2 + \frac{c'_0(1) - c'_1(1)}{c} \left( \frac{\theta}{\theta + 1} A_2 + \frac{2\theta}{\theta + 1} A_1 \right) \\
    = {}& -\frac{\theta}{\theta - 1}.
  \end{split}
\end{equation}
By \eqref{eq:N_tilde_t_in(-1)}, \eqref{eq:critical_in_N_in} and \eqref{eq:Ntilde'_t_in(-1)_dt}, we have that
\begin{equation} \label{eq:limit_Nhat_t_In}
  \Nhat_{t, \In}(s) = -\frac{\theta}{\theta - 1} u (s + 1) + A_1(s + 1)^2 + \bigO(u^2(s + 1)) +\bigO(u(s + 1)^2)+  \bigO((s + 1)^3), \quad s \to -1.
\end{equation}
when $t$ is small, and $\Nhat_{t, \Out}(s)$ has a similar asymptotic expansion since for $s$ in the domain of $N_{t, \Out}(s)$ and $s$ is close to $-1$, 
\begin{equation} \label{eq:relation_U_in_U_out_tilde}
  \Nhat_{t, \Out}(s) = U_t(J_{c_1(t), c_0(t)}(s)) - \Nhat_{t, \In}(s).
\end{equation}

Now we consider the behaviour of $\Iinvhat_{1, t}(z)$ and $\Iinvhat_{2, t}(z)$ around $0$ and $\aend(t)$. To this end, we use the functions defined in Appendix \ref{sec:limiting_J}, namely $J^{(\pre)}$ defined in \eqref{eq:J_mapping} and $\Iinvhat^{(\pre)}_1$ and $\Iinvhat^{(\pre)}_2$ as the inverse functions of $J^{(\pre)}$. We denote, for $u = t - 1 < 0$,
\begin{align}
  k(u) = {}& \frac{\theta^2 - 1}{\theta^{1 + \frac{\theta}{\theta + 1}}} \frac{A_1}{-u}, & K(u) = {}& \theta c \left( \frac{\theta}{\theta^2 - 1} \frac{-u}{A_1} \right)^{\frac{\theta + 1}{\theta}}.
\end{align}
We have that if $(-u)$ is small and $\lvert s + 1 \rvert$ is small, then $K(u) J^{(\pre)}(k(u) (s + 1))$ is an approximation of $J_{c_1(t), c_0(t)}(s)$. Hence, $-1 + k(u)^{-1} \Iinvhat^{(\pre)}_1(K(u)^{-1}z)$ and $-1 + k(u)^{-1} \Iinvhat^{(\pre)}_1(K(u)^{-1}z)$ are approximations of $\Iinvhat_{1, t}(z)$ and $\Iinvhat_{2, t}(z)$, respectively. To be precise, if $\epsilon' < 0$ is a small positive number, $C$ be a large positive number, and for all $t$ such that $(1 - t) C < \epsilon$, we have
\begin{align}
  \Iinvhat_{1, t}(z) = {}& -1 + k(u)^{-1} f^{(1)}_0 \left( \Iinv^{(\pre)}_1 \left( g^{(1)}_0(K(u)^{-1} z) \right) \right), \quad z \in \compC \setminus (\aend(t), \bend(t)) \text{ and } \lvert z \rvert < \epsilon', \label{eq:limit_Ihat_1} \\
  \Iinvhat_{2, t}(z) = {}& -1 + k(u)^{-1} f^{(2)}_0 \left( \Iinv^{(\pre)}_2 \left( g^{(2)}_0(K(u)^{-1} z) \right) \right), \quad z \in \halfH \setminus (\aend(t), \bend(t)) \text{ and } \lvert z \rvert < \epsilon', \label{eq:limit_Ihat_2}
\end{align}
where $g^{(1)}_0$ is a conformal mapping from $D(0, K(u)^{-1} \epsilon')$ to $\compC$, $g^{(2)}_0(z)$ is a conformal mapping from $D(0, K(u)^{-1} \epsilon') \cap \halfH$ to $\halfH$, $f^{(1)}_0(s)$ is a conformal mapping from $D(0, k(u) \epsilon'') \cap (\compC \setminus \overline{\D^{(\pre)}})$ to $\compC$, and $f^{(2)}_0(s)$ is a conformal mapping from $D(0, k(u) \epsilon'') \cap \D^{(\pre)}$ to $\compC$, where $\D^{(\pre)}$ is defined in Appendix \ref{sec:limiting_J}, and $\epsilon'' > 0$ is a constant depending on $\epsilon'$. All the four conformal mappings are close to the identity mapping, in the sense that for $i = 1, 2$, there is $C' > 0$ and $\delta' \in (0, 1)$ that for $z, s$ in the domains of $g^{(i)}_0$ and $f^{(i)}_0$ stated above,
\begin{align}
  \lvert g^{(i)}_0(z) - z \rvert \leq {}& (-u) C' \lvert z \rvert, & \lvert f^{(i)}_0(s) - s \rvert \leq {}& (-u) C' \lvert s \rvert, & \lvert z \rvert, \lvert s \lvert \leq C, \\
  \lvert g^{(i)}_0(z) - z \rvert \leq {}& \delta' \lvert z \rvert, & \lvert f^{(i)}_0(s) - s \rvert \leq {}& \delta' \lvert s \rvert, & \lvert z \rvert, \lvert s \lvert \geq C.
\end{align}

From the asymptotics of $\Iinvhat_{1, t}(z)$ and $\Iinvhat_{2, t}(z)$ in \eqref{eq:limit_Ihat_1} and \eqref{eq:limit_Ihat_2} and the asymptotics of $\Nhat_{t, \In}(s)$ in \eqref{eq:limit_Nhat_t_In}, we derive the asymptotics of $\Ghat_t(z)$, $\Ghattilde_t(z)$, $\gfnhat_t(z)$ and $\gfnhattilde_t(z)$ from \eqref{eq:gfnhat_gfnhattilde_t} and \eqref{eq:N(s)}, and also the asymptotics of $\phihat_t(z)$ in \eqref{eq:defn_ellha_t}, if $z$ is close to $0$. We are not going to state the detailed results of their asymptotic expansions, and only state the following rough estimates:

\begin{prop} \label{prop:phihat}
  Let $\epsilon, \epsilon'$ be two small enough constants. For all $t \in (1 - \epsilon, 1)$, $d\muhat^{(V_t)}(x) = \psihat_t(x) dx$ with support $[\aend(t), \bend(t)]$ is the equilibrium measure for $V_t$, and satisfies Regularity Conditions \ref{reg:primary} and \ref{reg:soft-edge}. Also there is $\delta > 0$ depending on $\epsilon$ and $\epsilon'$, such that
  \begin{enumerate}
  \item  \label{enu:prop:phihat_1}
    For all $x \in [0, \aend(t) - \epsilon' (-u)^{\frac{\theta + 1}{\theta}}]$,
    \begin{equation}
      \Re \phihat_t(x) < -\delta (-u)^2.
    \end{equation}
  \item
    For all $x, y \in [0, \epsilon' (-u)^{\frac{\theta + 1}{\theta}}]$,
    \begin{equation}
      \gfnhat_t(x) + \gfnhattilde_t(y) - V_t(x) - \ellhat_t < -\delta (-u)^2.
    \end{equation}
  \item
    For $z \in D(\bend(t), \epsilon) \setminus [\aend(t), \bend(t)]$ and $z \in D(\aend(t), \epsilon (-u)^{(\theta + 1)/\theta}) \setminus [\aend(t), \bend(t)]$,
    \begin{align}
      \phihat_t(z) = {}& -\frac{4\pi}{3} \dhat_{2,t} (z - \bend(t))^{\frac{3}{2}} (1 + \bigO(z - \bend(t))), \label{eq:phihat_est_b} \\
      \phihat_t(z) = {}& \pm i \frac{4\pi}{3} (-u)^2 \dhat_{1, t} ((-u)^{-\frac{\theta + 1}{\theta}}(z - \aend(t)))^{\frac{3}{2}} (1 + \bigO((-u)^{-\frac{\theta + 1}{\theta}}(z - \aend(t))) \pm 2\pi i, \label{eq:phihat_est_a}
    \end{align}
    where $\dhat_{1, t}$ and $\dhat_{2, t}$ are positive for all $t$, the term $\bigO(z - \bend(t))$ is analytic in $D(\bend(t), \epsilon)$, and the term $\bigO((-u)^{-\frac{\theta + 1}{\theta}}(z - \aend(t)))$ is analytic in $D(\aend(t), \epsilon (-u)^{(\theta + 1)/\theta})$. 
  \end{enumerate}
\end{prop}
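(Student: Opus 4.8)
The plan is to reduce every assertion to a local analysis of $\phihat_t$ near the three distinguished points $\bend(t)$, $\aend(t)$ and $0$, and then to invoke the fact that the Euler--Lagrange conditions \eqref{eq:E-L_1}--\eqref{eq:E-L_2} characterize the equilibrium measure. Throughout write $u=t-1<0$. The behaviour near $\bend(t)$ is the easiest: from \eqref{eq:sbt_and_bt}, $s_b(t)$ is the root of $\theta c_1(t)s^2+(\theta-1)c_1(t)s-c_0(t)=0$ near $\theta^{-1}$, hence a simple critical point of $J_{c_1(t),c_0(t)}$, so $J_{c_1(t),c_0(t)}(s)-\bend(t)$ vanishes to second order at $s_b(t)$ and $\Iinvhat_{\pm,t}(x)=s_b(t)+\bigO((\bend(t)-x)^{1/2})$. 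Inserting this into \eqref{eq:N(s)} and using that $\Nhat_{t,\In}$ is analytic near $s_b(t)$ with $\Nhat'_{t,\In}(s_b(t))>0$ (Regularity Condition \ref{reg:one-cut_soft}), one obtains the square-root vanishing of $\psihat_t$ at $\bend(t)$ with $\dhat_{2,t}>0$; integrating via \eqref{eq:gfnhat_gfnhattilde_t}--\eqref{eq:defn_ellha_t} then gives \eqref{eq:phihat_est_b}, exactly as \eqref{eq:behaviour_phi_t_at_b} was obtained in the $C_1$-deformation.

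The substantive part is the analysis near $\aend(t)$ and $0$. The key point is that these neighbourhoods live at scale $K(u)\asymp(-u)^{(\theta+1)/\theta}$ (indeed $\aend(t)=K(u)(1+\bigO(u))$), so both $D(\aend(t),\epsilon(-u)^{(\theta+1)/\theta})$ and $[0,\epsilon'(-u)^{(\theta+1)/\theta}]$ lie in a shrinking disc about $0$ whose $J_{c_1(t),c_0(t)}$-preimage is a shrinking disc about $-1$. On that disc I would substitute $s=-1+k(u)^{-1}w$, use the approximation $J_{c_1(t),c_0(t)}(s)\approx K(u)\,J^{(\pre)}(w)$ from \eqref{eq:limit_Ihat_1}--\eqref{eq:limit_Ihat_2} (so $\Iinvhat_{j,t}(K(u)\zeta)\approx-1+k(u)^{-1}\Iinvhat^{(\pre)}_j(\zeta)$), and the expansion $\Nhat_{t,\In}(s)=A_1(s+1)^2-\tfrac{\theta}{\theta-1}u(s+1)+(\text{smaller})$ from \eqref{eq:limit_Nhat_t_In}; since $k(u)^{-2}\asymp u\,k(u)^{-1}\asymp u^2$, this yields $\Nhat_{t,\In}(s)=u^2\bigl(N_\ast(w)+o(1)\bigr)$ for an explicit quadratic $N_\ast$. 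Feeding this into \eqref{eq:N(s)}, integrating with the normalizations \eqref{eq:gfnhat_gfnhattilde_t} and the constant $\ellhat_t$ of \eqref{eq:defn_ellha_t}, and rescaling $z=K(u)\zeta$, I expect to reach
\begin{equation*}
  (-u)^{-2}\,\phihat_t\bigl(K(u)\zeta\bigr)=\Phi_\ast(\zeta)\pm2\pi i+o(1),
\end{equation*}
uniformly for $\zeta$ near $[0,1]$ (in the cut plane), where $\Phi_\ast$ is a fixed function assembled from $J^{(\pre)}$, $\Iinvhat^{(\pre)}_{1,2}$ and $N_\ast$, with $\Phi_\ast(1)=0$. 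Checking that the critical point of $J^{(\pre)}$ over $\zeta=1$ is simple gives the $\pm i(\,\cdot\,)^{3/2}$ behaviour at $\zeta=1$, hence \eqref{eq:phihat_est_a} (with $\dhat_{1,t}>0$) together with $\psihat_t(x)=\dhat_{1,t}(-u)^2\bigl((-u)^{-(\theta+1)/\theta}(x-\aend(t))\bigr)^{1/2}\bigl(1+\bigO(\,\cdot\,)\bigr)$, which is the square-root vanishing required by Regularity Condition \ref{reg:soft-edge}; evaluating at $\zeta=0$ gives part~(2), once one verifies $\Re\Phi_\ast(0^+)<0$ (note that on $(0,\aend(t))$ the jumps of $\gfnhat_t$ and $\gfnhattilde_t$ are purely imaginary, so $\Re\phihat_t$ is well defined there).

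Part~(1) then reduces to the strict inequality $\Re\Phi_\ast(\zeta)<0$ for \emph{all} $\zeta\in[0,1)$, with $\Re\Phi_\ast(1)=0$: taking $\delta:=\tfrac12\inf_{[0,1-\epsilon'']}(-\Re\Phi_\ast)>0$ and transferring through the displayed asymptotics gives $\Re\phihat_t(x)<-\delta(-u)^2$ on $[0,\aend(t)-\epsilon'(-u)^{(\theta+1)/\theta}]$. This strict negativity is exactly the Euler--Lagrange inequality for the limiting ``soft edge at $1$ over a potential hard edge at $0$'' profile, and I would prove it either from a monotonicity representation $\Re\Phi_\ast(\zeta)=-\const\cdot\int_\zeta^1(\text{limiting density profile})\,ds$ with the density profile positive on $(0,1)$, or directly from the explicit description of $J^{(\pre)}$ in Appendix \ref{sec:limiting_J}. \emph{This is the step I expect to be the main obstacle}: the local expansions only pin down the endpoints $\zeta=0$ and $\zeta=1$, and one must rule out an interior zero of $\Re\Phi_\ast$. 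Granting it, the proposition follows: $\psihat_t\ge0$ on $[\aend(t),\bend(t)]$ (in the bulk by Regularity Condition \ref{reg:one-cut_soft}, at the endpoints by the expansions above), $\int_{\aend(t)}^{\bend(t)}\psihat_t=1$ by \eqref{eq:total_measure_1_soft}, the Euler--Lagrange equality holds on $(\aend(t),\bend(t))$ by the construction \eqref{eq:defn_ellha_t}, and the Euler--Lagrange inequality \eqref{eq:E-L_2} holds on $(\bend(t),+\infty)$ by Regularity Condition \ref{reg:one-cut_soft} and on $(0,\aend(t))$ by parts~(1)--(2) (its two sides being the common real boundary values of $\phihat_t$). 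Hence $d\muhat^{(V_t)}=\psihat_t\,dx$, and the expansions near $\aend(t)$ and $\bend(t)$, together with Regularity Condition \ref{reg:one-cut_soft} and parts~(1)--(2), show it satisfies Regularity Conditions \ref{reg:primary} and \ref{reg:soft-edge}.
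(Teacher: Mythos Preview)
Your approach is essentially the same as the paper's: the paper does not give a formal proof, but simply records that the proposition follows from the asymptotics of $\Iinvhat_{1,t},\Iinvhat_{2,t}$ in \eqref{eq:limit_Ihat_1}--\eqref{eq:limit_Ihat_2} and of $\Nhat_{t,\In}$ in \eqref{eq:limit_Nhat_t_In}, fed through \eqref{eq:N(s)}, \eqref{eq:gfnhat_gfnhattilde_t} and \eqref{eq:defn_ellha_t}. Your write-up is in fact more detailed than the paper's, and you correctly flag the one nontrivial point the paper leaves implicit, namely the strict sign of the limiting profile $\Re\Phi_\ast$ on all of $[0,1)$.
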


\subsection{Deformation along $C_2$  when $t \in (0, 1)$} \label{subsec:deformation_along_C2_large_t}

Generally, the local argument in Section \ref{subsec:deformation_along_C2_small_t} cannot say much about the equilibrium measure for $V_t$ if $t < 1$ is not close to $1$. In this subsection, we take the assumption that $V_t$ is one-cut regular in the soft edge regime, and its equilibrium measure $\muhat^{(V_t)}$ satisfies Regularity Conditions \ref{reg:primary} and \ref{reg:soft-edge}, as in Theorem \ref{thm:universality}. In other words, the equilibrium measure $\muhat^{(V_t)}$ constructed in Section \ref{subsec:deformation_along_C2_small_t} for $t \in (1 - \epsilon, 1)$ can be extended to $t \in (0, 1)$. We note that by \cite[Theorem 1.11]{Claeys-Romano14}, the equilibrium measure $\muhat^{(V_t)}$ is determined by $c_1(t), c_0(t)$ that are solutions to \eqref{eq:F_E} for $u, v$ respectively. The assumption implies the following result.

\begin{prop} \label{prop:extension_t}
  Suppose $V_t$ satisfies Regularity Conditions \ref{reg:primary} and \ref{reg:soft-edge}. For all $t \in (0, 1)$, there are $c_1(t)$ and $c_0(t)$ that depend on $t$ continuously and coincide with $c_1(t)$ and $c_0(t)$ obtained in Section \ref{subsec:deformation_along_C2_small_t} when $1 - t$ is small positive, such that if we use \eqref{eq:sat_and_at}, \eqref{eq:sbt_and_bt}, \eqref{eq:formula_Nhat_in}, \eqref{eq:N(s)}, \eqref{eq:defn_psihat_small_u}, \eqref{eq:gfnhat_gfnhattilde_t}, \eqref{eq:defn_ellha_t} to define $s_a(t), s_b(t), \aend(t), \bend(t), \Ghat_t, \Ghattilde_t, \psihat_t, \gfnhat_t, \gfnhattilde_t, \ellhat_t$, and then define $d\muhat^{(V_t)} = \psihat_t(x) dx$ supported on $[\aend(t), \bend(t)]$, then $\muhat^{(V_t)}$ is the equilibrium measure associated to $V_t$. 

  Furthermore, Proposition \ref{prop:phihat} and \eqref{eq:phihat_est_b}, \eqref{eq:phihat_est_a} hold uniformly for all $t \in (\epsilon, 1)$, where $\epsilon$ is any small positive number.
\end{prop}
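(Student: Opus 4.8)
The plan is to leverage the hypothesis that, for every $t\in(0,1)$, the equilibrium measure of $V_t$ is of the one-cut soft-edge type, and to show that the explicit recipe of Section~\ref{subsec:deformation_along_C2_small_t} reproduces it for all such $t$, not merely for $1-t$ small. The starting point is the description in \cite[Theorems 1.11 and 1.12]{Claeys-Romano14}: a one-cut soft-edge regular equilibrium measure for $V_t$ is determined by a pair of positive numbers, which we \emph{define} to be $c_1(t)$ and $c_0(t)$ and which are characterized as the solution of the system \eqref{eq:F_E} (for the given $t$) corresponding to a genuine probability measure; moreover the endpoints $\aend(t),\bend(t)$, the functions $\Ghat_t,\Ghattilde_t$, the density $\psihat_t$, the $\gfn$-functions $\gfnhat_t,\gfnhattilde_t$ and the constant $\ellhat_t$ built from $c_1(t),c_0(t)$ through \eqref{eq:sat_and_at}, \eqref{eq:sbt_and_bt}, \eqref{eq:formula_Nhat_in}, \eqref{eq:N(s)}, \eqref{eq:defn_psihat_small_u}, \eqref{eq:gfnhat_gfnhattilde_t}, \eqref{eq:defn_ellha_t} are then exactly the data of that equilibrium measure. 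This immediately yields $d\muhat^{(V_t)}=\psihat_t(x)\,dx$ on $[\aend(t),\bend(t)]$ as the equilibrium measure of $V_t$, hence Regularity Conditions~\ref{reg:primary} and \ref{reg:soft-edge}.

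Next I would check that $t\mapsto(c_1(t),c_0(t))$ is continuous on $(0,1)$ and matches the local construction near $t=1$. Since $V_t=V/t$ depends real-analytically on $t$ and soft-edge one-cut regularity is assumed on all of $(0,1)$, the equilibrium measure $\muhat^{(V_t)}$ varies continuously and in particular its endpoints $\aend(t),\bend(t)$ do; as $(c_1(t),c_0(t))$ is determined by the equilibrium measure, it varies continuously too. For the matching: Proposition~\ref{prop:phihat} already asserts that, for $1-t$ small and positive, the measure built from the \emph{local} $(c_1(t),c_0(t))$ on the branch $C_2$ of Section~\ref{subsec:deformation_algebraic} is the equilibrium measure of $V_t$, so by uniqueness of the equilibrium measure and of its Claeys--Romano parameters the two definitions of $(c_1(t),c_0(t))$ agree there. (The competing branch $C_1$, giving $(c(t),c(t))$ for $t<1$, is excluded because it produces a density negative near $0$, cf.\ Remark~\ref{rmk:eq_measure_or_not}.) This establishes the first assertion of the proposition.

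For the uniform version of Proposition~\ref{prop:phihat} on $(\epsilon,1)$, I would split this parameter set as $(\epsilon,1-\epsilon']\cup(1-\epsilon',1)$, with $\epsilon'>0$ small enough that Proposition~\ref{prop:phihat} already applies on the right piece. On the compact piece $[\epsilon,1-\epsilon']$ the endpoints $\aend(t),\bend(t)$ and the positive coefficients $\dhat_{1,t},\dhat_{2,t}$ are continuous; the expansions \eqref{eq:phihat_est_b} and \eqref{eq:phihat_est_a} near $\bend(t)$ and $\aend(t)$ are the standard square-root-edge expansions of $\phihat_t$, with error terms uniform by compactness; and $\Re\phihat_t(x)$ is bounded above by a negative constant on the closed set of pairs $(t,x)$ with $t\in[\epsilon,1-\epsilon']$ and $x$ lying in $[0,\aend(t)]\cup[\bend(t),+\infty)$ at a fixed distance from the endpoints. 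Since $(1-t)^2$ is bounded away from $0$ on $[\epsilon,1-\epsilon']$, every estimate of the form $\Re\phihat_t(x)<-\delta(1-t)^2$ holds there after possibly shrinking $\delta$; taking the smaller of the two $\delta$'s gives uniformity on all of $(\epsilon,1)$.

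The main obstacle is the first step: aligning the parametrization used here with the precise statement of the Claeys--Romano description of the soft-edge equilibrium measure, and verifying that the resulting branch $t\mapsto(c_1(t),c_0(t))$ connects continuously, through the branch $C_2$, with the local construction of Section~\ref{subsec:deformation_along_C2_small_t}. Once that bookkeeping with \cite{Claeys-Romano14} is done, both the identification of the equilibrium measure and the uniform estimates reduce to compactness arguments layered on top of Proposition~\ref{prop:phihat}.
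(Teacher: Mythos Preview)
Your proposal is correct and follows the same line as the paper, which in fact does not give a detailed proof: the paper simply notes that by \cite[Theorem 1.11]{Claeys-Romano14} the soft-edge equilibrium measure of $V_t$ is determined by parameters $c_1(t),c_0(t)$ solving \eqref{eq:F_E}, and states that the assumed regularity of $V_t$ for all $t\in(0,1)$ ``implies'' the proposition. Your elaboration---identifying the Claeys--Romano parameters with $(c_1(t),c_0(t))$, obtaining continuity and the matching near $t=1$ by uniqueness of the equilibrium measure, and deducing the uniform estimates by splitting $(\epsilon,1)$ into a compact piece (handled by compactness) and a neighbourhood of $1$ (handled by Proposition~\ref{prop:phihat})---is exactly the argument the paper leaves implicit.
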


\section{Asymptotic analysis of the RH problems for $p_n$ and $q_n$ in the transition regime} \label{sec:RH_transitive}

As mentioned in Section \ref{sec:Strategy}, we prove Theorem \ref{thm:main}  by studying the asymptotics of  vector-valued Riemann-Hilbert problems for the biorthogonal polynomials defined in \eqref{eq:biorthogonality}; see  \cite{Wang-Zhang21} for a similar  strategy for the hard edge universality of  Muttalib-Borodin ensemble.  In the  transition regime considered in this section, we focus on the construction of  the local parametrix near the origin by using the solution of the new model RH problem \ref{RHP:general_model}. 

In this section, we assume that $(t - 1) = \bigO(n^{-1/2})$, and let $t$ be expressed by $\tau \in \realR$ in \eqref{eq:t_and_tau}. We abuse the notation to write $c$, $\gfn$, $\gfntilde$, $\ell$, $\phi$, $b$, $\Iinv_1$, $\Iinv_2$, $V$ and $J_c$ to mean $c(t)$, $\gfn_t$, $\gfntilde_t$, $\ell_t$, $\phi_t$, $b(t)$, $\Iinv_{1, t}$, $\Iinv_{2, t}$, $V_t$ and $J_{c(t)}$ that are defined in Sections \ref{subsec:deformation_algebraic} and \ref{subsec:deformation_C1}. When $t = 1$, these notations become identical to those in Section \ref{subsec:exact_equilibrium_measure}. We suppress the $t$ dependence in these symbols, not only to shorten the expressions, but also to make our presentation parallel to that in \cite{Wang-Zhang21} so that we can refer arguments there more easily.

\subsection{Asymptotic analysis of the RH problem for $p_n$} \label{subsec:asy_p_n}

\subsubsection{Transformations $Y \to T \to S \to Q$, and local parametrix around $b$} \label{subsec:trans_YTSQ}

This section is parallel to \cite[Sections 3.1--3.5]{Wang-Zhang21}. When we implement ideas from \cite{Wang-Zhang21}, we only give details when it differs from their counterparts in \cite{Wang-Zhang21}, and refer the reader to corresponding sections and equations in \cite{Wang-Zhang21} for details.

Let $p_n(x)$  be the monic biorthogonal polynomial  defined in \eqref{eq:biorthogonality}, we define its Cauchy-like transform \cite[Equation (2.24)]{Wang-Zhang21}
\begin{equation} \label{eq:defn_Cp_n}
  C p_n(z) = \frac{1}{2\pi i} \int^{\infty}_0 \frac{p_n(x)}{x^{\theta} - z^{\theta}} x^{\alpha} e^{-n V(x)} dx, \quad z\in \halfH \setminus [0,+\infty),
\end{equation}
and denote the vector form $Y = (Y_1, Y_2) := (p_n, C p_n)$ as in \cite[Equation (2.23)]{Wang-Zhang21}. Then it is known that $Y$ satisfies  the following vector RH problem; see \cite[RH problem 2.2]{Wang-Zhang21}.

\begin{RHP} \label{RHP:OPs}
 $Y = (Y_1, Y_2)$ is a vector-valued function  defined and analytic in $(\compC, \mathbb{H}_{\theta} \setminus [0, +\infty))$. 
 \begin{enumerate} 
  \item $Y$ satisfies the  jump condition   \begin{equation}
      Y_+(x) =Y_-(x)
      \begin{pmatrix}
        1 & \frac{1}{\theta  }x^{\alpha+1-\theta}e^{-nV_t(x)}  \\
        0 & 1
      \end{pmatrix},
      \qquad x>0.
    \end{equation}
   
  \item
    As $z \to \infty$ in $\compC$, we have $Y_1(z) = z^n + \bigO(z^{n - 1})$, and as $z \to \infty$ in $\mathbb{H}_{\theta}$,  we have $Y_2(z) = \bigO(z^{-(n + 1)\theta})$.  
  \item As $z \to 0$ in $\compC$, we have $Y_1(z)=\bigO(1)$, and as $z \to 0$ in $\mathbb{H}_{\theta}$,
    we have
    \begin{equation}
      Y_2(z) =
      \begin{cases}
        \bigO(1), & \text{$\alpha+1-\theta > 0$,} \\
        \bigO(\log z), & \text{$\alpha+1-\theta = 0$,} \\
        \bigO(z^{\alpha+1-\theta}), & \text{$\alpha+1-\theta < 0$.}
      \end{cases}
    \end{equation}
  \item  For $x > 0$, we have the  boundary condition $Y_2(e^{\frac{\pi i}{\theta} }x) = Y_2(e^{-\frac{\pi i}{\theta}}x)$.
    
  \end{enumerate}
\end{RHP}

Recall the functions $\gfn$, $\gfntilde$ and constant $\ell$ ($\gfn_t$, $\gfntilde_t$ and $\ell_t$ before the $t$-dependence is suppressed) defined in \eqref{eq:gfn_gfntilde_t} and \eqref{eq:defn_phi}. Let \cite[Equation (3.1)]{Wang-Zhang21}
\begin{equation} \label{eq:defn_T}
  T(z) = (T_1(z), T_2(z)) := (Y_1(z) e^{-n\gfn(z)}, Y_2(z) e^{n(\gfntilde(z) - \ell)}).
\end{equation}
Recall number $b$ and function $\phi$ ($b(t)$ and $\phi_t$ before the $t$-dependence is suppressed) defined in the beginning of Section \ref{subsec:deformation_C1} and equation \eqref{eq:defn_phi}, respectively. Let $\Sigma_1 \in (\halfH \cap \compC_+)$ be a contour connecting $0$ and $b$, $\Sigma_2 = \overline{\Sigma_1} \subseteq (\halfH \cap \compC_-)$ like in \cite[Figure 2]{Wang-Zhang21}, and let $\Sigma = [0, +\infty) \cup \Sigma_1 \cup \Sigma_2$ like in \cite[Equation (3.7)]{Wang-Zhang21}. Moreover, we let the ``lens'' be the region enclosed by $\Sigma_1$ and $\Sigma_2$, and the upper/lower part of the lens be the intersection of the lens with $\compC_{\pm}$. We then define, analogous to \cite[Equation (3.6)]{Wang-Zhang21},
\begin{equation} \label{eq:S_from_T}
  S(z) =
  \begin{cases}
    T(z), & \text{outside the lens}, \\
    T(z)
    \begin{pmatrix}
      1 & 0 \\
      \frac{\theta}{z^{\alpha + 1 - \theta}} e^{-n\phi(z)} & 1
    \end{pmatrix},
    & \text{lower part of the lens}, \\
    T(z)
    \begin{pmatrix}
      1 & 0 \\
      -\frac{\theta}{z^{\alpha + 1 - \theta}} e^{-n\phi(z)} & 1
    \end{pmatrix},
    & \text{upper part of the lens}.
  \end{cases}
\end{equation}

We have that $T(z)$ and $S(z)$ satisfy Riemann-Hilbert problems, as stated in \cite[RH problems 3.1 and 3.2]{Wang-Zhang21}.  We remark that in \cite[RH problems 3.1 and 3.2]{Wang-Zhang21} the $\gfn$, $\gfntilde$ and $\phi$ are associated to the equilibrium measure, and our $\gfn$, $\gfntilde$ and $\phi$ are associated to the equilibrium measure only if $\tau < 0$, as explained in Remark \ref{rmk:eq_measure_or_not}. However, the derivation for \cite[RH problems 3.1 and 3.2]{Wang-Zhang21} does not rely on the relation to equilibrium measure.

Next, we recall the functions $\Iinv_1$, $\Iinv_2$ and parameter $c$ ($\Iinv_{1, t}$, $\Iinv_{2, t}$ and $c(t)$ before the $t$-dependence is suppressed) defined in Section \ref{subsec:deformation_algebraic} and the beginning of Section \ref{subsec:deformation_C1}.  We construct the global parametrix $P^{(\infty)}(z) = (P^{(\infty)}_1(z), P^{(\infty)}_2(z))$ such that \cite[Equations (3.19) and (3.20)]{Wang-Zhang21}
\begin{align}
  P^{(\infty)}_1(z) = {}& \P(\Iinv_1(z)), &z \in {}& \compC \setminus [0, b], \label{eq:Pinfty_1} \\
  P^{(\infty)}_2(z) = {}& \P(\Iinv_2(z)), &z \in {}& \halfH \setminus [0, b], \label{eq:Pinfty_2}
\end{align}
where, with $D$ and $\gamma_1$ defined in Section \ref{subsec:exact_equilibrium_measure} \cite[Equation (3.18)]{Wang-Zhang21}
\begin{equation} \label{eq:Pinfty_in_one}
  \P(s) =
  \begin{cases}
    \frac{s}{\sqrt{(s + 1)(s - s_b)}} \left( \frac{s + 1}{s} \right)^{\frac{\theta - \alpha - 1}{\theta}}, & s \in \compC \setminus \overline{D}, \\
    \frac{c^{\alpha + 1 - \theta} s(s + 1)^{\alpha + 1 - \theta}}{\theta \sqrt{(s + 1)(s - s_b)}}, & s \in D,
  \end{cases}
\end{equation}
such that $s_b = 1/\theta$, the branch cuts of $\sqrt{(s + 1)(s - s_b)}$, $\left( \frac{s + 1}{s} \right)^{\frac{\theta - \alpha - 1}{\theta}}$ and $(s + 1)^{\alpha + 1 - \theta}$ are taken along $\gamma_1$, $[-1, 0]$ and $(-\infty, -1]$ respectively.

Then the function \cite[Equation (3.29)]{Wang-Zhang21}
\begin{equation} \label{eq:defn_Q}
  Q(z) = (Q_1(z), Q_2(z)) := \left( \frac{S_1(z)}{P^{(\infty)}_1(z)}, \frac{S_2(z)}{P^{(\infty)}_2(z)} \right)
\end{equation}
satisfies the Riemann-Hilbert problem stated in \cite[RH problem 3.7]{Wang-Zhang21}.

Analogous to \cite{Wang-Zhang21},  the local parametrix near the right ending point $b$ can be constructed in terms of the  Airy parametrix.
Let $\epsilon$ be a small enough positive number. We define, for $z$ in a neighbourhood $D(b, \epsilon)$ of $b$, \cite[Equation (3.35)]{Wang-Zhang21}
\begin{equation} \label{def:fb_hard}
  f_b(z) = \left( -\frac{3}{4} \phi(z) \right)^{\frac{2}{3}}.
\end{equation}
Due to \eqref{eq:behaviour_phi_t_at_b}, we have that it is a conformal mapping with $f_b(b) = 0$ and $f'_b(b) > 0$.

In Appendix \ref{app:Airy}, the $2 \times 2$ matrix-valued function $\Psi^{(\Ai)}$, which is the well-known Airy parametrix is defined on $\compC \setminus \Gamma_{\Ai}$, where $\Gamma_{\Ai} = e^{-\frac{2\pi i}{3}} [0, +\infty) \cup \realR \cup e^{\frac{2\pi i}{3}} [0, +\infty)$ is the jump contour specified in RH problem \ref{rhp:Ai}. We define \cite[Equation (3.38)]{Wang-Zhang21}
\begin{equation} \label{eq:defn_P^b}
  P^{(b)}(z) = E^{(b)}(z) \Psi^{(\Ai)}(n^{\frac{2}{3}} f_b(z))
  \begin{pmatrix}
    e^{-\frac{n}{2} \phi(z)} g^{(b)}_1(z) & 0 \\
    0 & e^{\frac{n}{2} \phi(z)} g^{(b)}_2(z)
  \end{pmatrix},
\end{equation}
where \cite[Equations (3.37) and (3.43)]{Wang-Zhang21}
\begin{align}
  g^{(b)}_1(z) = {}& \frac{z^{(\theta - \alpha - 1)/2}}{P^{(\infty)}_1(z)}, & g^{(b)}_2(z) = {}& \frac{z^{(\alpha + 1 - \theta)/2}}{\theta P^{(\infty)}_2(z)},
\end{align}
\begin{equation}
  E^{(b)}(z) = \frac{1}{\sqrt{2}}
  \begin{pmatrix}
    g^{(b)}_1(z) & 0 \\
    0 & g^{(b)}_2(z)
  \end{pmatrix}^{-1}
  e^{\frac{\pi i}{4} \sigma_3}
  \begin{pmatrix}
    1 & -1 \\
    1 & 1
  \end{pmatrix}
  \begin{pmatrix}
    n^{\frac{1}{6}} f_b(z)^{\frac{1}{4}} & 0 \\
    0 & n^{-\frac{1}{6}} f_b(z)^{-\frac{1}{4}}
  \end{pmatrix}.
\end{equation}
We have that $P^{(b)}(z)$ satisfies a RH problem; see \cite[RH problem 3.9]{Wang-Zhang21}.
Now we specify the shape of $\Sigma_1$ (and then $\Sigma_2 = \overline{\Sigma_1}$) in $D(b, \epsilon)$ as $f^{-1}_b(\{ e^{\frac{2\pi i}{3}} [0, +\infty) \}) \cap D(b, \epsilon)$. Then we define the vector-valued function $V^{(b)}$ by \cite[Equation (3.46)]{Wang-Zhang21}
\begin{equation} \label{eq:defn_V^b}
  V^{(b)}(z) = Q(z) P^{(b)}(z)^{-1}, \quad z \in D(b, \epsilon) \setminus \Sigma,
\end{equation}
which satisfies the same RH problem as \cite[RH problem 3.10]{Wang-Zhang21}.

\subsubsection{Local parametrix around $0$} \label{subsec:local_para}

\paragraph{A local version of the RH problem for $Q$}

Let $r = r_n$ be a small positive number that depends on $n$ as in \eqref{def:rn} below. We specify the shape of $\Sigma_1$ (and then $\Sigma_2 = \overline{\Sigma_1}$) in $D(0, r)$ as $\{ z \in \compC : \arg z = \pi/(2\theta) \} \cap D(0, r)$. Then we define $\theta + 1$ functions $U_0(z), U_1(z), \dotsc, U_{\theta}(z)$ \cite[Equations (3.51) and (3.52)]{Wang-Zhang21}, see also \cite[Equations (3.48)--(3.50)]{Wang-Zhang21} \footnote{In \cite[Section 3.6]{Wang-Zhang21} there is a notational mismatch: $r = r_n$ in the beginning of \cite[Section 3.6]{Wang-Zhang21} and \cite[RH problem 3.11]{Wang-Zhang21} is equivalent to $r^{\theta}_n$ in \cite[Equation (3.116) and later]{Wang-Zhang21}. We follow the latter usage of $r = r_n$.}
\begin{align}
  U_0(z) = {}& Q_2(z^{\frac{1}{\theta}}), & z \in {}& D(0, r^{\theta}) \setminus \{ (-r^{\theta}, r^{\theta}) \cup (-ir^{\theta}, ir^{\theta}) \}, \label{eq:U_0_def} \\
  U_k(z) = {}& Q_1(z^{\frac{1}{\theta}} e^{\frac{2(k - 1)}{\theta} \pi i}), & z \in {}& D(0, r^{\theta}) \setminus \{ (-r^{\theta}, r^{\theta}) \cup (-ir^{\theta}, ir^{\theta}) \}, \quad k = 1, 2, \dotsc, \theta. \label{eq:U_0_def_2}
\end{align}
Then $U(z) = (U_0(z), \dotsc, U_{\theta}(z))$ satisfies \cite[RH problem 3.11]{Wang-Zhang21}, which is stated  below for the completeness.
\begin{RHP}\label{rhp:U} \hfill
  \begin{enumerate} 
  \item
    $ U=(U_0,U_1,\ldots,U_\theta)$ is defined and analytic in $D(0,r^{\theta}) \setminus \{(-r^{\theta},r^{\theta}) \cup (-ir^{\theta},ir^{\theta})\}$.
  \item
    For $z\in (-r^{\theta},r^{\theta})\cup(-ir^{\theta},ir^{\theta})\setminus\{0\}$, we have
    \begin{equation}
      U_+(z) = U_-(z) J_U(z),
    \end{equation}
    where
    \begin{equation} \label{eq:defn_J_U}
      J_U(z) =
      \begin{cases}
        \begin{pmatrix}
          1 & \theta z^{\frac{\theta - 1 - \alpha}{\theta}} e^{-n \phi(z^{1/\theta})} \frac{P^{(\infty)}_2(z^{1/\theta})}{P^{(\infty)}_1(z^{1/\theta})} \\
          0 & 1
        \end{pmatrix}
        \oplus I_{\theta - 1}, & z \in (0,ir^{\theta})\cup (0,-ir^{\theta}), \\
        \begin{pmatrix}
          0 & 1 \\
          1 & 0
        \end{pmatrix}
        \oplus I_{\theta - 1}, & z \in (0, r^{\theta}), \\
        \Mcyclic, & z \in (-r^{\theta}, 0),
      \end{cases}
    \end{equation}
    with $\Mcyclic$ defined in \eqref{eq:defn_Mcyclic}, and the orientations of the rays are shown in Figure \ref{fig:jumps-U}.
  \item
    As $z\to 0$ from $D(0,r^{\theta}) \setminus \{(-r^{\theta},r^{\theta}) \cup (-ir^{\theta},ir^{\theta})\}$, we have
    \begin{enumerate} 
    \item
      \begin{equation}
        U_0(z) =
        \begin{cases}
          \bigO (z^{1 - \frac{\alpha + 3/2}{\theta + 1}}), & \alpha > \theta-1, \\
          \bigO (z^{\frac{1}{2(1+\theta)}} \log z ), & \alpha = \theta - 1, \\
          \bigO (z^{\frac{\alpha + 1}{\theta} - \frac{\alpha + 3/2}{\theta + 1}}), & -1 < \alpha < \theta - 1,
        \end{cases}
      \end{equation}
    \item
      for $\arg z \in (0, \pi/2) \cup (-\pi/2, 0)$,
      \begin{equation}
        U_1(z) =
        \begin{cases}
          \bigO (z^{1 - \frac{\alpha + 3/2}{\theta + 1}}), & \alpha > \theta - 1 , \\
          \bigO (z^{\frac{1}{2(1+\theta)}} \log z ), & \alpha = \theta-1, \\
          \bigO (z^{\frac{\alpha + 1}{\theta} - \frac{\alpha + 3/2}{\theta + 1}}), & -1 < \alpha < \theta-1,
        \end{cases}
      \end{equation}
    \item
      for $k = 2, \dotsc, \theta$, or $k=1$ and $\arg z \in (\pi/2,\pi)\cup (-\pi,-\pi/2)$,
      \begin{equation}
        U_k(z)=\bigO (z^{\frac{\alpha + 1}{\theta} - \frac{\alpha + 3/2}{\theta + 1}}).
      \end{equation}
    \end{enumerate}
  \end{enumerate}
\end{RHP}
\begin{figure}[htb]
  \begin{minipage}[b]{0.45\linewidth}
    \centering
    \includegraphics{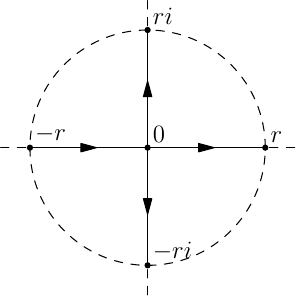}
    \caption{The jump contour for the RH problem \ref{rhp:U} for $U$ and for the RH problem \ref{rhp:tildeU} for $\U$.}
    \label{fig:jumps-U}
  \end{minipage}
  \hspace{\stretch{1}}
  \begin{minipage}[b]{0.45\linewidth}
    \centering
    \includegraphics{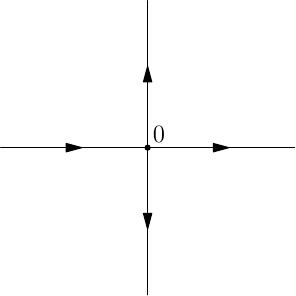}
    \caption{The jump contour for RH problem \ref{RHP:general_model} for $\Phi$, for RH problem \ref{RHP:general_model_tilde} for $\Phitilde$, and for the RH problem \ref{RHP:general_model_tilde_check} for $\Phicheck$.}
    \label{fig:jumps-MeiG}
  \end{minipage}
\end{figure}

\paragraph{Construction of the local parametrix around $0$}

As in \cite[Equations (3.104)--(3.109)]{Wang-Zhang21}, we define
\begin{equation} \label{def:M}
  M(z) = \diag(m_0(z), m_1(z), \dotsc, m_{\theta}(z)), \quad z \in \compC \setminus \realR,
\end{equation}
where (noting that our $V$ below means $V_t$ defined in \eqref{eq:defn_Vt})
\begin{align}
  m_0(z) = {}&
               \begin{cases}
                 \gfntilde(z^{\frac1 \theta}) - \gfntilde_+(0), & z \in \compC_+, \\
                 2\pi i + \gfntilde(z^{\frac 1 \theta}) - \gfntilde_+(0), & z \in \compC_-,
               \end{cases}
               \label{def:m0}\\
  m_1(z) = {}&
               \begin{cases}
                 -\gfn(z^{\frac1 \theta}) + V(z^{\frac1 \theta}) + \ell - \gfntilde_-(0), & z \in \compC_+, \\
                 -2\pi i - \gfn(z^{\frac1 \theta}) + V(z^{\frac1 \theta}) + \ell - \gfntilde_-(0), & z \in \compC_-,
               \end{cases} \\
  \intertext{and for $j = 2, \dotsc, \theta$, $\arg z \in (-\pi, 0) \cup (0, \pi)$,}
  m_j(z) = {}&
               \begin{cases}
                 -\gfn(z^{\frac1 \theta} e^{\frac{2(j - 1)}{\theta}\pi i}) + V(z^{\frac1 \theta} e^{\frac{2(j - 1)}{\theta}\pi i}) + \ell - \gfntilde_-(0), & z^{\frac1 \theta} e^{\frac{2(j - 1)}{\theta}\pi i} \in \compC_+, \\
                 -2\pi i - \gfn(z^{\frac1 \theta} e^{\frac{2(j - 1)}{\theta}\pi i}) + V(z^{\frac1 \theta} e^{\frac{2(j - 1)}{\theta}\pi i}) + \ell - \gfntilde_-(0), & z^{\frac1 \theta} e^{\frac{2(j - 1)}{\theta}\pi i} \in \compC_-,
               \end{cases}
                                                                                                                                                             \label{def:mj}
\end{align}
and also define
\begin{equation} \label{def:N}
  N(z) = \diag (n_0(z), n_1(z), \dotsc, n_{\theta}(z)), \quad z\in\compC \setminus (-\infty,b^{\theta}],
\end{equation}
where
\begin{align}\label{def:ni}
  n_0(z) = {}& P^{(\infty)}_2(z^{\frac1 \theta}), & n_j(z) = {}& P^{(\infty)}_1(e^{\frac{2(j - 1)}{\theta} \pi i} z^{\frac1 \theta}), \quad j = 1, \dotsc, \theta.
\end{align}
We also define
\begin{equation} \label{eq:Npre}
  N^{(\pre)}(z) = \diag (n^{(\pre)}_0(z), n^{(\pre)}_1(z), \dotsc, n^{(\pre)}_{\theta}(z)), \quad z\in\compC \setminus \realR,
\end{equation}
where
\begin{align}
  n^{(\pre)}_0(z) = {}&
               \begin{cases}
                 \frac{c^{\frac{2(\alpha+1)-\theta}{2(1+\theta)}}}{\sqrt{\theta(1+\theta)}}e^{\frac{\theta-2(\alpha+1)}{2(1+\theta)}\pi i} z^{\frac{\alpha + 3/2}{1+\theta} - 1}, & z \in \compC_+, \\
                 -\frac{c^{\frac{2(\alpha+1)-\theta}{2(1+\theta)}}}{\sqrt{\theta(1+\theta)}}e^{\frac{2(\alpha+1) - \theta}{2(1+\theta)}\pi i} z^{\frac{\alpha + 3/2}{1+\theta} - 1}, & z \in \compC_-,
               \end{cases} && \\
  n^{(\pre)}_1(z) = {}&
               \begin{cases}
                 \frac{\theta c^{\frac{2(\alpha+1)-\theta}{2(1+\theta)}}}{\sqrt{\theta(1+\theta)}} e^{\frac{2(\alpha+1)-\theta}{2(1+\theta)}\pi i}z^{\frac{\alpha + 3/2}{1+\theta} - \frac{\alpha + 1}{\theta}}, & z \in \compC_+, \\
                 \frac{\theta c^{\frac{2(\alpha+1)-\theta}{2(1+\theta)}}}{\sqrt{\theta(1+\theta)}} e^{\frac{\theta - 2(\alpha+1)}{2(1+\theta)}\pi i}z^{\frac{\alpha + 3/2}{1+\theta} - \frac{\alpha + 1}{\theta}}, & z \in \compC_-,
               \end{cases} && \\
  n^{(\pre)}_j(z) = {}& \frac{\theta c^{\frac{2(\alpha+1)-\theta}{2(1+\theta)}}}{\sqrt{\theta(1+\theta)}} e^{\frac{2(\alpha+1)-\theta}{2(1+\theta)}\pi i} e^{\frac{(j - 1)(\theta - 2(\alpha + 1))}{\theta(1 + \theta)} \pi i} z^{\frac{\alpha + 3/2}{1+\theta} - \frac{\alpha + 1}{\theta}}, && j=2,\ldots,\theta.
\end{align}
Recall $\rho$ defined in \eqref{eq:defn_rho} that is independent of $t$ (and $\tau$). Analogous to \cite[Proposition 3.14]{Wang-Zhang21}, we have the following estimates. 

\begin{prop} \label{prop:m_and_n}
  Let $m_i(z)$ and $n_i(z)$, $i = 0, 1, \dotsc, \theta$, be the functions defined in \eqref{def:m0}--\eqref{def:mj} and \eqref{def:ni}. As $z \to 0$, we have, with $\rho$ defined in \eqref{eq:defn_rho}, $c$ given in \eqref{eq:defn_c}, and $j = 2, \dotsc, \theta$
\begin{align}
  m_0(z) = {}&
               \begin{cases}
                 \frac{\theta + 1}{2\theta} \rho e^{-\frac{2\pi i}{\theta + 1}} z^{\frac{2}{\theta + 1}} (1 + \bigO(z^{\frac{1}{\theta + 1}})) & \\
                 + (t - 1) c^{-\frac{\theta}{\theta + 1}} e^{-\frac{\pi i}{\theta + 1}} z^{\frac{1}{\theta + 1}} (1 + \bigO(z^{\frac{1}{\theta + 1}})) \\
                 + \bigO((t - 1)^2 z^{\frac{1}{\theta + 1}}) + \bigO(z^{\frac{3}{\theta + 1}}), & z \in \compC_+, \\
                 \frac{\theta + 1}{2\theta} \rho e^{\frac{2\pi i}{\theta + 1}} z^{\frac{2}{\theta + 1}} (1 + \bigO(z^{\frac{1}{\theta + 1}})) \\
                 + (t - 1) c^{-\frac{\theta}{\theta + 1}} e^{\frac{\pi i}{\theta + 1}} z^{\frac{1}{\theta + 1}} (1 + \bigO(z^{\frac{1}{\theta + 1}})) \\
                 + \bigO((t - 1)^2 z^{\frac{1}{\theta + 1}}) + \bigO(z^{\frac{3}{\theta + 1}}), & z \in \compC_-,
               \end{cases} \\
  m_1(z) = {}&
               \begin{cases}
                 \frac{\theta + 1}{2\theta} \rho e^{\frac{2\pi i}{\theta + 1}} z^{\frac{2}{\theta + 1}} (1 + \bigO(z^{\frac{1}{\theta + 1}})) \\
                 + (t - 1) c^{-\frac{\theta}{\theta + 1}} e^{\frac{\pi i}{\theta + 1}} z^{\frac{1}{\theta + 1}} (1 + \bigO(z^{\frac{1}{\theta + 1}})) \\
                 + \bigO((t - 1)^2 z^{\frac{1}{\theta + 1}}) + \bigO(z^{\frac{3}{\theta + 1}}), & z \in \compC_+, \\
                 \frac{\theta + 1}{2\theta} \rho e^{-\frac{2\pi i}{\theta + 1}} z^{\frac{2}{\theta + 1}} (1 + \bigO(z^{\frac{1}{\theta + 1}})) & \\
                 + (t - 1) c^{-\frac{\theta}{\theta + 1}} e^{-\frac{\pi i}{\theta + 1}} z^{\frac{1}{\theta + 1}} (1 + \bigO(z^{\frac{1}{\theta + 1}})) \\
                 + \bigO((t - 1)^2 z^{\frac{1}{\theta + 1}}) + \bigO(z^{\frac{3}{\theta + 1}}), & z \in \compC_-, 
               \end{cases} \\
  m_j(z) = {}& \textstyle \frac{\theta + 1}{2\theta} \rho e^{\frac{2(2j - 1) \pi i}{\theta + 1}} z^{\frac{2}{\theta + 1}} (1 + \bigO(z^{\frac{1}{\theta + 1}})) + (t - 1) c^{-\frac{\theta}{\theta + 1}} e^{\frac{(2j - 1) \pi i}{\theta + 1}} z^{\frac{1}{\theta + 1}} (1 + \bigO(z^{\frac{1}{\theta + 1}})) \notag \\
             & \textstyle + \bigO((t - 1)^2 z^{\frac{1}{\theta + 1}}) + \bigO(z^{\frac{3}{\theta + 1}}), 
\end{align}
and
\begin{equation}
  n_k(z) = n^{(\pre)}_k(z)  (1 + \bigO(t - 1) + \bigO(z^{\frac{1}{1+\theta}})), \quad k = 0, 1, \dotsc, \theta.
\end{equation}
\end{prop}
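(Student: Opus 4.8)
The statement to prove, Proposition~\ref{prop:m_and_n}, asks for the small-$z$ asymptotics of the diagonal entries $m_i(z)$ and $n_i(z)$ of the matrices $M(z)$ and $N(z)$ used in the construction of the local parametrix at $0$. Since these functions are built directly out of $\gfn_t$, $\gfntilde_t$, $V_t$, $\ell_t$ and $P^{(\infty)}_{1}, P^{(\infty)}_2$, the proof is essentially a matter of substituting the asymptotic expansions already established in Section~\ref{subsec:deformation_C1} and in \cite{Wang-Zhang21}, and then composing with the maps $z \mapsto z^{1/\theta} e^{2(j-1)\pi i/\theta}$.

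\paragraph{Plan for the $m_i$.}
The plan is to start from the definitions \eqref{def:m0}--\eqref{def:mj} and use the asymptotic expansions of $\gfn_t(w)$ and $\gfntilde_t(w)$ as $w \to 0$ that were derived just before Remark~\ref{rmk:eq_measure_or_not} (the two displayed multi-line formulas there). Recall that $\gfntilde_t(w) - (\gfntilde_t)_+(0)$ behaves, for $\arg w \in (0,\pi/\theta)$, like $\frac{\theta+1}{2\theta}\rho e^{-\frac{2\pi i}{\theta+1}} w^{\frac{2\theta}{\theta+1}} + u\, c^{-\frac{\theta}{\theta+1}} e^{-\frac{\pi i}{\theta+1}} w^{\frac{\theta}{\theta+1}} + \cdots$ with $u = t-1$; substituting $w = z^{1/\theta}$ (so $w^{\theta/(\theta+1)} = z^{1/(\theta+1)}$ and $w^{2\theta/(\theta+1)} = z^{2/(\theta+1)}$) immediately gives the claimed form of $m_0(z)$ on $\compC_+$, with the error terms $\bigO(u^2 z^{1/(\theta+1)})$ and $\bigO(z^{3/(\theta+1)})$ coming from the $\bigO(u)$ and $\bigO(w^{\theta/(\theta+1)})$ corrections in those expansions. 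For $m_1(z)$ one uses instead $-\gfn_t(w) + V_t(w) + \ell_t - (\gfntilde_t)_-(0)$; by \eqref{eq:defn_phi} this equals $\gfntilde_t(w) - \phi_t(w) - (\gfntilde_t)_-(0)$, and since $\phi_t(w) = \bigO(w^{2\theta/(\theta+1)})$ dominates no term, the leading behaviour of $m_1$ matches that of the $\gfntilde_t$-type expression but evaluated with the opposite-side branch, which swaps $e^{-2\pi i/(\theta+1)} \leftrightarrow e^{2\pi i/(\theta+1)}$; that is exactly the stated formula. The careful points are: (i) correctly tracking which half-plane ($\compC_{\pm}$) the argument $z^{1/\theta} e^{2(j-1)\pi i/\theta}$ lands in, which determines the constant $\pm 2\pi i$ shift and the branch of the power; and (ii) for $j = 2,\dots,\theta$, observing that after composition the phases organize into $e^{\frac{2(2j-1)\pi i}{\theta+1}}$ and $e^{\frac{(2j-1)\pi i}{\theta+1}}$ uniformly. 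I would present this as: reduce $m_j$ to a single expression in $\gfntilde_t$ using \eqref{eq:defn_phi}, then quote the $\gfntilde_t$ expansion, then do the substitution and branch bookkeeping.

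\paragraph{Plan for the $n_i$.}
Here the plan is to use the behaviour of $\Iinv_{1,t}$ and $\Iinv_{2,t}$ near $0$ from \eqref{eq:I_1_at_0}--\eqref{eq:I_2_at_0} (with $c$ replaced by $c(t)$, as noted in Section~\ref{subsec:deformation_C1}), together with the explicit formula \eqref{eq:Pinfty_in_one} for $\P(s)$, and the fact that $c(t) = c + \bigO(u)$. Since $\Iinv_{1,t}(w) = -1 + c(t)^{-\theta/(\theta+1)} e^{\pm \pi i/(\theta+1)} w^{\theta/(\theta+1)}(1 + \bigO(w^{\theta/(\theta+1)}))$, we have $(s+1) \sim c(t)^{-\theta/(\theta+1)} w^{\theta/(\theta+1)}$ at the relevant point, so $\P(\Iinv_{1,t}(w))$ picks up the factor $(s+1)^{(\theta-\alpha-1)/\theta - 1/2} \sim w^{(\theta-\alpha-1)/(\theta+1) - \cdots}$ times a bounded, nonvanishing prefactor evaluated at $s = -1$; collecting the constants reproduces $n^{(\pre)}_k$, and the multiplicative error $1 + \bigO(u) + \bigO(w^{\theta/(\theta+1)})$ becomes $1 + \bigO(t-1) + \bigO(z^{1/(\theta+1)})$ after $w = z^{1/\theta}$. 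The $\bigO(u)$ part of the error absorbs the replacement $c \to c(t)$. The content is essentially \cite[Proposition 3.14]{Wang-Zhang21}, so I would state that the argument is identical to that reference with the only new input being the $t$-dependence of $c$, and that this contributes only a relative $\bigO(t-1)$.

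\paragraph{Main obstacle.}
None of this is deep; the only genuine care needed is the branch-cut and sector bookkeeping in the $m_j$ for $j \ge 2$ — tracking, for each $j$, whether $z^{1/\theta} e^{2(j-1)\pi i/\theta}$ lies in $\compC_+$ or $\compC_-$, hence which of the two cases in the $\gfn_t$/$\gfntilde_t$ expansions applies and which $2\pi i$ offset is picked up, and then checking that the resulting phases collapse to the single clean expression $\frac{\theta+1}{2\theta}\rho e^{\frac{2(2j-1)\pi i}{\theta+1}} z^{2/(\theta+1)} + u\,c^{-\theta/(\theta+1)} e^{\frac{(2j-1)\pi i}{\theta+1}} z^{1/(\theta+1)} + \cdots$ valid uniformly in $j$ and across both half-planes. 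Once this is set up correctly for one representative $j$ and one half-plane, the rest follows by the same computation, and the error estimates are immediate from the error terms already present in the quoted expansions.
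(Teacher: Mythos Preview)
Your proposal is correct and matches the paper's approach: the paper does not give a separate proof but simply notes that the proposition is analogous to \cite[Proposition~3.14]{Wang-Zhang21}, with the only new ingredient being the $t$-dependent expansions of $\gfn_t$, $\gfntilde_t$, $\phi_t$ and $c(t)$ from Section~\ref{subsec:deformation_C1}, which is exactly the substitution route you describe. One small correction to your sketch for $m_1$: the identity $-\gfn_t(w) + V_t(w) + \ell_t = \gfntilde_t(w) - \phi_t(w)$ is fine, but $\phi_t(w)$ is not $\bigO(w^{2\theta/(\theta+1)})$ near $0$ --- it carries the constant $\pm 2\pi i$ (see the displayed expansion of $\phi_t$ just above Remark~\ref{rmk:eq_measure_or_not}); this constant is exactly what cancels against the difference $(\gfntilde_t)_+(0) - (\gfntilde_t)_-(0) = 2\pi i$, after which the remaining power-series part of $\gfntilde_t - \phi_t$ indeed has the phases $e^{\pm 2\pi i/(\theta+1)}$ as claimed.
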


From now on, we assume, as mentioned before, that $r=r_n$ shrinks with $n$, namely,
\begin{equation}\label{def:rn}
  r=r_n=n^{-\frac{\theta+1}{2\theta + 1}}.
\end{equation}
\begin{rmk}
  One could actually take the rate of shrinking to be $n^{-\kappa}$ with $(\theta + 1)/(3\theta) < \kappa < (\theta + 1)/(2\theta)$. Here, we choose $\kappa = \theta + 1/(2\theta + 1)$ for convenience. 
\end{rmk} 

For $z \in D(0,r_n^{\theta})\setminus (\mathbb{R} \cup i \mathbb{R})$, analogous to \cite[Equation (3.117)]{Wang-Zhang21}, we first introduce a $(\theta + 1) \times (\theta + 1)$ matrix-valued function
\begin{equation}\label{def:sfP0}
  \mathsf{P}^{(0)}(z) = \diag((\rho n)^{-\frac{k}{2}})^{\theta}_{k = 0} \Phi((\rho n)^{\frac{\theta + 1}{2}} z) N^{(\pre)}(z) N(z)^{-1} e^{nM(z)},
\end{equation}
where $\rho$ is defined in \eqref{eq:defn_rho}, $M(z)$, $N(z)$ and $N^{(\pre)}(z)$ are defined in \eqref{def:M}, \eqref{def:N} and \eqref{eq:Npre}, respectively, and $\Phi$ is the solution of the RH problem \ref{RHP:general_model}. We then have the following proposition regarding the RH problem for $\mathsf{P}^{(0)}$ \cite[Proposition 3.15]{Wang-Zhang21}.
\begin{prop}\label{RHP:sfP0}
The function $\mathsf{P}^{(0)}(z)$ defined in \eqref{def:sfP0} has the following properties.
  \begin{enumerate} 
  \item
    $ \mathsf{P}^{(0)}(z)$ is analytic in $D(0,r_n^{\theta}) \setminus (\mathbb{R} \cup i \mathbb{R})$.
  \item
    For $z\in D(0,r_n^{\theta}) \cap (\mathbb{R} \cup i \mathbb{R})$, we have
    \begin{equation}
      \mathsf{P}^{(0)}_{+}(z) = \mathsf{P}^{(0)}_{-}(z) J_U(z),
    \end{equation}
    where $J_U(z)$ is defined in \eqref{eq:defn_J_U}.
  \item
    For $z\in \partial D(0,r_n^{\theta})$, we have, as $n\to \infty$,
    \begin{equation}\label{eq:sfP0asy}
      \mathsf{P}^{(0)}(z) = \Upsilon(z) \Omega_{\pm} (I+\bigO(n^{-\frac{1}{2(2\theta + 1)}})), \qquad z \in \compC_{\pm},
    \end{equation}
    where $\Upsilon$ and $\Omega_{\pm}$ are defined in \eqref{def:Lpm}.
  \end{enumerate}
\end{prop}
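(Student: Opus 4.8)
The plan is to reproduce the proof of \cite[Proposition 3.15]{Wang-Zhang21} line by line; the only difference is that the model RH problem \ref{RHP:general_model} for $\Phi$ is new, so I would invoke its unique solvability (Theorem \ref{thm:unique_solvability}) and its prescribed (hence uniform) large-$\xi$ asymptotics \eqref{eq:asyNew}--\eqref{def:Lpm} wherever the explicit Meijer-$G$ parametrix was used there. For part (1) I would note that $\Phi$ is analytic on $\compC\setminus(\realR\cup i\realR)$, that $z\mapsto(\rho n)^{(\theta+1)/2}z$ is a positive dilation and so preserves $\realR$ and $i\realR$, and that $N^{(\pre)}$, $N$, $e^{nM}$ are analytic off $\realR$ near $0$ by \eqref{eq:Npre}, \eqref{def:N}, \eqref{def:M}; the product is then analytic on $D(0,r_n^\theta)\setminus(\realR\cup i\realR)$. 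For part (2) I would use that the constant prefactor $\diag((\rho n)^{-k/2})$ drops out of every jump ratio and that $N^{(\pre)},N,M$ are diagonal, so that on each ray
\[
  (\mathsf{P}^{(0)}_-)^{-1}\mathsf{P}^{(0)}_+ = e^{-nM_-}N_-(N^{(\pre)}_-)^{-1}\,J^{(\theta)}_{\Phi}\,N^{(\pre)}_+N_+^{-1}e^{nM_+},
\]
and then check by the explicit formulas of $m_k,n_k,n^{(\pre)}_k$ that this equals $J_U(z)$ in \eqref{eq:defn_J_U}: on $i\realR$ the diagonal factors do not jump and the identity reduces to $m_1-m_0=-\phi(z^{1/\theta})+2\pi i$ together with $\beta=(\theta-2(\alpha+1))/(1+\theta)$; on $\realR_+$ the transposition $\left(\begin{smallmatrix}0&1\\1&0\end{smallmatrix}\right)$ is carried by the exchange of the first two components of $M,N,N^{(\pre)}$; on $\realR_-$ the cyclic jump $\Mcyclic$ is carried by the cyclic relations among all $\theta+1$ components. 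This is routine, exactly as in \cite{Wang-Zhang21}.

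The crux is part (3), which is where the choices $r_n=n^{-(\theta+1)/(2\theta+1)}$ in \eqref{def:rn} and $t=t_n=1-\sqrt{A_1/n}\,\tau$ in \eqref{eq:t_and_tau} are used. On $\partial D(0,r_n^\theta)$ I would set $\xi=(\rho n)^{(\theta+1)/2}z$, so that $|\xi|\asymp n^{(\theta+1)/(2(2\theta+1))}\to\infty$ and \eqref{eq:asyNew} applies. The first observation is that $\diag((\rho n)^{-k/2})_{k=0}^{\theta}\,\Upsilon((\rho n)^{(\theta+1)/2}z)=\Upsilon(z)$, because $\xi^{k/(\theta+1)}=(\rho n)^{k/2}z^{k/(\theta+1)}$. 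I would then rewrite \eqref{eq:asyNew} with the error commuted past $\Upsilon$, $\Phi(\xi)=\Upsilon(\xi)\bigl(I+\Upsilon(\xi)^{-1}\bigO(\xi^{-1})\Upsilon(\xi)\bigr)\Omega_{\pm}e^{-\Theta(\xi)}$, noting that the $(j,k)$-entry of $\Upsilon(\xi)^{-1}\bigO(\xi^{-1})\Upsilon(\xi)$ is $\bigO(\xi^{(k-j)/(\theta+1)-1})=\bigO(\xi^{-1/(\theta+1)})$ since $k-j\le\theta$; this gives
\[
  \mathsf{P}^{(0)}(z)=\Upsilon(z)\bigl(I+\bigO(\xi^{-1/(\theta+1)})\bigr)\Omega_{\pm}\,e^{-\Theta(\xi)}N^{(\pre)}(z)N(z)^{-1}e^{nM(z)},
\]
with the error now sitting to the left of the diagonal factors. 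It then remains to show the diagonal matrix $e^{-\Theta(\xi)}N^{(\pre)}(z)N(z)^{-1}e^{nM(z)}$, whose $k$-th entry is $\bigl(n^{(\pre)}_k(z)/n_k(z)\bigr)e^{nm_k(z)-\Theta_k(\xi)}$ (with $\Theta_k$ the $k$-th diagonal entry of $\Theta$ in \eqref{eq:asyold}), equals $I+\bigO(n^{-1/(2(2\theta+1))})$.

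For that I would use Proposition \ref{prop:m_and_n}: it gives $n^{(\pre)}_k/n_k=1+\bigO(t-1)+\bigO(z^{1/(1+\theta)})=1+\bigO(n^{-\theta/(2\theta+1)})$ on the circle, and it gives expansions of $m_k$ whose leading $z^{2/(\theta+1)}$- and $z^{1/(\theta+1)}$-coefficients match those of $\Theta_k$ precisely because the former scale as $\frac{\theta+1}{2\theta}\rho n$ and $n(t-1)c^{-\theta/(\theta+1)}=-\tau(\rho n)^{1/2}$, the latter by \eqref{eq:t_and_tau} and $A_1=c^{2\theta/(\theta+1)}\rho$ in \eqref{eq:defn_rho}; the leftover terms are $\bigO(n|z|^{3/(\theta+1)})+\bigO(n|t-1||z|^{2/(\theta+1)})+\bigO(n|t-1|^2|z|^{1/(\theta+1)})$, which on $|z|=r_n^\theta$ become $\bigO(n^{(1-\theta)/(2\theta+1)})+\bigO(n^{(1-2\theta)/(2(2\theta+1))})+\bigO(n^{-\theta/(2\theta+1)})$, all $\bigO(n^{-1/(2(2\theta+1))})$ for $\theta\ge2$. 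Hence $e^{nm_k-\Theta_k}=1+\bigO(n^{-1/(2(2\theta+1))})$, the diagonal matrix is $I+\bigO(n^{-1/(2(2\theta+1))})$, and pulling the constant $\Omega_{\pm}$ to the left of the error factor (a bounded conjugation) yields $\mathsf{P}^{(0)}(z)=\Upsilon(z)\Omega_{\pm}(I+\bigO(n^{-1/(2(2\theta+1))}))$ uniformly on $\partial D(0,r_n^\theta)$, which is \eqref{eq:sfP0asy}.

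I expect the one genuinely delicate point to be the order in which the subleading term of $\Phi$ is processed. In the form \eqref{eq:asyNew} the error $\bigO(\xi^{-1})$ stands to the left of $\Upsilon(\xi)$, and conjugating it by the unbounded diagonal matrix $\diag((\rho n)^{-k/2})$ would create factors $(\rho n)^{(k-j)/2}$ that overwhelm $\xi^{-1}\asymp n^{-(\theta+1)/(2(2\theta+1))}$ and ruin the estimate; the remedy is to first push the error past $\Upsilon(\xi)$, converting $\bigO(\xi^{-1})$ into $\bigO(\xi^{-1/(\theta+1)})=\bigO(n^{-1/(2(2\theta+1))})$ (exactly the stated rate), so that it is never conjugated by that diagonal matrix. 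Everything else is bookkeeping forced by the definitions of $r_n$ and $t_n$, and the matching of the two leading terms of $nM$ against $\Theta$ is what makes the diagonal factor tend to the identity.
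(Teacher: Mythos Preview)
Your proposal is correct and is exactly the approach the paper takes: the paper gives no explicit proof but refers to \cite[Proposition 3.15]{Wang-Zhang21}, and your outline reproduces that argument with the new model problem $\Phi$ replacing the explicit Meijer-$G$ parametrix, invoking Theorem \ref{thm:unique_solvability} and \eqref{eq:asyNew} where needed. The delicate point you isolate---commuting the $\bigO(\xi^{-1})$ error past $\Upsilon(\xi)$ before applying $\diag((\rho n)^{-k/2})\Upsilon(\xi)=\Upsilon(z)$---is precisely the mechanism that produces the rate $n^{-1/(2(2\theta+1))}$, and your bookkeeping of the three remainder scales $n|z|^{3/(\theta+1)}$, $n|t-1||z|^{2/(\theta+1)}$, $n(t-1)^2|z|^{1/(\theta+1)}$ is correct.
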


With $\mathsf{P}^{(0)}$ given in \eqref{def:sfP0}, we next define, analogous to \cite[Equation (3.127)]{Wang-Zhang21}
\begin{equation}\label{def:P0}
  P^{(0)}(z) = \Omega_{\pm}^{-1}\Upsilon(z)^{-1}\mathsf{P}^{(0)}(z), \quad z \in \compC_{\pm}.
\end{equation}
In view of \eqref{def:P0}, Proposition \ref{RHP:sfP0} and
\begin{equation}\label{eq:UpsilonOmegajump}
  \Upsilon_+(\zeta) \Omega_+ =  \Upsilon_-(\zeta) \Omega_- \times
  \begin{cases}
    \begin{pmatrix}
      0 & 1 \\
      1 & 0
    \end{pmatrix}
    \oplus I_{\theta - 1}, & \zeta >0, \\
    \Mcyclic, & \zeta <0,
  \end{cases}
\end{equation}
the following RH problem for $P^{(0)}$ is then immediate \cite[Proposition 3.16]{Wang-Zhang21}.
\begin{RHP}\label{rhp:P0}
\hfill
  \begin{enumerate} 
  \item
    $P^{(0)}(z)$ is analytic in $D(0,r_n^{\theta}) \setminus (\mathbb{R} \cup i \mathbb{R})$.
  \item
    For $z\in D(0,r_n^{\theta}) \cap (\mathbb{R} \cup i \mathbb{R})$, we have
    \begin{equation}
      P^{(0)}_+(z) =
      \begin{cases}
        P^{(0)}_-(z) J_U(z), & z \in (0, ir_n^{\theta}) \cup (0, -ir_n^{\theta}), \\
        \left(\begin{pmatrix}
            0 & 1
            \\
            1 & 0
          \end{pmatrix} \oplus I_{\theta - 1} \right)
        P^{(0)}_-(z) J_U(z), & z \in (0, r_n^{\theta}), \\
        \Mcyclic^{-1} P^{(0)}_-(z) J_U(z), & z \in (-r_n^{\theta}, 0),
      \end{cases}
    \end{equation}
    where $J_U(z)$ is defined in \eqref{eq:defn_J_U}.
  \item
    For $z\in \partial D(0,r_n^{\theta})$, we have, as $n\to \infty$,
    \begin{equation}\label{eq:P0asy}
      P^{(0)}(z) = I+\bigO(n^{-\frac{1}{2(2\theta + 1)}}).
    \end{equation}
  \end{enumerate}
\end{RHP}

Finally, we set, analogous to \cite[Equation (3.130)]{Wang-Zhang21}
\begin{equation} \label{def:V0}
  V^{(0)}(z) = U(z) P^{(0)}(z)^{-1}, \quad z \in D(0,r_n^{\theta}) \setminus (\mathbb{R} \cup i \mathbb{R}),
\end{equation}
where $U$ is defined in \eqref{eq:U_0_def} and \eqref{eq:U_0_def_2} and satisfies RH problem \ref{rhp:U}. Then, by a similar argument as  \cite[Proposition 3.17]{Wang-Zhang21},  we have the following proposition.
\begin{prop} \label{rhp:V0}
The function $V^{(0)}(z)$ defined in \eqref{def:V0} has the following properties.
  \begin{enumerate} 
  \item \label{enu:rhp:V0:1}
    $V^{(0)}=(V^{(0)}_0, V^{(0)}_1, \dotsc, V^{(0)}_{\theta})$ is analytic in $D(0,r_n^{\theta}) \setminus \mathbb{R} $, or equivalently, its definition can be analytically extended onto $(0, ir^{\theta}_n) \cup (-ir^{\theta}_n, 0)$.
  \item \label{enu:rhp:V0:2}
    For $z\in (-r_n^\theta,r_n^{\theta})\setminus \{0\}$, we have
    \begin{equation} \label{eq:enu:rhp:V0:2}
      V^{(0)}_+(z) = V^{(0)}_-(z)
      \begin{cases}
        \begin{pmatrix}
          0 & 1 \\
          1 & 0
        \end{pmatrix}
        \oplus I_{\theta - 1}, & z \in (0, r_n^{\theta}), \\
     \Mcyclic, & z \in (-r_n^{\theta}, 0). 
      \end{cases}
    \end{equation}
  \item \label{enu:rhp:V0:3}
    For $z\in \partial D(0,r_n^{\theta})$, we have, as $n \to \infty$,
    \begin{equation}\label{eq:V0asy}
      V^{(0)}(z) = U(z)(I+\bigO(n^{-\frac{1}{2(2\theta + 1)}})).
    \end{equation}
  \item
    As $z \to 0$, we have
    \begin{equation}\label{eq:V0kzero}
      V^{(0)}_k(z)=\bigO(1), \qquad k=0,1,\ldots,\theta.
    \end{equation}
  \end{enumerate}
\end{prop}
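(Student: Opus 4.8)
The plan is to follow \cite[Proposition 3.17]{Wang-Zhang21}, using that $V^{(0)} = U (P^{(0)})^{-1}$ by \eqref{def:V0}, where $U$ solves RH problem \ref{rhp:U} and $P^{(0)}$ solves RH problem \ref{rhp:P0}. First I would note that $P^{(0)}(z)$ is invertible on $D(0, r_n^\theta) \setminus (\realR \cup i\realR)$: it is built in \eqref{def:sfP0}--\eqref{def:P0} from $\Phi$ (invertible by the determinant formula after Theorem \ref{thm:unique_solvability}), the invertible diagonal factors $N^{(\pre)}$, $N$, $e^{nM}$, and the invertible constant matrices $\Upsilon$, $\Omega_\pm$; hence $V^{(0)}$ is analytic off $\realR \cup i\realR$ in the disc. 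On the segments $(0, ir_n^\theta)$ and $(0, -ir_n^\theta)$ both $U$ and $P^{(0)}$ carry the same (unipotent) jump $J_U$ of \eqref{eq:defn_J_U}, so $V^{(0)}_+ = U_- J_U (P^{(0)}_- J_U)^{-1} = V^{(0)}_-$, and $V^{(0)}$ extends analytically across those segments. This gives part \ref{enu:rhp:V0:1}.

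Next, for part \ref{enu:rhp:V0:2} I would read off the jump on $\realR$ directly: on $(0, r_n^\theta)$, RH problem \ref{rhp:P0} gives $P^{(0)}_+ = \bigl(\left(\begin{smallmatrix}0&1\\1&0\end{smallmatrix}\right)\oplus I_{\theta-1}\bigr) P^{(0)}_- J_U$ while $U_+ = U_- J_U$, so the $J_U$'s cancel and $V^{(0)}_+ = V^{(0)}_-\bigl(\left(\begin{smallmatrix}0&1\\1&0\end{smallmatrix}\right)\oplus I_{\theta-1}\bigr)$; on $(-r_n^\theta, 0)$, $P^{(0)}_+ = \Mcyclic^{-1} P^{(0)}_- \Mcyclic$ and $U_+ = U_- \Mcyclic$ give $V^{(0)}_+ = V^{(0)}_- \Mcyclic$. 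Part \ref{enu:rhp:V0:3} is then immediate from \eqref{eq:P0asy}: on $\partial D(0, r_n^\theta)$ one has $P^{(0)}(z)^{-1} = I + \bigO(n^{-1/(2(2\theta+1))})$, hence $V^{(0)}(z) = U(z)\bigl(I + \bigO(n^{-1/(2(2\theta+1))})\bigr)$.

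The only substantive point is the boundedness $V^{(0)}_k(z) = \bigO(1)$ as $z \to 0$ (equation \eqref{eq:V0kzero}). Here I would unwind \eqref{def:V0}, \eqref{def:P0}, \eqref{def:sfP0} to
\[
  V^{(0)}(z) = U(z)\, e^{-nM(z)} N(z)\, N^{(\pre)}(z)^{-1}\, \Phi\bigl((\rho n)^{\frac{\theta+1}{2}} z\bigr)^{-1}\, \diag\bigl((\rho n)^{\frac{k}{2}}\bigr)^{\theta}_{k=0}\, \Upsilon(z)\, \Omega_{\pm}, \quad z \in \compC_{\pm},
\]
and substitute the local descriptions near $z = 0$: $e^{-nM(z)} = I + o(1)$, since the $m_j$ of Proposition \ref{prop:m_and_n} vanish at $0$; from condition \ref{enu:RHP:general_model:3} of RH problem \ref{RHP:general_model},
\[
  \Phi(\xi)^{-1} = E^{-1} \diag\bigl(\xi^{\frac{\alpha+1-\theta}{\theta}}, 1, \xi^{-\frac{1}{\theta}}, \dotsc, \xi^{-\frac{\theta-1}{\theta}}\bigr)\, \xi^{\frac{\beta}{2\theta}}\, N(\xi)^{-1},
\]
with $N(\xi)$ analytic and invertible at $0$; and $N(z)$, $N^{(\pre)}(z)$, $\Upsilon(z)$ are explicit diagonal matrices of fractional powers of $z$ with $N(z) = N^{(\pre)}(z)(I + o(1))$ by Proposition \ref{prop:m_and_n}. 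The key step is then to collect the powers of $z$ produced by $\Upsilon(z)$, by $N(z) N^{(\pre)}(z)^{-1}$, and by the singular diagonal of $\Phi^{-1}$ (recalling $\xi = (\rho n)^{(\theta+1)/2} z$), and to check entrywise that the $k$-th column of $e^{-nM} N N^{(\pre)-1}\Phi^{-1}\diag\,\Upsilon$ is $\bigO(z^{-e_k})$, where $e_k$ is precisely the exponent bounding $U_k(z)$ in parts (a)--(c) of RH problem \ref{rhp:U}; this makes each product $U_k(z)\cdot(\text{that column})$ bounded. In the degenerate cases $\alpha = \theta-1$ and $\alpha \in \intZ$ I would additionally carry along the $\log z$ in the bound for $U$ and the $\log\xi$ entry \eqref{eq:E_log} of $E$, which again combine to $\bigO(1)$.

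The main obstacle is this last matching of singular exponents at $0$, as it is the one place where the precise structure of the model problem enters; it is exactly the property that condition \ref{enu:RHP:general_model:3} of RH problem \ref{RHP:general_model} was engineered to deliver, so the verification is the bookkeeping organized around the exponents $\tfrac{\alpha+3/2}{\theta+1}$, $\tfrac{\alpha+1}{\theta}$ and $\tfrac{\beta}{2\theta}$ appearing in $N^{(\pre)}$, $\Upsilon$ and \eqref{eq:defn_N(xi)}. I would treat the two branch choices $\compC_\pm$ separately, using \eqref{eq:UpsilonOmegajump}, and otherwise everything reduces to the jump relations of RH problems \ref{rhp:U} and \ref{rhp:P0} together with \eqref{eq:P0asy}.
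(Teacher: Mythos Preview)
Your proposal is correct and follows exactly the approach the paper intends: the paper does not give an independent proof but simply invokes \cite[Proposition 3.17]{Wang-Zhang21}, whose argument is precisely the one you outline---cancelling the shared jump $J_U$ on the imaginary segments for part \ref{enu:rhp:V0:1}, reading off the residual left-multiplicative constant jumps on $\realR$ for part \ref{enu:rhp:V0:2}, using \eqref{eq:P0asy} for part \ref{enu:rhp:V0:3}, and matching the singular exponents of $U$, $N^{(\pre)}$, $\Upsilon$ and the local factor \eqref{eq:defn_N(xi)} of $\Phi$ at the origin for part (4). Your identification of the exponent bookkeeping at $0$ (including the $\log$ cases $\alpha=\theta-1$ and $\alpha\in\intZ$) as the only nontrivial step is exactly right.
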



\subsubsection{Final transformation} \label{subsubsec:final_trans_p}


Before performing the final transformation, we complete the description of the shape of $\Sigma_1$ (and then $\Sigma_2 = \overline{\Sigma_1}$). Let $\sigma_0$ and $\sigma_b$ be the intersections of $\Sigma_1$ with $\partial D(0, r_n)$ and $\partial D(b, \epsilon)$, respectively, and we let the undetermined part of $\Sigma_1$ be a contour $\Sigma^R_1$ connecting $\sigma_0$ and $\sigma_b$ such that $\Re \phi(z) > 0$ for $z \in \Sigma^R_1$. We may choose $\Sigma^R_1$ to be close to the real axis, since on $(0, b)$, $\Re \phi_+(x) = 0$ and $\phi'_+(x) = -2\pi i \psi(x)$, and $\psi(x)$ has the (almost) positivity on $(0, b)$ as stated in Regularity Condition \ref{reg:one-cut} and \eqref{eq:limit_psi_t_at_0}.

We set, as in \cite[Equation 3.137)]{Wang-Zhang21}
\begin{equation}\label{def:sigmaR}
 \Sigma^{R}:=[0,b]\cup [b+\epsilon, \infty) \cup \partial D(0,r_n) \cup \partial D(b,\epsilon) \cup \Sigma_1^R \cup \Sigma_2^R,
\end{equation}
where $\Sigma^R_1$ is specified above, and $\Sigma^R_2 = \overline{\Sigma^R_1}$.
We also divide the open disk $D(0,r_n)$ into $\theta$ parts by setting, as in \cite[Equation 3.139)]{Wang-Zhang21}
\begin{equation}\label{def:Wk}
  W_k:=\{z\in D(0,r_n)\setminus\{0\} \mid  \arg z \in (\frac{(2k - 3)\pi}{\theta}, \frac{(2k - 1)\pi}{\theta}) \},\qquad k=1,\ldots, \theta.
\end{equation}
We then define a $1\times 2$ vector-valued function $R(z) = (R_1(z), R_2(z))$ such that $R_1(z)$ is analytic in $\compC \setminus \Sigma^R$ and $R_2(z)$ is analytic in $\halfH \setminus \Sigma^R$, as follows \cite[Equation (3.141) and (3.142)]{Wang-Zhang21}:
\begin{align}
  R_1(z) = {}&
  \begin{cases}
    Q_1(z), & \hbox{$z\in \mathbb{C} \setminus \{D(b,\epsilon) \cup D(0,r_n) \cup \Sigma^R \}$,} \\
    V_1^{(b)}(z), & \hbox{$z\in D(b,\epsilon) \setminus [b-\epsilon, b]$,} \\
    V_1^{(0)}(z^\theta), & \hbox{$z\in W_1 \setminus [0,r_n]$,} \\
    V_k^{(0)}(z^\theta), & \hbox{$z\in W_k$, $k=2,\ldots,\theta$,}
  \end{cases} \label{def:R1} \\
  R_2(z) = {}&
  \begin{cases}
    Q_2(z), & \hbox{$z\in \halfH \setminus \{D(b,\epsilon) \cup D(0,r_n) \cup \Sigma^R \}$,} \\
    V_2^{(b)}(z), & \hbox{$z\in D(b,\epsilon) \setminus [b-\epsilon, b]$,} \\
    V_0^{(0)}(z^\theta), & \hbox{$z \in W_1 \setminus [0,r_n]$.}
  \end{cases} \label{def:R2}
\end{align}
Recall the function $J_c$ ($J_{c(t)}$ before the $t$-dependence is suppressed) defined in \eqref{eq:J_function} with $c(t)$ specified in Section \ref{subsec:deformation_algebraic}. Then, we define, as in \cite[Equation 3.146)]{Wang-Zhang21}
\begin{equation} \label{eq:scalar_R_defn}
  \tR(s) =
  \begin{cases}
    R_1(J_c(s)), & \text{$s \in \compC \setminus \overline{D}$ and $s \notin \Iinv_1(\Sigma^R )$,} \\
    R_2(J_c(s)), & \text{$s \in D \setminus[-1, 0]$ and $s \notin \Iinv_2(\Sigma^R)$.}
  \end{cases}
\end{equation}
We have that $\tR(s)$ satisfies a shifted RH problem, as described in \cite[RH problem 3.19]{Wang-Zhang21}. We omit the full statement of this RH problem.

By arguments analogous to those in \cite[Propositions 3.20 and 3.21, and Lemma 3.22]{Wang-Zhang21},  we derive the following estimate for $\tR(s)$.
\begin{lemma} \label{lem:tRest}
  As $n\to \infty$, we have
  \begin{equation} \label{eq:esttR}
    \tR(s)=1+\bigO(n^{-\frac{1}{2(2\theta + 1)}}),
  \end{equation}
  uniformly for $\lvert s+1 \rvert < \epsilon r_n^{\theta/(1+\theta)}$ or $\lvert s \rvert < \epsilon$, where $\epsilon$ is a small positive constant and $r_n$ is given in \eqref{def:rn}.
\end{lemma}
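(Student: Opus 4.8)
\emph{Proof strategy.} The function $\tR$ solves the shifted small--norm Riemann--Hilbert problem of \cite[RH problem 3.19]{Wang-Zhang21}: it is analytic in $\compC$ off the image under $\Iinv_1$ and $\Iinv_2$ of $\Sigma^R$, it equals $1$ at the points corresponding to $z=\infty$ (namely $s=\infty$ and $s=0$), it stays bounded near $s=-1$ and $s=\theta^{-1}$ (the preimages of $z=0$ and $z=b$, by \eqref{eq:V0kzero} and the Airy matching), and its jump matrix $J_{\tR}$ is $I$ plus a term that is small as $n\to\infty$. The plan is the usual small--norm one: first bound $J_{\tR}-I$ in $L^1\cap L^2\cap L^\infty$ of the contour; then show that the associated Cauchy operator $C_{J_{\tR}}$ is bounded on $L^2$ uniformly in $n$, so that $1-C_{J_{\tR}}$ is invertible for large $n$; and finally recover $\tR$ from the Cauchy transform of the density $\mu:=\tR_-(J_{\tR}-I)$, obtaining $\tR(s)=1+\bigO(\lVert J_{\tR}-I\rVert)$ uniformly for $s$ at a controlled distance from the contour. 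All of this runs in parallel with \cite[Propositions 3.20 and 3.21, and Lemma 3.22]{Wang-Zhang21}, so I only highlight what is special here.

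The bound on $J_{\tR}-I$ is assembled piece by piece. On the parts of $\Sigma^R$ coming from the lens boundaries $\Sigma_1^R\cup\Sigma_2^R$ and from $[b+\epsilon,\infty)$, the jump of $\tR$ differs from $I$ by a term bounded (up to algebraic factors) by $e^{-n\Re\phi}$, with $\Re\phi>0$ strictly there by construction; moreover the expansion of $\phi$ near $0$ displayed before Remark~\ref{rmk:about_theta=1_transitive}, together with \eqref{eq:limit_psi_t_at_0}, gives $\Re\phi(z)\gtrsim\lvert z\rvert^{2\theta/(\theta+1)}$ on the portion of $\Sigma_1^R$ adjacent to $\partial D(0,r_n)$, so that $n\Re\phi\gtrsim n\,r_n^{2\theta/(\theta+1)}=n^{1/(2\theta+1)}\to\infty$ by \eqref{def:rn}; hence these arcs contribute $\bigO(e^{-c\,n^{1/(2\theta+1)}})$, beyond all powers of $n$. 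On $\partial D(b,\epsilon)$ the Airy matching of $P^{(b)}$ with $P^{(\infty)}$ gives, exactly as in \cite{Wang-Zhang21}, a jump $I+\bigO(n^{-1})$. The one genuinely new contribution is on $\partial D(0,r_n)$: there $J_{\tR}-I$ is governed by $V^{(0)}(z)U(z)^{-1}-I$, which by \eqref{eq:V0asy} of Proposition~\ref{rhp:V0} --- ultimately by \eqref{eq:sfP0asy} of Proposition~\ref{RHP:sfP0}, which in turn rests on the large--$\xi$ asymptotics \eqref{eq:asyNew} of the model RH problem~\ref{RHP:general_model} --- is $\bigO(n^{-\frac{1}{2(2\theta+1)}})$ for the choice \eqref{def:rn} of $r_n$. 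Since $\tfrac{1}{2(2\theta+1)}<1$ this circle dominates, and altogether $\lVert J_{\tR}-I\rVert_{L^\infty}=\bigO(n^{-\frac{1}{2(2\theta+1)}})$, with the $L^1$ and $L^2$ norms even smaller because $\Iinv_1(\partial D(0,r_n))$ and $\Iinv_2(\partial D(0,r_n))$ are loops around $-1$ of radius $\asymp r_n^{\theta/(\theta+1)}\to0$ by \eqref{eq:I_1_at_0}--\eqref{eq:I_2_at_0}.

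The decisive step --- and the one I expect to be the real obstacle --- is the uniform--in--$n$ boundedness of $C_{J_{\tR}}$, hence of $(1-C_{J_{\tR}})^{-1}$, on $L^2$ of the $n$--dependent contour, since that contour has a component, the loop around $-1$, whose radius shrinks like $r_n^{\theta/(\theta+1)}$ and which meets the remaining arcs at fixed angles; naively a shrinking component can destroy such a bound. As in \cite[Propositions 3.20 and 3.21]{Wang-Zhang21}, the remedy is to rescale the variable near $s=-1$ so that the small loop becomes a fixed loop; the Cauchy operator is invariant under the rescaling, the rescaled local picture is $n$--independent, so the classical $L^2$ bound applies locally, while away from $s=-1$ the contour does not depend on $n$ at all. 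A Neumann series in $C_{J_{\tR}}$ then solves the RH problem for $\tR$ for all large $n$ and yields $\mu$ with $\lVert\mu\rVert_{L^2}=\bigO(n^{-\frac{1}{2(2\theta+1)}})$. Finally, to get the pointwise bound \eqref{eq:esttR}: for $\lvert s\rvert<\epsilon$ one estimates $\tR(s)-1=\frac{1}{2\pi i}\oint\frac{\mu(w)}{w-s}\,dw$ by $\lVert\mu\rVert_{L^1}\lesssim\lVert J_{\tR}-I\rVert_{L^1\cap L^2}=\bigO(n^{-\frac{1}{2(2\theta+1)}})$, the contour being at fixed distance; for $\lvert s+1\rvert<\epsilon r_n^{\theta/(1+\theta)}$ the Cauchy kernel blows up like $1/r_n^{\theta/(\theta+1)}$ against the loop around $-1$, but this is exactly compensated by that loop having length $\asymp r_n^{\theta/(\theta+1)}$ and $J_{\tR}-I=\bigO(n^{-\frac{1}{2(2\theta+1)}})$ there, so again $\tR(s)=1+\bigO(n^{-\frac{1}{2(2\theta+1)}})$. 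The only place where the argument truly departs from \cite{Wang-Zhang21} is the provenance of the $\bigO(n^{-\frac{1}{2(2\theta+1)}})$ matching on $\partial D(0,r_n)$, which here comes from the solvability and behaviour at infinity of the new $(\theta+1)\times(\theta+1)$ model RH problem~\ref{RHP:general_model} (Theorem~\ref{thm:unique_solvability}) rather than from an explicit Meijer--G parametrix.
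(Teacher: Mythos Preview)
Your proposal is correct and follows essentially the same approach as the paper, which refers the proof to \cite[Propositions 3.20, 3.21 and Lemma 3.22]{Wang-Zhang21} and outlines exactly the small-norm strategy you describe: uniqueness of the scalar shifted RH problem, exponential bounds on the lens and tail arcs, $\bigO(n^{-1})$ Airy matching on $\partial D(b,\epsilon)$, the dominant $\bigO(n^{-\frac{1}{2(2\theta+1)}})$ matching on $\partial D(0,r_n)$ coming from Proposition~\ref{RHP:sfP0}, and the final pointwise estimate via the Cauchy representation with the shrinking-loop length compensating the kernel blowup. One small terminological point: the RH problem for $\tR$ is scalar and \emph{shifted} (the jump condition is $\tR_+-\tR_-=\Delta\tR_-$ with $\Delta$ an operator mixing values at different points, as made explicit in the analogous RH problem~\ref{rhp:tR}), so speaking of a ``jump matrix $J_{\tR}$'' is slightly loose, but your argument adapts verbatim with $\Delta$ playing the role you assign to $J_{\tR}-I$.
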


The proof of the lemma is detailed in \cite[Section 3.7]{Wang-Zhang21}, and the idea is as follows: First, we show that $\tR(s)$ satisfies a scalar shifted RH problem \cite[RH problem 3.19]{Wang-Zhang21}. Next, we show that $\tR(s)$ is the unique solution to this RH problem \cite[Proposition 3.20]{Wang-Zhang21}. Then by a small norm argument, we show that the solution to the RH problem that $\tR(s)$ satisfies is close to $1$ in the sense of \cite[Proposition 3.21]{Wang-Zhang21}. Finally, we derive the lemma, as the counterpart of \cite[Proposition 3.22]{Wang-Zhang21}. Since it is lengthy and is identical to the proof in \cite[Section 3.7]{Wang-Zhang21} for a slightly different setting, we omit the proof. In Section \ref{sec:RH_soft_edge}, we give details of a similar small norm argument in the proof of Lemma \ref{lem:tRest_soft} .
\subsection{Asymptotic analysis of the RH problem for $q_n$}

In this subsection, we use the same notational setting as in Section \ref{subsec:asy_p_n}, especially the shape of $\Sigma$ is the same.

\subsubsection{Transformations $\Y \to \T \to \St \to \Q$, and local parametrix around $b$} \label{subsec:trans_YTSQ_tilde}

Let $q_n$ be the monic biorthogonal polynomial defined in \eqref{eq:biorthogonality}. We define its Cauchy transform \cite[Equation (2.28)]{Wang-Zhang21}
\begin{equation}\label{eq:defn_of_Cq_n}
  \widetilde C q_n(z) = \frac{1}{2\pi i}\int_0^{+\infty} \frac{q_n(x^{\theta})}{x-z}x^{\alpha}e^{-nV(x)}dx, \quad z\in \compC \setminus [0,+\infty),
\end{equation}
and denote the vector form $\Y = (\Y_1, \Y_2) := (q_n, \C q_n)$ as in \cite[Equation (2.29)]{Wang-Zhang21}. Then, similar to $Y$ that satisfies RH problem \ref{RHP:OPs},  the vector-valued function $\Y$ also satisfies a vector RH problem. We omit the details since it is the same as that in \cite[Theorem 1.5]{Claeys-Romano14} and \cite[RH problem 2.3]{Wang-Zhang21}.

Analogous to \eqref{eq:defn_T}, we let \cite[Equation (4.1)]{Wang-Zhang21}
\begin{equation}\label{def:tildeT}
  \T(z) = (\T_1(z), \T_2(z)) = (\Y_1(z) e^{-n \gfntilde(z)}, \Y_2(z) e^{n (\gfn(z) - \ell)}),
\end{equation}
and analogous to \eqref{eq:S_from_T}, we let \cite[Equation (4.5)]{Wang-Zhang21}
\begin{equation}\label{def:tildeS}
  \St(z) =
  \begin{cases}
    \T(z), & \text{outside the lens}, \\
    \T(z)
    \begin{pmatrix}
      1 & 0
      \\
   z^{-\alpha} e^{-n \phi(z)} & 1
    \end{pmatrix},
    & \text{lower part of the lens}, \\
    \T(z)
    \begin{pmatrix}
      1 & 0 \\
      -z^{-\alpha} e^{-n \phi(z)} & 1
    \end{pmatrix},
    & \text{upper part of the lens}.
  \end{cases}
\end{equation}
We have that $\T(z)$ and $\St(z)$ satisfy the Riemann-Hilbert problems stated in \cite[RH problems 4.1 and 4.2]{Wang-Zhang21}.

Next, analogous to \eqref{eq:Pinfty_1} -- \eqref{eq:Pinfty_in_one}, we construct the global parametrix $\Pinfty(z) = (\Pinfty_1(z), \Pinfty_2(z))$ such that \cite[Equations (4.15) and (4.16)]{Wang-Zhang21}
\begin{align}\label{eq:tildeP1infty}
\widetilde P_1^{(\infty)}(z)&=\widetilde \tP(\Iinv_2(z)), & z\in {}& \mathbb{H}_\theta \setminus [0,b],
\\
\widetilde P_2^{(\infty)}(z)&=\widetilde \tP(\Iinv_1(z)), & z\in {}& \mathbb{C} \setminus [0,b], \label{eq:tildeP2infty}
\end{align}
where \cite[Equation (4.14)]{Wang-Zhang21} 
\begin{equation}\label{eq:tildeFs}
  \widetilde \tP(s)=
  \begin{cases}
    \frac{c^{\alpha}\sqrt{s_b}i}{\sqrt{(s + 1)(s - s_b)}} \left( \frac{s+1}{s} \right)^{\frac{\alpha}{\theta}}, & \text{$s \in \compC \setminus \overline{D}$,} \\
     \frac{\sqrt{s_b}i }{\sqrt{(s + 1)(s - s_b)}} (s+1)^{-\alpha}, & \text{$s\in D$.}
  \end{cases}
\end{equation}
Here $s_b = 1/\theta$, the branch cuts of $\sqrt{(s + 1)(s - s_b)}$, $\left( \frac{s + 1}{s} \right)^{\frac{\alpha}{\theta}}$ and $(s + 1)^{-\alpha}$ are taken along $\gamma_2$, $[-1, 0]$ and $(-\infty, -1]$ respectively, where the contour $\gamma_2$ is defined at the beginning of Section \ref{subsec:exact_equilibrium_measure}.

Then the function, analogous to \eqref{eq:defn_Q}, \cite[Equation (4.20)]{Wang-Zhang21}
\begin{equation} \label{eq:thirdtransform_tilde}
\Q(z)=(\Q_1(z),\widetilde{ Q}_2(z))=\left(\frac{\St_1(z)}{\Pinfty_1(z)},\frac{\St_2(z)}{\Pinfty_2(z)} \right)
\end{equation}
satisfies the Riemann-Hilbert problem stated in \cite[RH problem 3.7]{Wang-Zhang21}.

Analogous to \cite{Wang-Zhang21},  the local parametrix near the right ending point $b$ can be constructed in terms of the  Airy parametrix. With $f_b(z)$ defined in \eqref{def:fb_hard}, we define, analogous to \eqref{eq:defn_P^b} \cite[Equation (4.25)]{Wang-Zhang21}
\begin{equation}\label{def:tildePb}
 \Pb (z) :=\widetilde E^{(b)}(z)
  \Psi^{(\Ai)}(n^{\frac23}f_b(z))
  \begin{pmatrix}
    e^{-\frac{n}{2} \phi(z)} \widetilde g^{(b)}_1(z) & 0
    \\
    0 & e^{\frac{n}{2} \phi(z)} \widetilde g^{(b)}_2(z)
  \end{pmatrix},  \quad
  z \in D(b,\epsilon) \setminus \Sigma,
\end{equation}
where \cite[Equation (4.26)]{Wang-Zhang21}
\begin{equation}\label{def:tildegib}
  \widetilde g^{(b)}_1(z) = \frac{z^{-\frac{\alpha }{2}}}{\widetilde P_1^{(\infty)}(z)}, \qquad \widetilde g^{(b)}_2(z) = \frac{z^{\frac{\alpha }{2}}}{\widetilde P_2^{(\infty)}(z)},
\end{equation}
and \cite[Equation (4.25)]{Wang-Zhang21}
\begin{equation}\label{def:tildeEb}
\widetilde E^{(b)}(z)=\frac{1}{\sqrt{2}}
  \begin{pmatrix}
    \widetilde g^{(b)}_1(z) & 0  \\
    0 & \widetilde g^{(b)}_2(z)
  \end{pmatrix}^{-1}
  e^{\frac{\pi i}{4} \sigma_3}
  \begin{pmatrix}
    1 & -1 \\
    1 & 1
  \end{pmatrix}
  \begin{pmatrix}
    n^{\frac16} f_b(z)^{\frac14} & 0
    \\
    0 & n^{-\frac16} f_b(z)^{-\frac14}
  \end{pmatrix}.
\end{equation}
We have that $\Pb(z)$ satisfies an RH problem; see \cite[RH problem 4.8]{Wang-Zhang21}. Then analogous to \eqref{eq:defn_V^b}, we define the vector-valued function $\Vb(z)$ by \cite[Equation (4.29)]{Wang-Zhang21}
\begin{equation} \label{eq:defn_V^(b)_tilde}
    \Vb(z) = \widetilde Q(z) \Pb (z)^{-1}, \qquad z \in D(b,\epsilon) \setminus \Sigma.
\end{equation}
We have that $\Vb$ satisfies the same RH problem as \cite[RH problem 4.9]{Wang-Zhang21}.

\subsubsection{Local parametrix around $0$} \label{subsubsec:Phi_q}

\paragraph{A local version of the RH problem for $\Q$}

Let $r = r_n$ be the same as in Section \ref{subsec:local_para}. Analogous to \eqref{eq:U_0_def} and \eqref{eq:U_0_def_2}, we define $\theta + 1$ functions $\U_0(z), \U_1(z), \dotsc, \U_{\theta}(z)$ \cite[Equations (4.34) and (4.35)]{Wang-Zhang21}, see also \cite[Equations (4.31)--(4.33)]{Wang-Zhang21}
\begin{align}
  \U_0(z) = {}& \Q_1(z^{\frac{1}{\theta}}), & z \in {}& D(0, r^{\theta}) \setminus \{ (-r^{\theta}, r^{\theta}) \cup (-ir^{\theta}, ir^{\theta}) \}, \label{eq:Utilde_0_def} \\
  \U_k(z) = {}& \Q_2(z^{\frac{1}{\theta}} e^{\frac{2(k - 1)}{\theta} \pi i}), & z \in {}& D(0, r^{\theta}) \setminus \{ (-r^{\theta}, r^{\theta}) \cup (-ir^{\theta}, ir^{\theta}) \}, \quad k = 1, 2, \dotsc, \theta. \label{eq:Utilde_0_def_2}
\end{align}
Then, $\U(z) = (\U_0(z), \dotsc, \U_{\theta}(z))$ satisfies \cite[RH problem 4.10]{Wang-Zhang21}, which is stated  below for completeness.

\begin{RHP}\label{rhp:tildeU} \hfill
  \begin{enumerate} 
  \item
    $ \U=(\U_0,\U_1,\ldots,\U_\theta)$ is defined and analytic in $D(0,r^{\theta}) \setminus \{(-r^{\theta},r^{\theta}) \cup (-ir^{\theta},ir^{\theta})\}$.
  \item
    For $z\in (-r^{\theta},r^{\theta})\cup(-ir^{\theta},ir^{\theta})\setminus\{0\}$, we have
    \begin{equation}
      \U_+(z) = \U_-(z) J_{\U}(z),
    \end{equation}
    where
    \begin{equation} \label{eq:defn_J_tildeU}
      J_{\U}(z) =
      \begin{cases}
        \begin{pmatrix}
          1 & 0 \\
          z^{-\frac{\alpha}{\theta}}e^{-n \phi(z^{1/\theta})}\frac{\Pinfty_2(z^{1/\theta})}{\Pinfty_1(z^{1/\theta})} & 1
        \end{pmatrix}
        \oplus I_{\theta - 1}, & z \in (0,ir^{\theta})\cup (0,-ir^{\theta}), \\
        \begin{pmatrix}
          0 & 1 \\
          1 & 0
        \end{pmatrix}
        \oplus I_{\theta - 1}, & z \in (0, r^{\theta}), \\
        \Mcyclic, & z \in (-r^{\theta}, 0),
      \end{cases}
    \end{equation}
    with $\Mcyclic$ being defined in \eqref{eq:defn_Mcyclic}, and the orientations of the rays are shown in Figure \ref{fig:jumps-U}.
  \item
    As $z\to 0$ from $D(0,r^{\theta}) \setminus \{(-r^{\theta},r^{\theta}) \cup (-ir^{\theta},ir^{\theta})\}$, we have
    \begin{enumerate} 
    \item for $k=1,\ldots,\theta$, or $k=0$ and $\arg z\in (0, \pi/2) \cup (-\pi/2, 0)$,
      \begin{equation}
      \U_k(z) =
      \begin{cases}
        \bigO (z^{\frac{\theta-2\alpha}{2\theta(1+\theta)}}), & \alpha  > 0, \\
        \bigO ( z^{\frac{1}{2(1+\theta)}} \log z ), &  \alpha = 0, \\
        \bigO ( z^{\frac{\alpha+ 1/2}{1+\theta}} ), & -1< \alpha  < 0,
      \end{cases}
    \end{equation}

    \item
      for $\arg z \in (\pi/2,\pi)\cup (-\pi,-\pi/2)$,
      \begin{equation}
        \U_0(z)=\bigO (z^{\frac{\alpha+ 1/2}{1+\theta}}).
      \end{equation}
    \end{enumerate}
\end{enumerate}
\end{RHP}

\paragraph{Parametrix $\Phitilde$}

Below we state the following model RH problem: (with $\betatilde = -\frac{2\alpha + 1}{1 + \theta}$)
\begin{RHP} \label{RHP:general_model_tilde}
  $\Phitilde(\xi) = \Phitilde^{(\tau)}(\xi)$ is a $(\theta + 1) \times (\theta + 1)$ matrix-valued function on $\compC$ except for $\realR$ and $i\realR$. 
\begin{enumerate} 
  \item It satisfies the jump condition with the  orientation of rays shown in Figure \ref{fig:jumps-MeiG}
    \begin{equation} \label{eq:defn_J_hard_to_soft_q}
      \Phitilde_+(\xi) = \Phitilde_-(\xi) J^{(\theta)}_{\Phitilde}(\xi), \quad \text{where} \quad J^{(\theta)}_{\Phitilde}(\xi) =
      \begin{cases}
        \begin{pmatrix}
          1 & 0 \\
          e^{\betatilde \pi i} & 1
        \end{pmatrix}
        \oplus I_{\theta - 1},
        & \arg \xi = \frac{\pi}{2}, \\
        \begin{pmatrix}
          1 & 0 \\
          -e^{-\betatilde \pi i} & 1
        \end{pmatrix}
        \oplus I_{\theta - 1},
        & \arg \xi = -\frac{\pi}{2}, \\
        \begin{pmatrix}
          0 & 1 \\
          1 & 0
        \end{pmatrix}
        \oplus I_{\theta - 1},
        & \xi \in \realR_+, \\
        \Mcyclic,
            & \xi \in \realR_-.
      \end{cases}
    \end{equation}
  \item
    $\Phitilde(\xi)$ has the following boundary condition as $\xi \to \infty$ for $\xi \in \compC_{\pm}$
    \begin{equation} \label{eq:asyNew_q}
      \Phitilde(\xi) = \left( I + \bigO(\xi^{-1}) \right) \Upsilon(\xi) \Omega_{\pm} e^{\Theta(\xi)},
    \end{equation}
    where $\Theta$, $\Upsilon$ and $\Omega_{\pm}$  are defined in \eqref{eq:asyold} and \eqref{def:Lpm}.
  \item 
    $\Phitilde(\xi)$ has the following boundary condition as $\xi \to 0$, for $\xi \in \compC_{\pm}$,
    \begin{equation}
      \Phitilde(\xi) = \N(\xi) \diag \left( \xi^{-\frac{\betatilde}{2}}, \xi^{\frac{\betatilde + 2\theta - 1}{2\theta}}, \dotsc, \xi^{\frac{\betatilde + 5}{2\theta}} \xi^{\frac{\betatilde + 3}{2\theta}}, \xi^{\frac{\betatilde + 1}{2\theta}}, \right) (E^{-1})^T \Xicheck^{-1}_{\pm},
    \end{equation}
    where $\N(\xi)$ is analytic at $0$, $E$ is defined in \eqref{eq:defn_E_region_II}, and 
    \begin{equation} \label{eq:defn_Xicheck_real}
      \begin{aligned}
        \Xicheck_+ = {}& \diag \left( e^{-\frac{\theta j}{\theta + 1} 2\pi i} \right)^{\theta}_{j = 0}, & \text{on $\compC_+$}, \\
        \Xicheck_- = {}& \diag \left( e^{-\frac{\theta}{\theta + 1} 2\pi i}, 1 \right) \oplus \diag \left( e^{-\frac{\theta j}{\theta + 1} 2\pi i} \right)^{\theta}_{j = 2}, & \text{on $\compC_-$}.
      \end{aligned}
    \end{equation}
  \end{enumerate}
\end{RHP}

Consider
\begin{equation} \label{eq:defn_Xicheck}
  \Phicheck(\xi) = \frac{1}{\theta + 1} e^{\frac{\theta}{\theta + 1} \pi i} \xi^{-\frac{\theta}{\theta + 1}} J_{\theta + 1} \Phitilde(\xi) \Xicheck_{\pm},
\end{equation}
where $J_{\theta + 1} = (\delta_{i, \theta - j})^{\theta}_{i, j = 0}$ is defined at the end of Section \ref{sec:introduction}. Then $\Phicheck(\xi)$ satisfies the following RH problem:
\begin{RHP}\label{RHP:general_model_tilde_check}
  $\Phicheck(\xi)$ is a $(\theta + 1) \times (\theta + 1)$ matrix-valued function on $\compC$ except for $\realR$ and $i\realR$.
  \begin{enumerate} 
  \item  $ \Phicheck(\xi)$ satisfies the jump condition with the  orientation of rays shown in Figure \ref{fig:jumps-MeiG}

    \begin{equation}
      \Phicheck_+(\xi) = \Phicheck_-(\xi) J^{(\theta)}_{\Phicheck}(\xi), \quad \text{where} \quad J^{(\theta)}_{\Phicheck}(\xi) =
      \begin{cases}
        \begin{pmatrix}
          1 & 0 \\
          -e^{\beta \pi i} & 1
        \end{pmatrix}
        \oplus I_{\theta - 1},
        & \arg \xi = \frac{\pi}{2}, \\
        \begin{pmatrix}
          1 & 0 \\
          e^{-\beta \pi i} & 1
        \end{pmatrix}
        \oplus I_{\theta - 1},
        & \arg \xi = -\frac{\pi}{2}, \\
        \begin{pmatrix}
          0 & 1 \\
          1 & 0
        \end{pmatrix}
        \oplus I_{\theta - 1},
        & \xi \in \realR_+, \\
        \Mcyclic,
            & \xi \in \realR_-.
      \end{cases}
    \end{equation}
  \item
    $\Phicheck(\xi)$ has the following boundary condition as $\xi \to \infty$ for $\xi \in \compC_{\pm}$
    \begin{equation}
      \Phicheck(\xi) = \left( I + \bigO(\xi^{-1}) \right) \Upsilon(\xi)^{-1} \Omega^{-1}_{\pm} e^{\Theta(\xi)},
    \end{equation}
    where $\Theta$, $\Upsilon$ and $\Omega_{\pm}$ are defined in \eqref{eq:asyold} and \eqref{def:Lpm}.
  \item 
    $\Phicheck(\xi)$ has the following boundary condition as $\xi \to 0$
    \begin{equation}
      \Phicheck(\xi) = \frac{1}{\theta + 1} e^{\frac{\theta}{\theta + 1} \pi i} J_{\theta + 1} \N(\xi) \xi^{\frac{\beta}{2\theta}} \diag \left( \xi^{\frac{\alpha + 1 - \theta}{\theta}}, 1, \xi^{-\frac{1}{\theta}}, \xi^{-\frac{2}{\theta}}, \dotsc, \xi^{-\frac{\theta - 1}{\theta}} \right) (E^{-1})^T.
    \end{equation}
  \end{enumerate}
\end{RHP}

We can check that $(\Phicheck(\xi)^{-1})^T$, the transpose of the inverse of $\Phicheck(\xi)$, satisfies RH problem \ref{RHP:general_model}. Hence, by the existence and uniqueness of RH problem \ref{RHP:general_model} in Theorem \ref{thm:unique_solvability}, we conclude that, with $\Phi(\xi) = \Phi^{(\tau)}(\xi)$ defined by RH problem \ref{RHP:general_model} and $N(\xi) = N^{(\tau)}(\xi)$ given in \eqref{eq:defn_N(xi)}, 
$(\theta + 1)^{-1} e^{\frac{\theta}{\theta + 1} \pi i} J_{\theta + 1} \N(\xi) = (N(\xi)^{-1})^T$, and
\begin{align}
  \Phicheck(\xi) = {}& (\Phi(\xi)^{-1})^T , & \Phitilde(\xi) = (\theta + 1) e^{-\frac{\theta}{\theta + 1} \pi i} \xi^{\frac{\theta}{\theta + 1}} J_{\theta + 1} (\Phi(\xi)^{-1})^T  \Xicheck^{-1}_{\pm}. 
\end{align} 

\paragraph{Construction of the local parametrix around $0$}

Analogous to the definition of $N(z)$ given in \eqref{def:N}, we set \cite[Equations (4.75) and (4.76)]{Wang-Zhang21}
\begin{equation} \label{def:tildeN}
  \N(z) = \diag (\n_0(z), \n_1(z), \dotsc, \n_{\theta}(z)), \quad z\in\compC \setminus (-\infty,b^{\theta}],
\end{equation}
where
\begin{equation}\label{def:tildeni}
  \n_0(z) = \widetilde{P}^{(\infty)}_1(z^{\frac1 \theta}), \quad \n_j(z) = \widetilde{P}^{(\infty)}_2(e^{\frac{2(j - 1)}{\theta} \pi i} z^{\frac1 \theta}), \quad j = 1, \dotsc, \theta,
\end{equation}
with $\widetilde{P}^{(\infty)} =(\widetilde{P}^{(\infty)}_1, \widetilde{P}^{(\infty)}_2)$ defined in \eqref{eq:tildeP1infty}--\eqref{eq:tildeP2infty}.
We also define
\begin{equation} \label{eq:Ntildepre}
  \N^{(\pre)}(z) = \diag (\n^{(\pre)}_0(z), \n^{(\pre)}_1(z), \dotsc, \n^{(\pre)}_{\theta}(z)), \quad z\in\compC \setminus \realR,
\end{equation}
where
\begin{align}
  \n^{(\pre)}_0(z) = {}&
                         \begin{cases}
                           \frac{c^{\frac{(\alpha+1/2)\theta}{1+\theta}}}{\sqrt{1+\theta}}e^{\frac{\alpha+1/2}{1+\theta}\pi i}z^{-\frac{\alpha+1/2}{1+\theta}}, & z \in \compC_+, \\
                           \frac{c^{\frac{(\alpha+1/2)\theta}{1+\theta}}}{\sqrt{1+\theta}}e^{-\frac{\alpha+1/2}{1+\theta}\pi i}z^{-\frac{\alpha+1/2}{1+\theta}}, & z \in \compC_-, \\
                         \end{cases} && \\
  \n^{(\pre)}_1(z) = {}&
                         \begin{cases}
                           \frac{c^{\frac{(\alpha+1/2)\theta}{1+\theta}}}{\sqrt{1+\theta}}e^{-\frac{\alpha+1/2}{1+\theta}\pi i} z^{\frac{\alpha - \theta/2}{\theta(1+\theta)}}, & z \in \compC_+, \\
                           -\frac{c^{\frac{(\alpha+1/2)\theta}{1+\theta}}}{\sqrt{1+\theta}}e^{\frac{\alpha+1/2}{1+\theta}\pi i} z^{\frac{\alpha - \theta/2}{\theta(1+\theta)}}, & z \in \compC_-,
                         \end{cases} && \\
  \n^{(\pre)}_j(z) = {}& \frac{c^{\frac{(\alpha+1/2)\theta}{1+\theta}}}{\sqrt{1+\theta}}e^{-\frac{\alpha+1/2}{1+\theta}\pi i} e^{\frac{(2\alpha - 1)(j - 1)}{\theta(1 + \theta)} \pi i} z^{\frac{\alpha - \theta/2}{\theta(1+\theta)}}, && j=2,\ldots,\theta.
\end{align}

Analogous to Proposition \ref{prop:m_and_n} and \cite[Proposition 4.13]{Wang-Zhang21}, we have the following asymptotics. 
\begin{prop}\label{prop:tildeNzero}
  Let $\n_i(z)$, $i=0,1,\ldots,\theta$, be the functions defined in \eqref{def:tildeni}. As $z \to 0$, we have, with $\rho$ defined in \eqref{eq:defn_rho}, $c$ given in \eqref{eq:defn_c}, and $j = 2, \dotsc, \theta$
  \begin{equation}
    \n_k(z) = \n^{(\pre)}_k(z)  (1 + \bigO(t - 1) + \bigO(z^{\frac{1}{1+\theta}})), \quad k = 0, 1, \dotsc, \theta,
  \end{equation}
\end{prop}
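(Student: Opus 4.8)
The plan is to argue exactly as for the $n_k$-part of Proposition~\ref{prop:m_and_n}, which is itself a mild modification of \cite[Proposition~4.13]{Wang-Zhang21}: unfold $\n_k$ into the scalar parametrix $\widetilde{\tP}$ of \eqref{eq:tildeFs} precomposed with the inverse maps $\Iinv_1, \Iinv_2$ (which, under the convention of Section~\ref{sec:RH_transitive}, denote $\Iinv_{1, t}, \Iinv_{2, t}$), and then substitute their local expansions at the origin.

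First I would record, from \eqref{def:tildeni}, \eqref{eq:tildeP1infty}, \eqref{eq:tildeP2infty} and \eqref{eq:tildeFs}, that $\n_0(z) = \widetilde{\tP}\big(\Iinv_2(z^{1/\theta})\big)$ on the branch of $\widetilde{\tP}$ valid inside $D$, and $\n_j(z) = \widetilde{\tP}\big(\Iinv_1(e^{2(j-1)\pi i/\theta} z^{1/\theta})\big)$ on the branch valid outside $\overline D$, for $j = 1, \dots, \theta$. As $s \to -1$, the two branches of $\widetilde{\tP}$ in \eqref{eq:tildeFs} have the form $\widetilde{\tP}(s) = C_{\mathrm{in}}\,(s+1)^{-\alpha - 1/2}\big(1 + \bigO(s+1)\big)$ inside $D$ and $\widetilde{\tP}(s) = C_{\mathrm{out}}\,(s+1)^{\alpha/\theta - 1/2}\big(1 + \bigO(s+1)\big)$ outside $\overline D$, with $C_{\mathrm{in}}, C_{\mathrm{out}}$ explicit constants built from $c = c(t)$, $s_b = \theta^{-1}$ and the boundary values of the roots $\sqrt{(s+1)(s-s_b)}$, $(s+1)^{-\alpha}$ and $\big(\tfrac{s+1}{s}\big)^{\alpha/\theta}$; the analytic remainders $1 + \bigO(s+1)$ are single-valued near $s = -1$. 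By \eqref{eq:I_1_at_0}--\eqref{eq:I_2_at_0} with $c$ replaced by $c(t)$, the relevant argument of $\widetilde{\tP}$ tends to $-1$ along a direction bounded away from the cut $(-\infty,-1]$, so the branch choices in the composition are unambiguous.

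Next I would insert $\Iinv_2(z^{1/\theta}) + 1 = c(t)^{-\theta/(\theta+1)} e^{\mp \pi i/(\theta+1)} z^{1/(\theta+1)}\big(1 + \bigO(z^{1/(\theta+1)})\big)$ on $\compC_\pm$, and the analogous expansion for $\Iinv_1$ at the rotated argument, where the rotation $e^{2(j-1)\pi i/\theta}$ --- after reduction of its argument to the principal range and passage through the exponent $\theta/(\theta+1)$ --- produces the sector-dependent phase attached to $\n_j$. Composing with the two local forms of $\widetilde{\tP}$, the powers of $z$ combine to $z^{-(\alpha + 1/2)/(\theta+1)}$ for $\n_0$ and to $z^{(\alpha - \theta/2)/(\theta(\theta+1))}$ for $\n_j$, matching \eqref{eq:Ntildepre}; a short bookkeeping of the constant factors --- the $c^\alpha$ in the outside branch against the power of $c(t)$ coming out of the $\Iinv$ substitution, and the cancellations in $\sqrt{s_b}/\sqrt{(s+1)(s-s_b)}$ --- reproduces the constants, signs and phases of $\n^{(\pre)}_k$. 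Finally, the claimed error factor $1 + \bigO(t-1) + \bigO(z^{1/(\theta+1)})$ is assembled from the remainders $1 + \bigO(s+1) = 1 + \bigO(z^{1/(\theta+1)})$ in $\widetilde{\tP}$, the $1 + \bigO(z^{\theta/(\theta+1)})$ in \eqref{eq:I_1_at_0}--\eqref{eq:I_2_at_0}, and the replacement of $c(t)$ by $c = c(1)$ --- legitimate up to $\bigO(t-1)$ since $c(t)$ is analytic near $t = 1$ (Section~\ref{subsec:deformation_algebraic}) and since $s_b = \theta^{-1}$ and the contours $\gamma_1, \gamma_2$ are $t$-independent along $C_1$.

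The main obstacle is the branch-cut and root-of-unity bookkeeping: matching the boundary values of $\sqrt{(s+1)(s-s_b)}$, $(s+1)^{-\alpha}$ and $\big(\tfrac{s+1}{s}\big)^{\alpha/\theta}$ (with cuts along $\gamma_2$, $(-\infty,-1]$ and $[-1,0]$ respectively) along each rotated ray to the precise signs and phases in \eqref{eq:Ntildepre}, which requires tracking which representative of $\arg w$ lands in the range where \eqref{eq:I_1_at_0}--\eqref{eq:I_2_at_0} apply. Since this is carried out sector by sector exactly as in \cite[Proposition~4.13]{Wang-Zhang21} and in the $n_k$-part of Proposition~\ref{prop:m_and_n}, and the only genuinely new ingredient is the overall rescaling $c \mapsto c(t)$ together with the resulting $\bigO(t-1)$ term, I would present one representative sector in detail and note that the remaining ones are identical.
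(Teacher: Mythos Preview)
Your proposal is correct and follows exactly the approach the paper implicitly takes: the paper does not give a proof but simply labels the result as analogous to Proposition~\ref{prop:m_and_n} and to \cite[Proposition~4.13]{Wang-Zhang21}, and your outline is precisely that computation with the additional $c\mapsto c(t)$ rescaling producing the $\bigO(t-1)$ error.
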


With $r=r_n$ given in \eqref{def:rn}, we then define, analogous to \cite[Equation (4.80)]{Wang-Zhang21}, a $(\theta + 1) \times (\theta + 1)$ matrix-valued function for $z \in D(0,r_n^{\theta})\setminus (\mathbb{R} \cup i \mathbb{R})$ as follows:
\begin{equation}\label{def:tildesfP0}
  \widetilde{\mathsf{P}}^{(0)}(z) = \diag((\rho n)^{-\frac{k}{2}})^{\theta}_{k = 0} \Phitilde((\rho n)^{\frac{\theta + 1}{2}} z) \N^{(\pre)}(z) \N(z)^{-1} e^{-nM(z)},
\end{equation}
where $\rho$ is defined in \eqref{eq:defn_rho}, $M(z)$, $\N(z)$ and $\N^{(\pre)}$ are defined in \eqref{def:M}, \eqref{def:tildeN} and \eqref{eq:Ntildepre}, respectively, and $\Phitilde$ is the solution of the RH problem \ref{RHP:general_model_tilde}. We then have the following proposition regarding the RH problem for $\widetilde{\mathsf{P}}^{(0)}$ \cite[Proposition 4.14]{Wang-Zhang21}.

\begin{prop}\label{RHP:tildesfP0}
The function $\widetilde{\mathsf{P}}^{(0)}(z)$ defined in \eqref{def:tildesfP0} has the following properties.
  \begin{enumerate} 
  \item
    $ \widetilde{\mathsf{P}}^{(0)}(z)$ is analytic in $D(0,r_n^{\theta}) \setminus (\mathbb{R} \cup i \mathbb{R})$.
  \item
    For $z\in D(0,r_n^{\theta}) \cap (\mathbb{R} \cup i \mathbb{R})$, we have
    \begin{equation}
      \widetilde{\mathsf{P}}^{(0)}_{+}(z) = \widetilde{\mathsf{P}}^{(0)}_{-}(z) J_{\U}(z),
    \end{equation}
    where $J_{\U}(z)$ is defined in \eqref{eq:defn_J_tildeU}.
  \item
    For $z\in \partial D(0,r_n^{\theta})$, we have, as $n\to \infty$,
    \begin{equation}\label{eq:tildesfP0asy}
      \widetilde{\mathsf{P}}^{(0)}(z) = \Upsilon(z)\Omega_{\pm} (I+\bigO(n^{-\frac{1}{2(2\theta + 1)}})), \qquad z \in \compC_{\pm},
    \end{equation}
    where $\Upsilon(z)$ and $\Omega_{\pm}$ are defined in \eqref{def:Lpm}.
  \end{enumerate}
\end{prop}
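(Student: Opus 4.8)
The plan is to mimic, on the $q_n$-side, the proof of Proposition \ref{RHP:sfP0} for the $p_n$-side, which itself parallels \cite[Proposition 4.14]{Wang-Zhang21}, now with the model solution $\Phi$ replaced by $\Phitilde$ of RH problem \ref{RHP:general_model_tilde}. First I would establish analyticity (part (1)): in \eqref{def:tildesfP0} the factor $\Phitilde((\rho n)^{\frac{\theta+1}{2}}z)$ is analytic off $\realR\cup i\realR$ by RH problem \ref{RHP:general_model_tilde}, while the diagonal factor $D(z):=\N^{(\pre)}(z)\N(z)^{-1}e^{-nM(z)}$ is, inside $D(0,r_n^\theta)$, analytic off $\realR$ only — the branch cut of $\N$ in \eqref{def:tildeN} is $(-\infty,b^\theta]$, the $\compC_\pm$-splittings of $\N^{(\pre)}$ in \eqref{eq:Ntildepre} and of the $m_k$ in \eqref{def:M} lie on $\realR$, and the $m_j$ are analytic across $i\realR$. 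Hence the product $\widetilde{\mathsf{P}}^{(0)}$ is analytic in $D(0,r_n^\theta)\setminus(\realR\cup i\realR)$.

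Next I would prove the jump relation (part (2)): on each ray the jump of $\widetilde{\mathsf{P}}^{(0)}$ is the conjugate $D_-^{-1}J^{(\theta)}_{\Phitilde}D_+$ together with the jumps contributed by $\N^{(\pre)},\N,M$ on $\realR$. On $i\realR_\pm$, only $\Phitilde$ jumps and $D$ does not, so the claim reduces to a single scalar identity for the $(2,1)$-entry of the upper-left $2\times2$ block; I would verify it from $m_1(z)-m_0(z)=-\phi(z^{1/\theta})+2\pi i$ (using \eqref{def:m0}, \eqref{def:mj}, the definition \eqref{eq:defn_phi} of $\phi$, and $\gfntilde_+(0)-\gfntilde_-(0)=2\pi i$), the integrality of $n$ (which kills $e^{2\pi i n}$), the identity $\n_1(z)/\n_0(z)=\Pinfty_2(z^{1/\theta})/\Pinfty_1(z^{1/\theta})$ from \eqref{def:tildeni}, and the explicit prefactors in \eqref{eq:Ntildepre} with $\betatilde=-\frac{2\alpha+1}{1+\theta}$, which give $\n^{(\pre)}_0(z)\,e^{\betatilde\pi i}/\n^{(\pre)}_1(z)=z^{-\alpha/\theta}$; combining these produces exactly the $(2,1)$-entry $z^{-\alpha/\theta}e^{-n\phi(z^{1/\theta})}\Pinfty_2(z^{1/\theta})/\Pinfty_1(z^{1/\theta})$ of $J_{\U}$ in \eqref{eq:defn_J_tildeU}. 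On $(0,r_n^\theta)$ and $(-r_n^\theta,0)$ all factors jump, and the verification that the combined jump equals the permutation-block, respectively $\Mcyclic$, part of $J_{\U}$ is the same bookkeeping of $\gfn$, $\gfntilde$, $V$ and the branch structure of $\Pinfty_{1,2}$ as in \cite[Proposition 4.14]{Wang-Zhang21}, since $M$, $\N$, $\N^{(\pre)}$ have the identical block/branch structure there.

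Finally I would prove the boundary behaviour on $\partial D(0,r_n^\theta)$ (part (3)) by inserting the $\xi\to\infty$ expansion \eqref{eq:asyNew_q} of $\Phitilde$ with $\xi=(\rho n)^{\frac{\theta+1}{2}}z$ and controlling four contributions. The homogeneity of $\Upsilon$ in \eqref{def:Lpm} gives $\diag((\rho n)^{-\frac{k}{2}})^\theta_{k=0}\,\Upsilon((\rho n)^{\frac{\theta+1}{2}}z)=\Upsilon(z)$, so the artificial scaling cancels exactly in the $\Upsilon\Omega_\pm$-part; commuting the $I+\bigO(\xi^{-1})$ error through this identity, the worst off-diagonal rescaling $|z|^{(k-j)/(\theta+1)}(\rho n)^{(k-j)/2}$ is still beaten by $|\xi|^{-1}$ because $|z|=r_n^\theta$ with $r_n$ as in \eqref{def:rn}, giving an error $\bigO(n^{-\frac{1}{2(2\theta+1)}})$. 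For the exponential part $e^{\Theta((\rho n)^{\frac{\theta+1}{2}}z)}e^{-nM(z)}$ I use Proposition \ref{prop:m_and_n}: the leading $z^{2/(\theta+1)}$- and $z^{1/(\theta+1)}$-terms of $nm_k(z)$ cancel those of $\Theta((\rho n)^{\frac{\theta+1}{2}}z)$, the latter cancellation using the parametrisation \eqref{eq:t_and_tau} together with $A_1=c^{2\theta/(\theta+1)}\rho$ from \eqref{eq:defn_rho}, and what remains is $\exp$ of terms of size $n(t-1)^2z^{1/(\theta+1)}+nz^{3/(\theta+1)}=\bigO(n^{-\frac{\theta}{2\theta+1}})+\bigO(n^{\frac{1-\theta}{2\theta+1}})$, which for $\theta\ge2$ is $\bigO(n^{-\frac{1}{2(2\theta+1)}})$. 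Lastly $\N^{(\pre)}(z)\N(z)^{-1}=I+\bigO(n^{-\frac{1}{2(2\theta+1)}})$ on $\partial D(0,r_n^\theta)$ by Proposition \ref{prop:tildeNzero} and \eqref{def:rn}. Multiplying the four pieces yields \eqref{eq:tildesfP0asy}.

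The routine parts — analyticity, the real-axis jumps, the $\Upsilon\Omega_\pm$-rescaling and the $\N^{(\pre)}\N^{-1}$-error — are direct transcriptions of the $p_n$-side and of \cite{Wang-Zhang21}, and the $i\realR$-jump computation is new but short (only $\Phitilde$ jumps there). I expect the main obstacle to be the exponential cancellation in part (3): showing that $\Theta((\rho n)^{\frac{\theta+1}{2}}z)$ and $nM(z)$ agree to leading \emph{and} subleading order uniformly on the shrinking circle $\partial D(0,r_n^\theta)$, which is precisely where the shrinking rate \eqref{def:rn}, the relation $A_1=c^{2\theta/(\theta+1)}\rho$, and the $\tau$-parametrisation \eqref{eq:t_and_tau} of $t$ must be used in concert; aligning the phases in $\Theta$ with those of the $m_k$ across the $\compC_\pm$-splitting and across $i\realR$ is the delicate step, and it is the only place where the $\tau$-dependence of the model RH problem \ref{RHP:general_model_tilde} (whose jumps are $\tau$-independent constant matrices) actually enters the construction.
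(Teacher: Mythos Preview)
Your proposal is correct and follows precisely the route the paper indicates: the paper does not spell out a proof but simply cites \cite[Proposition 4.14]{Wang-Zhang21} as the analogue, and your sketch is exactly the content of that reference transported to the transition-regime setting, with the model parametrix $\Phitilde$ of RH problem \ref{RHP:general_model_tilde} in place of the Meijer-$G$ parametrix. Your identification of the one genuinely new ingredient---the subleading exponential cancellation on $\partial D(0,r_n^\theta)$, where \eqref{eq:t_and_tau}, \eqref{eq:defn_rho} and \eqref{def:rn} conspire to match the $\tau$-term in $\Theta$ with the $(t-1)$-term in $nM$---is accurate and is the only place the argument differs from \cite{Wang-Zhang21}.
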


With $\widetilde{\mathsf{P}}^{(0)}$ given in \eqref{def:tildesfP0}, we next define, analogous to \eqref{def:P0} and \cite[Equation (4.86)]{Wang-Zhang21}
\begin{equation}\label{def:tildeP0}
  \widetilde{P}^{(0)}(z) = \Omega_{\pm}^{-1}\Upsilon(z)^{-1}\widetilde{\mathsf{P}}^{(0)}(z), \quad z \in \compC_{\pm}.
\end{equation}
In view of \eqref{def:tildeP0}, Proposition \ref{RHP:tildesfP0} and \eqref{eq:UpsilonOmegajump}, the following RH problem for $\widetilde{P}^{(0)}$ is then immediate \cite[Proposition 4.15]{Wang-Zhang21}.
\begin{RHP}\label{rhp:tildep0}
\hfill
  \begin{enumerate} 
  \item
    $\widetilde{P}^{(0)}(z)$ is analytic in $D(0,r_n^{\theta}) \setminus (\mathbb{R} \cup i \mathbb{R})$.
  \item
    For $z\in D(0,r_n^{\theta}) \cap (\mathbb{R} \cup i \mathbb{R})$, we have
    \begin{equation}
      \widetilde{P}^{(0)}_+(z) =
      \begin{cases}
        \widetilde{P}^{(0)}_-(z) J_{\U}(z), & z \in (0, ir_n^{\theta}) \cup (0, -ir_n^{\theta}), \\
        \left(\begin{pmatrix}
            0 & 1
            \\
            1 & 0
          \end{pmatrix} \oplus I_{\theta - 1} \right)
        \widetilde{P}^{(0)}_-(z) J_{\U}(z), & z \in (0, r_n^{\theta}), \\
        \Mcyclic^{-1} \widetilde{P}^{(0)}_-(z) J_{\U}(z), & z \in (-r_n^{\theta}, 0),
      \end{cases}
    \end{equation}
    where $J_{\U}(z)$ is defined in \eqref{eq:defn_J_tildeU}.
  \item
    For $z\in \partial D(0,r_n^{\theta})$, we have, as $n\to \infty$,
    \begin{equation}\label{eq:tildeP0asy}
      \widetilde{P}^{(0)}(z) = I+\bigO(n^{-\frac{1}{2(2\theta + 1)}}).
    \end{equation}
  \end{enumerate}
\end{RHP}

Finally, as in the definition of $V^{(0)}$ in \eqref{def:V0}, we set, analogous to \cite[Equation (4.89)]{Wang-Zhang21}
\begin{equation} \label{def:tildeV0}
  \V^{(0)}(z) = \U(z) \widetilde{P}^{(0)}(z)^{-1}, \quad z \in D(0,r_n^{\theta}) \setminus (\mathbb{R} \cup i \mathbb{R}),
\end{equation}
where $\U$ is defined in \eqref{eq:Utilde_0_def} and \eqref{eq:Utilde_0_def_2} and satisfies RH problem \ref{rhp:tildeU}. Then, analogous to Proposition \ref{rhp:V0} and \cite[RH problem 4.16]{Wang-Zhang21}, we have the following RH problem for $\V^{(0)}$.
\begin{RHP}\label{rhp:tildeV0}
\hfill
\begin{enumerate} 
  \item \label{enu:rhp:tildeV0:1}
    $\V^{(0)}=(\V^{(0)}_0, \V^{(0)}_1, \dotsc, \V^{(0)}_{\theta})$ is analytic in $D(0,r_n^{\theta}) \setminus \mathbb{R} $, or equivalently, its definition can be analytically extended onto $(0, ir^{\theta}_n) \cup (-ir^{\theta}_n, 0)$.
  \item \label{enu:rhp:tildeV0:2}
    For $z\in (-r_n^\theta,r_n^{\theta})\setminus \{0\}$, we have
    \begin{equation} \label{eq:enu:rhp:tildeV0:2}
      \V^{(0)}_+(z) = \V^{(0)}_-(z)
      \begin{cases}
        \begin{pmatrix}
          0 & 1 \\
          1 & 0
        \end{pmatrix}
        \oplus I_{\theta - 1}, & z \in (0, r_n^{\theta}), \\
      \Mcyclic, & z \in (-r_n^{\theta}, 0). 
      \end{cases}
    \end{equation}
  \item \label{enu:rhp:tildeV0:3}
    For $z\in \partial D(0,r_n^{\theta})$, we have, as $n \to \infty$,
    \begin{equation}\label{eq:tildeV0asy}
      \V^{(0)}(z) = \U(z)(I+\bigO(n^{-\frac{1}{2(2\theta+1)}})).
    \end{equation}
  \item
    As $z \to 0$, we have
    \begin{equation}\label{eq:tildeV0kzero}
      \V^{(0)}_k(z)=\bigO(1), \qquad k=0,1,\ldots,\theta.
    \end{equation}
  \end{enumerate}
\end{RHP}

\subsubsection{Final transformation}

The argument in this subsubsection is parallel to that in \cite[Section 4.7]{Wang-Zhang21} and essentially the same as that in Section \ref{subsubsec:final_trans_p}.

Recall $\Sigma^R$ defined in \eqref{def:sigmaR} and $W_k$ defined in \eqref{def:Wk}. We define a $1\times 2$ vector-valued function $\R(z) = (\R_1(z), \R_2(z))$ such that $\R_1(z)$ is analytic in $\halfH \setminus \Sigma^R$ and $\R_2(z)$ is analytic in $\compC \setminus \Sigma^R$ \cite[Equations (4.93) and (4.94)]{Wang-Zhang21}:
\begin{align}
   \R_1(z) = {}&
  \begin{cases}
    \Q_1(z), & \hbox{$z\in \halfH \setminus \{D(b,\epsilon) \cup D(0,r_n) \cup \Sigma^R \}$,} \\
    \V_1^{(b)}(z), & \hbox{$z\in D(b,\epsilon) \setminus [b-\epsilon, b]$,} \\
    \V_0^{(0)}(z^\theta), & \hbox{$z \in W_1 \setminus [0,r_n]$,} \\
  \end{cases} \label{def:tildeR1}
  \\
  \R_2(z) = {}&
  \begin{cases}
    \Q_2(z), & \hbox{$z\in \mathbb{C} \setminus \{D(b,\epsilon) \cup D(0,r_n) \cup \Sigma^R \}$,} \\
    \V_2^{(b)}(z), & \hbox{$z\in D(b,\epsilon) \setminus [b-\epsilon, b]$,} \\
    \V_1^{(0)}(z^\theta), & \hbox{$z\in W_1 \setminus [0,r_n]$,} \\
    \V_k^{(0)}(z^\theta), & \hbox{$z\in W_k$, $k=2,\ldots,\theta$.}
  \end{cases} \label{def:tildeR2}
\end{align}
Then analogous to \eqref{eq:scalar_R_defn}, we define, as in \cite[Equation (4.100)]{Wang-Zhang21}, with $\omega(s) = s^{-1}$,
\begin{equation} \label{eq:scalar_tildeR_defn}
  \widetilde{\tR}(s) =
  \begin{cases}
  \R_1(\J(s)), & \text{$s \in \omega(D \setminus[-1, 0])$ and $s \notin \omega(\Iinv_2(\Sigma^R))$}, \\
  \R_2(\J(s)), & \text{$s \in \omega(\compC \setminus \overline{D})$ and $s \notin \omega(\Iinv_1(\Sigma^R ))$,}
  \end{cases},
  \quad \J(s) = J_c(1/s).
\end{equation}
We have that $\widetilde{\tR}(s)$ satisfies a shifted RH problem, as described in \cite[RH problem 4.18]{Wang-Zhang21}. We omit the full statement of this RH problem.

By the arguments in \cite[Lemma 4.19]{Wang-Zhang21}, we derive the following property for $\widetilde{\tR}(s)$. 
\begin{lemma}\label{lem:tildetRest}
As $n\to \infty$, we have
\begin{equation}\label{eq:esttildetR}
\widetilde{\tR}(s)=1+\bigO(n^{-\frac{1}{2(2\theta + 1)}}),
\end{equation}
uniformly for $\lvert s+1 \rvert < \epsilon r_n^{\theta/(1+\theta)}$, where $\epsilon$ is a small positive constant and $r_n$ is defined in \eqref{def:rn}.
\end{lemma}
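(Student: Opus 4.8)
\textbf{Proof of Lemma~\ref{lem:tildetRest} (plan).}
The plan is to mirror on the $q_n$-side the small-norm Riemann--Hilbert argument that proves Lemma~\ref{lem:tRest}, carried out in detail in \cite[Section 3.7]{Wang-Zhang21} (see also \cite[Lemma 4.19]{Wang-Zhang21}). First I would pin down the jump contour and jump matrices of $\widetilde{\tR}$. By \eqref{eq:scalar_tildeR_defn}, $\widetilde{\tR}$ is built from the glued vector function $\R=(\R_1,\R_2)$ --- which interpolates $\Q$, $\Vb$ and $\V^{(0)}$ across $\Sigma^R$ --- pulled back under $\J(s)=J_c(1/s)$. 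Hence $\widetilde{\tR}$ is analytic off the images of $\partial D(0,r_n)$, $\partial D(b,\epsilon)$, $[b+\epsilon,\infty)$ and the lens legs $\Sigma_1^R,\Sigma_2^R$ under the maps $z\mapsto 1/\Iinv_1(z)$ and $z\mapsto 1/\Iinv_2(z)$, while on $[0,b]$ the two boundary values agree because $\Pinfty$ already carries the jump of $\St$ there; this is the shifted scalar RH problem of \cite[RH problem 4.18]{Wang-Zhang21}. Its unique solvability follows as in \cite[Proposition 3.20]{Wang-Zhang21}: the endpoint singularities are removable, being bounded at $s=-1$ by \eqref{eq:tildeV0kzero} and at worst quarter-power type at the image of $b$.

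Next I would bound $J_{\widetilde{\tR}}-I$ leg by leg. On the image of $\partial D(b,\epsilon)$ the matching error is $\bigO(n^{-1})$ from the Airy parametrix $\Vb$; on the image of $\partial D(0,r_n)$ it is $\bigO(n^{-1/(2(2\theta+1))})$ by \eqref{eq:tildeV0asy}; on $\Sigma_1^R,\Sigma_2^R$ and $[b+\epsilon,\infty)$ it is exponentially small since $\Re\phi>0$ there, the legs having been chosen with this property and Regularity Condition~\ref{reg:one-cut} together with \eqref{eq:behaviour_phi_t_at_b} giving $\Re\phi>0$ off $[0,b]$. Thus $J_{\widetilde{\tR}}=I+\bigO(n^{-1/(2(2\theta+1))})$ in $L^\infty$, the dominant contribution coming from the small circle. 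I would stress that \eqref{eq:tildeV0asy} requires no new estimate: by the transpose--inverse identity $\Phitilde(\xi)=(\theta+1)e^{-\frac{\theta}{\theta+1}\pi i}\xi^{\frac{\theta}{\theta+1}}J_{\theta+1}(\Phi(\xi)^{-1})^{T}\Xicheck^{-1}_{\pm}$ and the solvability of RH problem~\ref{RHP:general_model} (Theorem~\ref{thm:unique_solvability}), the parametrix $\widetilde{\mathsf{P}}^{(0)}$ matches $\Upsilon\Omega_\pm$ to within $\bigO(n^{-1/(2(2\theta+1))})$ exactly as its $p_n$-side counterpart $\mathsf{P}^{(0)}$ does.

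The one genuinely delicate point is that the contour is not fixed: $\partial D(0,r_n)$ shrinks at rate $r_n=n^{-(\theta+1)/(2\theta+1)}$, and by \eqref{eq:I_1_at_0}--\eqref{eq:I_2_at_0} its image near $s=-1$ has size $\sim r_n^{\theta/(1+\theta)}=n^{-\theta/(2\theta+1)}$, so the Cauchy operator $\mathcal{C}_{\widetilde{\tR}}$ acts on a shrinking piece. Following \cite[Proposition 3.21]{Wang-Zhang21}, I would rescale that piece to unit size (the rescaled singular integral operator then has $n$-independent norm), and check that $\lVert J_{\widetilde{\tR}}-I\rVert_{L^2}$, which picks up a factor $\sqrt{r_n^{\theta/(1+\theta)}}$ from the contour length, is $\bigO(n^{-(1+\theta)/(2(2\theta+1))})=o(1)$. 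Then $1-\mathcal{C}_{\widetilde{\tR}}$ is invertible with uniformly bounded inverse for large $n$, solving the associated singular integral equation gives $\widetilde{\tR}=1$ plus an explicit Cauchy-integral remainder, and evaluating that remainder for $s$ with $\lvert s+1\rvert<\epsilon r_n^{\theta/(1+\theta)}$ --- which, after the same rescaling, stays a fixed fraction of the contour size away from the jump contour --- yields $\widetilde{\tR}(s)=1+\bigO(n^{-1/(2(2\theta+1))})$ uniformly. I expect this bookkeeping of the two competing scales $r_n$ and $r_n^{\theta/(1+\theta)}$, and the verification that $s\mapsto 1/\Iinv_j(\,\cdot\,)$ preserves the relative distance from $s$ to the contour, to be the main obstacle; everything else is word-for-word parallel to the proof of Lemma~\ref{lem:tRest} and to \cite[Section 3.7]{Wang-Zhang21}, so I would keep those details brief.
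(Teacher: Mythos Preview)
Your proposal is correct and follows essentially the same approach as the paper: the paper simply refers to \cite[Lemma 4.19]{Wang-Zhang21} (the $q_n$-side analogue of \cite[Section 3.7]{Wang-Zhang21}) and omits all details, so you have in fact written out more of the argument than the paper itself does. Your identification of the shifted scalar RH problem, the leg-by-leg jump estimates (Airy matching $\bigO(n^{-1})$, local parametrix matching $\bigO(n^{-1/(2(2\theta+1))})$ via \eqref{eq:tildeV0asy}, exponential decay on the lens), and the rescaling needed for the shrinking contour near $s=-1$ are all the ingredients the paper implicitly invokes.
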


\section{Asymptotic analysis of the RH problems for $p_n$ and $q_n$  in the soft edge regime} \label{sec:RH_soft_edge}

In this section, we assume that $C^{-1} < t < 1 - C n^{-1/2}$ for some large positive number $C$. We also assume that $V_t$ satisfies Regularity Conditions \ref{reg:primary} and \ref{reg:soft-edge}. Without this assumption, the results in this section hold for $t \in (1 - c, 1 - Cn^{-1/2})$, where $c$ is a small positive constant.

 We continue the practice in Sections \ref{subsec:transformations_soft_edge}--\ref{subsec:small_norm_soft_edge} of abusing the notation to write $\gfnhat$, $\gfnhattilde$, $\phihat$, $\ellhat$, $\Iinvhat_1$, $\Iinvhat_2$, $J_{c_1, c_0}$ to mean $\gfnhat_t$, $\gfnhattilde_t$, $\phihat_t$, $\ellhat_t$, $\Iinvhat_{1, t}$, $\Iinvhat_{2, t}$, $J_{c_1(t), c_0(t)}$.

\subsection{Transformations $Y \to \That \to \Shat \to \Qhat$ and $\Y \to \Thattilde \to \Shattilde \to \Qhattilde$} \label{subsec:transformations_soft_edge}

Recall the functions $\gfnhat$, $\gfnhattilde$ and $\phihat$ and constant $\ellhat$ ($\gfnhat_t$, $\gfnhattilde_t$, $\phihat_t$ and $\ellhat_t$ before the $t$-dependence is suppressed) defined in \eqref{eq:gfnhat_gfnhattilde_t} and \eqref{eq:defn_ellha_t} for $1 - t$ small and then extended by Proposition \ref{prop:extension_t} to all $t \in (0, 1)$. Let $C p_n(z)$ be defined as in \eqref{eq:defn_Cp_n} and $Y = (Y_1, Y_2) := (p_n, C p_n)$ as in Section \ref{subsec:trans_YTSQ}. Let $\C q_n(z)$ be defined as in \eqref{eq:defn_of_Cq_n} and $\Y = (\Y_1, \Y_2) := (q_n, \C q_n)$ as in Section \ref{subsec:trans_YTSQ_tilde}. Then analogous to \eqref{eq:defn_T} and \eqref{def:tildeT}, we define
\begin{align} \label{eq:defn_That}
  \That(z) = {}& (\That_1(z), \That_2(z)) := (Y_1(z) e^{-n\gfnhat(z)}, Y_2(z) e^{n(\gfnhattilde(z) - \ellhat)}), \\
  \Thattilde(z) = {}& (\Thattilde_1(z), \Thattilde_2(z)) := (\Y_1(z) e^{-n\gfnhattilde(z)}, \Y_2(z) e^{n(\gfnhat(z) - \ellhat)}).
\end{align}
Furthermore, let $\Sigmahat_1 \in (\halfH \cap \compC_+)$ be a contour connecting $\aend(t)$ and $\bend(t)$ whose exact shape will be specified in Section \ref{subsubsec:three_parametrices}, $\Sigmahat_2 = \overline{\Sigmahat_1} \subseteq (\halfH \cap \compC_-)$, and let $\Sigmahat = [0, +\infty) \cup \Sigmahat_1 \cup \Sigmahat_2$. Moreover, we let the ``lens'' be the region enclosed by $\Sigmahat_1$ and $\Sigmahat_2$, and the upper/lower part of the lens be the intersection of the lens with $\compC_{\pm}$. Then let, analogous to \eqref{eq:S_from_T} and \eqref{def:tildeS},
\begin{align}
  \Shat(z) = {}&
  \begin{cases}
    \That(z), & \text{outside the lens}, \\
    \That(z)
    \begin{pmatrix}
      1 & 0 \\
      \frac{\theta}{z^{\alpha + 1 - \theta}} e^{-n\phihat(z)} & 1
    \end{pmatrix},
    & \text{lower par of the lens}, \\
    \That(z)
    \begin{pmatrix}
      1 & 0 \\
      -\frac{\theta}{z^{\alpha + 1 - \theta}} e^{-n\phihat(z)} & 1
    \end{pmatrix},
    & \text{upper par of the lens},
  \end{cases} \\
  \Shattilde(z) = {}&
  \begin{cases}
    \Thattilde(z), & \text{outside the lens}, \\
    \Thattilde(z)
    \begin{pmatrix}
      1 & 0
      \\
   z^{-\alpha} e^{-n \phihat(z)} & 1
    \end{pmatrix},
    & \text{lower part of the lens}, \\
    \Thattilde(z)
    \begin{pmatrix}
      1 & 0 \\
      -z^{-\alpha} e^{-n \phihat(z)} & 1
    \end{pmatrix},
    & \text{upper part of the lens}.
  \end{cases}
\end{align}

Next, we need to construct the global parametrices $\Phat^{(\infty)}(z) = (\Phat^{(\infty)}_1(z), \Phat^{(\infty)}_2(z))$ and $\Phattilde^{(\infty)}(z) = (\Phattilde^{(\infty)}_1(z), \Phattilde^{(\infty)}_2(z))$. Recall $\Iinvhat_1$ and $\Iinvhat_2$ ($\Iinvhat_{1, t}$ and $\Iinvhat_{2, t}$ before the $t$-dependence is suppressed) defined in Section \ref{subsec:deformation_along_C2_small_t} for $1 - t$ small and then extended by Proposition \ref{prop:extension_t} to all $t \in (0, 1)$. Then let, analogous to \eqref{eq:Pinfty_1}, \eqref{eq:Pinfty_2}, \eqref{eq:tildeP1infty} and \eqref{eq:tildeP1infty},
\begin{align}
  \Phat^{(\infty)}_1(z) = {}& \Phatsingle(\Iinvhat_1(z)), & \Phattilde^{(\infty)}_2(z) = {}& \Phatsingletilde(\Iinvhat_1(z)), &z \in {}& \compC \setminus [\aend(t), \bend(t)], \label{eq:Phat^infty_1} \\
  \Phat^{(\infty)}_2(z) = {}& \Phatsingle(\Iinvhat_2(z)), & \Phattilde^{(\infty)}_1(z) = {}& \Phatsingletilde(\Iinvhat_2(z)), & z \in {}& \halfH \setminus [\aend(t), \bend(t)], 
\end{align}
where
\begin{align}
  \Phatsingle(s) = {}&
  \begin{cases}
    \frac{s}{\sqrt{(s - s_{a(t)})(s - s_{b(t)})}} \left( \frac{s + 1}{s} \right)^{\frac{\theta - \alpha - 1}{\theta}}, & s \in \compC \setminus \overline{D}, \\
    \frac{s(c_1(t) s + c_0(t))^{\alpha + 1 - \theta}}{\theta \sqrt{(s - s_{a(t)})(s - s_{b(t)})}}, & s \in D, 
  \end{cases} \\
  \Phatsingletilde(s) = {}&
  \begin{cases}
    \frac{c^{\alpha}_0 \lvert s_{a(t)} s_{b(t)} \rvert^{1/2} i}{\sqrt{(s - s_{a(t)})(s - s_{b(t)})}} \left( \frac{s + 1}{s} \right)^{\frac{\alpha}{\theta}}, & s \in \compC \setminus \overline{D}, \\
    \frac{c^{\alpha}_0 \lvert s_{a(t)} s_{b(t)} \rvert^{1/2} i}{\sqrt{(s - s_{a(t)})(s - s_{b(t)})}} (c_1(t) s + c_0(t))^{-\alpha}, & s \in D.
  \end{cases}
\end{align}
Here $s_{a(t)}$ and $s_{b(t)}$ are defined in \eqref{eq:sat_and_at} and \eqref{eq:sbt_and_bt}, $\left( \frac{s + 1}{s} \right)^{\frac{\theta - \alpha - 1}{\theta}}$ and $\left( \frac{s + 1}{s} \right)^{\frac{\alpha}{\theta}}$ take the principal branch on $\compC \setminus [-1, 0]$, $(c_1(t) s + c_0(t))^{\alpha + 1 - \theta}$ and $(c_1(t) s + c_0(t))^{-\alpha}$ take the principal branch on $\compC \setminus (-\infty, -c_0(t)/c_1(t)]$, and the branch cut of $\sqrt{(s - s_{a(t)})(s - s_{b(t)})}$ is $\gamma_1$ for $\Phatsingle$ and $\gamma_2$ for $\Phatsingletilde$.

Then the functions
\begin{align}
  \Qhat(z) = {}& (\Qhat_1(z), \Qhat_2(z)) := \left( \frac{\Shat_1(z)}{\Phat^{(\infty)}_1(z)}, \frac{\Shat_2(z)}{\Phat^{(\infty)}_2(z)} \right), \\
  \Qhattilde(z) = {}& (\Qhattilde_1(z), \Qhattilde_2(z)) := \left( \frac{\Shattilde_1(z)}{\Phattilde^{(\infty)}_1(z)}, \frac{\Shattilde_2(z)}{\Phattilde^{(\infty)}_2(z)} \right),
\end{align}
satisfy the following Riemann-Hilbert problems:
\begin{RHP} \label{RHP:Svar} \hfill
  \begin{enumerate} 
  \item \label{enu:RHP:Svar:1}
    $\Qhat = (\Qhat_1, \Qhat_2)$ is analytic in $(\compC \setminus \Sigmahat, \mathbb{H}_\theta \setminus \Sigmahat)$.
  \item \label{enu:RHP:Svar:2}
    For $z \in \Sigmahat$, we have
    \begin{equation} \label{def:Jcals}
      \Qhat_+(z) = \Qhat_-(z) J_{\Qhat}(z),
      \end{equation}
      where
      \begin{equation} \label{def:JQ}
      J_{\Qhat}(z) =
      \begin{cases}
        \begin{pmatrix}
          1 & 0 \\
          \frac{\theta \Phat^{(\infty)}_2(z)}{z^{\alpha+1-\theta} \Phat^{(\infty)}_1(z)}e^{-n \phihat(z)} & 1
        \end{pmatrix},
        & z \in \Sigmahat_1 \cup \Sigmahat_2, \\
        \begin{pmatrix}
          0 & 1 \\
          1 & 0
        \end{pmatrix},
        & z \in (\aend(t), \bend(t)), \\
        \begin{pmatrix}
          1 & \frac{z^{\alpha+1-\theta} \Phat^{(\infty)}_1(z)}{\theta \Phat^{(\infty)}_2(z)}e^{n \phihat(z)}  \\
          0 & 1
        \end{pmatrix},
        & z \in (0, \aend(t)) \cup (\bend(t), +\infty).
      \end{cases}
    \end{equation}

  \item \label{enu:RHP:Svar:3}
    As $z \to \infty$ in $\compC$, $\Qhat_1$ behaves as $\Qhat_1(z)=1+\bigO(z^{-1})$.

  \item \label{enu:RHP:Svar:4}
    As $z \to \infty$ in $\mathbb{H}_\theta $, $\Qhat_2$ behaves as $\Qhat_2(z)=\bigO(1)$.

  \item
  \label{enu:RHP:Svar:5}
    $\Qhat_1(z)$ is analytic at $0 \in \compC$, and as $z \to 0$ in $\halfH \setminus \Sigmahat$, we have
    \begin{equation}
      \Qhat_2(z)=
      \begin{cases}
        \bigO (1), & \alpha > \theta - 1, \\
        \bigO (\log z), & \alpha = \theta - 1, \\
        \bigO (z^{\alpha+1-\theta}), & -1 < \alpha < \theta - 1.
      \end{cases}
    \end{equation}
  \item
    As $z \to \aend(t)$, we have $\Qhat_1(z) = \bigO((z - \aend(t))^{1/4})$ and $\Qhat_2(z) = \bigO((z - \aend(t))^{1/4})$, and as $z \to \bend(t)$, we have $\Qhat_1(z) = \bigO((z - \bend(t))^{1/4})$ and $\Qhat_2(z) = \bigO((z - \bend(t))^{1/4})$.
  \item \label{enu:RHP:Svar:7}
    For $x>0$, we have the boundary condition $\Qhat_2(e^{\pi i/\theta}x) = \Qhat_2(e^{-\pi i/\theta}x)$.
  \end{enumerate}
\end{RHP}

\begin{RHP} \label{RHP:tildeSvar}
\hfill
  \begin{enumerate} 
  \item
    $\Qhattilde = (\Qhattilde_1, \Qhattilde_2)$ is analytic in $(\mathbb{H}_\theta  \setminus \Sigmahat, \compC \setminus \Sigmahat)$.
  \item
   For $z \in \Sigmahat$, we have
    \begin{equation}
      \Qhattilde_+(z) = \Qhattilde_-(z) J_{\Qhattilde}(z),
      \end{equation}
      where
      \begin{equation} \label{def:Jtildecals}
      J_{\Qhattilde}(z) =
      \begin{cases}
        \begin{pmatrix}
          1 & 0 \\
          \frac{\Phattilde_2(z)}{\Phattilde_1(z)}z^{-\alpha}e^{-n \phihat(z)} & 1
        \end{pmatrix},
        & \text{$z \in \Sigmahat_1 \cup \Sigmahat_2$}, \\
        \begin{pmatrix}
          0 & 1 \\
          1 & 0
        \end{pmatrix},
        & z \in (\aend(t), \bend(t)), \\
        \begin{pmatrix}
          1 & \frac{\Phattilde_1(z)}{\Phattilde_2(z)}z^{\alpha}e^{n \phihat(z)}  \\
          0 & 1
        \end{pmatrix},
        & z \in (0, \aend(t)) \cup (\bend(t), +\infty).
      \end{cases}
    \end{equation}

    \item As $z \to \infty$ in $\mathbb{H}_\theta $, $\Qhattilde_1$ behaves as $\Qhattilde_1(z)=1+\bigO(z^{-\theta})$.

    \item  As $z \to \infty$ in $\compC$, $\Qhattilde_2$ behaves as $\Qhattilde_2(z)=\bigO(1)$.

    \item
    As $z \to 0$ in $\compC$, $\Qhattilde_1(z) = \bigO(1)$, and as $z \to 0$ in $\mathbb{H}_\theta \setminus \Sigmahat$, we have
    \begin{equation} \label{eq:tildeasy_Q_1_in_lens}
      \Qhattilde_2(z) =
      \begin{cases}
        \bigO(1), & \alpha  > 0, \\
        \bigO(\log z), & \alpha  = 0, \\
        \bigO(z^{\alpha}), & \alpha < 0.
      \end{cases}
    \end{equation}

    \item
      As $z \to \aend(t)$, we have $\Qhattilde_1(z) = \bigO((z - \aend(t))^{1/4})$ and $\Qhattilde_2(z) = \bigO((z - \aend(t))^{1/4})$, and as $z \to \bend(t)$, we have $\Qhattilde_1(z) = \bigO((z - \bend(t))^{1/4})$ and $\Qhattilde_2(z) = \bigO((z - \bend(t))^{1/4})$.

    \item For $x>0$, we have the boundary condition $\Qhattilde_1(e^{\pi i/\theta}x) = \Qhattilde_1(e^{-\pi i/\theta}x)$.
  \end{enumerate}
\end{RHP}

\subsection{Local parametrices and the shape of $\Sigmahat_1$} \label{subsubsec:three_parametrices}

\paragraph{Local parametrices at $\bend(t)$}

Let $\epsilon > 0$ be a small enough positive number, and, analogous to $f_b(z)$ in \eqref{def:fb_hard}, define $\fhat_b(z)$ on $D(\bend(t), \epsilon)$ by
\begin{equation} \label{def:fb}
  \fhat_b(z) = \left( -\frac{3}{4} \phihat(z) \right)^{\frac{2}{3}},
\end{equation}
which is a conformal mapping with $\fhat_b(b) = 0$ and $\fhat'_b(b) > 0$. We specify the shape of $\Sigmahat_1$ in $D(\bend(t), \epsilon)$ as $\fhat^{-1}_b(\{ e^{\frac{2\pi i}{3}} [0, +\infty) \}) \cap D(\bend(t), \epsilon)$, and define
\begin{align}
  \Phat^{(b)}(z) = {}& \Ehat^{(b)}(z) \Psi^{(\Ai)}(n^{\frac{2}{3}} \fhat_b(z))
  \begin{pmatrix}
    e^{-\frac{n}{2} \phihat(z)} \ghat^{(a, b)}_1(z) & 0 \\
    0 & e^{\frac{n}{2} \phihat(z)} \ghat^{(a, b)}_2(z)
  \end{pmatrix}, \label{def:Pb} \\
  \Phattilde^{(b)}(z) = {}& \Ehattilde^{(b)}(z) \Psi^{(\Ai)}(n^{\frac{2}{3}} \fhat_b(z))
  \begin{pmatrix}
    e^{-\frac{n}{2} \phihat(z)} \ghattilde^{(a, b)}_1(z) & 0 \\
    0 & e^{\frac{n}{2} \phihat(z)} \ghattilde^{(a, b)}_2(z)
  \end{pmatrix},
\end{align}
where $\Psi^{(\Ai)}$ is defined in Appendix \ref{app:Airy}, the same as in \eqref{eq:defn_P^b} and \eqref{def:tildePb}
\begin{align}
  \ghat^{(a, b)}_1(z) = {}& \frac{z^{(\theta - \alpha - 1)/2}}{\Phat^{(\infty)}_1(z)}, & \ghat^{(a, b)}_2(z) = {}& \frac{z^{(\alpha + 1 - \theta)/2}}{\theta \Phat^{(\infty)}_2(z)}, & \ghattilde_1(z) = {}& \frac{z^{-\alpha/2}}{\Phattilde^{(\infty)}_1(z)}, & \ghattilde^{(a, b)}_2(z) = {}& \frac{z^{\alpha/2}}{\Phattilde^{(\infty)}_2(z)},
\end{align}
and with $\star = \hat{} \text{ or } \widetilde{\hat{}}$,
\begin{equation}
  \overset{\star}{E}^{(b)}(z) = \frac{1}{\sqrt{2}}
  \begin{pmatrix}
    \overset{\star}{g}^{(a, b)}_1(z) & 0 \\
    0 & \overset{\star}{g}^{(a, b)}_2(z)
  \end{pmatrix}^{-1}
  e^{\frac{\pi i}{4} \sigma_3}
  \begin{pmatrix}
    1 & -1 \\
    1 & 1
  \end{pmatrix}
  \begin{pmatrix}
    n^{\frac{1}{6}} \fhat_b(z)^{\frac{1}{4}} & 0 \\
    0 & n^{-\frac{1}{6}} \fhat_b(z)^{-\frac{1}{4}}
  \end{pmatrix}.
\end{equation}
By \eqref{eq:phihat_est_b} and its generalization stated in Proposition \ref{prop:extension_t}, we have that for all $t \in [C^{-1}, 1 - C n^{-1/2}]$, $\fhat_b(z) = \bigO(1)$ uniformly for $z \in \partial D(\bend(t), \epsilon)$, and then by the asymptotic formula of the Airy parametrix \eqref{eq:defn_Psi^Ai_infty},
\begin{align}
  \Phat^{(b)}(z) = {}& I + \bigO(n^{-1}), & \Phattilde^{(b)}(z) = {}& I + \bigO(n^{-1}), & z \in {}& \partial D(\bend(t), \epsilon).
\end{align}

\paragraph{Local parametrices at $\aend(t)$}

Let $\epsilon' > 0$ be a small enough constant, let ($u = t - 1$ as in Sections \ref{subsec:deformation_along_C2_small_t} and \ref{subsec:deformation_along_C2_large_t})
\begin{equation} \label{eq:defn_rhat_n}
  \rhat_n = \rhat_n(t) = \epsilon' (-u)^{1 + \frac{1}{\theta}},
\end{equation}
and let $\fhat_a(z)$ be a function analytic in $D(\aend(t), \rhat_n)$, such that
\begin{equation}
  \frac{2}{3} \fhat_a(z)^{\frac{3}{2}} = -\frac{1}{2} \phihat(z) \pm \pi i.
\end{equation}
We have $\fhat_a(\aend(t)) = 0$ and $\fhat'_a(\aend(t)) < 0$. Then we specify the shape of $\Sigmahat_1$ in $D(\aend(t), \rhat_n)$ as $\fhat^{-1}_a(\{ e^{-\frac{2\pi i}{3}} [0, +\infty) \}) \cap D(\aend(t), \rhat_n)$, and define
\begin{align} \label{def:Pa}
  \Phat^{(a)}(z) = {}& \Ehat^{(a)}(z) \Psi^{(\Ai)}(n^{\frac{2}{3}} \fhat_a(z))
  \begin{pmatrix}
    e^{-\frac{n}{2} \phihat(z)} \ghat^{(a, b)}_1(z) & 0 \\
    0 & -e^{\frac{n}{2} \phihat(z)} \ghat^{(a, b)}_1(z)
  \end{pmatrix}, \\
  \Phattilde^{(a)}(z) = {}& \Ehattilde^{(a)}(z) \Psi^{(\Ai)}(n^{\frac{2}{3}} \fhat_a(z))
  \begin{pmatrix}
    e^{-\frac{n}{2} \phihat(z)} \ghattilde^{(a, b)}_1(z) & 0 \\
    0 & -e^{\frac{n}{2} \phihat(z)} \ghattilde^{(a, b)}_1(z)
  \end{pmatrix}, \\
\end{align}
where, with $\star = \hat{} \text{ or } \widetilde{\hat{}}$,
\begin{equation}
  \overset{\star}{E}^{(a)}(z) = \frac{1}{\sqrt{2}}
  \begin{pmatrix}
    \overset{\star}{g}^{(a, b)}_1(z) & 0 \\
    0 & -\overset{\star}{g}^{(a, b)}_2(z)
  \end{pmatrix}^{-1}
  e^{\frac{\pi i}{4} \sigma_3}
  \begin{pmatrix}
    1 & -1 \\
    1 & 1
  \end{pmatrix}
  \begin{pmatrix}
    n^{\frac{1}{6}} \fhat_a(z)^{\frac{1}{4}} & 0 \\
    0 & n^{-\frac{1}{6}} \fhat_a(z)^{-\frac{1}{4}}
  \end{pmatrix}.
\end{equation}
By \eqref{eq:phihat_est_a} and its generalization stated in Proposition \ref{prop:extension_t}, for all $t \in [C^{-1}, 1 - C n^{-1/2}]$, $\fhat_a(z) = \bigO((-u)^{4/3})$ uniformly for $z \in \partial D(\aend(t), \rhat_n)$, and then by the asymptotic formula of the Airy parametrix, we have a constant $C' > 0$ depending on $C$, such that
\begin{align} \label{eq:Phat^a_est}
  \lVert \Phat^{(a)}(z) - I \rVert \leq {}&  C' n^{-1} (-u)^{-2}, & \lVert \Phattilde^{(a)}(z) - I \rVert \leq {}& C' n^{-1} (-u)^{-2}, & z \in {}& \partial D(\aend(t), \rhat_n).
\end{align}
where $\lVert \cdot \rVert$ is the max matrix norm.

\paragraph{Local parametrices at $0$}

Our construction here is similar to \cite[Section 4.3.2]{Bogatskiy-Claeys-Its16}. With $\rhat_n$ defined in \eqref{eq:defn_rhat_n}, we define the $(\theta + 1) \times (\theta + 1)$ matrix-valued function $\Phat^{(0, \pre)}(z) = (\Phat^{(0, \pre)}_{j, k}(z))^{\theta}_{j, k = 0}$ by
\begin{equation}
  \Phat^{(0, \pre)}(z) = I_{1} \oplus (z^{\frac{j - 1}{\theta}} e^{\frac{(j - 1)(k - 1) 2\pi i}{\theta}})^{\theta}_{j, k = 1}.
\end{equation}
We denote 
\begin{align}
  f_0(z) = {}& \frac{1}{2\pi i} \int^{2\rhat_n}_0 \frac{\Phattilde_1 ^{(\infty)}(x)}{\Phattilde^{(\infty)}_2(x)} x^{\alpha}e^{n \phihat(x)} \frac{dx}{x - z}, & & \\
  f_j(z) = {}& \frac{1}{2\pi i} \int^{(2\rhat_n)^{\theta}}_0 \frac{x^{\frac{\alpha + j - \theta}{\theta}} \Phat^{(\infty)}_1(x^{\frac{1}{\theta}})}{\theta \Phat^{(\infty)}_2(x^{\frac{1}{\theta}})} e^{n \phihat(x^{\frac{1}{\theta}})} \frac{dx}{x - z}, & j = {}& 1, \dotsc, \theta.
\end{align}
In $D(0, \rhat^{\theta}_n) \setminus [0, +\infty)$ define the $(\theta + 1) \times (\theta + 1)$ matrix-valued functions $\Pmodelhat^{(0)}(z) = (\Pmodelhat^{(0)}_{j, k}(z))^{\theta}_{j, k = 0}$ and $\Pmodelhattilde^{(0)}(z) = (\Pmodelhattilde^{(0)}_{j, k}(z))^{\theta}_{j, k = 0}$ by
\begin{align} \label{eq:defn_P0}
  \Pmodelhat^{(0)}_{j,k}(z) = {}&
  \begin{cases}
    f_j(z), & k = 0 \text{ and } j = 1, \dotsc, \theta, \\
    \Pmodelhat^{(0, \pre)}_{j,k}(z), & \text{otherwise},
  \end{cases} \\
  \Pmodelhattilde^{(0)}_{j,k}(z) = {}&
  \begin{cases}
    f_0(e^{\frac{2\pi i}{\theta} (k - 1)}z^{\frac{1}{\theta}}), & j = 0 \text{ and } k = 1, \dotsc, \theta, \\
    \Pmodelhat^{(0, \pre)}_{j,k}(z), & \text{otherwise}.
  \end{cases}
\end{align}
It is straightforward to see that for $j = 1, \dotsc, \theta$
\begin{align}
  \Pmodelhat^{(0)}_{j, 0}(z) = {}&
  \begin{cases}
    \bigO(1), & \alpha + j - \theta > 0, \\
    \bigO(\log z), & \alpha + j - \theta = 0, \\
    \bigO(z^{\frac{\alpha + j - \theta}{\theta}}), & \alpha + j - \theta < 0,
  \end{cases}
  &
  \Pmodelhat^{(0)}_{0, k}(z) = {}&
  \begin{cases}
    \bigO(1), & \alpha > 0, \\
    \bigO(\log z), & \alpha = 0, \\
    \bigO(z^{\frac{\alpha}{\theta}}), & \alpha < 0,
  \end{cases}
\end{align}
Then in $D(0, \rhat_n) \setminus [0, +\infty)$ we define the $(\theta + 1) \times (\theta + 1)$ matrix-valued function $\Phat^{(0)}(z) = (\Phat^{(0)}_{j, k}(z))^{\theta}_{j, k = 0}$ by
\begin{align}
  \Phat^{(0)}(z) = {}& (\Phat^{(0, \pre)}(z))^{-1} \Pmodelhat^{(0)}(z) & \Phattilde^{(0)}(z) = {}& (\Phat^{(0, \pre)}(z))^{-1} \Pmodelhattilde^{(0)}(z).
\end{align}
For all $t \in [C^{-1}, 1 - C n^{-1/2}]$, we have that uniformly
\begin{align}
  \frac{\Phat^{(\infty)}_1(x^{1/\theta})}{\Phat^{(\infty)}_2(x^{1/\theta})} = {}& \bigO(u^{-\frac{\theta + 1}{\theta}(\alpha - 1 - \theta)}), & \frac{\Phattilde^{(\infty)}_1(x^{1/\theta})}{\Phattilde^{(\infty)}_2(x^{1/\theta})} = {}& \bigO(u^{-\frac{\theta + 1}{\theta}\alpha}), & x \in {}& [0, 2\rhat_n].
\end{align}
Together with the estimate of $\phihat(x^{1/\theta})$ by Proposition \ref{prop:phihat} and its generalization in Proposition \ref{prop:extension_t}, we derive that there exists $\epsilon'' > 0$ such that for all $t \in [C^{-1}, 1 - C n^{-1/2}]$,
\begin{align} \label{eq:Phat^0_est}
  \lVert \Phat^{(0)}(z) - I \rVert \leq {}& e^{-\epsilon'' n (-u)^2}, & \lVert \Phat^{(0)}(z) - I \rVert \leq {}& e^{-\epsilon'' n (-u)^2}, & z \in {}& \partial D(0, \rhat_n).
\end{align}

\paragraph{Shape of $\Sigmahat_1$}

 The shape of $\Sigmahat_1$ has been determined in $D(\aend(t), \rhat_n)$ and $D(\bend(t), \epsilon)$ above in this subsection. Let $\sigma_a$ and $\sigma_b$ be the intersections of $\Sigmahat_1$ with $\partial D(\aend(t), \rhat_n)$ and $\partial D(\bend(t), \epsilon)$, respectively, and we let the undetermined part of $\Sigmahat_1$ be a contour $\Sigmahat^R_1$ connecting $\sigma_a$ and $\sigma_b$ such that $\Re \phihat(z) > \epsilon''' \lvert z \rvert^{2\theta/(\theta + 1)}$ for $z \in \Sigmahat^R_1$, where $\epsilon'''$ is a small positive constant that is independent of $t$. 

To justify the existence of contour $\Sigmahat^R_1$, we consider two cases separately. Let $c > 0$ be a small constant, and for $t \in [C^{-1}, 1 - c]$, the property that defines $\Sigmahat^R_1$ is equivalent to that $\Re \phihat(z) > 0$ for all $z \in \Sigmahat^R_1$, and this follows directly from Regularity Conditions \ref{reg:primary} and \ref{reg:soft-edge}, like in the definition of $\Sigmahat^R_1$ at the beginning of Section \ref{subsubsec:final_trans_p}, we may choose $\Sigmahat^R_1$ to be close to the real axis. For $t \in [1 - c, 1 - C n^{-1/2}]$, by the property of $\phihat(z)$ in the neighbourhood of $0$ as stated in Section \ref{subsec:deformation_along_C2_small_t}, and the regularity stated in Regularity Condition \ref{reg:primary}, we also justify the existence of $\Sigmahat^R_1$.

Since we need to define $\Sigmahat$ for all $t \in [C^{-1}, 1 - C n^{-1/2}]$, we define $\Sigmahat^R_1$ in a uniform way as follows. We assume that $\epsilon'$ is small relative to $\epsilon$, so that $\Im \sigma_a < \Im \sigma_b$ for all $t \in [C^{-1}, 1 - C n^{-1/2}]$. Then let $\Sigmahat^R_1$ be a polygonal line that connects $\sigma_a$ to vertex $\sigma'$ and then to $\sigma_b$, such that $\Im \sigma' = \Im \sigma_b$ and $\arg (\sigma' - \sigma_a) = \pi/3$. If $\epsilon$ is small enough, by arguments above, we find that this definition works.

\subsection{Scalar shifted RH problem for $\tRhat$}

We define the vector-valued functions $\Uhat^{(0)}(z) = (\Uhat^{(0)}_0(z), \dotsc, \Uhat^{(0)}_{\theta}(z))$ and $\Uhattilde^{(0)}(z) = (\Uhattilde^{(0)}_0(z), \dotsc, \Uhattilde^{(0)}_{\theta}(z))$ as
\begin{align}
  \Uhat_0(z) = {}& \Qhat_2(z^{\frac{1}{\theta}}), & \Uhattilde_0(z) = {}& \Qhattilde_1(z^{\frac{1}{\theta}}), & z \in {}& D(0, \rhat^{\theta}_n) \setminus (-\rhat^{\theta}_n, \rhat^{\theta}_n), \\
  \Uhat_k(z) = {}& \Qhat_1(z^{\frac{1}{\theta}} e^{\frac{2(k - 1)}{\theta} \pi i}), & \Uhattilde_k(z) = {}& \Qhattilde_2(z^{\frac{1}{\theta}} e^{\frac{2(k - 1)}{\theta} \pi i}), &z \in {}& D(0, \rhat^{\theta}_n) \setminus (-\rhat^{\theta}_n, \rhat^{\theta}_n), \quad k = 1, 2, \dotsc, \theta.
\end{align}
It is straightforward to check that the vector-valued functions
\begin{align}
  \Vhat^{(0, \pre)}(z) = {}& \Uhat(z) \Pmodelhat^{(0)}(z)^{-1}, & \Vhattilde^{(0, \pre)}(z) = {}& \Uhattilde(z) \Pmodelhattilde^{(0)}(z)^{-1}
\end{align}
are analytic in $D(0, \rhat^{\theta}_n) \setminus \{ 0 \}$, and all their components are $o(z^{-1})$ as $z \to 0$. Hence all components of $\Vhat^{(0, \pre)}(z)$ and $\Vhattilde^{(0, \pre)}(z)$ are analytic in $D(0, \rhat^{\theta}_n)$. Then let the $(\theta + 1)$-dimensional vector-valued functions $\Vhat^{(0)}(z) = (\Vhat^{(0)}_0(z), \dotsc, \Vhat^{(0)}_{\theta}(z))$ and $\Vhattilde^{(0)}(z) = (\Vhattilde^{(0)}_0(z), \dotsc, \Vhattilde^{(0)}_{\theta}(z))$ be
\begin{align}
  \Vhat^{(0)}(z) = {}& \Uhat(z) \Phat^{(0)}(z)^{-1} = \Vhat^{(0, \pre)}(z) \Phat^{(0, \pre)}(z), \\
  \Vhattilde^{(0)}(z) = {}& \Uhattilde(z) \Phattilde^{(0)}(z)^{-1} = \Vhattilde^{(0, \pre)}(z) \Phattilde^{(0, \pre)}(z).
\end{align}

We define the $2$-dimensional vector-valued functions
\begin{align}
  \Vhat^{(a)}(z) = {}& (\Vhat^{(a)}_1(z), \Vhat^{(a)}_2(z)) = \Qhat(z) \Phat^{(a)}(z)^{-1}, && \notag \\
  \Vhattilde^{(a)}(z) = {}& (\Vhattilde^{(a)}_1(z), \Vhattilde^{(a)}_2(z)) = \Qhattilde(z) \Phattilde^{(a)}(z)^{-1}, & z \in {}& D(\aend(t), \rhat_n), \\
  \Vhat^{(b)}(z) = {}& (\Vhat^{(b)}_1(z), \Vhat^{(b)}_2(z)) = \Qhat(z) \Phat^{(b)}(z)^{-1}, && \notag \\
  \Vhattilde^{(b)}(z) = {}& (\Vhattilde^{(b)}_1(z), \Vhattilde^{(b)}_2(z)) = \Qhattilde(z) \Phattilde^{(b)}(z)^{-1}, & z \in {}& D(\bend(t), \epsilon).
\end{align}

Next we define $\Rhat(z)$ and $\Rhattilde(z)$. We first introduce some contours and domains. We set
\begin{equation} 
 \Sigmahat^{R} := [\rhat_n, \aend(t) - \rhat_n] \cup [\aend(t), \bend(t)] \cup [\bend(t)+\epsilon, \infty) \cup \partial D(0, \rhat_n) \cup \partial D(\aend(t), \rhat_n) \cup \partial D(\bend(t),\epsilon) \cup \Sigmahat_1^R \cup \Sigmahat_2^R,
\end{equation}
where $\Sigmahat_1^R$ is defined at the end of Section \ref{subsubsec:three_parametrices}, and $\Sigmahat_2^R = \overline{\Sigmahat_1^R}$.
see Figure \ref{fig:Sigmahat_R} for an illustration.
\begin{figure}[htb]
 \begin{minipage}{0.35\linewidth}
   \centering
   \includegraphics{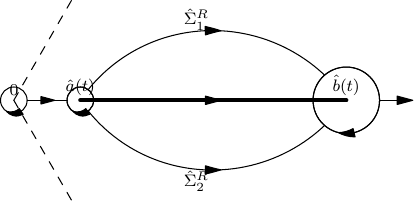}
   \caption{Jump contour $\Sigmahat^R$.}
   \label{fig:Sigmahat_R}
 \end{minipage}
 \hspace{\stretch{1}}
 \begin{minipage}{0.6\linewidth}
   \centering
   \includegraphics{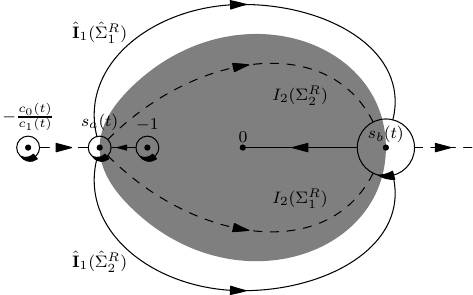}
   \caption{Jump contour $\SigmahatR$. The jump is trivial for $\tRhat(s)$ on the dashed part.}
   \label{fig:jump_R_scalar}
 \end{minipage}
\end{figure}
Like \eqref{def:Wk}, we divide the open disk $D(0, \rhat_n)$ into $\theta$ parts
\begin{equation}
  \What_k:=\{z\in D(0, \rhat_n) \setminus \{0\} : \arg z \in (\frac{(2k - 3)\pi}{\theta}, \frac{(2k - 1)\pi}{\theta}) \},\quad k=1, \dotsc, \theta,
\end{equation}
and denote by $\Gamma_k$ the arc boundary of $ \What_k$, $k=1,\ldots, \theta$. We then define a $1\times 2$ vector-valued functions $\Rhat(z) = (\Rhat_1(z), \Rhat_2(z))$ and $\Rhattilde(z) = (\Rhattilde_1(z), \Rhattilde_2(z))$  such that $\Rhat_1(z)$ and $\Rhattilde_2(z)$ are analytic in $\compC \setminus \Sigmahat^R$ and $\Rhat_2(z)$ and $\Rhattilde_1(z)$ are analytic in $\halfH \setminus \Sigmahat^R$, as follows:
\begin{align}
  \Rhat_1(z) = {}&
  \begin{cases}
    \Qhat_1(z), & \hbox{$z\in \mathbb{C} \setminus \{D(\bend(t),\epsilon) \cup D(\aend(t), \rhat_n) \cup D(0, \rhat_n) \cup \Sigmahat^R \}$,} \\
    \Vhat_1^{(b)}(z), & \hbox{$z\in D(\bend(t),\epsilon) \setminus [\bend(t) - \epsilon, \bend(t)]$,} \\
    \Vhat_1^{(a)}(z), & \hbox{$z\in D(\aend(t), \rhat_n) \setminus [\aend(t), \aend(t) + \rhat_n]$,} \\
    \Vhat^{(0)}_k(z^{\theta}), & z \in \What_k, \ k = 1, \dotsc, \theta,
  \end{cases} \label{eq:Rhat_1} \\
  \Rhattilde_2(z) = {}&
  \begin{cases}
    \Qhattilde_2(z), & \hbox{$z\in \mathbb{C} \setminus \{D(\bend(t),\epsilon) \cup D(\aend(t), \rhat_n) \cup D(0, \rhat_n) \cup \Sigmahat^R \}$,} \\
    \Vhattilde_2^{(b)}(z), & \hbox{$z\in D(\bend(t),\epsilon) \setminus [\bend(t) - \epsilon, \bend(t)]$,} \\
    \Vhattilde_2^{(a)}(z), & \hbox{$z\in D(\aend(t), \rhat_n) \setminus [\aend(t), \aend(t) + \rhat_n]$,} \\
    \Vhattilde^{(0)}_k(z^{\theta}), & z \in \What_k, \ k = 1, \dotsc, \theta,
  \end{cases} \\
  \Rhat_2(z) = {}&
  \begin{cases}
    \Qhat_2(z), & \hbox{$z\in \halfH \setminus \{D(\bend(t),\epsilon) \cup D(\aend(t), \rhat_n) \cup D(0, \rhat_n) \cup \Sigmahat^R \}$,} \\
    \Vhat_2^{(b)}(z), & \hbox{$z\in D(\bend(t),\epsilon) \setminus [\bend(t) - \epsilon, \bend(t)]$,} \\
    \Vhat_2^{(a)}(z), & \hbox{$z\in D(\aend(t), \rhat_n) \setminus [\aend(t), \aend(t) + \rhat_n]$,} \\
    \Vhat^{(0)}_0(z^{\theta}), & z \in \What_1 \setminus [0, \rhat_n], \label{eq:Rhat_2}
  \end{cases} \\
  \Rhattilde_1(z) = {}&
  \begin{cases}
    \Qhattilde_1(z), & \hbox{$z\in \halfH \setminus \{D(\bend(t),\epsilon) \cup D(\aend(t), \rhat_n) \cup D(0, \rhat_n) \cup \Sigmahat^R \}$,} \\
    \Vhattilde_1^{(b)}(z), & \hbox{$z\in D(\bend(t),\epsilon) \setminus [\bend(t) - \epsilon, \bend(t)]$,} \\
    \Vhattilde_1^{(a)}(z), & \hbox{$z\in D(\aend(t), \rhat_n) \setminus [\aend(t), \aend(t) + \rhat_n]$,} \\
    \Vhattilde^{(0)}_0(z^{\theta}), & z \in \What_1 \setminus [0, \rhat_n]. 
  \end{cases}
\end{align}
We note that both $\Rhat_1(z)$ and $\Rhattilde_2(z)$ are bounded as $z \to 0$ in $\compC$ and both $\Rhat_2(z)$ and $\Rhattilde_1(z)$ are bounded as $z \to 0$ in $\halfH$. Recall the function $J_{c_1, c_2}$ ($J_{c_1(t), c_2(t)}$ before the $t$-dependence is suppressed) defined in \eqref{eq:J_function} with $c_1(t)$ and $c_0(t)$ specified in Section\ref{subsec:deformation_along_C2_small_t} for $1 - t$ small, and Section \ref{subsec:deformation_along_C2_large_t} for $t \in (0, 1)$. Then we define
\begin{align} \label{eq:def_tRhat}
  \tRhat(s) = {}&
  \begin{cases}
    \Rhat_1(J_{c_1, c_0}(s)), & \text{$s \in \compC \setminus \overline{D}$ and $s \notin \Iinvhat_1(\Sigmahat^R )$,} \\
    \Rhat_2(J_{c_1, c_0}(s)), & \text{$s \in D \setminus[-1, 0]$ and $s \notin \Iinvhat_2(\Sigmahat^R)$,}
  \end{cases} \\
  \tRhattilde(s) = {}&
  \begin{cases}
    \Rhattilde_2(J_{c_1, c_0}(s)), & \text{$s \in \compC \setminus \overline{D}$ and $s \notin \Iinvhat_1(\Sigmahat^R )$,} \\
    \Rhattilde_1(J_{c_1, c_0}(s)), & \text{$s \in D \setminus[-1, 0]$ and $s \notin \Iinvhat_2(\Sigmahat^R)$,}
  \end{cases}
\end{align}
Here we remark that $\tRhat(s)$ and $\tRhattilde(s)$ are well defined and analytic at $s_a(t)$, $s_b(t)$, $-1 = \Iinvhat_2(0)$ and $-c_0(t)/c_1(t) = \Iinvhat_1(0)$. We have that $\tRhat(s)$ satisfies a shifted RH problem that is similar to \cite[RH problem 3.19]{Wang-Zhang21}, and $\tRhattilde(s)$ satisfies a shifted RH problem that is similar to \cite[RH problem 4.18]{Wang-Zhang21}. Below we state the RH problem for $\tRhat(s)$, and omit that for $\tRhattilde(s)$ for brevity. To state it, we define
\begin{equation}\label{def:SigmaRi}
  \begin{aligned}
    \SigmahatR^{(1)} := {}& \Iinvhat_1(\Sigmahat^R_1 \cup \Sigmahat^R_2) \subseteq \compC \setminus \overline{D}, & \SigmahatR^{(1')} := {}& \Iinvhat_2(\Sigmahat^R_1 \cup \Sigmahat^R_2) \subseteq D, \\
    \SigmahatR^{(2, L)} := {}& \Iinvhat_1((\rhat^n, \aend(t) - \rhat_n)) \subseteq \compC \setminus \overline{D}, & \SigmahatR^{(2', L)} := {}& \Iinvhat_2((\rhat^n, \aend(t) - \rhat_n)) \subseteq D,  \\
    \SigmahatR^{(2, R)} := {}& \Iinvhat_1((\bend(t) + \epsilon, +\infty)) \subseteq \compC \setminus \overline{D}, & \SigmahatR^{(2', R)} := {}& \Iinvhat_2((\bend(t) + \epsilon, +\infty)) \subseteq D, \\
    \SigmahatR^{(3)} := {}& \Iinvhat_1(\partial D(\bend(t),\epsilon)) \subseteq \compC \setminus \overline{D}, & \SigmahatR^{(3')} := {}& \Iinvhat_2(\partial D(\bend(t),\epsilon)) \subseteq D, \\
    \SigmahatR^{(4)} := {}& \Iinvhat_1(\partial D(\aend(t),\rhat_n)) \subseteq \compC \setminus \overline{D}, & \SigmahatR^{(4')} := {}& \Iinvhat_2(\partial D(\aend(t),\rhat_n)) \subseteq D, \\
    \SigmahatR^{(5)} := {}& \Iinvhat_2(\Gamma_1) \subseteq D, & \SigmahatR^{(5')}_k := {}& \Iinvhat_1(\Gamma_k) \subseteq \compC \setminus \overline{D}, \qquad k = 1, \dotsc, \theta,
  \end{aligned}
\end{equation}
and set
\begin{equation}\label{def:SigmaR}
  \SigmahatR = \SigmahatR^{(1)} \cup \SigmahatR^{(1')} \cup \SigmahatR^{(2, L)} \cup \SigmahatR^{(2, R)} \cup \SigmahatR^{(2', L)} \cup \SigmahatR^{(2', R)} \cup \SigmahatR^{(3)} \cup \SigmahatR^{(3')}  \cup \SigmahatR^{(4)} \cup \SigmahatR^{(4')} \cup \SigmahatR^{(5)} \cup \bigcup_{k=1}^{\theta} \SigmahatR^{(5')}_k,
\end{equation}
as shown in Figure \ref{fig:jump_R_scalar}.
\begin{rmk} \label{rmk:shapes_of_contour}
  By \eqref{eq:limit_Ihat_1} and \eqref{eq:limit_Ihat_2}, it is readily seen that the radius of the circular contour $\SigmahatR^{(5)} \cup \bigcup_{k=1}^{\theta} \SigmahatR^{(5')}_k$ around $s_a(t)$ and the radius of the circular contour $\SigmahatR^{(2', L)}$ around $-1$ are both of the order $\bigO(-u)$. 
\end{rmk}
We also define the following functions on each curve constituting $\SigmahatR$:
\begin{align}
  J_{\SigmahatR^{(1)}}(s) = {}& \frac{\theta \Phat^{(\infty)}_2(z)}{z^{\alpha + 1 - \theta} \Phat^{(\infty)}_1(z)} e^{-n\phihat(z)}, & s \in {}& \SigmahatR^{(1)}, \label{def:Jsigma1} \\
  \intertext{where $z = J_{c_1, c_2}(s) \in \Sigmahat^R_1 \cup \Sigmahat^R_2$,}
  J_{\SigmahatR^{(2')}}(s) = {}& \frac{z^{\alpha + 1 - \theta} \Phat^{(\infty)}_1(z)}{\theta \Phat^{(\infty)}_2(z)} e^{n\phi(z)}, & s \in {}& \SigmahatR^{(2', L)} \cup \SigmahatR^{(2', R)}, \label{def:Jsigma2'} \\
  \intertext{where $z = J_{c_1, c_2}(s) \in (\rhat_n, \aend(t) - \rhat_n) \cup (\bend(t) + \epsilon, +\infty)$,}
  J^{1}_{\SigmahatR^{(3)}}(s) = {}& \Phat^{(b)}_{11}(z) - 1, \quad J^{2}_{\SigmahatR^{(3)}}(s) = \Phat^{(b)}_{21}(z), & s \in {}& \SigmahatR^{(3)}, \label{def:Jsigma3} \\
  \intertext{where $z = J_{c_1, c_2}(s) \in \partial D(\bend(t), \epsilon)$,}
  J^{1}_{\SigmahatR^{(3')}}(s) = {}& \Phat^{(b)}_{22}(z) - 1, \quad J^{2}_{\SigmahatR^{(3')}}(s) = \Phat^{(b)}_{12}(z), & s \in {}& \SigmahatR^{(3')},
  \label{def:Jsigma32}\\
  \intertext{where $z = J_{c_1, c_2}(s) \in \partial D(\bend(t), \epsilon)$,}
  J^{1}_{\SigmahatR^{(4)}}(s) = {}& \Phat^{(a)}_{11}(z) - 1, \quad J^{2}_{\SigmahatR^{(4)}}(s) = \Phat^{(a)}_{21}(z), & s \in {}& \SigmahatR^{(4)}, \label{def:Jsigma41} \\
  \intertext{where $z = J_{c_1, c_2}(s) \in \partial D(\aend(t), \rhat_n)$,}
  J^{1}_{\SigmahatR^{(4')}}(s) = {}& \Phat^{(a)}_{22}(z) - 1, \quad J^{2}_{\SigmahatR^{(4')}}(s) = \Phat^{(a)}_{12}(z), & s \in {}& \SigmahatR^{(4')},
  \label{def:Jsigma42} \\
  \intertext{where $z = J_{c_1, c_2}(s) \in \partial D(\aend(t), \rhat_n)$,}
  J^{0}_{\SigmahatR^{(5)}}(s) = {}&  \Phat^{(0)}_{00}(z^{\theta}) - 1, \quad J^{j}_{\SigmahatR^{(5)}}(s) = \Phat^{(0)}_{j0}(z^{\theta}), & s \in {}& \SigmahatR^{(5)}, \label{def:Jsigma5} \\
  \intertext{where $z = J_{c_1, c_2}(s) \in \Gamma_1$ and $j=1,\ldots,\theta$,}
  J^k_{\SigmahatR^{(5')}_k}(s) = {}& \Phat^{(0)}_{kk}(z^{\theta}) - 1, \quad J^j_{\SigmahatR^{(5')}_k}(s) = \Phat^{(0)}_{jk}(z^{\theta}), \quad j\neq k, & s \in {}& \SigmahatR^{(5')}_k, \label{def:Jsigma5'}
\end{align}
where $z = J_{c_1, c_0}(s) \in \Gamma_k$, $k=1,\ldots,\theta$, and $j=0,1,\ldots,\theta$. In \eqref{def:Jsigma3} and \eqref{def:Jsigma32}, $\Phat^{(b)}(z)=(\Phat^{(b)}_{jk}(z))_{j,k=1}^2$ is defined in \eqref{def:Pb}. In \eqref{def:Jsigma41} and \eqref{def:Jsigma42}, $\Phat^{(a)}(z)=(\Phat^{(a)}_{jk}(z))_{j,k=1}^2$ is defined in \eqref{def:Pa}. In \eqref{def:Jsigma5} and \eqref{def:Jsigma5'}, $\Phat^{(0)}(z)=(\Phat^{(0)}_{jk}(z))^{\theta}_{j,k=0}$ is defined in \eqref{eq:defn_P0}. With the aid of these functions, we further define an operator
$\DeltaR$ that acts on functions defined on $\SigmahatR$ by
\begin{equation}\label{def:DeltaR}
  \DeltaR f(s) =
  \begin{cases}
    J_{\SigmahatR^{(1)}}(s) f(\s), & \text{$s \in \SigmahatR^{(1)}$ and $\s = \Iinvhat_2(J_c(s))$}, \\
    J_{\SigmahatR^{(2')}}(s) f(\s), & \text{$s \in \SigmahatR^{(2', L)} \cup \SigmahatR^{(2', R)}$ and $\s = \Iinvhat_1(J_c(s))$}, \\
    J^{1}_{\SigmahatR^{(3)}}(s) f(s) + J^{2}_{\SigmahatR^{(3)}}(s) f(\s), & \text{$s \in \SigmahatR^{(3)}$ and $\s = \Iinvhat_2(J_c(s))$}, \\
    J^{1}_{\SigmahatR^{(3')}}(s) f(s) + J^{2}_{\SigmahatR^{(3')}}(s) f(\s), & \text{$s \in \SigmahatR^{(3')}$ and $\s = \Iinvhat_1(J_c(s))$}, \\
    J^{1}_{\SigmahatR^{(4)}}(s) f(s) + J^{2}_{\SigmahatR^{(4)}}(s) f(\s), & \text{$s \in \SigmahatR^{(4)}$ and $\s = \Iinvhat_2(J_c(s))$}, \\
    J^{1}_{\SigmahatR^{(4')}}(s) f(s) + J^{2}_{\SigmahatR^{(4')}}(s) f(\s), & \text{$s \in \SigmahatR^{(4')}$ and $\s = \Iinvhat_1(J_c(s))$}, \\
    J^{0}_{\SigmahatR^{(5)}}(s) f(s) + \sum\limits ^{\theta}_{j = 1} J^{j}_{\SigmahatR^{(5)}}(s) f(\s_j), & \text{$s \in \SigmahatR^{(5)}$ and $\s_j = \Iinvhat_1(J_c(s) e^{\frac{2(j - 1)\pi i}{\theta}}) \in \Iinvhat_1(\Gamma_j)$}, \\
    \sum\limits^{\theta}_{j = 0} J^j_{\SigmahatR^{(5')}_k}(s) f(\s_j), & \text{$s \in \SigmahatR^{(5')}_k$ and  $\s_0 = \Iinvhat_2(J_c(s) e^{\frac{2(1 - k)\pi i}{\theta}}) \in D$,} \\
    & \text{$\s_j = \Iinvhat_1(J_c(s) e^{\frac{2(j - k)\pi i}{\theta}}) \in \Iinvhat_1(\Gamma_j) \subseteq \compC \setminus \overline{D}$} \\
    & \text{for $j = 1, \dotsc, \theta$,  such that $\s_k = s$}, \\
    0, & s \in \SigmahatR^{(1')} \cup \SigmahatR^{(2, L)} \cup \SigmahatR^{(2, R)},
  \end{cases}
\end{equation}
where $f$ is a complex-valued function defined on $\SigmahatR$. Hence, we can define a scalar shifted RH problem as follows.

\begin{RHP} \label{rhp:tR} \hfill
  \begin{enumerate} 
  \item
    $\tRhat(s)$ is analytic in $\compC \setminus \SigmahatR$ and continuous up to boundary, where the contour $\SigmahatR$ is defined in \eqref{def:SigmaR}.
  \item \label{enu:rhp:tR:2}
    For $s\in \SigmahatR$, we have
    \begin{equation}
      \tRhat_+(s) - \tRhat_-(s) = \DeltaR \tRhat_-(s),
    \end{equation}
    where $\DeltaR$ is the operator defined in \eqref{def:DeltaR}.
  \item
    As $s \to \infty$, we have
    \begin{equation}
      \tRhat(s)=1+\bigO(s^{-1}).
    \end{equation}
  \end{enumerate}
\end{RHP}

\begin{prop} \label{prop:uniqueness_tR}
  RH problem \ref{rhp:tR} has a unique solution.
\end{prop}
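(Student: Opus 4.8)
The plan is to prove Proposition \ref{prop:uniqueness_tR} by the standard argument for scalar shifted Riemann--Hilbert problems, mirroring the proof of \cite[Proposition 3.20]{Wang-Zhang21} for $\tR(s)$. First I would reformulate RH problem \ref{rhp:tR} as a singular integral equation: a function $\tRhat$ satisfying the problem is equivalent to a solution of $\tRhat = 1 + \mathcal{C}_{\SigmahatR}^- (\DeltaR \tRhat)$, where $\mathcal{C}_{\SigmahatR}^-$ is the boundary Cauchy operator on $L^2(\SigmahatR)$ (or an appropriate $L^p$), composed with the shift operator built into $\DeltaR$. The key structural point is that $\DeltaR$ is a \emph{shifted} operator: on each piece of $\SigmahatR$ it evaluates $f$ at a shifted point $\s = \Iinvhat_{1\text{ or }2}(J_c(s) e^{\cdots})$ lying on a \emph{different} component of $\SigmahatR$ (or, on the $\SigmahatR^{(3)}, \SigmahatR^{(4)}, \SigmahatR^{(5)}$ pieces, partly at $s$ itself and partly at shifted points). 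Because the shifts map each contour piece into the interior of regions where $\tRhat$ is already analytic, the composite operator $\mathcal{C}_{\SigmahatR}^- \circ \DeltaR$ is \emph{compact} on the relevant function space — this is the crucial gain over a non-shifted problem.

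The argument then has two parts. For \textbf{existence}, I would invoke the Fredholm alternative: since $\mathcal{C}_{\SigmahatR}^- \circ \DeltaR$ is compact, $I - \mathcal{C}_{\SigmahatR}^- \circ \DeltaR$ is Fredholm of index zero, so existence follows from uniqueness (injectivity). For \textbf{uniqueness}, suppose $\tRhat$ solves the homogeneous problem (jump $\tRhat_+ - \tRhat_- = \DeltaR \tRhat_-$, normalized to $\tRhat(s) = \bigO(s^{-1})$ at infinity). Unwinding the definitions \eqref{eq:def_tRhat} backwards, $\tRhat$ corresponds to a vector-valued $\Rhat = (\Rhat_1, \Rhat_2)$ (and $\Rhattilde$) which in turn, via the chain $\Rhat \to \Vhat^{(\cdot)} \to \Qhat \to \Shat \to \That \to Y$, would produce a solution of the \emph{homogeneous} version of the original vector RH problem for $Y$ (RH problem \ref{RHP:OPs}) — i.e.\ one with the polynomial growth at $\infty$ replaced by decay. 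The uniqueness of RH problem \ref{RHP:OPs}, which holds because the biorthogonal polynomials $p_n, q_n$ are uniquely determined (the Chebyshev system property cited after \eqref{eq:biorthogonality}), forces $Y \equiv 0$ and hence $\tRhat \equiv 0$. Alternatively, and perhaps more cleanly, one runs the small-norm estimate of Lemma \ref{lem:tRest_soft} (proved in the next subsection) which already shows $\lVert \DeltaR \rVert \to 0$ as $n \to \infty$; but since the proposition is stated for fixed (large) $n$ rather than asymptotically, the Fredholm/back-substitution route is the honest one.

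The \textbf{main obstacle} is bookkeeping the function spaces and the mapping properties of the shift maps on the non-smooth contour $\SigmahatR$, whose components have radii of disparate orders ($\bigO(-u)$ near $s_a(t)$ and near $-1$, as noted in Remark \ref{rmk:shapes_of_contour}, versus $\bigO(1)$ elsewhere). One must check that $\DeltaR f$ is well-defined and has the right integrability at the self-intersection points of $\SigmahatR$ and at the shift-preimages of $0$, $\aend(t)$, $\bend(t)$ — exactly the points where $\tRhat$, $\tRhattilde$ were noted to be analytic in the paragraph preceding the RH problem. Establishing the compactness of $\mathcal{C}_{\SigmahatR}^- \circ \DeltaR$ rigorously (using that the shifted evaluation points stay a positive distance from $\SigmahatR$, which may fail near the pinch points and requires a local analysis there) is the technical heart; once that is in place, the Fredholm alternative plus the back-substitution to RH problem \ref{RHP:OPs} and \cite[RH problem 2.3]{Wang-Zhang21} closes the argument, exactly as in \cite[Proposition 3.20 and Proposition 4.18]{Wang-Zhang21}. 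I would therefore present the proof tersely: state the integral-equation reformulation, cite the compactness and Fredholm argument from \cite{Wang-Zhang21} with the modifications forced by the extra contour pieces $\SigmahatR^{(4)}, \SigmahatR^{(4')}$ and the $\aend(t)$-parametrix, and conclude uniqueness via the known uniqueness for $Y$ and $\Y$.
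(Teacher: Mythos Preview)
Your back-substitution argument for uniqueness --- unwinding the chain $\tRhat \to \Rhat \to \Qhat \to \Shat \to \That \to Y$ and invoking the unique solvability of RH problem \ref{RHP:OPs} from the biorthogonality of $p_n, q_n$ --- is exactly the paper's proof. The paper states only this: the transformations are invertible step by step, so solutions of RH problem \ref{rhp:tR} are in bijection with solutions of RH problem \ref{RHP:OPs}, and the latter has a unique solution.

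The Fredholm/compactness layer you add for existence is unnecessary and introduces a subtle error. Existence is already given: $\tRhat$ was \emph{constructed} from $Y$ via \eqref{eq:def_tRhat}, so there is nothing to prove on that side (and the same bijection argument transfers existence from $Y$ anyway). Moreover, your claim that ``the shifts map each contour piece into the interior of regions where $\tRhat$ is already analytic'' is not correct as stated: for $s \in \SigmahatR^{(3)}$ the shifted point $\s = \Iinvhat_2(J_c(s))$ lies on $\SigmahatR^{(3')}$, and for $s \in \SigmahatR^{(5)}$ the shifts $\s_j$ land on $\SigmahatR^{(5')}_j$ --- all pieces of $\SigmahatR$ carrying genuine jumps. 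So the composite $\mathcal{C}^-_{\SigmahatR} \circ \DeltaR$ is not obviously smoothing, and establishing compactness would require a more careful argument than the one you sketch. The paper bypasses all of this by working with the equivalence directly, which is both shorter and avoids the function-space bookkeeping you flag as the ``main obstacle.''
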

The proof of the proposition is analogous to that of \cite[Proposition 3.20]{Wang-Zhang21}, and we omit it. The idea of the proof is that RH problem \ref{rhp:tR} is equivalent to RH problem \ref{RHP:OPs} for $Y = (Y_1, Y_2) := (p_n, C p_n)$, whose unique solvability is guaranteed by the biorthogonality.

\subsection{Small norm argument} \label{subsec:small_norm_soft_edge}

In this subsubsection, we obtain the following estimate:
\begin{lemma} \label{lem:tRest_soft}
  As $n\to \infty$, we have that there exists $C'' > 0$ that depends on $C$, such that for all $t \in [C^{-1}, 1 - C n^{-1/2}]$, and all $s \in \compC \setminus \SigmahatR$, 
  \begin{align} 
    \lvert \tRhat(s) - 1 \rvert \leq {}& C'' n^{-1} (-u)^2, & \lvert \tRhattilde(s) - 1 \rvert \leq {}& C'' n^{-1} (-u)^2.
  \end{align}
\end{lemma}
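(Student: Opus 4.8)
The plan is to deduce Lemma \ref{lem:tRest_soft} from the standard small-norm theory for the scalar shifted RH problem \ref{rhp:tR}, in exactly the same spirit as \cite[Propositions 3.20--3.22]{Wang-Zhang21}, but carefully tracking the extra smallness that comes from the factors $(-u)^2$ coming from the $\aend(t)$-parametrix and the local parametrix at $0$. By Proposition \ref{prop:uniqueness_tR}, $\tRhat$ (and likewise $\tRhattilde$) is the unique solution of RH problem \ref{rhp:tR}, whose jump is $\tRhat_+ - \tRhat_- = \DeltaR \tRhat_-$ with $\DeltaR$ the shift operator from \eqref{def:DeltaR}. First I would recast this additive jump relation as an integral equation: writing $\tRhat = 1 + \tR_0$ with $\tR_0 = \bigO(s^{-1})$ at infinity, the Sokhotski--Plemelj / Cauchy-transform argument on the contour $\SigmahatR$ gives $\tR_0 = \mathcal{C}_{\SigmahatR}(\DeltaR(1 + \tR_0))$, i.e. $(I - \mathcal{C}_{\SigmahatR}\DeltaR)\tR_0 = \mathcal{C}_{\SigmahatR}\DeltaR 1$, where $\mathcal{C}_{\SigmahatR}$ is the (shifted) Cauchy operator on $\SigmahatR$. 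This is the same structure as in \cite[Section 3.7]{Wang-Zhang21}; the only thing one must verify is that the operator norm of $\mathcal{C}_{\SigmahatR}\DeltaR$ on $L^2(\SigmahatR)$ (or $L^\infty$ along the relevant pieces) is $o(1)$ and that the inhomogeneous term is $\bigO(n^{-1}(-u)^2)$.

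The core of the argument is therefore a piece-by-piece estimate of $\DeltaR$ on the various components of $\SigmahatR$ listed in \eqref{def:SigmaRi}. On the circular parts: by \eqref{eq:Phat^0_est} the contributions $J^j_{\SigmahatR^{(5)}}, J^j_{\SigmahatR^{(5')}_k}$ coming from $\Phat^{(0)} - I$ are exponentially small, $\bigO(e^{-\epsilon'' n(-u)^2})$; by \eqref{eq:Phat^a_est} the contributions $J^j_{\SigmahatR^{(4)}}, J^j_{\SigmahatR^{(4')}}$ from $\Phat^{(a)} - I$ are $\bigO(n^{-1}(-u)^{-2})$ \emph{on} $\partial D(\aend(t),\rhat_n)$, but here one must remember (Remark \ref{rmk:shapes_of_contour}) that the pulled-back contours $\SigmahatR^{(4)}, \SigmahatR^{(5)}$ etc. have length $\bigO(-u)$, and moreover after dividing by $(s - s_a(t))$-type Cauchy kernels whose relevant scale is also $\bigO(-u)$, the net $L^2 \to L^2$ contribution of this localized piece to $\mathcal{C}_{\SigmahatR}\DeltaR$ is $\bigO(n^{-1}(-u)^{-2}) \cdot \bigO(-u) = \bigO(n^{-1}(-u)^{-1})$ which is $o(1)$ for $t \le 1 - Cn^{-1/2}$ with $C$ large (since $(-u) \ge Cn^{-1/2}$ forces $n^{-1}(-u)^{-1} \le C^{-1}n^{-1/2} \to 0$); similarly the inhomogeneous term picks up at most $\bigO(n^{-1}(-u)^{-2}) \cdot \bigO(-u)^{?}$ which, after the rescaling $\tRhat(s) \leftrightarrow$ value near $s_a$, contributes $\bigO(n^{-1}(-u)^2)$ to $\tRhat - 1$ away from $D(\aend(t),\rhat_n)$ — here I would be careful to absorb the $(-u)$ powers correctly via the change of variables used to define $\tRhat$, exactly mirroring the bookkeeping in \cite[Equations (3.116)ff.]{Wang-Zhang21}. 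On the $\bend(t)$-circle $\SigmahatR^{(3)}, \SigmahatR^{(3')}$ one has $\Phat^{(b)} - I = \bigO(n^{-1})$ from the Airy asymptotics, contributing $\bigO(n^{-1}) = \bigO(n^{-1}(-u)^2)\cdot(-u)^{-2}$ which is fine since $(-u)^2 \le 1$; on the lens arcs $\SigmahatR^{(1)}$ the jump $J_{\SigmahatR^{(1)}}$ involves $e^{-n\Re\phihat(z)}$ with $\Re\phihat(z) > \epsilon'''|z|^{2\theta/(\theta+1)}$ (by the construction of $\Sigmahat^R_1$ at the end of Section \ref{subsubsec:three_parametrices}), so this is exponentially small uniformly in $t$; and on the segments $\SigmahatR^{(2',L)}, \SigmahatR^{(2',R)}$ the factor $e^{n\phihat(z)}$ is exponentially small by Proposition \ref{prop:phihat}\ref{enu:prop:phihat_1} and Proposition \ref{prop:extension_t} (with the quantitative bound $\Re\phihat < -\delta(-u)^2$ giving exponential decay $e^{-\delta n(-u)^2}$, harmless since $n(-u)^2 \ge C^2$). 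On $\SigmahatR^{(1')}, \SigmahatR^{(2,L)}, \SigmahatR^{(2,R)}$ the operator vanishes.

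Combining: $\|\mathcal{C}_{\SigmahatR}\DeltaR\|_{L^2 \to L^2} = o(1)$ uniformly for $t \in [C^{-1}, 1 - Cn^{-1/2}]$, so $(I - \mathcal{C}_{\SigmahatR}\DeltaR)$ is invertible by a Neumann series, $\|\tR_0\|_{L^2(\SigmahatR)} = \bigO(\|\mathcal{C}_{\SigmahatR}\DeltaR 1\|_{L^2}) = \bigO(n^{-1}(-u)^2)$ (the dominant term being the $\aend(t)$-parametrix contribution, after the $(-u)$-length and $(-u)$-kernel-scale cancellations described above, with all other pieces being either $\bigO(n^{-1})$ hence $\bigO(n^{-1}(-u)^0)$ but multiplied by $(-u)^2$-producing geometric factors, or exponentially small), and then the Cauchy integral representation $\tRhat(s) - 1 = \frac{1}{2\pi i}\int_{\SigmahatR} \frac{(\DeltaR\tRhat_-)(\zeta)}{\zeta - s}\,d\zeta$ upgrades this to the pointwise bound $|\tRhat(s) - 1| \le C'' n^{-1}(-u)^2$ for all $s$ at a fixed distance from $\SigmahatR$, and near $\SigmahatR$ by deforming the contour in the usual way. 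The identical argument with $\Phattilde^{(a)}, \Phattilde^{(b)}, \Phattilde^{(0)}$ and the estimates from the same propositions gives the bound for $\tRhattilde$. I expect the main obstacle — and the one point where I would have to be genuinely careful rather than cite \cite{Wang-Zhang21} verbatim — to be the bookkeeping of the $(-u)$-powers near $\aend(t)$: the parametrix error there is only $\bigO(n^{-1}(-u)^{-2})$, which is \emph{not} small by itself, and it is the interplay between that, the $\bigO(-u)$ contour length, the $\bigO(-u)$ scale of the Cauchy kernel near $s_a(t)$, and the $(-u)$-dependent normalization built into the definition of $\tRhat$ that conspires to yield both the smallness needed for invertibility and the final $(-u)^2$ gain; getting every exponent right (and checking uniformity in $t$ throughout the window $[C^{-1}, 1 - Cn^{-1/2}]$, splitting at $t = 1 - c$ as in Section \ref{subsubsec:three_parametrices} if convenient) is where the real work lies.
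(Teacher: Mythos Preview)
Your overall strategy is exactly the paper's: recast RH problem \ref{rhp:tR} as $(I - \mathcal{C}_-\DeltaR)(\tRhat_- - 1) = \mathcal{C}_-\DeltaR(1)$, estimate $\DeltaR$ piecewise on $\SigmahatR$, invert by Neumann series, then upgrade to a pointwise bound via the Cauchy representation. Your piecewise estimates also match the paper's: exponentially small on $\SigmahatR^{(1)}$, $\SigmahatR^{(2',L)}$, $\SigmahatR^{(2',R)}$, $\SigmahatR^{(5)}$, $\SigmahatR^{(5')}_k$; $\bigO(n^{-1})$ on $\SigmahatR^{(3)}$, $\SigmahatR^{(3')}$; and $\bigO(n^{-1}(-u)^{-2})$ on $\SigmahatR^{(4)}$, $\SigmahatR^{(4')}$.

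Where you go wrong is precisely the point you flag as ``the main obstacle''. There is \emph{no} hidden $(-u)$-dependent normalization in the definition of $\tRhat$ that conspires to convert $n^{-1}(-u)^{-2}$ into $n^{-1}(-u)^{2}$; the map $s \mapsto J_{c_1,c_0}(s)$ carries no such factor. The paper's own computation yields $\lVert \DeltaR \rVert_{L^2 \to L^2} = \bigO(n^{-1}(-u)^{-2})$ (this is a multiplication-plus-shift operator, so its norm is essentially the sup of the jump entries---your contour-length argument applies to $\lVert \DeltaR(1)\rVert_{L^2}$, giving $\bigO(n^{-1}(-u)^{-3/2})$, not to the operator norm), then $\lVert \tRhat_- - 1\rVert_{L^2} = \bigO(n^{-1}(-u)^{-3/2})$, and finally $\lVert (\xi - s)^{-1}\rVert_{L^2(\SigmahatR)} = \bigO((-u)^{-1/2})$ for $s$ at distance $\gtrsim (-u)$ from $\SigmahatR$. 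Multiplying gives $\bigO(n^{-1}(-u)^{-2})$, not $\bigO(n^{-1}(-u)^{2})$. The exponent $+2$ in the lemma statement (and in the paper's final displayed equality) is a typo for $-2$; note that $n^{-1}(-u)^{-2} \le C^{-2}$ on the window $t \le 1 - Cn^{-1/2}$, which is exactly the smallness needed for invertibility and for the downstream use in Lemma \ref{lem:tail}. So drop the speculative ``rescaling'' mechanism and accept the bound $C'' n^{-1}(-u)^{-2}$; the rest of your argument then goes through verbatim.
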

We are only going to prove the result for $\tRhat(s)$, because the proof depends on the shifted RH problems satisfied by $\tRhat(s)$ and $\tRhattilde(s)$, and we have only explicitly stated it for $\tRhat(s)$. The omission of the proof for $\tRhattilde(s)$ is only for brevity, since it is analogous to the proof for $\tRhat(s)$.

We note that there exist $C' > 0$ and $\epsilon'' > 0$ such that ($i = 1, 2$ and $k = 1, \dotsc, \theta$)
\begin{align}
  \lvert J_{\SigmahatR^{(1)}}(s) \rvert < {}& e^{-\epsilon'' n \lvert s - s_a(t) \rvert^2}, & s \in {}& \SigmahatR^{(1)}; & & \label{eq:est_J_SigmahatR:1} \\
  \lvert J_{\SigmahatR^{(2')}}(s) \rvert < {}& e^{-\epsilon'' n (-u)^2}, & s \in {}& \SigmahatR^{(2', L)}; & \lvert J_{\SigmahatR^{(2')}}(s) \rvert < {}& e^{-\epsilon'' n}, & s \in {}& \SigmahatR^{(2', R)}; \\
  \lvert J^{i}_{\SigmahatR^{(3)}}(s) \rvert < {}& C' n^{-1}, & s \in {}& \SigmahatR^{(3)}; & \lvert J^{i}_{\SigmahatR^{(3')}}(s) \rvert < {}& C' n^{-1}, & s \in {}& \SigmahatR^{(3')}; \\
  \lvert J^{i}_{\SigmahatR^{(4)}}(s) \rvert < {}& C' n^{-1} (-u)^{-2}, & s \in {}& \SigmahatR^{(4)}; & \lvert J^{i}_{\SigmahatR^{(4')}}(s) \rvert < {}& C' n^{-1} (-u)^{-2}, & s \in {}& \SigmahatR^{(4')}; \\
  \lvert J^{i}_{\SigmahatR^{(5)}}(s) \rvert < {}& e^{-\epsilon'' n(-u)^2}, & s \in {}& \SigmahatR^{(5)}; & \lvert J^{k}_{\SigmahatR^{(5')}}(s) \rvert < {}& e^{-\epsilon'' n(-u)^2}, & s \in {}& \SigmahatR^{(5')}. \label{eq:est_J_SigmahatR:5}
\end{align}
We note that $C'$ and $\epsilon''$ here may not be the same as $C'$ in \eqref{eq:Phat^a_est} and $\epsilon''$ in \eqref{eq:Phat^0_est}.

To show that $\tRhat(s) \to 1$ as $n\to\infty$, we use the strategy proposed in \cite{Claeys-Wang11} and also used in \cite{Wang-Zhang21}. We start with the claim that $\tRhat$ satisfies the integral equation
\begin{equation} \label{eq:constr_R}
  \tRhat(s) = 1 + \mathcal{C}(\DeltaR \tRhat_-)(s),
\end{equation}
where
\begin{equation}
  \mathcal{C}g(s) = \frac{1}{2\pi i} \int_{\SigmahatR} \frac{g(\xi)}{\xi - s} d\xi,  \qquad s\in\compC \setminus \SigmahatR,
\end{equation}
is the Cauchy transform of a function $g$. Indeed, due to the uniqueness of RH problem \ref{rhp:tR} stated in Proposition \ref{prop:uniqueness_tR}, it suffices to show that the right-hand side of \eqref{eq:constr_R} satisfies the RH problem for $\tRhat$ and the verification is straightforward. As a consequence of \eqref{eq:constr_R}, it is readily seen that
\begin{equation} \label{eq:tRs-1split}
  \tRhat(s) - 1 = \frac{1}{2\pi i} \int_{\SigmahatR} \frac{\DeltaR(\tRhat_- - 1)(\xi)}{\xi - s} d\xi + \frac{1}{2\pi i} \int_{\SigmahatR} \frac{\DeltaR(1)(\xi)}{\xi - s} d\xi,\qquad s\in\compC \setminus \SigmahatR.
\end{equation}
To estimate the two terms on the right-hand side of the above formula, we need the following estimate of the operator $\DeltaR$ which is a direct consequence of the estimates \eqref{eq:est_J_SigmahatR:1}--\eqref{eq:est_J_SigmahatR:5}.

\begin{prop}\label{prop:estdletaR}
Let $\DeltaR(1)(\xi)$ be the function on $\SigmahatR$ in \eqref{eq:tRs-1split} and $\DeltaR: L^2(\SigmahatR) \to L^2(\SigmahatR)$ be the operator defined in \eqref{def:DeltaR}. There exists a constant $C'''>0$ depending on $C$ such that for all $t \in [C^{-1}, 1 - C n^{-1/2}]$
\begin{align}
  \lVert \DeltaR(1) \rVert_{L^2(\SigmahatR)} \leq {}& C''' n^{-1} (-u)^{-3/2}, \label{eq:DeltaR1norm} \\
  \lVert \DeltaR \rVert_{L^2(\SigmahatR)} \leq {}& C''' n^{-1} (-u)^{-2}, \label{eq:estOperator}
\end{align}
if $n$ is large enough.
\end{prop}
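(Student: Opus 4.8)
The plan is to estimate $\DeltaR$ piece by piece over the decomposition \eqref{def:SigmaR} of $\SigmahatR$, as in the Cauchy-operator argument of \cite[Lemma 4.19]{Wang-Zhang21} and \cite[Section 3.7]{Wang-Zhang21}, by combining the pointwise jump bounds \eqref{eq:est_J_SigmahatR:1}--\eqref{eq:est_J_SigmahatR:5} with information on the arc lengths of the constituent curves. First I would record, using Remark \ref{rmk:shapes_of_contour}, the asymptotics \eqref{eq:limit_Ihat_1}--\eqref{eq:limit_Ihat_2} of $\Iinvhat_1, \Iinvhat_2$ through the prelimit map $J^{(\pre)}$ of Appendix \ref{sec:limiting_J}, and $\rhat_n = \epsilon'(-u)^{1+1/\theta}$ from \eqref{eq:defn_rhat_n}, that the curves clustered near the soft edge $\aend(t)$ and near $-1$ --- namely $\SigmahatR^{(2',L)}$, $\SigmahatR^{(4)}$, $\SigmahatR^{(4')}$, $\SigmahatR^{(5)}$, and all $\SigmahatR^{(5')}_k$ --- have length $\bigO(-u)$, while $\SigmahatR^{(1)}$, $\SigmahatR^{(1')}$, $\SigmahatR^{(2,L)}$, $\SigmahatR^{(2,R)}$, $\SigmahatR^{(2',R)}$, $\SigmahatR^{(3)}$, $\SigmahatR^{(3')}$ have length $\bigO(1)$, and that $\SigmahatR^{(1)}$ lies in $\{ \lvert s - s_a(t) \rvert \geq c(-u) \}$ for some $c>0$; all of this uniformly in $t \in [C^{-1}, 1 - Cn^{-1/2}]$, the uniformity for $1-t$ not small coming from Proposition \ref{prop:extension_t}. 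On each curve $\DeltaR(1)$ is a finite sum of the jump functions of \eqref{def:DeltaR} evaluated at points of that curve (the empty sum, hence $0$, on $\SigmahatR^{(1')} \cup \SigmahatR^{(2,L)} \cup \SigmahatR^{(2,R)}$), so $\lVert \DeltaR(1) \rVert_{L^2}$ is controlled by a sum of terms $\lVert J_{\SigmahatR^{(\cdot)}} \rVert_{L^\infty} (\text{length})^{1/2}$ together with Gaussian-type integrals $\big( \int_{\SigmahatR^{(1)}} e^{-2\epsilon'' n \lvert s - s_a(t) \rvert^2} \lvert ds \rvert \big)^{1/2}$ and $\big( e^{-2\epsilon'' n (-u)^2} (\text{length}) \big)^{1/2}$ for the exponentially small curves. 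The dominant contribution is from $\SigmahatR^{(4)} \cup \SigmahatR^{(4')}$, giving $\bigO(n^{-1}(-u)^{-2}) \cdot \bigO((-u)^{1/2}) = \bigO(n^{-1}(-u)^{-3/2})$, which is \eqref{eq:DeltaR1norm}; the contribution of $\SigmahatR^{(3)} \cup \SigmahatR^{(3')}$ is $\bigO(n^{-1}) \leq \bigO(n^{-1}(-u)^{-3/2})$ since $-u < 1$; and every exponentially small term is absorbed because, writing $w = n(-u)^2 \geq C^2$, one has $n^{-1}(-u)^{-3/2} = n^{-1/4} w^{-3/4}$, the Gaussian tail over $\SigmahatR^{(1)}$ is $\bigO(n^{-1/2} e^{-c' w})$, the remaining ones are $\bigO((-u)^{1/2} e^{-\epsilon'' w})$, and $w \mapsto w^k e^{-\delta w}$ is bounded on $[C^2, \infty)$ for every $k, \delta > 0$.

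For the operator bound \eqref{eq:estOperator} I would use that a weighted composition operator $f \mapsto J(s) f(\s(s))$ acting from $L^2(\SigmahatR)$ into $L^2(\Gamma)$, with $\s$ an analytic bijection of $\Gamma$ onto a curve inside $\SigmahatR$, contributes to $\lVert \DeltaR \rVert_{L^2(\SigmahatR)}$ a term bounded by $\lVert J \rVert_{L^\infty(\Gamma)} \big( \sup_\Gamma \lvert (\s^{-1})' \rvert \big)^{1/2}$; since $\DeltaR$ in \eqref{def:DeltaR} is a finite sum of such operators whose shifts are built from $\Iinvhat_1$, $\Iinvhat_2$, $J_{c_1, c_0}$ and the rotations $z \mapsto e^{2(j-k)\pi i/\theta} z$, it remains to bound the Jacobians $\lvert (\s^{-1})' \rvert$. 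Away from the edges these are $\bigO(1)$; near $\aend(t)$ the maps $\Iinvhat_1, \Iinvhat_2$ and their rotated counterparts send the radius-$\rhat_n$ circle $\partial D(\aend(t), \rhat_n)$ and the radius-$\bigO(-u)$ circles around $s_a(t)$ and $-1$ to curves of comparable length, so the composed shifts between $\SigmahatR^{(4)}$ and $\SigmahatR^{(4')}$, and among $\SigmahatR^{(5)}$ and the $\SigmahatR^{(5')}_k$, again have Jacobian $\bigO(1)$ --- this is exactly where Remark \ref{rmk:shapes_of_contour} and \eqref{eq:limit_Ihat_1}--\eqref{eq:limit_Ihat_2} enter. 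Summing over the pieces, the dominant contribution is once more $\SigmahatR^{(4)} \cup \SigmahatR^{(4')}$, with $\lVert J \rVert_{L^\infty} = \bigO(n^{-1}(-u)^{-2})$ and Jacobian $\bigO(1)$, which yields \eqref{eq:estOperator}; all other contributions are smaller by the estimates of the previous paragraph, once $n$ is large enough.

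The main obstacle is the bookkeeping of the $(-u)$-powers near the left edge: one must track precisely how $\Iinvhat_1, \Iinvhat_2$ --- through $J^{(\pre)}$ and its square-root critical point at the soft edge --- rescale arc lengths and derivatives on the shrinking circles $\partial D(\aend(t), \rhat_n)$ and on the $\bigO(-u)$-circles around $s_a(t)$ and $-1$, so that the negative powers of $(-u)$ appearing in \eqref{eq:est_J_SigmahatR:1}--\eqref{eq:est_J_SigmahatR:5} combine correctly with the lengths and Jacobians; and one must ensure that the exponents $n(-u)^2$ in the exponentially small jumps, which need only satisfy $n(-u)^2 \geq C^2$ and need not tend to infinity, still dominate the negative powers of $(-u)$, which follows from the boundedness of $w^k e^{-\delta w}$ on $[C^2, \infty)$ together with a standard Gaussian tail estimate. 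Apart from this, the computation is routine and parallel to \cite[Section 3.7]{Wang-Zhang21}.
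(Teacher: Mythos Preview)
Your proposal is correct and is essentially a fleshed-out version of the paper's own argument: the paper simply states that Proposition \ref{prop:estdletaR} ``is a direct consequence of the estimates \eqref{eq:est_J_SigmahatR:1}--\eqref{eq:est_J_SigmahatR:5}'' and gives no further details, whereas you supply the piece-by-piece $L^2$ and operator-norm bookkeeping (arc lengths from Remark \ref{rmk:shapes_of_contour}, Jacobians from \eqref{eq:limit_Ihat_1}--\eqref{eq:limit_Ihat_2}, and the $w = n(-u)^2 \geq C^2$ mechanism for absorbing the exponentially small contributions) that makes this consequence explicit.
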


By taking the limit where $s$ approaches the minus side of $\SigmahatR$, we obtain from \eqref{eq:tRs-1split} that
\begin{equation}\label{eq:tR-}
  \tRhat_-(s) - 1 = \mathcal{C}_{\DeltaR}(\tRhat_- - 1)(s) + \mathcal{C}_-(\DeltaR(1))(s),
\end{equation}
where
\begin{equation}
  \mathcal{C}_{\DeltaR}f(s) = \mathcal{C}_-(\DeltaR(f))(s), \qquad   \mathcal{C}_-g(s) = \frac{1}{2\pi i} \lim_{s' \to s_-} \int_{\SigmahatR} \frac{g(\xi)}{\xi - s'} d\xi,
\end{equation}
and the limit $s' \to s_-$ is taken when approaching the contour from the minus side. Since the Cauchy operator $\mathcal{C}_-$ is bounded, we see from Proposition \ref{prop:estdletaR} that the operator norm of $\mathcal{C}_{\DeltaR}$ is also uniformly $\bigO(n^{-1} (-u)^{-2})$ as $n\to \infty$ and $t \in [C^{-1}, 1 - C n^{-1/2}]$. Hence, by enlarging $C$ if necessary, we have that the operator $1-\mathcal{C}_{\DeltaR}$ is invertible for large enough $n$ and all $t \in [C^{-1}, 1 - C n^{-1/2}]$. Then we rewrite \eqref{eq:tR-} as
\begin{equation}
\tRhat_-(s) - 1 = (1 - \mathcal{C}_{\DeltaR})^{-1}( \mathcal{C}_-(\DeltaR(1)))(s).
\end{equation}
Combining formulas \eqref{eq:DeltaR1norm} and \eqref{eq:estOperator} gives us
\begin{equation}\label{eq:tR-1norm}
  \lVert \tRhat_- - 1 \rVert_{L^2(\SigmahatR)} = \bigO(n^{-1} (-u)^{-3/2}).
\end{equation}

At last, we prove Lemma \ref{lem:tRest_soft} under the condition that the distance between $s$ and $\SigmahatR$ is bounded below by $\epsilon''' (-u)$ where $\epsilon''' > 0$ is a small constant. By Remark \ref{rmk:shapes_of_contour}, this covers the interior of the circular contours $\SigmahatR^{(5)} \cup \bigcup_{k=1}^{\theta} \SigmahatR^{(5')}_k$ and $\SigmahatR^{(2', L)}$. To prove Lemma \ref{lem:tRest_soft} for all $s \in \compC \setminus \SigmahatR$, we may use the argument of deformation of the contour $\SigmahatR$, like in the proof of \cite[Lemma 3.16]{Wang23a}, and we omit the detail. From \eqref{eq:tRs-1split}, \eqref{eq:DeltaR1norm}, \eqref{eq:tR-1norm} and the Cauchy-Schwarz inequality, we have
  \begin{equation}
    \begin{split}
      \lvert \tRhat(s) - 1 \rvert & \leq  \frac{1}{2\pi} \left( \lVert \DeltaR(\tRhat_- - 1) \rVert_{L^2(\SigmahatR)} + \lVert \DeltaR(1) \rVert_{L^2(\SigmahatR)} \right) \cdot \lVert \frac{1}{\xi - s} \rVert_{L^2(\SigmahatR)} \\
      &\leq \frac{1}{2\pi} \left( \bigO(n^{-2} (-u)^{-7/2}) + \bigO(n^{-1} (-u)^{-3/2}) \right) \cdot \bigO(u^{-1/2})= \bigO(n^{-1} (-u)^2).
    \end{split}
  \end{equation}
This finishes the proof of Lemma \ref{lem:tRest_soft}.

\subsection{Proof of  a technical lemma}

In this subsection, we  prove the following result that will be used in the proof of Theorem \ref{thm:universality}.
\begin{lemma} \label{lem:tail}
  Let $C$ be a large enough positive number and $t \in [C^{-1}, 1 - C n^{-1/2}]$. Let $p_j(x) = p^{(V_t)}_{n, k}(x)$, $q_k(x) = q^{(V_t)}_{n, k}(x)$ and $\kappa_j = \kappa^{(V_t)}_{n,j}$ be defined by \eqref{eq:biorthogonality}. For $j = k = n$, if $n$ is large enough, then there exists some constant $C'>0$ such that for all  $x \in [0, \rhat_n]$, where $\rhat_n = \rhat_n(t)$ is defined in \eqref{eq:defn_rhat_n},
  \begin{align}
    \lvert p_n(x) \rvert < {}& C' (-u)^{\frac{1}{2} - \frac{\alpha + 1}{\theta}} e^{n \Re \gfnhat_t(x)}, \label{eq:est_p_n_soft} \\
    \lvert q_n(x^{\theta}) \rvert < {}& C' (-u)^{-\frac{1}{2} - \alpha} e^{n \Re \gfnhattilde_t(x)}, \label{eq:est_q_n_soft} \\
    \lvert \kappa_n(t) \rvert > {}& C'^{-1} e^{n\ellhat_t}, \label{eq:est_kappa_n_soft}
  \end{align}
  where $u=t-1$.
\end{lemma}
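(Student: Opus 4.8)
The plan is to extract the estimates \eqref{eq:est_p_n_soft}--\eqref{eq:est_kappa_n_soft} by unwinding the chain of transformations $Y \to \That \to \Shat \to \Qhat \to \Rhat$ (and its tilde counterpart) in a neighbourhood of $0$, exactly in the spirit of how the asymptotic formulas for $p_n$, $q_n$ and $\kappa_n$ in the transition regime were derived from the transformations in Section~\ref{sec:RH_transitive}. Concretely, on the region $W_k$ (for $\Rhat$) near $0$ one has $\Rhat_1(z) = \Vhat^{(0)}_k(z^\theta)$, and by Lemma~\ref{lem:tRest_soft} together with \eqref{eq:Phat^0_est} the local parametrix $\Phat^{(0)}$ is exponentially close to $I$ and $\tRhat(s) = 1 + \bigO(n^{-1}(-u)^2)$, so all these vector-valued functions are uniformly bounded (in fact $1 + o(1)$ after the appropriate normalization) on $\partial D(0,\rhat_n)$ and in its interior. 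Tracing back through the definitions, $p_n(x) = Y_1(x)$ equals $\That_1(x) e^{n\gfnhat(x)}$, $\That_1(x) = \Shat_1(x)$ outside the lens (or a bounded modification inside it), $\Shat_1(x) = \Qhat_1(x)\,\Phat^{(\infty)}_1(x)$, and $\Qhat_1 = \Rhat_1$ up to the local parametrix which is $I + o(1)$; hence $|p_n(x)| \le (\text{const})\,|\Phat^{(\infty)}_1(x)|\, e^{n\Re\gfnhat_t(x)}$, and likewise for $q_n$ and $\kappa_n$.

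The next step is to read off the size of the global parametrices $\Phat^{(\infty)}_1$, $\Phattilde^{(\infty)}_2$ and of the normalization constant near $0$ in terms of $-u$. For $x \in [0,\rhat_n] = [0,\epsilon'(-u)^{(\theta+1)/\theta}]$, the point $s = \Iinvhat_1(x)$ lies within $\bigO(-u)$ of $s_a(t)$, where $s_a(t) = -1 + \bigO(-u)$; using \eqref{eq:limit_Ihat_1}--\eqref{eq:limit_Ihat_2} and the explicit formula for $\Phatsingle(s)$ one finds $\Phat^{(\infty)}_1(x)$ behaves like $(-u)^{(\theta - \alpha - 1)/\theta}$ up to bounded factors (the factor $\big(\tfrac{s+1}{s}\big)^{(\theta-\alpha-1)/\theta}$ producing the $-u$ power while $\sqrt{(s-s_a)(s-s_b)}$ contributes an $\bigO(1)$ quantity away from the small disc, and an $\bigO((-u)^{1/4})$-type factor precisely on $\partial D(\aend(t),\rhat_n)$ which is subsumed). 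Matching powers $(\theta-\alpha-1)/\theta = 1/2 - (\alpha+1)/\theta + $ (a correction) one recovers the exponent $\tfrac12 - \tfrac{\alpha+1}{\theta}$ in \eqref{eq:est_p_n_soft}; the analogous computation with $\Phatsingletilde(s)$, which carries the factor $c_0^\alpha |s_a s_b|^{1/2} (s+1)^{-\alpha}/\sqrt{\cdots}$ and the extra $|s_a(t) s_b(t)|^{1/2} \sim \bigO(1)$, gives the exponent $-\tfrac12 - \alpha$ in \eqref{eq:est_q_n_soft}. For $\kappa_n$ one uses the standard identity expressing $\kappa_n$ through the $(1,2)$ entry of the residue at infinity of $Y$ (equivalently, through $\Qhat$ and $\Phat^{(\infty)}$ at $\infty$), together with $\ellhat_t$ as in \eqref{eq:defn_ellha_t}; here the relevant quantities are $\bigO(1)$ in $-u$, so only the exponential factor $e^{n\ellhat_t}$ survives and a lower bound $|\kappa_n(t)| > C'^{-1} e^{n\ellhat_t}$ follows from $\Qhat = I + o(1)$ at infinity.

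I would organize the argument as: (i) recall that on $\overline{D(0,\rhat_n)}$ all of $\Rhat_1,\Rhat_2,\Rhattilde_1,\Rhattilde_2$, equivalently $\Vhat^{(0)}_k,\Vhattilde^{(0)}_k$, are $\bigO(1)$ uniformly in $t \in [C^{-1}, 1 - Cn^{-1/2}]$ — this is immediate from Lemma~\ref{lem:tRest_soft}, the boundedness statements \eqref{eq:tildeV0kzero}-type and \eqref{eq:est_J_SigmahatR:1}--\eqref{eq:est_J_SigmahatR:5}, since $\Rhat$ differs from $\tRhat$ only by bounded conjugations; (ii) invert the transformations to write $p_n(x), q_n(x^\theta), \kappa_n$ in terms of these bounded functions times $\Phat^{(\infty)}$, $\Phattilde^{(\infty)}$, $e^{n\gfnhat_t}$, $e^{n\gfnhattilde_t}$, $e^{n\ellhat_t}$; (iii) estimate $\Phat^{(\infty)}_1$, $\Phattilde^{(\infty)}_2$ on $[0,\rhat_n]$ using \eqref{eq:limit_Ihat_1}--\eqref{eq:limit_Ihat_2} and the explicit $\Phatsingle,\Phatsingletilde$, producing the stated powers of $-u$; (iv) assemble. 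The main obstacle I anticipate is step (iii): one must track the $-u$-dependence of $\Phat^{(\infty)}_1$ carefully and uniformly, because the branch point $s_a(t)$ and the contour $\partial D(\aend(t),\rhat_n)$ both move with $t$ and collapse toward $-1$ at rate $-u$, so the naive $\bigO((z-\aend(t))^{1/4})$ singular behaviour of $\Qhat_1$ near $\aend(t)$ (RH problem~\ref{RHP:Svar}, item~6) has to be shown not to worsen the bound — this is handled because $x$ ranges over $[0,\rhat_n]$ with $\rhat_n = \epsilon'(-u)^{(\theta+1)/\theta}$, so $x$ stays at distance $\gtrsim (-u)^{(\theta+1)/\theta}$ from $\aend(t)$ only in the worst case and the $1/4$-power contributes at most an $\bigO((-u)^{1/3})$ factor that is absorbed; making this uniform-in-$t$ bookkeeping precise, and checking the constants do not blow up as $t \to 1^-$, is the delicate part.
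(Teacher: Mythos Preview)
Your approach is essentially the same as the paper's: trace back the transformations to write $p_n(x) = \Rhat_1(x)\,\Phat^{(\infty)}_1(x)\,e^{n\gfnhat_t(x)}$ and $q_n(x^\theta) = \Rhattilde_1(x)\,\Phattilde^{(\infty)}_1(x)\,e^{n\gfnhattilde_t(x)}$, use Lemma~\ref{lem:tRest_soft} (together with \eqref{eq:Phat^0_est}) to bound the $\Rhat$-factors by $\bigO(1)$, and then read off the $(-u)$-powers from the explicit global parametrices; for $\kappa_n$ the paper likewise reads off the leading coefficient of $Y_2$ at infinity via $Y_2(z)=\Rhat_2(z)\Phat^{(\infty)}_2(z)e^{n(\ellhat_t-\gfnhattilde_t(z))}$. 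Two small corrections: for $q_n$ the relevant factor is $\Phattilde^{(\infty)}_1$ (not $\Phattilde^{(\infty)}_2$), since $q_n$ sits in the first component $\Y_1$; and your worry about the $(z-\aend(t))^{1/4}$ behaviour is unnecessary, because on $[0,\rhat_n]$ the point $s=\Iinvhat_1(x)$ stays at distance $\gtrsim(-u)$ from $s_a(t)$, so $(s-s_a)^{-1/2}\lesssim(-u)^{-1/2}$, which combined with $((s+1)/s)^{(\theta-\alpha-1)/\theta}\sim(-u)^{(\theta-\alpha-1)/\theta}$ gives exactly $\Phat^{(\infty)}_1(x)=\bigO((-u)^{1/2-(\alpha+1)/\theta})$.
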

\begin{proof}
When $t \in [C^{-1}, 1 - C n^{-1/2}]$ and $x \in [0, \rhat_n]$, we note that $\Rhat_1(x) = \Qhat_1(x)$. Then by tracing back the transformations $Y \to \That \to \Shat \to \Qhat$, we find that
\begin{equation}
  p_n(x) = \Rhat_1(x) \Phat^{(\infty)}_1(x) e^{n\gfnhat_t(x)}. 
\end{equation}
By the estimate of $\tRhat$ in Lemma \ref{lem:tRest_soft} and the estimate that $\Phat^{(\infty)}_1(x) = \bigO((-u)^{\frac{1}{2} - \frac{\alpha + 1}{\theta}})$, we prove \eqref{eq:est_p_n_soft}.

Similarly,  by tracing back the transformations $\Y \to \Thattilde \to \Shattilde \to \Qhattilde$ and using $\Rhattilde_1(x) = \Qhattilde_1(x)$, 
 we find that
\begin{equation}
  q_n(x^{\theta}) = \Rhattilde_1(x) \Phattilde^{(\infty)}_1(x) e^{n\gfnhattilde_t(x)}. 
\end{equation}
By the estimate of $\tRhattilde$ in Lemma \ref{lem:tRest_soft} and the estimate that $\Phattilde^{(\infty)}_1(x) = \bigO((-u)^{-\frac{1}{2} - \alpha})$, we prove \eqref{eq:est_q_n_soft}.

 By \eqref{eq:defn_Cp_n} and the definition of $Y$, we have
\begin{equation}\label{eq:kapparep}
  Y_2(z) = Cp_n(z) = \frac{i}{2\pi} \kappa_n z^{-(n + 1)\theta} + \bigO(z^{-(n + 2)\theta}), \quad z \to +\infty.
\end{equation}
On the other hand, we have
\begin{equation} 
  Y_2(z) = \Rhat_2(z) \Phat^{(\infty)}_2(z) e^{n(\ellhat_t - \gfnhattilde_t(z))}.
\end{equation}
Letting $z \to \infty$ in the above equation and using the estimate of $\tRhat$ in Lemma \ref{lem:tRest_soft}, we derive \eqref{eq:est_kappa_n_soft}.

\end{proof}
\section{Proof of Theorems \ref{thm:main} and \ref{thm:universality}} \label{sec:proof_main}

\subsection{Proof of Theorem \ref{thm:main}} 


The strategy used in this subsection is the same as that in \cite[Section 5]{Wang-Zhang21}. 

\paragraph{Proof of \eqref{eq:asy_p_n}}

Let $\epsilon > 0$ be a small positive number. Analogous to \cite[Equation (5.4)]{Wang-Zhang21}, we define
\begin{equation}\label{def:v}
  v(s) =
  \begin{cases}
    R_2(e^{-\frac{\pi i}{\theta}} s^{1 + \frac{1}{\theta}}), & \arg s \in (0, \frac{\pi}{\theta + 1}), \\
    R_2(e^{\frac{\pi i}{\theta}} s^{1 + \frac{1}{\theta}}), & \arg s \in (-\frac{\pi}{\theta + 1}, 0), \\
    R_1(-(-s)^{1 + \frac{1}{\theta}}), & \arg s \in (\frac{\pi}{\theta + 1}, \pi] \cup (-\pi, -\frac{\pi}{\theta + 1}),
  \end{cases}
\end{equation}
for $0 < \lvert s \rvert < \epsilon r^{\theta/(\theta + 1)}_n = \epsilon n^{-\theta/(2\theta + 1)}$ with $r_n$  defined in \eqref{def:rn}. Since $R_1$ and $R_2$ are defined by $V^{(0)}$  near the origin through \eqref{def:R1} and \eqref{def:R2} respectively, and $V^{(0)}$ satisfies a RH problem given in Proposition \ref{rhp:V0}, we find that $v(s)$ can be extended analytically in the disc $\lvert s \rvert \leq \epsilon n^{-\theta/(2\theta + 1)}$, and thus admits the Taylor expansion $v(s) = \sum^{\infty}_{k = 0} c_k s^k$ there. Moreover, as $\epsilon$ is small enough, we see from \eqref{def:v}, \eqref{eq:scalar_R_defn} and Lemma \ref{lem:tRest} that uniformly for $s$ in this disk
\begin{equation}\label{eq:estv}
  v(s)=1+\bigO(n^{-\frac{1}{2(2\theta+1)}}).
\end{equation}
Then analogous to \cite[Equation (5.7)]{Wang-Zhang21}, we define for $\lvert s \rvert \leq \epsilon^{\theta + 1} n^{-\theta(\theta + 1)/(2\theta + 1)}$,
\begin{equation} \label{eq:defn_v_k(s)}
  v_k(s) = \frac{1}{2\pi i} \oint_{\lvert \zeta \rvert = \epsilon n^{-\theta/(2\theta + 1)}} \frac{v(\zeta)}{\zeta^{k + 1}(1 - s/\zeta^{\theta + 1})} d\zeta = \sum^{\infty}_{l = 0} c_{k + l(\theta + 1)} s^{l}, \quad k = 0, 1, \dotsc, \theta,
\end{equation}
which are analytic near the origin and $v(s)=\sum_{k=0}^\theta s^k v_k(s^{\theta+1})$ for $\lvert s \rvert < \epsilon n^{-\theta/(2\theta + 1)}$. Hence, analogous to \cite[Equation (5.9)]{Wang-Zhang21}, for $k=0,1,\ldots,\theta$, we have
\begin{equation} \label{eq:U_j_asy_simplified}
  \begin{split}
    U_k(z) = {}& \sum^{\theta}_{i = 0} V^{(0)}_i(z) P^{(0)}_{ik}(z) = \sum^{\theta}_{i = 0} v_i(-z) \mathsf{P}^{(0)}_{ik}(z) \\
    = {}& \sum^{\theta}_{l = 0} \frac{v_l(-z)}{(\rho n)^{l/2}} \Phi_{lk}((\rho n)^{\frac{\theta + 1}{2}} z) n^{(\pre)}_k(z) n_k(z)^{-1} e^{n m_k(z)},
  \end{split}
\end{equation}
    where $M$, $N$, $N^{(\pre)}$, $\mathsf{P}^{(0)}$ and $P_0$ are defined in \eqref{def:M}, \eqref{def:N}, \eqref{eq:Npre} \eqref{def:sfP0} and \eqref{def:P0},  respectively, and $\Phi$ is the solution of the RH problem \ref{RHP:general_model}. 
We note that
\begin{align} \label{eq:est_v_k}
  v_0(s) = {}& 1 + \bigO(n^{-\frac{1}{2(2\theta + 1)}}), & v_k(s) = {}& \bigO(n^{\frac{2k\theta - 1}{2(2\theta + 1)}}), \quad \text{for } k = 1, \dotsc, \theta.
\end{align}
By tracing back the transformations $Y \to T \to S \to Q$ and \eqref{eq:U_0_def} and \eqref{eq:U_0_def_2}, we prove \eqref{eq:asy_p_n}.


\paragraph{Proof of \eqref{eq:asy_q_n}}

Similar to \eqref{def:v}, we define
\begin{equation}\label{def:tildev}
  \v(s) =
  \begin{cases}
    \R_1(e^{-\frac{\pi i}{\theta}} s^{1 + \frac{1}{\theta}}), & \arg s \in (0, \frac{\pi}{\theta + 1}), \\
    \R_1(e^{\frac{\pi i}{\theta}} s^{1 + \frac{1}{\theta}}), & \arg s \in (-\frac{\pi}{\theta + 1}, 0), \\
    \R_2(-(-s)^{1 + \frac{1}{\theta}}), & \arg s \in (\frac{\pi}{\theta + 1}, \pi] \cup (-\pi, -\frac{\pi}{\theta + 1}),
  \end{cases}
\end{equation}
for $0 < \lvert s \rvert < \epsilon r^{\theta/(\theta + 1)}_n = \epsilon n^{-\theta/(2\theta + 1)}$ with $r_n$  defined in \eqref{def:rn}.
Since $\R_1$ and $\R_2$ are defined by $\V^{(0)}$ by \eqref{def:tildeR1} and \eqref{def:tildeR2} respectively, and $\V^{(0)}$ satisfies RH problem \ref{rhp:tildeV0}, we find that $\v(s)$ can be extended analytically in the disc $\lvert s \rvert \leq \epsilon n^{-\theta/(2\theta + 1)}$, and thus admits the Taylor expansion $\v(s) = \sum^{\infty}_{k = 0} \c_k s^k$ there. Moreover, as $\epsilon$ is small enough, we see from \eqref{def:tildev}, \eqref{eq:scalar_tildeR_defn} and Lemma \ref{lem:tildetRest} that uniformly for $s$ in this disk
\begin{equation}\label{eq:esttildev}
  \v(s)=1+\bigO(n^{-\frac{1}{2(2\theta+1)}}).
\end{equation}
Then analogous to \cite[Equation (5.24)]{Wang-Zhang21}, we define for $\lvert s \rvert \leq \epsilon^{\theta + 1} n^{-\theta(\theta + 1)/(2\theta + 1)}$,
\begin{equation}\label{eq:defn_tildev_k(s)}
  \v_k(s) = \frac{1}{2\pi i} \oint_{\lvert \zeta \rvert = \epsilon n^{-\theta/(2\theta + 1)}} \frac{\v(\zeta)}{\zeta^{k + 1}(1 - s/\zeta^{\theta + 1})} d\zeta  = \sum^{\infty}_{l = 0} \c_{k + l(\theta + 1)} s^{l}, \quad k = 0, 1, \dotsc, \theta,
\end{equation}
which are analytic near the origin and $\v(s)=\sum_{k=0}^\theta s^k \v_k(s^{\theta+1})$ for $\lvert s \rvert < \epsilon n^{-\theta/(2\theta + 1)}$. Hence, analogous to \cite[Equation (5.9)]{Wang-Zhang21}, for $k=0,1,\ldots,\theta$, we have

\begin{equation}
  \v_0(s) = 1 + \bigO(n^{-\frac{1}{2(2\theta + 1)}}), \quad \v_k(s) = \bigO(n^{\frac{2k\theta - 1}{2(2\theta + 1)}}), \quad k = 1, \dotsc, \theta,
\end{equation}

\begin{equation} \label{eq:tildeQ}
  \begin{split}
    \U_k(z^{\theta}) = {}& \sum^{\theta}_{i = 0} \V^{(0)}_i(z^{\theta}) \widetilde{P}^{(0)}_{i, k}(z^{\theta}) = \sum^{\theta}_{i = 0} \v_i(-z^{\theta}) \Pmodeltilde^{(0)}_{i, k}(z^{\theta}) \\
    = {}& \sum^{\theta}_{l = 0} \frac{\v_l(-z)}{(\rho n)^{l/2}} \Phitilde_{lk}((\rho n)^{\frac{\theta + 1}{2}} z) \n^{(\pre)}_k(z) \n_k(z)^{-1} e^{-n m_k(z)},
  \end{split}
\end{equation}
where $M$, $\N$, $\N^{(\pre)}$, $\widetilde{\mathsf{P}}^{(0)}$ and $\widetilde{P}^{(0)}$ are defined in \eqref{def:M}, \eqref{def:tildeN}, \eqref{eq:Ntildepre}, \eqref{def:tildesfP0} and \eqref{def:tildeP0}, respectively, and $\Phitilde$ is the solution of the RH problem \ref{RHP:general_model_tilde}. 
By tracing back the transformations $\Y \to \T \to \St \to \Q$ and \eqref{eq:Utilde_0_def} and \eqref{eq:Utilde_0_def_2}, we prove \eqref{eq:asy_q_n}.

%

\paragraph{Proof of \eqref{eq:asy_kappa}}

We see from the transformations $Y \to T \to S \to Q$ and \eqref{def:R1} and \eqref{def:R2} that for $z \in \halfH \setminus [0,\infty)$ large enough,
\begin{equation}\label{eq:Y2rep}
  Y_2(z) = R_2(z) P^{(\infty)}_2(z) e^{n(\ell_t - \gfntilde_t(z))},
\end{equation}
where the functions $R_2(z)$, $P^{(\infty)}_2(z)$ and $\gfntilde_t(z)$ are given in \eqref{def:R2}, \eqref{eq:Pinfty_2} and \eqref{eq:gfn_gfntilde_t}, respectively.
As $z \to \infty$, it is readily seen from the definitions of $\gfntilde(z)$ and $P^{(\infty)}_2(z)$, and \eqref{eq:Pinfty_in_one} that
\begin{equation} \label{eq:entglim}
  \lim_{z \to \infty} \frac{e^{n\gfntilde_t(z)}}{z^{n\theta}} = 1, \qquad  \lim_{z \to \infty} z^{\theta} P^{(\infty)}_2(z) = \lim_{s \to 0}(J_{c(t)}(s))^{\theta} \tP(s) = \frac{c(t)^{\alpha + 1} i}{\sqrt{\theta}},
\end{equation}
where $J_{c(t)}$ is defined in \eqref{eq:J_function} with $c(t)$ given in Section \ref{subsec:deformation_algebraic}, and $\P$ is defined in \eqref{eq:Pinfty_in_one}. From \eqref{eq:scalar_R_defn} and Lemma \ref{lem:tRest}, we have
\begin{equation}\label{eq:R2lim}
  \lim_{z \to \infty} R_2(z) = \lim_{z \to \infty} \tR(\Iinv_{2, t}(z)) =\lim_{s \to 0} \tR(s) = 1 + \bigO ( n^{-\frac{1}{2(2\theta + 1)}} ), \quad n\to \infty,
\end{equation}
where $\Iinv_{2, t}$ is defined in Section \ref{subsec:deformation_C1} and $\tR$ is defined in \eqref{eq:scalar_R_defn}. Inserting \eqref{eq:entglim} and \eqref{eq:R2lim} into \eqref{eq:Y2rep} and using \eqref{eq:kapparep}, we  have  \eqref{eq:asy_kappa}.


\subsection{Proof of Theorem \ref{thm:universality}}

In this subsection, we prove the following proposition, which then implies Theorem \ref{thm:universality} directly. 
\begin{prop} \label{prop:universality_in_three}
  Let $V$ satisfy the regularity condition as in Theorem \ref{thm:universality}, and $t = 1 - \sqrt{A_1} \tau/\sqrt{n}$, where $\tau$ is in a compact subset of $\realR$ and $A_1$ is defined in \eqref{eq:defn_rho}. Let $x, y$ be in a compact subset of $[0, +\infty)$. Suppose $\epsilon > 0$ is any constant.
  \begin{enumerate}
  \item
    Let $M$ be a positive constant. Then, we have 
    \begin{multline} \label{eq:universality_left_part}
      \lim_{n \to \infty} \frac{1}{(\rho n)^{\frac{\theta + 1}{2\theta}}} x^{\alpha} e^{-nV_t(x)}\sum^{n - 1}_{j = \lfloor n - M n^{1/2} \rfloor} \frac{1}{\kappa^{(V_t)}_{n, j}}p^{(V_t)}_{n, j} \left( \frac{x}{(\rho n)^{\frac{\theta + 1}{2\theta}}} \right) q^{(V_t)}_{n, j} \left( \frac{y^{\theta}}{(\rho n)^{\frac{\theta + 1}{2}}} \right) = \\
      \frac{\theta}{2\pi} x^{\alpha} \int^{\tau +M/\sqrt{A_1}}_{\sigma = \tau} \phi^{(\sigma)}(x) \phitilde^{(\sigma)}(y) d\sigma.
    \end{multline}
  \item
    Let $\delta = \delta(\epsilon) > 0$ be a small enough constant. Then for all large enough $n$, we have 
    \begin{equation} \label{eq:universality_middle_part}
      \left\lvert \frac{1}{(\rho n)^{\frac{\theta + 1}{2\theta}}} x^{\alpha} e^{-nV_t(x)}\sum^{\lfloor \delta n \rfloor}_{j = 0}  \frac{1}{\kappa^{(V_t)}_{n, j}}p^{(V_t)}_{n, j} \left( \frac{x}{(\rho n)^{\frac{\theta + 1}{2\theta}}} \right) q^{(V_t)}_{n, j} \left( \frac{y^{\theta}}{(\rho n)^{\frac{\theta + 1}{2}}} \right) \right\rvert < \epsilon.
    \end{equation}
  \item
    For any $\delta > 0$, if $M = M(\epsilon)$ is large enough, then for all large enough $n$, we have
    \begin{equation} \label{eq:universality_right_part}
      \left\lvert \frac{1}{(\rho n)^{\frac{\theta + 1}{2\theta}}} x^{\alpha} e^{-nV_t(x)}\sum^{\lfloor n - M n^{1/2} \rfloor - 1}_{j = \lfloor \delta n \rfloor + 1} \frac{1}{\kappa^{(V_t)}_{n, j}}p^{(V_t)}_{n, j} \left( \frac{x}{(\rho n)^{\frac{\theta + 1}{2\theta}}} \right) q^{(V_t)}_{n, j} \left( \frac{y^{\theta}}{(\rho n)^{\frac{\theta + 1}{2}}} \right) \right\rvert < \epsilon.
    \end{equation}
    
  \end{enumerate}
\end{prop}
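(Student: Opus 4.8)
\textbf{Proof proposal for Proposition \ref{prop:universality_in_three}.}

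The plan is to split the Christoffel–Darboux-type sum $K^{(V;N)}_n(x,y)$ in \eqref{eq:sum_form_K} at the indices $j = \lfloor\delta n\rfloor$ and $j = \lfloor n - Mn^{1/2}\rfloor$ and treat the three resulting pieces by different asymptotic regimes, exactly paralleling \cite[Section 5]{Wang-Zhang21}. For the top piece \eqref{eq:universality_left_part}, I would write $j = n - \lfloor s\sqrt{n}\rfloor$ with $s$ ranging over a compact interval, observe that $p^{(V_t)}_{n,j}$ and $q^{(V_t)}_{n,j}$ for such $j$ are governed by the same transition-regime asymptotics as in Theorem \ref{thm:main} but with shifted parameter $\tau \mapsto \tau + s/\sqrt{A_1}$ (this is because replacing $n$ by $j = n - \lfloor s\sqrt{n}\rfloor$ in \eqref{eq:t_and_tau} has the effect of shifting $\tau$; one must check the $\bigO$-error terms remain uniform on the compact $s$-range). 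Then the sum over $j$ becomes a Riemann sum in the variable $\sigma = \tau + s/\sqrt{A_1}$ with mesh $1/\sqrt{A_1 n}$, which converges to the integral $\frac{\theta}{2\pi} x^\alpha \int_\tau^{\tau + M/\sqrt{A_1}} \phi^{(\sigma)}(x)\phitilde^{(\sigma)}(y)\,d\sigma$; the product $\frac{1}{\kappa_{n,j}} x^\alpha e^{-nV_t(x)} p_{n,j}(\cdots) q_{n,j}(\cdots)$ contributes the prefactor $\frac{\theta}{2\pi}x^\alpha$ after the exponential factors $e^{nV_t}$, $e^{n(\ell_t - \Re\gfntilde_t(0))}$, $e^{n\Re\gfntilde_t(0)}$ and $e^{n\ell_t}$ from \eqref{eq:asy_p_n}–\eqref{eq:asy_kappa} cancel in pairs, together with the scaling factor $(\rho n)^{-(\theta+1)/(2\theta)}$ which is absorbed by $1/\sqrt{A_1 n}$ versus $d\sigma$.

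For the bottom piece \eqref{eq:universality_middle_part}, the sum runs over $j \le \lfloor\delta n\rfloor$, i.e. over polynomials of degree at most a small fraction of $n$; here the relevant potential is $V_{t'}$ with $t' = n/j$ bounded away from $1$, so $V_{t'}$ is in the \emph{soft edge} regime by the hypothesis of Theorem \ref{thm:universality}, and $0$ lies strictly outside the support $[\aend(t'),\bend(t')]$ of the equilibrium measure. One then invokes Lemma \ref{lem:tail} (with $n$ there replaced by $j$ and $t$ by $t' = j/N$; note $u = t'-1$ is of order $1$, hence $(-u)$-powers are harmless constants) to bound each term: $|x^\alpha e^{-nV_t(x)} \kappa_{n,j}^{-1} p_{n,j}(\cdots) q_{n,j}(\cdots)|$ is exponentially small in $j$ because $\Re\gfnhat_{t'} + \Re\gfnhattilde_{t'} - V_{t'} - \ellhat_{t'} = \Re\phihat_{t'} < -\delta' < 0$ at the rescaled argument $x/(\rho n)^{(\theta+1)/(2\theta)}$, which sits in $[0,\rhat_j(t')]$ for $n$ large. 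Actually there is a subtlety: here the $x$-argument is scaled by $(\rho n)^{(\theta+1)/(2\theta)}$, not by $(\rho j)^{\cdots}$, so $x/(\rho n)^{(\theta+1)/(2\theta)}$ is \emph{smaller} than $\rhat_j$ when $j/n$ is small; one needs to keep track of the exact shrinkage rate $\rhat_j = \epsilon'(-u)^{1+1/\theta}$ versus $(\rho n)^{-(\theta+1)/(2\theta)}$ and verify the inclusion. Summing the exponentially small terms over $j \le \delta n$ gives a total bounded by $C\delta n \cdot e^{-cn}$ (or, more carefully, one sums a geometric-type series in $j$), which is $<\epsilon$; and one should also note the $V_t$ versus $V_{t'}$ mismatch in the exponent $e^{-nV_t(x)}$ versus $e^{n\Re\gfnhat_{t'}(x)\cdot(j/n)}$ — this is handled by absorbing the discrepancy into the $\Re\phi$ sign, as in \cite[Section 5]{Wang-Zhang21}.

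The middle piece \eqref{eq:universality_right_part}, with $\delta n < j < n - Mn^{1/2}$, is the main obstacle. For these $j$ the parameter $t' = j/N$ satisfies $C^{-1} < t' < 1 - C'n^{-1/2}$, so $V_{t'}$ is again in the soft edge regime but now with the soft left edge $\aend(t')$ very close to $0$ — of order $(1-t')^{(\theta+1)/\theta}$, which shrinks as $j \uparrow n$. One must show that even so, $\Re\phihat_{t'}$ evaluated at the rescaled argument $x/(\rho n)^{(\theta+1)/(2\theta)}$ stays negative enough that each summand is exponentially (or at least summably) small, and that the sum over this $j$-range of length $\bigO(n)$ is $<\epsilon$ once $M$ is large. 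This is exactly where Proposition \ref{prop:phihat}\ref{enu:prop:phihat_1} and its extension in Proposition \ref{prop:extension_t} enter: they give $\Re\phihat_{t'}(x') < -\delta(-u)^2$ uniformly for $x' \in [0, \aend(t') - \epsilon'(-u)^{(\theta+1)/\theta}]$, so one needs $x/(\rho n)^{(\theta+1)/(2\theta)}$ to fall in this interval. Since $(-u) = (1-t') \ge M/\sqrt{n}$ in this regime, one has $(-u)^2 \ge M^2/n$, so each summand is $\le e^{-c M^2}$ times the soft-edge prefactors from Lemma \ref{lem:tail}; summing over $\bigO(n)$ values of $j$ and using that the prefactors $(-u)^{\pm(\ldots)}$ grow only polynomially in $1/(-u) \le \sqrt{n}/M$ gives a total of order $n^{O(1)} e^{-cM^2}$, which can be made $<\epsilon$ by choosing $M$ large (independently of $n$). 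The delicate bookkeeping — tracking how the $(-u)$-powers from $\Phat^{(\infty)}_1$, $\Phattilde^{(\infty)}_1$, $\kappa_n$ in Lemma \ref{lem:tail} combine with the $e^{-\epsilon'' n(-u)^2}$ decay and with the mismatch between the $n$-scaling of the argument and the $j$-dependence of $\aend(t')$, uniformly across $\delta n < j < n - Mn^{1/2}$ — is the part requiring the most care, and it is carried out exactly as in \cite[Section 5]{Wang-Zhang21}. Finally, combining the three pieces: \eqref{eq:universality_left_part} with $M \to \infty$ gives the full integral $\frac{\theta}{2\pi}x^\alpha\int_\tau^\infty\phi^{(\sigma)}(x)\phitilde^{(\sigma)}(y)\,d\sigma$ (the integrand decays as $\sigma\to+\infty$ by Lemma \ref{lem:phi_phitilde_sy}\ref{enu:lem:phi_phitilde_sy:2}, making the tail controllable), while \eqref{eq:universality_middle_part} and \eqref{eq:universality_right_part} show the complementary sums are negligible, yielding \eqref{eq:kernel_limit_formula} and hence Theorem \ref{thm:universality}.
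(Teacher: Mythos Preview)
Your treatment of parts~1 and~3 is essentially correct and matches the paper. For part~1 the paper uses exactly the scaling identity $p^{(V_t)}_{n,j}=p^{(V_{(j/n)t})}_{j,j}$ (equation~\eqref{eq:scaling_argument}), applies Theorem~\ref{thm:main} at the shifted parameter $\tau+(n-j)/\sqrt{A_1 n}$, and recognizes the resulting sum as a Riemann sum. For part~3 the paper combines \eqref{eq:scaling_argument} with Lemma~\ref{lem:tail} and Proposition~\ref{prop:phihat} as you outline; the summation over $j$ is handled just as you suggest, with the decay $e^{-\epsilon'' j(1-(j/n)t)^2}$ producing a convergent Gaussian-type tail once $M$ is large.

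Your approach to part~2, however, has a genuine gap. First, there is a scaling confusion: you write $t'=n/j$, but the correct parameter from \eqref{eq:scaling_argument} is $t'=(j/n)t$ (equivalently $j/N$, as you write later), which for $j\le\delta n$ is \emph{small}, not large. More importantly, Lemma~\ref{lem:tail} is stated only for $t'\in[C^{-1},1-Cn^{-1/2}]$, with the implied constants depending on $C$; the underlying soft-edge RH analysis of Section~\ref{sec:RH_soft_edge} does not extend uniformly down to $t'\to 0_+$ because the equilibrium measure of $V_{t'}$ degenerates (its support shrinks to a neighbourhood of the minimizer of $V$). Thus Lemma~\ref{lem:tail} simply cannot be invoked for the indices $j$ with $j/n$ below any fixed positive threshold, and your argument leaves those terms uncontrolled.

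The paper handles part~2 by a completely different, elementary argument that avoids the RH machinery. It first shows $V(0)>v_{\min}:=\min_{[0,\infty)}V$ and that for sufficiently small $t^*$ the support $[\aend(t^*),\bend(t^*)]$ of $\mu^{(V_{t^*})}$ lies inside $\{V<v_{\min}+\epsilon\}$, hence is bounded away from $0$. A single application of the RH analysis at the fixed level $t^*$ localizes the zeros of $p^{(V_{t^*})}_{n,n}$; then the interlacing property of the zeros of $\{p^{(V_t)}_{n,m}\}_m$ forces the zeros of $p^{(V_t)}_{n,m}$, for every $m\le n t^*$, to lie in a fixed interval $[\aend(t^*)-\delta',\bend(t^*)+\delta']$ bounded away from $0$ (and similarly for $q^{(V_t)}_{n,m}$). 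This yields the crude bounds $|p^{(V_t)}_{n,m}(x)|,|q^{(V_t)}_{n,m}(x^\theta)|\le(\bend(t^*)+2\delta')^{O(m)}$ for $x$ near $0$, together with a direct lower bound on $\kappa^{(V_t)}_{n,m}$ obtained by restricting the biorthogonality integral to $[\bend(t^*)+\delta',\bend(t^*)+2\delta']$. Combining these and using $V(x)>v_{\min}+2\epsilon$ near $0$ shows each summand is $\le e^{-\epsilon n/2}$, which after summing over $m\le\delta n$ gives~\eqref{eq:universality_middle_part}. You should replace your Lemma~\ref{lem:tail}-based argument for part~2 with this zero-location argument.
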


\subsubsection{Proof of \eqref{eq:universality_left_part}}

We note that for $j = 1, \dotsc, n - 1$,
\begin{align} \label{eq:scaling_argument}
  p^{(V_t)}_{n, j}(x) = {}& p^{(V_{\frac{j}{n} t})}_{j, j}(x), & q^{(V_t)}_{n, j}(x) = {}& q^{(V_{\frac{j}{n} t})}_{j, j}(x), & \kappa^{(V_t)}_{n, j} = {}& \kappa^{(V_{\frac{j}{n} t})}_{j, j}.
\end{align}
Let $M$ be a positive constant and $m  \in [ \lfloor n - M n^{1/2} \rfloor, n - 1]$. 
Recalling $t=1 - \frac{\sqrt{A_1} \tau}{\sqrt{n}}$, we have for large $n$
\begin{equation}\label{eq:tau_m}\frac{m}{n}t= \frac{m}{n}\left( 1 - \frac{\sqrt{A_1} \tau}{\sqrt{n}}\right)=1-\frac{\sqrt{A_1} }{\sqrt{m}}\left(\tau+\frac{n-m}{\sqrt{A_1n}}\right)+\bigO(n^{-1}),\end{equation}
and 
\begin{equation}\label{eq:x_m}
\frac{x}{(\rho n)^{\frac{\theta + 1}{2\theta}}}=\frac{x}{(\rho m)^{\frac{\theta + 1}{2\theta}}} (1+\bigO(n^{-1/2})), \quad 
 \frac{y^{\theta}}{(\rho n)^{\frac{\theta + 1}{2}}}= \frac{y^{\theta}}{(\rho m)^{\frac{\theta + 1}{2}}}(1+\bigO(n^{-1/2})).\end{equation}
From Theorem \ref{thm:main} and \eqref{eq:scaling_argument}-\eqref{eq:x_m}, we have as $n\to\infty$
\begin{multline} \label{eq:individual_pq_product}
 \frac{1}{(\rho n)^{\frac{\theta + 1}{2\theta}}} \frac{1}{\kappa^{(V_t)}_{n, m}} p^{(V_t)}_{n, m} \left( \frac{x}{(\rho n)^{\frac{\theta + 1}{2\theta}}} \right) q^{(V_t)}_{n, m} \left( \frac{y^{\theta}}{(\rho n)^{\frac{\theta + 1}{2}}} \right) \left( \frac{x}{(\rho n)^{\frac{\theta + 1}{2\theta}}} \right)^{\alpha} \exp \left[ -nV \left( \frac{x}{(\rho n)^{\frac{\theta + 1}{2\theta}}} \right) \right] = \\
  \frac{\theta}{2\pi} x^{\alpha} \frac{1}{\sqrt{A_1 n}} \left( \phi^{(\tau+\frac{n-m}{\sqrt{A_1n}})}(x) \phitilde^{(\tau+\frac{n-m}{\sqrt{A_1n}})}(y) + \bigO(n^{-\frac{1}{2\theta + 1}}) \right),
\end{multline}
and the error terms are uniform for  $m  \in [ \lfloor n - M n^{1/2} \rfloor, n - 1]$. Since $\phi^{(\sigma)}(x)$ and $\phitilde^{(\sigma)}(y)$ are continuous for $x,y\in [0,\infty)$ and $\sigma\in\mathbb{R}$, the summation of the leading term on the right-hand side of \eqref{eq:individual_pq_product}  is a Riemann sum of the definite integral on the right-hand side of \eqref{eq:universality_left_part}. Therefore, noting that the  summation  of the error terms with the index $m$ running from $\lfloor n - M n^{1/2} \rfloor$ to $n - 1$ is of order $\bigO(n^{-1/(2\theta + 1)})$, we have \eqref{eq:universality_left_part}. 

\subsubsection{Proof of \eqref{eq:universality_middle_part}}

Since $V_t$ satisfies Regularity Conditions \ref{reg:primary} and \ref{reg:soft-edge}, for all $t \in (0, 1)$, the solution of the Euler-Lagrange equations \eqref{eq:E-L_1} and \eqref{eq:E-L_2}, with $V$ replaced by $V_t$, is a measure supported on an interval $[\aend(t), \bend(t)]$.

We denote $v_{\min} = \min \{ V(x) : x \in [0, +\infty) \}$. The regularity condition implies that $V(0) > v_{\min}$. To see this, by Regularity Condition \ref{reg:soft-edge}, we have that for any $t \in (0, 1)$, $V_t(\aend(t)) < V_t(0)$,  which is equivalent to $V(\aend(t)) < V(0)$.

Next, we show that for any $\epsilon > 0$, there is $T_{\epsilon}$ such that if $t \in (0, T_{\epsilon})$, then $\max \{ V(x) : x \in [\aend(t), \bend(t)] \} < v_{\min} + \epsilon$. This implies that as $t \to 0_+$, $\bend(t) - \aend(t)$ converges to $0$. To see this, we suppose the opposite is true. Then  there exists a sequence $t_k \to 0_+$, such that  $v_k := \max \{ V(x) : x \in [\aend(t_k), \bend(t_k)] \} > v_{\min} + \epsilon$. We denote
\begin{equation} \label{eq:eq_measure_t}
  F_t(x; \nu) = \int \log \lvert x - y \rvert^{-1} d\nu(y) + \int \log \lvert x^{\theta} - y^{\theta} \rvert^{-1} d\nu(y) + V_t(x).
\end{equation}
Then the equilibrium measure $\mu^{(V_{t_k})}$ is supported on $[\aend(t_k), \bend(t_k)]$, and satisfies
\begin{equation}
  F_{t_k}(x; \mu^{(V_{t_k})}) = -\ell_{t_k}, \quad x \in [\aend(t_k), \bend(t_k)].
\end{equation}
Moreover, we have that 
\begin{equation}
  I^{(V_{t_k})}(\mu^{(V_{t_k})}) = \int F_{t_k}(x; \mu^{(V_{t_k})}) d\mu^{(V_{t_k})}(x) = -\ell_{t_k}
\end{equation}
is the minimum of $\{ I^{(V_t)}(\nu) : \nu \text{ is a probability measure on } [0, +\infty) \}$. However, we have
\begin{equation}
  \begin{split}
    F_{t_k}(x; \mu^{(V_{t_k})}) \geq {}& \max_{x \in [\aend(t_k), \bend(t_k)]} V_{t_k}(x) + \log (\bend(t_k) - \aend(t_k))^{-1} + \log (\bend(t_k)^{\theta} - \aend(t_k)^{\theta})^{-1} \\
    = {}& \frac{v_k}{t_k} - \log [(\bend(t_k) - \aend(t_k)) (\bend(t_k)^{\theta} - \aend(t_k)^{\theta})].
  \end{split}
\end{equation}
When $t_k$ is small enough, we have $F_{t_k}(x; \mu^{(V_{t_k})}) > v_{\min}/t_k + \epsilon/t_k$. On the other hand, it is easy to see that we can construct a probability measure $\nu$ with $\max \{ V(x) : x \in \supp(\nu) \} < v_{\min} + \epsilon/2$, and then when $t_k$ is small enough, $F_{t_k}(x; \mu^{(V_{t_k})}) < v_{\min}/t_k + \epsilon/t_k$ by a rough estimate of \eqref{eq:eq_measure_t}. This is contradictory to the minimal property of $\mu^{(V_{t_k})}$, and the claim is proved.

Hence, we have a $t^* \in (0, 1)$ such that $\max \{ V(x) : x \in [\aend(t^*), \bend(t^*)] \} < v_{\min} + \epsilon$. By choosing a possibly smaller $\epsilon$ and $t^*$, we can assume that for some $\delta' > 0$, $\max \{ V(x) : x \in [\aend(t^*) - 2\delta', \bend(t^*) + 2\delta'] \} < v_{\min} + \epsilon$, and $\min \{ V(x) : x \in [0, \delta'] \} > v_{\min} + 2\epsilon$.

Because functions $\{ x^{k\theta} \}^{n - 1}_{k = 1}$ form a Chebyshev system in the sense of \cite[First definition in Section 4.4]{Nikishin-Sorokin91}, all the zeros of $p_k(x)$ in $\compC$ are on $\realR_+$, and all the zeros of $q_k(x^{\theta})$ in $\halfH$ are on $\realR_+$.

By taking the transformations $Y \to \That \to \Shat \to \Qhat$, and using \eqref{eq:Rhat_1} and \eqref{eq:Rhat_2}, we find that for all $x \in [0, \aend(t) - \delta'] \cup [\bend(t) + \delta', +\infty)$, $p^{(V_{t^*})}_{n, n}(x) \neq 0$ if $n$ is large enough. To see this, we note that for $x \in (\rhat_n, \aend(t) - \delta'] \cup [\bend(t) + \delta', +\infty)$,
\begin{equation}
  p^{(V_{t^*})}_{n, n}(x) = e^{n \gfnhat_{t^*}(x)} \Phat^{(\infty)}_1(x) \Rhat_1(x),
\end{equation}
where $\rhat_n$, $\Phat^{(\infty)}_1(x)$ and $\Rhat_1(x)$ are defined in \eqref{eq:defn_rhat_n}, \eqref{eq:Phat^infty_1} and \eqref{eq:Rhat_1} with $t$ replaced by $t^*$, and it has a slightly different formula on $[0, \rhat_n]$. By the formulas of $\Phat^{(\infty)}_1(x)$ and $\Rhat_1(x)$, we find that they have no zeros there, and prove this claim. By a simple scaling argument, we find that if $n$ is large enough, $p^{(V_t)}_{n, m}(x)$ is nonzero on $[0, \aend(t) - \delta'] \cup [\bend(t) + \delta', +\infty)$ if $m/n = t^*$. Then by the interlacing property of $p^{(V_t)}_{n, m}$, we have that for all $m < n t^*$, the zeros of $p^{(V_t)}_{n, m}(x)$ lie in $[\aend(t) - \delta', \bend(t) + \delta']$. This argument also applies to $q^{(V_t)}_{n, m}(x^{\theta})$.

For $x \in [0, \delta']$, if $m/n$ is small enough, we have 
\begin{align}
  \lvert p^{(V_t)}_{n, m}(x) \rvert \leq {}& (\bend(t) + 2\delta')^m < e^{\frac{1}{4} \epsilon n}, & \lvert q^{(V_t)}_{n, m}(x) \rvert \leq {}& (\bend(t) + 2\delta')^{(\theta + 1)m} < e^{\frac{1}{4} \epsilon n},
\end{align}
and, if $p^{(V_t)}_{n, m}(x) = \prod^m_{k = 1} (x - c_k)$, we define $\check{p}^{(V_t)}_{n, m}(x) = \prod^m_{k = 1} (x^{\theta} - c^{\theta}_k)$, and have
\begin{equation}
  \begin{split}
    \kappa^{(V_t)}_{n, m} = {}& \int^{\infty}_0 p^{(V_t)}_{n, m}(x) q^{(V_t)}_{n, m}(x) x^{\alpha} e^{-nV_t(x)} dx \\
    = {}& \int^{\infty}_0 p^{(V_t)}_{n, m}(x) \check{p}^{(V_t)}_{n, m}(x) x^{\alpha} e^{-nV_t(x)} dx \\
    \geq {}& \int^{\bend(t) + 2\delta'}_{\bend(t) + \delta'} p^{(V_t)}_{n, m}(x) \check{p}^{(V_t)}_{n, m}(x) x^{\alpha} e^{-nV_t(x)} dx \\
    \geq {}& \int^{\bend(t) + 2\delta'}_{\bend(t) + \delta'} (x - \bend(t) - \delta')^m (x^{\theta} - (\bend(t) + \delta')^{\theta})^m x^{\alpha} e^{-nV_t(x)} dx,
  \end{split}
\end{equation}
and if $m/n$ is small enough,
\begin{equation}
  \kappa^{(V_t)}_{n, m} > e^{-n(v_{\min} + \epsilon)}.
\end{equation}
In conclusion, we have that for $x, y \in [0, \delta']$,
\begin{equation}
  \frac{1}{\kappa^{(V_t)}_{n, m}} \lvert p^{(V_t)}_{n, m}(x) q^{(V_t)}_{n, m}(y) \rvert x^{\alpha} e^{-nV_t(x)} < e^{-\frac{1}{2} \epsilon n}.
\end{equation}
As a consequence, we have \eqref{eq:universality_middle_part}. 

\subsubsection{Proof of \eqref{eq:universality_right_part}}

Let $x = \xi (\rho n)^{-\frac{\theta + 1}{2\theta}}$ and $y = \eta (\rho n)^{-\frac{\theta + 1}{2\theta}}$. 
Let $C$ be a large constant with $C > \delta^{-1}$. Then,  for any $j \in [C^{-1} n, n - C n^{1/2}]$, we have 
$\frac{j}{n}t\in [C^{-1}, 1 - C n^{-1/2}]$. Therefore, from Lemma \ref{lem:tail} and \eqref{eq:scaling_argument}, we have that  as $n \to \infty$, for some $C' > 0$ depending on $C$,
\begin{multline}\label{eq:tail_expand}
  \left\lvert \frac{1}{(\rho n)^{\frac{\theta + 1}{2\theta}}} \frac{1}{\kappa^{(V_t)}_{n, j}} x^{\alpha} e^{-nV_t(x)} p^{(V_t)}_{n, j}(x) q^{(V_t)}_{n, j}(y^{\theta}) \right\rvert \\
  \leq C' \xi^{\alpha} (\rho n)^{-\frac{\theta + 1}{2\theta} (\alpha + 1)} \left( 1 - \frac{j}{n} t \right)^{-(\alpha + \frac{\alpha + 1}{\theta})} e^{j(\gfnhat_{\frac{j}{n} t}(x) + \gfnhat_{\frac{j}{n} t}(y) - V_{\frac{j}{n} t}(x) - \ellhat_{\frac{j}{n} t})}.
\end{multline}
For $j \in [C^{-1} n, n - C n^{1/2}]$, we have, for some $C'' > 0$ depending on $C$,
\begin{equation}\label{eq:tail_est1}
  (\rho n)^{-\frac{\theta + 1}{2\theta} (\alpha + 1)} \left( 1 - \frac{j}{n} t \right)^{-(\alpha + \frac{\alpha + 1}{\theta})} \leq C'' n^{-\frac{1}{2}}.
\end{equation}
By Proposition \ref{prop:phihat} and its generalization in Proposition \ref{prop:extension_t}, we find that
\begin{equation}\label{eq:tail_est2}
  \gfnhat_{\frac{j}{n} t}(x) + \gfnhat_{\frac{j}{n} t}(y) - V_{\frac{j}{n} t}(x) - \ellhat_{\frac{j}{n} t}\leq - \epsilon'' \left( 1 - \frac{j}{n} t \right)^2.
\end{equation}
Substituting \eqref{eq:tail_est1} and \eqref{eq:tail_est2} into \eqref{eq:tail_expand}, we obtain \eqref{eq:universality_right_part}, if $M > C$ is large enough.

\section{Solvability of RH problem \ref{RHP:general_model} via vanishing lemma} \label{sec:vanishing_lemma}

In this section, we show that RH problem \ref{RHP:general_model} has a unique solution. More precisely, we prove the following result.

\begin{prop} \label{lem:existence_H_to_S}
  For every $\tau \in \realR$, the following holds:
  \begin{enumerate}
  \item
    RH problem \ref{RHP:general_model} is uniquely solvable.
  \item
    The solution $\Phi(\xi) = \Phi^{(\tau)}(\xi)$ has a full asymptotic expansion in powers of $\xi^{-1}$ as follows:
    \begin{equation} \label{eq:asy_expansion_H_to_S}
      \Phi^{(\tau)}(\xi) \sim \left( I + \sum^{\infty}_{k = 1} M_k \xi^{-k} \right) \diag \left( e^{-\frac{k}{\theta + 1} \pi i} \xi^{\frac{k}{\theta + 1}} \right)^{\theta}_{k = 0} \Omega_{\pm} e^{-\Theta(\xi)},
    \end{equation}
    as $\xi \to \infty$, uniformly in $\compC \setminus \{ \realR \cup i\realR \}$. Here $M_k = M_k(\tau)$ are real valued and depending on $\tau$ analytically.
  \end{enumerate}
\end{prop}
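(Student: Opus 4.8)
The standard route for establishing unique solvability of a model RH problem of this type is: (i) prove a vanishing lemma asserting that the homogeneous RH problem (same jumps, same local behavior at $0$, but decaying at $\infty$) has only the trivial solution; (ii) invoke the general equivalence between RH problems and singular integral equations (a Fredholm-type argument) to convert the vanishing lemma into existence and uniqueness; (iii) extract the full asymptotic expansion from the integral equation and the analytic dependence on the parameter $\tau$. I would structure the section accordingly.

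\textbf{Step 1: Reformulation and the vanishing lemma.} First I would reduce RH problem \ref{RHP:general_model} to an equivalent RH problem that is normalized to $I$ at infinity by peeling off the explicit factor $\Upsilon(\xi)\Omega_{\pm}e^{-\Theta(\xi)}$, obtaining a function whose jumps decay to $I$ exponentially fast on $i\realR$ and are the (bounded, oscillatory) cyclic/swap matrices on $\realR$. The key new ingredient is the vanishing lemma: if $\Phi^{(0)}(\xi)$ satisfies the same jumps and the same (or weaker) growth condition at $0$ but satisfies $\Phi^{(0)}(\xi) = \bigO(\xi^{-1})\Upsilon(\xi)\Omega_{\pm}e^{-\Theta(\xi)}$ as $\xi\to\infty$, then $\Phi^{(0)}\equiv 0$. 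I would prove this by the standard $\bar\partial$-free contour-deformation-and-pairing trick: form the auxiliary matrix $H(\xi) = \Phi^{(0)}(\xi)\, C(\xi)\, \Phi^{(0)}(\bar\xi)^*$ for an appropriate piecewise-constant conjugating matrix $C(\xi)$ chosen so that $H$ is continuous across the real and imaginary axes (this is where the specific structure of $\Mcyclic$, the swap block, and the triangular jumps on $i\realR$ enters — one checks $C_+ = J^* C_- J$ on each ray). Then $H$ is analytic in each quadrant, decays at infinity, and has at worst integrable singularities at $0$ dictated by the exponents in \eqref{eq:defn_N(xi)}; a contour integral of $\operatorname{tr} H$ around the boundary of the quadrants, together with an identity or sign-definiteness argument for the "density" $\operatorname{tr}(\Phi^{(0)}_+ C' \Phi^{(0)*}_+)$ on $\realR$ coming from the $2\times2$ Bessel/Meijer-type block, forces $\Phi^{(0)}$ to vanish on $\realR$ and hence everywhere. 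This largely parallels the vanishing lemmas for the $\theta=1$ hard-to-soft model in \cite{Its-Kuijlaars-Ostensson08} and the higher-rank $3\times3$ case of \cite{Charlier-Lenells22}, but must be carried out for the $(\theta+1)\times(\theta+1)$ cyclic structure.

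\textbf{Step 2: From vanishing lemma to solvability; Step 3: asymptotics and analyticity.} Granted the vanishing lemma, unique solvability follows from the general theory (e.g. \cite{Deift99}, or the Zhou–Fokas–Its framework): the RH problem is equivalent to a singular integral equation $(I - \mathcal{C}_J)\mu = \text{(inhomogeneous term)}$ on the jump contour after a $g$-function-type conjugation that makes the jumps $L^2$-close to $I$ on the exponentially-decaying parts and constant on the rest; the operator $I - \mathcal{C}_J$ is Fredholm of index zero, and the vanishing lemma says its kernel is trivial, hence it is invertible. Local solvability near $0$ and $\infty$ is handled by the explicit local forms \eqref{eq:defn_N(xi)} and \eqref{eq:asyNew} (the endpoint parametrices are already built into the statement of the RH problem), so only the "outer" part needs the small-norm/Fredholm argument. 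For the full asymptotic expansion \eqref{eq:asy_expansion_H_to_S}, I would expand $\Phi(\xi)\,(\Upsilon(\xi)\Omega_\pm e^{-\Theta(\xi)})^{-1}$ as a formal power series in $\xi^{-1}$, verify term-by-term using the integral representation of the solution that the coefficients $M_k$ exist and the series is asymptotic (a standard consequence of the exponential decay of the jumps on $i\realR$ and the algebraic behavior on $\realR$). Reality of $M_k(\tau)$ follows from a symmetry of the RH problem under $\xi\mapsto\bar\xi$ combined with complex conjugation (using that $\beta\in\realR$, $\tau\in\realR$, and that $\Omega_\pm,\Theta$ respect this symmetry); analytic dependence on $\tau$ follows because the jump matrices depend analytically (in fact, the $i\realR$-jumps depend on $\tau$ only through $\Theta$, which is entire in $\tau$), so $I - \mathcal{C}_J$ depends analytically on $\tau$, its inverse does too on the set where it exists — which by Step 1 is all of $\realR$ — and hence so does $\mu$ and each $M_k$.

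\textbf{Main obstacle.} The crux is Step 1, the vanishing lemma for the $(\theta+1)\times(\theta+1)$ problem. The difficulty is twofold: choosing the right conjugating matrix $C(\xi)$ so that the auxiliary matrix $H$ glues continuously across both $\realR$ (with the cyclic shift $\Mcyclic$ on $\realR_-$ and the swap on $\realR_+$) and $i\realR$ (with the $\tau$-dependent triangular jumps), and then producing a genuine sign-definiteness or "no-flux" contradiction despite the oscillatory (not decaying) behavior of the jumps on $\realR$ — the $\tau$-linear term in $\Theta$ makes $\Theta$ purely imaginary on relevant parts of $\realR$, so the usual "exponentially small jump" argument does not directly apply there and one must instead exploit the algebraic structure of the $2\times2$ block together with the reality of $\beta$. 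Getting the endpoint exponents at $0$ to be mild enough that the boundary terms in the contour integral vanish (rather than merely stay bounded) is the other delicate point; this is exactly why the precise form of \eqref{eq:defn_N(xi)} with its specific powers $\xi^{-(\alpha+1-\theta)/\theta}, 1, \xi^{1/\theta}, \dots$ and the $\log\xi$ correction for integer $\alpha$ is needed.
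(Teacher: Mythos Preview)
Your overall architecture---vanishing lemma, then Fredholm alternative for existence, then analytic Fredholm for the $\tau$-dependence and reality via the conjugation symmetry---matches the paper exactly (the paper's symmetry is $\Phi^{(\tau)}(\xi)=\overline{\Phi^{(\tau)}(\bar\xi)}(I_2\oplus J_{\theta-1})$, essentially what you describe). One refinement: for the normalization the paper does not merely strip off $\Upsilon\Omega_\pm e^{-\Theta}$ but conjugates by the explicit Meijer-G parametrix $\Phi^{(\Mei)}$ of Appendix~\ref{sec:Phi^Mei}, which simultaneously absorbs the local behavior at $0$ and the constant jumps on $\realR$; the resulting contour $\hat\Sigma$ is just the unit circle plus the two half-lines on $i\realR$, with jumps that are uniformly exponentially close to $I$.

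The substantive divergence is the vanishing lemma itself. You propose the standard Hermitian pairing $H=\Phi^{(0)}C(\Phi^{(0)})^*$; the paper does \emph{not} attempt this and instead argues row by row. From a single row $(\phi_0,\dots,\phi_\theta)$ it patches the $\phi_k(z^\theta)$ across sectors into a scalar \emph{entire} function $f(z)$, together with a companion $g(z)$ built from $\phi_0$ satisfying $g_+ - g_- = f$ on $\realR_+$. A wedge-contour integral of $z^\alpha g(z)\prod_{j=0}^{\theta-1}f(ze^{2\pi ij/\theta})$ collapses to
\[
\int_0^\infty x^\alpha f(x)^2\,W(x)\prod_{k=1}^m\frac{1}{x^\theta-a_k^\theta}\,dx=0,\qquad W(x)=\prod_{j=1}^{\theta-1}f(xe^{2\pi ij/\theta}),
\]
where the $a_k$ are the zeros of $f$ on the rays $e^{2\pi ij/\theta}\realR_+$, $1\le j\le\theta-1$. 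A separate technical lemma (proved via a Cauchy-integral representation and Carlson's theorem) shows these zeros are \emph{finite} in number, so after dividing them out $W\prod_k(x^\theta-a_k^\theta)^{-1}$ has fixed sign on $(0,\infty)$; since $x^\alpha f(x)^2\ge0$ this forces $f\equiv0$, and then $g\equiv0$ by a Phragm\'en--Lindel\"of argument. This ``scalar positivity via rotational product'' is the genuinely new idea, and it sidesteps precisely the obstacle you flag: finding a conjugator $C$ compatible with both the cyclic permutation $\Mcyclic$ on $\realR_-$ and the triangular jumps on $i\realR$. Your pairing approach may be salvageable, but the paper explicitly notes that higher-rank vanishing lemmas are rare, and its argument is tailored to the Muttalib--Borodin cyclic structure rather than to a generic $(\theta+1)\times(\theta+1)$ problem.
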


We remark that Proposition \ref{lem:existence_H_to_S} is only proved for $\tau \in \realR$.

To prove this proposition, we follow the strategy in \cite[Section 5.3]{Deift-Kriecherbauer-McLaughlin-Venakides-Zhou99} and \cite[Section 2.2]{Claeys-Vanlessen07a}. We transform RH problem \ref{RHP:general_model} into an equivalent RH problem for $\Phihat$ such that the jump matrix for $\Phihat$ is continuous on the jump contour $\hat{\Sigma}$ and converges exponentially to the identity matrix as $\xi \to \infty$ on $\hat{\Sigma}$ and such that the RH problem for $\Phihat$ is normalized at infinity. To this end, we define
\begin{equation}\label{eq:Transform}
  \Phihat(\xi) =
  \begin{cases}
    \Phi(\xi) e^{\Theta(\xi)} e^{-\Lambda(\xi)} \Phi^{(\Mei)}(\xi)^{-1}, & \lvert \xi \rvert > 1, \\
    \Phi(\xi) \Phi^{(\Mei)}(\xi)^{-1}, & \lvert \xi \rvert < 1,
  \end{cases}
\end{equation}
for $\xi \in \compC \setminus \{ \realR \cup i\realR \}$, where $\Phi^{(\Mei)}$ is defined in Appendix \ref{sec:Phi^Mei}. Then $\Phihat(\xi)$ satisfies the following RH problem:
\begin{RHP} \label{RHP:Phihat}
  $\Phihat(\xi)$ is a $(\theta + 1) \times (\theta + 1)$ matrix-valued function on $\compC \setminus \hat{\Sigma}$, where
  \begin{align}
      \hat{\Sigma} = {}& \sum^2_{j = 0} \hat{\Sigma}_j, & \hat{\Sigma}_0 = {}& \{ e^{2\pi i t} : t \in (0, 1] \}, & \hat{\Sigma}_1 = {}& \{ i t: t \in (1, +\infty) \}, & \hat{\Sigma}_2 = {}& \{ i t: t \in (-\infty, -1) \},
  \end{align}
  with the orientation of $\hat{\Sigma}$ shown in Figure \ref{fig:hatSigma}. It satisfies
  \begin{enumerate}
  \item \label{enu:RHP:Phihat:1}
    For $\xi \in \hat{\Sigma}$,
    \begin{multline}
      \Phihat_+(\xi) = \Phihat_-(\xi) J_{\Phihat}(\xi), \quad \text{where} \\
      J_{\Phihat}(\xi) =
      \begin{cases}
        \Phi^{(\Mei)}(\xi) e^{\Lambda(\xi) - \Theta(\xi)} \Phi^{(\Mei)}(\xi)^{-1}, & \xi \in \hat{\Sigma}_0, \\
        (\Phi^{(\Mei)}(\xi) e^{\Lambda(\xi)})_-
        \left[
        \begin{pmatrix}
          1 & e^{(*)} \\
          0 & 1
        \end{pmatrix}
        \oplus I_{\theta - 1} \right] (\Phi^{(\Mei)}(\xi) e^{\Lambda(\xi)})^{-1}_{+}, & \xi \in \hat{\Sigma}_1, \\
        (\Phi^{(\Mei)}(\xi) e^{\Lambda(\xi)})_-
        \left[
        \begin{pmatrix}
          1 & -e^{-(*)} \\
          0 & 1
        \end{pmatrix}
        \oplus I_{\theta - 1} \right] (\Phi^{(\Mei)}(\xi) e^{\Lambda(\xi)})^{-1}_+, & \xi \in \hat{\Sigma}_2,
      \end{cases}
    \end{multline}
    with
      \begin{equation}
      (*) = \beta \pi i + \frac{\theta + 1}{\theta} \sin \left( \frac{2\pi}{\theta + 1} \right) i \xi^{\frac{2}{\theta + 1}} -2\tau \sin \left( \frac{\pi}{\theta + 1} \right) i \xi^{\frac{1}{\theta + 1}}.
    \end{equation} 
  \item \label{enu:RHP:Phihat:3}
    $\Phihat(\xi)$ is continuous up to boundary on $\hat{\Sigma}$, and
    \begin{equation}\label{eq:Phihat_infty}
      \Phihat(\xi) = I + \bigO(\xi^{-1}), \quad \xi \to \infty.
    \end{equation}
  \end{enumerate}
\end{RHP}

\begin{figure}[htb]
  \begin{minipage}[t]{0.45\linewidth}
  \centering
  \includegraphics{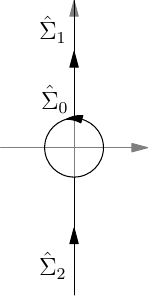}
  \caption{Contour $\hat{\Sigma}$ in RH problem \ref{RHP:Phihat}.}
  \label{fig:hatSigma}
  \end{minipage}
  \hspace{\stretch{1}}
  \begin{minipage}[t]{0.45\linewidth}
    \centering
    \includegraphics{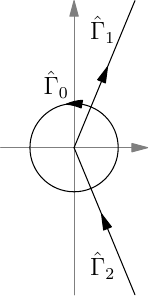}
    \caption{Contour $\hat{\Gamma}$ ($\theta = 3$) in the proof of Lemma \ref{lem:finite_zeros}.}
    \label{fig:Gamma}
  \end{minipage}
\end{figure}
  


 From the relation  \eqref{eq:Transform}, the solvability of RH problem \ref{RHP:general_model} is equivalent to the solvability of RH problem \ref{RHP:Phihat}. By general theory of the construction of solutions of RH problems, this is reduced to the study of the singular integral operator on the $L^2$ space of $(\theta + 1) \times (\theta + 1)$ matrix-valued functions on $\hat{\Sigma}$
  \begin{equation}
    C_{\Phihat}: L^2(\hat{\Sigma}) \to L^2(\hat{\Sigma}): \quad f \mapsto C_+[f(1 - J^{-1}_{\Phihat})],
  \end{equation}
  where $C_+$ is the $+$ boundary value of the Cauchy operator
  \begin{equation}
    Cf(z) = \frac{1}{2\pi i} \int_{\hat{\Sigma}} \frac{f(s)}{s - z} ds, \quad z \in \compC \setminus \hat{\Sigma}.
  \end{equation}
  Suppose that $I - C_{\Phihat}$ is invertible in $L^2(\hat{\Sigma})$, then there exists $\mu \in L^2(\hat{\Sigma})$ such that $(I - C_{\Phihat}) \mu = C_+(I - J^{-1}_{\Phihat})$, and
  \begin{equation}\label{eq:Integral_Rep_Phi}
    \Phihat(\xi) = I + \frac{1}{2\pi i} \int_{\hat{\Sigma}} \frac{(I + \mu(s))(I - J_{\Phihat}(s)^{-1})}{s - \xi} ds, \quad \xi \in \compC \setminus\hat{\Sigma}
  \end{equation}
  satisfies part \ref{enu:RHP:Phihat:1} of RH problem \ref{RHP:Phihat} in the $L^2$ sense. Furthermore, one can use the analyticity of $J_{\Phihat}$ to show that $\Phihat$ satisfies the jump condition in the sense of continuous boundary values as well, see \cite[Step 3 of Sections 5.2 and 5.3]{Deift-Kriecherbauer-McLaughlin-Venakides-Zhou99}. Also it follows from the exponential decaying of $I - J^{-1}_{\Phihat}$ as $\xi \to \infty$ on $\hat{\Sigma}$ that part \ref{enu:RHP:Phihat:3} of RH problem \ref{RHP:Phihat} is also satisfied, see \cite[Proposition 5.4]{Deift-Kriecherbauer-McLaughlin-Venakides-Zhou99}. Hence, RH problem \ref{RHP:Phihat} is solvable if the singular integral operator $I - C_{\Phihat}$ is invertible in $L^2(\hat{\Sigma})$.

  To show that $I - C_{\Phihat}$ is invertible, we need to check that it is a Fredholm operator with zero index and kernel $\{ 0 \}$. As in \cite[Steps 1 and 2 of Section 5.3]{Deift-Kriecherbauer-McLaughlin-Venakides-Zhou99}, one can show that $I - C_{\Phihat}$ is a Fredholm operator with zero index. 
  To show that its kernel is trivial, we  suppose $\mu_0 \in L^2(\hat{\Sigma})$ is such that $(I - C_{\Phihat}) \mu_0 = 0$. One can then show that the matrix-valued function $\Phihat_0$ defined by
  \begin{equation}
    \Phihat_0(\xi) = \frac{1}{2\pi i} \int_{\hat{\Sigma}}\frac{\mu_0(s)(I - J_{\Phihat}(s)^{-1})}{s - \xi} ds, \quad \xi \in \compC \setminus \hat{\Sigma}
  \end{equation}
  is a solution to the homogeneous version of RH problem \ref{RHP:Phihat}, that is, with the condition \eqref{eq:Phihat_infty} replaced by
  \begin{equation}
    \Phihat(\xi) = \bigO(\xi^{-1}), \quad \xi \to \infty.
  \end{equation}
 Hence, to show the solvability of RH problem \ref{RHP:Phihat},  we only need to show that the associated homogeneous version of RH problem has only a trivial solution. 
  Equivalently, to show the solvability of the RH problem \ref{RHP:general_model} ,  we only need to verify the following vanishing lemma.

\begin{lemma}[vanishing lemma] \label{lem:vanishing}
  Let $\Phi_0$ be analytic on $\compC \setminus \{ \realR \cup i\realR \}$ that satisfies the homogeneous version of RH problem \ref{RHP:general_model}, such that parts \ref{enu:RHP:general_model:1} and \ref{enu:RHP:general_model:3} are the same while part \ref{enu:RHP:general_model:2} is changed to
  \begin{equation} \label{eq:Psi_hardtosoft_at_infty_homo}
    \Phi_0(\xi) = \bigO(\xi^{-1}) \Upsilon(\xi) \Omega_{\pm} e^{-\Theta(\xi)}.
  \end{equation}
  Then $\Phi_0(\xi) = 0$.
\end{lemma}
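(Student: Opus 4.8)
\textbf{Proof plan for the vanishing lemma (Lemma \ref{lem:vanishing}).}

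The plan is to follow the standard strategy of Deift--Kriecherbauer--McLaughlin--Venakides--Zhou adapted to the present $(\theta+1)\times(\theta+1)$ setting, reducing the vanishing statement to a positivity argument via an auxiliary matrix built from $\Phi_0$ and its transpose-conjugate. First I would introduce the auxiliary function $H(\xi) := \Phi_0(\xi) \Phi_0^*(\bar\xi)$ (or, more precisely, a suitably conjugated version of it so that the jump contours on $\realR$ and $i\realR$ are respected), where $\Phi_0^*(\bar\xi) = \overline{\Phi_0(\bar\xi)}^T$. The decay condition \eqref{eq:Psi_hardtosoft_at_infty_homo} together with the structure of $\Upsilon(\xi)\Omega_\pm e^{-\Theta(\xi)}$ must be used to check that $H$ decays like $\bigO(\xi^{-2})$ at infinity; here one needs that $e^{-\Theta(\xi)}$ paired with its conjugate produces at worst polynomially bounded growth that is killed by the $\bigO(\xi^{-1})$ prefactors (this is where the signs of the real parts of $\Theta$ on the various sectors come in, and one must track the $\Omega_\pm$ carefully). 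The local behaviour at $0$ from part \ref{enu:RHP:general_model:3} must be checked to be mild enough (the exponents $-\frac{\alpha+1-\theta}{\theta}$, $\frac{k}{\theta}$ and the possible $\log\xi$ entry) that $H$ is integrable near $0$, so that no pole is introduced there.

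The key computation is then to analyze the jump of $H$ across $\hat\Sigma := \realR \cup i\realR$. On $\realR_+$ the jump matrix is $\bigl(\begin{smallmatrix}0&1\\1&0\end{smallmatrix}\bigr)\oplus I_{\theta-1}$, which is its own inverse-transpose in the right way; on $\realR_-$ it is $\Mcyclic$, a permutation (cyclic shift) matrix, hence real orthogonal, and again compatible; on the imaginary axis the jumps are the unipotent matrices with the entries $e^{\pm\beta\pi i}$, and after conjugating by $e^{\Lambda}$ or pairing $\Phi_0$ with its star one gets that $H_+ - H_- $ on $i\realR_+$ (say) equals $(\Phi_0)_- \bigl[(\text{unipotent}) - (\text{unipotent})^*\bigr](\Phi_0)_-^*$, which reduces to a rank-one, sign-definite contribution of the form $i\,\Re(e^{\beta\pi i}\cdots)$ times a positive-semidefinite dyad $v v^*$. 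The upshot, after contour deformation / integration by parts à la \cite[Section 5.3]{Deift-Kriecherbauer-McLaughlin-Venakides-Zhou99} and \cite[Section 2.2]{Claeys-Vanlessen07a}, is an identity of the schematic form
\begin{equation}
  \int_{i\realR} \bigl(\text{nonnegative quadratic form in } \Phi_0 \bigr) \, \lvert d\xi\rvert = 0,
\end{equation}
forcing the relevant entries of $\Phi_0$ restricted to $i\realR$ to vanish. Concretely: let $H$ be analytic off $i\realR$ (the jumps on $\realR$ having been shown trivial for $H$), with $H = \bigO(\xi^{-2})$ at $\infty$ and integrable at $0$; then $\oint H = 0$ around a large contour, and collapsing it onto $i\realR$ yields $\int_{i\realR}(H_+ - H_-)\,d\xi = 0$, whose integrand is (up to the scalar factor $-2\sin(2\pi/(\theta+1))\cdots$, which I must check is nonzero, or handle the degenerate case separately) a positive multiple of $\lvert (\text{certain linear combination of columns of }\Phi_0)\rvert^2$.

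From the vanishing of those columns of $\Phi_0$ on $i\realR$, I would propagate the vanishing: the first column (and, via the jump relations on the imaginary axis and $\Mcyclic$ on $\realR_-$, all columns) of $\Phi_0$ must vanish identically, since an analytic function vanishing on a curve vanishes, and the jump matrices are invertible so the zero set spreads across all sectors. A cleaner route, which I would actually follow, is the one indicated by the auxiliary Figure \ref{fig:Gamma} and Lemma \ref{lem:finite_zeros}: show that $\Phi_0$, suitably combined into a scalar or lower-dimensional analytic object on a cut plane, can have only finitely many zeros, then use the quadratic-form identity to force that object to be identically zero, and finally unwind the definitions to conclude $\Phi_0 \equiv 0$. \textbf{The main obstacle} I anticipate is precisely the bookkeeping at infinity: verifying that $H(\xi) = \bigO(\xi^{-2})$ uniformly in all sectors requires controlling the interplay of $e^{-\Theta(\xi)}$, its conjugate, $\Upsilon(\xi)$, and the sector-dependent $\Omega_\pm$ — in particular one must confirm that on each ray the exponential factors that survive are decaying or at worst bounded, so that no spurious growth spoils the contour-collapse argument; this is the step where the specific form of $\Theta$ in \eqref{eq:asyold} and the choice of branch cuts genuinely matter, and where the $3\times 3$ (i.e. general $(\theta+1)\times(\theta+1)$) nature of the problem makes the analysis more delicate than in the classical $2\times 2$ Painlevé/Airy-to-Bessel cases.
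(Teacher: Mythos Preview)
Your primary route via $H(\xi)=\Phi_0(\xi)\,\overline{\Phi_0(\bar\xi)}^{\,T}$ does not close, and precisely at the point you flag as ``the main obstacle.'' For $k\ge 2$ the $(k,k)$ entry of $\Theta(\xi)+\overline{\Theta(\bar\xi)}$ equals $\frac{\theta+1}{\theta}\cos\!\bigl(\tfrac{2(2k-1)\pi}{\theta+1}\bigr)\,\xi^{2/(\theta+1)}$, and for $\theta\ge 3$ some of these cosines are zero or negative (e.g.\ $\theta=3$ gives $\cos(3\pi/2)=0$, $\theta=4$ gives $\cos(6\pi/5)<0$). Hence $e^{-\Theta(\xi)}\,\overline{e^{-\Theta(\bar\xi)}}$ has diagonal entries that fail to decay, and after the $\Upsilon$ factors $H$ is not $\bigO(\xi^{-2})$; the contour-collapse step therefore cannot be carried out. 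Moreover, even on $i\realR$ the combination $J_{+}\,\overline{J_{-}}^{\,T}-I$ you would obtain is $\bigl(\begin{smallmatrix}-e^{2\beta\pi i}&e^{\beta\pi i}\\-e^{\beta\pi i}&0\end{smallmatrix}\bigr)\oplus 0$, which is not sign-definite, so the quadratic-form identity would not yield the desired conclusion anyway. The paper notes explicitly that vanishing lemmas in dimension $>2$ are rare; this is the reason.

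The paper's argument is the ``cleaner route'' you allude to at the end, but it is the entire proof, not an afterthought. One first reduces (via the symmetry \eqref{eq:Phi_symmetry}) to a $\Phi_0$ with $\Phi_0(\xi)=\overline{\Phi_0(\bar\xi)}\,(I_2\oplus J_{\theta-1})$, then works \emph{row by row}: from a single row $(\phi_0,\dots,\phi_\theta)$ one builds a scalar \emph{entire} function $f$ on $\compC$ by patching the $\phi_k(z^\theta)$ on successive sectors, and a companion function $g$ from $\phi_0$ satisfying $g_+-g_-=f$ on $(0,\infty)$. The positivity step is then the identity
\[
\int_0^\infty x^\alpha\,f(x)^2\,\Bigl(\prod_{j=1}^{\theta-1}f(xe^{2\pi i j/\theta})\Bigr)\prod_{k=1}^{m}\frac{1}{x^\theta-a_k^\theta}\,dx=0,
\]
obtained by collapsing a wedge contour; here the $a_k$ are the (finitely many, by Lemma~\ref{lem:finite_zeros}) zeros of $f$ on the rays $\Gamma_j$, inserted so that the bracketed product is real and of constant sign on $(0,\infty)$. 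This forces $f\equiv 0$, and then a Phragm\'en--Lindel\"of / Carlson argument kills $g$. The crucial technical input is Lemma~\ref{lem:finite_zeros} itself, proved by a separate Cauchy-integral representation of a rescaled $F$ and a Carlson-type decay estimate; you would need to supply that argument in full.
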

 
\begin{proof}[Proof of Proposition \ref{lem:existence_H_to_S} assuming  Lemma \ref{lem:vanishing}]
  From the discussion above, we have the solvability of RH problem \ref{RHP:general_model} for $\tau\in\mathbb{R}$. The uniqueness is easier. Suppose $\Phi(\xi)$ and $\Phicheck(\xi)$ both satisfy RH problem \ref{RHP:general_model}, then $\Phicheck(\xi) \Phi^{-1}(\xi)$ is analytic on $\compC$ and it is $I + \bigO(\xi^{-1/(\theta + 1)})$ at $\infty$. Hence, by Liouville's theorem, $\Phicheck(\xi) \Phi^{-1}(\xi) = I$ and the uniqueness is proved.

  Moreover, from the analytic Fredholm alternative theorem \cite[Proposition 4.3]{Zhou89},  it follows that the solution $\Phi(\xi) = \Phi^{(\tau)}(\xi)$ is meromorphic for $\tau\in\mathbb{C}$ and pole free on the real axis. From \eqref{eq:Transform}, \eqref{eq:Integral_Rep_Phi} and the exponential decaying of $I - J^{-1}_{\Phihat}$ as $\xi \to \infty$ on $\realR \cup i\realR$, we have the asymptotic expansion \eqref{eq:asy_expansion_H_to_S} uniformly for $\xi$ bounded away from the integral contour $\realR \cup i\realR$. The coefficients $M_k(\tau)$ in  \eqref{eq:asy_expansion_H_to_S} are analytic for  $\tau\in\mathbb{R}$, since $\Phi^{(\tau)}(\xi)$ is analytic  for $\tau\in\mathbb{R}$. 
  
  For $\tau\in\mathbb{R}$, it is straightforward to check that the function 
  \begin{equation} \label{eq:Phi_conjugate}
    \overline{ \Phi^{(\tau)}(\bar{\xi})} (I_2\oplus J_{\theta - 1})
  \end{equation}
  also solves RH problem \ref{RHP:general_model}. Here $J_{\theta - 1}$ is like $J_{\theta + 1}$ in \eqref{eq:defn_Xicheck}, the row-reversed identity matrix defined in the end of Section \ref{sec:introduction}. From the uniqueness of the solution to  RH problem \ref{RHP:general_model}, we have
  \begin{equation} \label{eq:Phi_symmetry}
    \Phi^{(\tau)}(\xi)=\overline{ \Phi^{(\tau)}(\bar{\xi})} (I_2\oplus J_{\theta - 1}).
  \end{equation}
  Substituting the large-$\xi$ expansion \eqref{eq:asy_expansion_H_to_S}  into the above equation and comparing the coefficients, we find 
  \begin{equation} 
    M_k(\tau)=\overline{ M_k(\tau)},
  \end{equation}
  for real $\tau$. Thus, the coefficients $M_k(\tau)$ are real valued for real $\tau$.
  We complete the proof of  Proposition \ref{lem:existence_H_to_S}.
\end{proof}

\subsection{Proof of  vanishing lemma }

Let $\Phi_0$ be a nontrivial solution of the homogeneous version of RH problem \ref{RHP:general_model}. We assume that $\Phi_0$ satisfies
\begin{equation} \label{eq:realty_Phi_0}
  \Phi_0(\xi) = \overline{\Phi_0(\bar{\xi})} (I_2\oplus J_{\theta - 1}).
\end{equation}
Otherwise, if $\Phi_0(\xi)$ is a nontrivial but does not satisfy \eqref{eq:realty_Phi_0}, we consider $\Phi_0(\xi) + \overline{\Phi_0(\bar{\xi})} (I_2\oplus J_{\theta - 1})$ and $i(\Phi_0(\xi) - \overline{\Phi_0(\bar{\xi})} (I_2\oplus J_{\theta - 1}))$. We have that both of them are solutions of the homogeneous version of RH problem \ref{RHP:general_model}, and at least one between them is nontrivial. We can take a nontrivial one as our $\Phi_0$.

We note that to prove the vanishing result in Lemma \ref{lem:vanishing}, we can consider the rows of $\Phi_0$ separately, and need only to show that any of its rows are zero vectors. Before giving the proof, we fix some notation and state a technical lemma that will be used in the proof.

Let $(\phi_0, \phi_1, \dotsc, \phi_{\theta})$ be one row of $\Phi_0$. Let $f^{(\pre)}(z)$ be the analytic function on $\compC \setminus [0, +\infty)$ such that
\begin{equation}\label{eq: f_Pre_definition}
  f^{(\pre)}(z) =
  \begin{cases}
    \phi_1(z^{\theta}), & \arg z \in (-\frac{\pi}{\theta}, -\frac{\pi}{2\theta}) \cup (\frac{\pi}{2\theta}, \frac{\pi}{\theta}), \\
    \phi_j(z^{\theta}), & \arg z \in (\frac{(2j - 3)\pi}{\theta}, \frac{(2j - 1)\pi}{\theta}), \quad j = 2, \dotsc, \theta, \\
    \phi_1(z^{\theta}) + e^{\beta \pi i} \phi_0(z^{\theta}), & \arg z \in (0, \frac{\pi}{2\theta}), \\
    \phi_1(z^{\theta}) + e^{-\beta \pi i} \phi_0(z^{\theta}), & \arg z \in (-\frac{\pi}{2\theta}, 0),
  \end{cases}
\end{equation}
and then let $f(z)$ be the analytic function on $\compC \setminus \{ 0 \}$ such that
\begin{equation}\label{eq: f_definition}
  f(z) = f^{(\pre)}(z) \times
  \begin{cases}
    e^{-\frac{\beta \pi i}{2}} z^{\frac{\beta}{2}}, & z \in \compC_+, \\
    e^{\frac{\beta \pi i}{2}} z^{\frac{\beta}{2}}, & z \in \compC_-.
  \end{cases}
\end{equation}
Similarly, let $g(z)$ be the analytic function on the sectors $\arg z \in (0, \pi/\theta) \cup (-\pi/\theta, 0)$ such that
\begin{equation}\label{eq:g_defnition}
  g(z) = 
  \begin{cases}
    e^{\frac{\beta \pi i}{2}} z^{\frac{\beta}{2}} \phi_0(z^{\theta}), & \arg z \in (0, \frac{\pi}{\theta}) \\
    -e^{-\frac{\beta \pi i}{2}} z^{\frac{\beta}{2}} \phi_0(z^{\theta}), & \arg z \in (-\frac{\pi}{\theta}, 0).
  \end{cases}
\end{equation}
We have that for all $x \in (0, +\infty)$,
\begin{align} \label{eq:g_and_f_relation}
  g_+(x) - g_-(x) = {}& f(x), & g(e^{\frac{\pi i}{\theta}} x) = {}& e^{-\frac{\alpha + 1}{\theta} 2\pi i} g(e^{-\frac{\pi i}{\theta}}x). 
\end{align}
We have as $z \to \infty$, 
\begin{align}
  f(z) = {}& \bigO(z^{\frac{\beta}{2} - \frac{\theta}{\theta + 1}}) e^{-\frac{\theta + 1}{2\theta} (-z)^{\frac{2\theta}{\theta + 1}} - \tau (-z)^{\frac{\theta}{\theta + 1}}}, \\
  g(z) = {}& \bigO(z^{\frac{\beta}{2} - \frac{\theta}{\theta + 1}}) \times
             \begin{cases}
               e^{-\frac{\theta + 1}{2\theta} e^{\frac{-2\pi i}{\theta + 1}} z^{\frac{2\theta}{\theta + 1}} + \tau e^{\frac{-\pi i}{\theta + 1}} z^{\frac{\theta}{\theta + 1}}}, & \arg z \in (0, \frac{\pi}{\theta}), \\
               e^{-\frac{\theta + 1}{2\theta} e^{\frac{2\pi i}{\theta + 1}} z^{\frac{2\theta}{\theta + 1}} + \tau e^{\frac{\pi i}{\theta + 1}} z^{\frac{\theta}{\theta + 1}}}, & \arg z \in (-\frac{\pi}{\theta}, 0),
             \end{cases}
\end{align}
and as $z \to 0$, 
\begin{align}
  f(z) = {}& \bigO(1), & g(z) = {}&
                                    \begin{cases}
                                      \bigO(z^{\theta - \alpha - 1}), & \alpha > \theta - 1, \\
                                      \bigO(1), & \alpha < \theta - 1, \\
                                      \bigO(\log z), & \alpha = \theta - 1.
                                    \end{cases}
\end{align}

Hence $f(z)$ is an entire function. We have the following lemma about $f(z)$, whose proof will be given in the end of this section.
\begin{lemma} \label{lem:finite_zeros}
  If $f(x)$ is not identically $0$, then it has only finitely many zeros on the $(\theta - 1)$ rays
  \begin{equation}
    \Gamma_j := \{ e^{\frac{2j\pi i}{\theta}} t : t \in (0, +\infty) \}, \quad j = 1, 2, \dotsc, \theta - 1.
  \end{equation}
\end{lemma}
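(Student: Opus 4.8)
\textbf{Proof proposal for Lemma \ref{lem:finite_zeros}.}

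The plan is to exploit the growth bounds already recorded for $f(z)$ together with a Phragm\'en--Lindel\"of type argument applied to each of the $(\theta-1)$ rays $\Gamma_j$. First I would observe that $f$ is entire and satisfies
\begin{equation*}
  f(z) = \bigO\!\left(z^{\frac{\beta}{2} - \frac{\theta}{\theta + 1}}\right) e^{-\frac{\theta + 1}{2\theta} (-z)^{\frac{2\theta}{\theta + 1}} - \tau (-z)^{\frac{\theta}{\theta + 1}}}, \qquad z \to \infty,
\end{equation*}
so that $f$ has finite order $\le 2\theta/(\theta+1) < 2$. On the ray $\Gamma_j = \{e^{2j\pi i/\theta} t : t > 0\}$ we have $(-z)^{2\theta/(\theta+1)} = e^{(2j-\theta)\cdot\frac{2\theta}{\theta+1}\pi i} t^{2\theta/(\theta+1)}$, and since $1 \le j \le \theta - 1$ this phase is never a positive real direction: $\Re\big[(-z)^{2\theta/(\theta+1)}\big] = \cos\!\big(\frac{2(2j-\theta)\theta}{\theta+1}\pi\big)\, t^{2\theta/(\theta+1)}$, and one checks that the relevant angle stays in a range where this cosine is nonzero, in fact I would argue it is strictly positive on $\Gamma_j$ for each admissible $j$ (this is the content of the regularity of the model at $\infty$, mirroring the saddle-point geometry in \cite{Wang-Zhang21}). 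Hence along $\Gamma_j$, $|f(z)|$ decays like $\exp(-c_j t^{2\theta/(\theta+1)})$ for some $c_j > 0$, super-polynomially fast.

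The key step is then to rule out an infinite zero set on a single ray. Restricting to a fixed $\Gamma_j$, write $F(t) = f(e^{2j\pi i/\theta} t)$, which is a real-analytic (indeed restriction of an entire) function of $t \in (0,\infty)$ decaying faster than any exponential at $+\infty$ and bounded as $t \to 0_+$. If $F$ had infinitely many zeros $t_1 < t_2 < \cdots \to \infty$ on the ray, I would apply a Carleman / Jensen-type estimate: the counting function $n(r)$ of zeros of the entire function $f$ in $|z| \le r$ is controlled by $\log M(r) = \bigO(r^{2\theta/(\theta+1)})$, so $f$ has at most $\bigO(r^{2\theta/(\theta+1)})$ zeros in a disk of radius $r$ — this only bounds the density and does not by itself forbid infinitely many zeros on a ray. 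The cleaner route, which I would actually carry out, is to use the rotated-sector Phragm\'en--Lindel\"of principle of the type used in the classical Airy-kernel vanishing lemmas: consider $f$ on a narrow sector $S_j$ bisected by $\Gamma_j$; on the two bounding rays of $S_j$ one has at worst the global growth bound, and on $\Gamma_j$ itself super-exponential decay. Since the order $2\theta/(\theta+1)$ is compatible with the opening of $S_j$ (which can be taken $> \pi(\theta+1)/(2\theta)$ only if that is $< \pi$, i.e. always, since $(\theta+1)/(2\theta) < 1$), Phragm\'en--Lindel\"of forces $f$ to decay throughout a full neighborhood of $\Gamma_j$, and combined with the explicit asymptotics this would in fact show $f \equiv 0$ unless the zeros on $\Gamma_j$ are finitely many — more precisely, I would use the standard fact that an entire function of order $< 1$ (or of minimal type of order $1$) whose restriction to a ray has infinitely many zeros and which is bounded by the stated exponential must be forced, via an indicator-function argument (Levin's theory of the Phragm\'en--Lindel\"of indicator $h_f(\theta)$), to have $h_f$ vanish on an interval of directions around $\arg \Gamma_j$, contradicting the explicit negative values of $h_f$ computed from the asymptotic formula on neighboring directions. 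The contradiction yields finiteness.

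I expect the main obstacle to be the bookkeeping of the phases $e^{2(2j-\theta)\theta\pi i/(\theta+1)}$: one must verify that for every $j = 1,\dots,\theta-1$ the exponent $(-z)^{2\theta/(\theta+1)}$ has strictly positive real part on $\Gamma_j$ (so the leading exponential genuinely decays there rather than oscillating or growing), and simultaneously that on the two rays bounding the chosen sector $S_j$ the growth is no worse than the sub-quadratic global bound so that Phragm\'en--Lindel\"of applies with the correct opening angle. This is a finite trigonometric check but it is where an off-by-one in the range of $j$ or a wrong branch choice would break the argument; it is essentially the same saddle-structure computation that underlies the contour deformations in \cite{Wang-Zhang21}, so I would cross-check it against the $\Theta(\xi)$ in \eqref{eq:asyold}. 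Once the phase check is in hand, the indicator-function / Phragm\'en--Lindel\"of conclusion is routine, and the lemma follows. (Alternatively, if one prefers to avoid Levin's indicator machinery, the same conclusion can be reached by writing $f$ via its Hadamard product and comparing convergence exponents of the zero set on $\Gamma_j$ against the order $2\theta/(\theta+1) < 2$, but the Phragm\'en--Lindel\"of route is more transparent.)
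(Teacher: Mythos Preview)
Your argument rests on the claim that $\Re\bigl[(-z)^{2\theta/(\theta+1)}\bigr]>0$ along every ray $\Gamma_j$, so that $\lvert f\rvert$ decays there.  This fails already for small $\theta$.  Writing $\arg(-z)=(2j-\theta)\pi/\theta$ on $\Gamma_j$, one gets $\arg\bigl[(-z)^{2\theta/(\theta+1)}\bigr]=\frac{2(2j-\theta)\pi}{\theta+1}$, and the cosine of this is positive only when $\lvert 2j-\theta\rvert<\tfrac{\theta+1}{4}$.  For $\theta=3$ no integer $j\in\{1,2\}$ satisfies this (the leading exponent is purely imaginary on both rays), and for $\theta=4$ it fails at $j=1,3$ (there the asymptotic bound on $f$ allows exponential \emph{growth} of order $\lvert z\rvert^{8/5}$, not decay).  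So the ``finite trigonometric check'' you flag as the main obstacle is in fact false on the outer rays for every $\theta\geq3$, and your Phragm\'en--Lindel\"of scheme cannot start there.  Even on rays where decay does hold, the indicator-function step is not sound as written: infinitely many zeros on a ray say nothing about the Phragm\'en--Lindel\"of indicator $h_f$ (think of $\sin z$ on $\realR$), and decay on a bisecting ray does not propagate to a sector via Phragm\'en--Lindel\"of, which needs boundary control.

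The paper's proof avoids this entirely by first undoing the exponential: it passes to $F(z)=e^{\frac{\theta+1}{2\theta}z^{2}-\tau z}f^{(\pre)}\bigl(-(-z)^{(\theta+1)/\theta}\bigr)$ (patched with $\phi_0,\phi_1$ in the remaining sectors), so that $F(z)=\bigO(z^{-1})$ \emph{uniformly} in direction.  After a mild modification near $0$, $F$ satisfies an additive scalar RH problem $\hat F_{+}-\hat F_{-}=\rho$ on a contour $\hat\Gamma$ with $\rho$ continuous and exponentially decaying, hence $\hat F(z)=\frac{1}{2\pi i}\int_{\hat\Gamma}\frac{\rho(w)}{w-z}\,dw$.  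If some moment $\int w^{k}\rho\,dw\neq0$ then $F(z)\sim c_k z^{-k-1}$ at infinity, so $F$ (and thus $f$) has no zeros far out on the rays; if all moments vanish, a short estimate gives $\lvert F(iy)\rvert\leq e^{-\epsilon y^{2}}$, and Carlson's theorem forces $F\equiv0$, hence $f\equiv0$.  The essential idea you are missing is this normalization to a uniformly $\bigO(z^{-1})$ function with a Cauchy-integral representation; a direct growth/decay analysis of $f$ on the rays $\Gamma_j$ does not carry enough information.
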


\begin{proof}[Proof of Lemma \ref{lem:vanishing}]
  We prove the vanishing lemma by showing that $f(z)=0$. Below we assume that $f(z)$ is a non-trivial entire function, and derive a contradiction.

  For this purpose, we denote, for $r \in (0, +\infty)$, $\Sigma_r$ the wedge contour 
  \begin{align}
    \Sigma_r = {}& \bigcup^2_{i = 0} \Sigma^{(i)}_r, & \Sigma^{(0)}_r = {}& \{ r e^{it} : t \in (-\frac{\pi}{\theta}, \frac{\pi}{\theta}) \}, &
    \begin{aligned}
      \Sigma^{(1)}_r = {}& \{ e^{-\frac{\pi i}{\theta}} t : t \in (0, r) \}, \\
      \Sigma^{(2)}_r = {}& \{ -e^{\frac{\pi i}{\theta}} t : t \in (-r, 0) \},
    \end{aligned}
  \end{align}
  with counterclockwise orientation. We have, by the asymptotics of $f(z)$ and $g(z)$ as $z \to \infty$, that
  \begin{equation}\label{eq:Integral_zero}
    \lim_{r \to \infty} \int_{\Sigma_r} z^{\alpha} g(z) \prod^{\theta - 1}_{j = 0} f(z e^{\frac{2j\pi i}{\theta}}) dz = 0.
  \end{equation}
  To see it, we note that the integral over the $\Sigma^{(1)}_1$ and $\Sigma^{(2)}_r$ cancel each other, and the integral over $\Sigma^{(0)}_r$ tends to $0$, since the integrand is $\bigO(z^{-1 - \theta/2})$, as $r \to \infty$.  
  From Lemma \ref{lem:finite_zeros}, we denote the zeros of $f(x)$ on rays $\Gamma_1, \dotsc, \Gamma_{\theta - 1}$ as $a_1, \dotsc, a_m$ where we allow repetitions for multiplicity. 
  Then, similar to \eqref{eq:Integral_zero}, we have
  \begin{equation}
    \lim_{r \to \infty} \int_{\Sigma_r} z^{\alpha} g(z) \prod^{\theta - 1}_{j = 0} f(z e^{\frac{2j\pi i}{\theta}}) \prod^m_{k = 1} \frac{1}{z^{\theta} - a^{\theta}_k} dz = 0.
  \end{equation}
  Now we deform $\Sigma_r$ into flat contours $\{ t + i\epsilon : 0 \leq t \leq r \} \cup \{ t - i\epsilon : 0 \leq t \leq r \}$, where $\epsilon \to 0_+$, the upper contour is oriented leftward and the lower one rightward. Then using the relation \eqref{eq:g_and_f_relation} and letting $r \to \infty$, we derive that
  \begin{equation} \label{eq:integral_of_nonnegative}
    \int^{\infty}_0 x^{\alpha} f^2(x) W(x)\prod^m_{k = 1} \frac{1}{x^{\theta} - a^{\theta}_k} dz = 0, \quad W(x) = \prod^{\theta - 1}_{j = 1} f(x e^{\frac{2j\pi i}{\theta}}).
  \end{equation}

  From \eqref{eq:realty_Phi_0}, we find that both $f(x)$ and $W(x)$ are real on $(0, +\infty)$. We also have that the function $W(x) \prod^m_{k = 1} \frac{1}{z^{\theta} - a^{\theta}_k}$ has no zeros on $(0, +\infty)$, that is, it does not change sign on $(0, +\infty)$, and it is either strictly positive or strictly negative on $(0, +\infty)$.
  
 We note that $x^{\alpha} f^2(x)$ is non-negative on $(0, +\infty)$. Hence \eqref{eq:integral_of_nonnegative} implies that $f(x) = 0$ on $(0, +\infty)$, and then $f(z) = 0$ on $\compC$, contradictory to the assumption that $f(z)$ is non-trivial.
  
Next we show that $g(z)=0$. From \eqref{eq:g_defnition}, \eqref{eq:g_and_f_relation} and $f(z) = 0$, we see that $\hat{g}(z)=z^{\frac{1}{2}-\frac{(\theta + 1)\beta}{2\theta}}g(z^{\frac{1}{\theta}})$ can be analytically extended to the whole complex plane. If $\theta \geq 4$, from the asymptotic behaviour of $g(z)$ near infinity, it follows that $\hat{g}(z)$ vanishes exponentially at infinity from all directions. Hence, $\hat{g}(z)$ is identically zero. If $\theta = 2, 3$, although $\hat{g}(z)$ does not vanish exponentially if $z \to \infty$ in some directions, like the positive real axis, $\hat{g}(z) \hat{g}(-z)$ vanishes exponentially at infinity from all directions, and it is identically zero. We conclude that for all $\theta \geq 2$, we have $g(z) = 0$.

By \eqref{eq: f_Pre_definition}, \eqref{eq: f_definition} and  \eqref{eq:g_defnition},  the vanishing of $f(z)$ and $g(z)$ implies that the row vector $(\phi_0, \phi_1, \dotsc, \phi_{\theta})$  vanishes in the whole complex plane.  This completes the proof of the vanishing lemma. 
  
\end{proof}

\begin{proof}[Proof of Lemma \ref{lem:finite_zeros}]
  We intend to prove the technical lemma concerning  the zeros of $f(z)$. We note that from the definition of $f(z)$, its zero points have no finite accumulation points. So we only need to show that on the rays $\Gamma_j$, it has no zeros outside a finite region.

  For this purpose, we define
  \begin{equation}\label{eq:F_definition}
    F(z) = e^{ \frac{\theta + 1}{2\theta} z^2 - \tau z} \times
    \begin{cases}
      f^{(\pre)}(-(-z)^{\frac{\theta + 1}{\theta}}), & \arg (-z) \in (-(\theta - \frac{1}{2}) \frac{\pi}{\theta + 1}, (\theta - \frac{1}{2}) \frac{\pi}{\theta + 1}), \\
      \phi_0(-z^{\theta + 1}), & \arg z \in (-\frac{\pi}{\theta + 1}, \frac{\pi}{\theta + 1}), \\
      \phi_1(-z^{\theta + 1}), & \arg z \in (\frac{\pi}{\theta + 1}, \frac{3\pi}{2(\theta + 1)}) \cup (-\frac{3\pi}{2(\theta + 1)}, -\frac{\pi}{\theta}).
    \end{cases}
  \end{equation}
  We thus need to show that $F(z)$ has no zeros on
  \begin{equation}
    \tilde{\Gamma}_j := \{ e^{\frac{(2j + 1)\pi i}{\theta+1}} t : t \in (0, +\infty) \}, \quad j = 1, 2, \dotsc, \theta - 1
  \end{equation}
  outside a finite region. It is clear that $F(z)$ is analytic in the sectors $\arg (-z) \in (-(\theta - 1/2)\pi/(\theta + 1), (\theta - 1/2)\pi/(\theta + 1))$ and $\arg z \in (-3\pi/(2(\theta + 1)), 3\pi/(2(\theta + 1)))$ separately, and is continuous up to the boundary in either sector.  On the rays $\{ \arg z = 0, \pm \pi/(\theta + 1) \}$, $F(z)$ is defined by straightforward continuation.  Also we have that
  \begin{align}
    F(z) = {}& \bigO(z^{-1}), && \hspace{-3cm} z \to \infty, \\
    F(z) = {}& \bigO(z^{\frac{\alpha + 1}{\theta} - \frac{1}{2}}), && \hspace{-3cm} z \to 0 \text{ in the sector } \arg (-z) \in (-(\theta - \frac{1}{2}) \frac{\pi}{\theta + 1}, (\theta - \frac{1}{2}) \frac{\pi}{\theta + 1}), \\
    F(z) = {}&
               \begin{cases}
                 \bigO(z^{\theta - \alpha - \frac{1}{2}}), & \alpha > \theta - 1, \\
                 \bigO(z^{\frac{1}{2}} \log z), & \alpha = \theta - 1, \\
                 \bigO(z^{\frac{\alpha + 1}{\theta} - \frac{1}{2}}), & -1 < \alpha < \theta - 1,
               \end{cases}
                              && z \to 0 \text{ in the sector } \arg z \in (-\frac{3\pi}{2(\theta + 1)}, \frac{3\pi}{2(\theta + 1)}).
  \end{align}
  
  Let $M_{\alpha}$ be a nonnegative integer such that
  \begin{equation}
    M_{\alpha} =
    \begin{cases}
      0, & -1 < \alpha \leq \theta, \\
      \lceil \alpha - \theta \rceil, & \alpha > \theta.
    \end{cases}
  \end{equation}
  Then we define
  \begin{equation}
    \hat{F}(z) =
    \begin{cases}
      F(z), & \lvert z \rvert > 1, \\
      z^{M_{\alpha}}F(z), & \lvert z \rvert < 1.
    \end{cases}
  \end{equation}
  We have that $\hat{F}(z)$ is analytic on $\compC \setminus \hat{\Gamma}$, where
  \begin{align}
    \hat{ \Gamma} = {}& \bigcup^2_{j = 0}   \hat{ \Gamma} _j, &  \hat{ \Gamma} _0 = {}& \{ e^{2\pi it} : t \in (0, 1] \}, &
                                                                                                                         \begin{aligned}
                                                                                                                           \hat{ \Gamma} _1 = {}& \{ e^{\frac{3 \pi i}{2(\theta + 1)}} t: t \in (0, +\infty) \}, \\
                                                                                                                           \hat{ \Gamma} _2 = {}& \{ -e^{-\frac{3\pi i}{2(\theta + 1)}} t: t \in (-\infty, 0) \},
                                                                                                                         \end{aligned}
  \end{align}
  see Figure \ref{fig:Gamma} for the orientation of the contour.

  Denote $\rho(z)$ a function on $  \hat{ \Gamma} $ such that
  \begin{equation}
    \rho(z) = 
    \begin{cases}
      e^{\beta \pi i} F(z e^{-\frac{2\pi}{\theta + 1} i}) e^{\frac{\theta + 1}{2\theta} (1 - e^{-\frac{4\pi i}{\theta + 1}}) z^2 - \tau (1 - e^{-\frac{2\pi i}{\theta + 1}}) z}, & z \in  \hat{ \Gamma} _1 \text{ and } \lvert z \rvert > 1, \\
      e^{-\beta \pi i} F(z e^{\frac{2\pi}{\theta + 1} i}) e^{\frac{\theta + 1}{2\theta} (1 - e^{\frac{4\pi i}{\theta + 1}}) z^2 - \tau (1 - e^{\frac{2\pi i}{\theta + 1}}) z}, & z \in  \hat{ \Gamma} _2 \text{ and } \lvert z \rvert > 1, \\
      e^{\beta \pi i}z^{M_{\alpha}} F(z e^{-\frac{2\pi}{\theta + 1} i}) e^{\frac{\theta + 1}{2\theta} (1 - e^{-\frac{4\pi i}{\theta + 1}}) z^2 - \tau (1 - e^{-\frac{2\pi i}{\theta + 1}}) z}, & z \in  \hat{ \Gamma} _1 \text{ and } \lvert z \rvert < 1, \\
      e^{-\beta \pi i} z^{M_{\alpha}}F(z e^{\frac{2\pi}{\theta + 1} i}) e^{\frac{\theta + 1}{2\theta} (1 - e^{\frac{4\pi i}{\theta + 1}}) z^2 - \tau (1 - e^{\frac{2\pi i}{\theta + 1}}) z}, & z \in  \hat{ \Gamma} _2 \text{ and } \lvert z \rvert <1 , \\
      (z^{M_{\alpha}}-1)F(z)   & z \in  \hat{ \Gamma} _0.
    \end{cases}
  \end{equation}
  Then we have
  \begin{equation}
    \hat{F}_+(z) - \hat{F}_-(z) = \rho(z), \quad z \in   \hat{ \Gamma}.
  \end{equation}
  It is seen that $\rho(z)$ is continuous and bounded on $  \hat{ \Gamma} $ and it vanishes exponentially as $z \to \infty$. Therefore, we have that
  \begin{equation}
    \hat{F}(z) = \frac{1}{2\pi i} \int_{  \hat{ \Gamma} } \frac{\rho(w)}{w - z} dw.
  \end{equation}
  From this representation of $\hat{F}(z)$, we have that if there is a minimum $k$ such that
  \begin{equation} \label{eq:k-moment_minimum}
    \frac{1}{2\pi i} \int_{  \hat{ \Gamma} } w^k \rho(w) dw = c_k \neq 0,
  \end{equation}
  then as $z \to \infty$,
  \begin{equation}
    F(z) = \hat{F}(z) = c_k z^{-k - 1} + \bigO(z^{-k - 2}),
  \end{equation}
  and then $F(z)$ do not have zeros along the rays $ \tilde{ \Gamma}_k$ when $\lvert z \rvert$ is large enough, and we conclude the proof.

  Otherwise, \eqref{eq:k-moment_minimum} does not hold for any $k \geq 0$, and for all $m \geq 0$,
  \begin{equation}
    \frac{1}{2\pi i} \int_{  \hat{ \Gamma} } w^m \rho(w) dw = 0.
  \end{equation}
  We consider the value of $F(iy)$ where $y \in \realR$ and $\lvert y \rvert > 1$. We have that for any $m$,
  \begin{equation}
    \begin{split}
      F(z) = {}& \frac{1}{2\pi i} \int_{  \hat{ \Gamma} } \frac{\rho(w)}{w - z} dw \\
      = {}& \frac{1}{2\pi i} \int_{  \hat{ \Gamma} } \rho(w) \left( \frac{1}{w - z} - \frac{1 - (w/z)^m}{w - z} \right) dw \\
      = {}& \frac{1}{z^m} \frac{1}{2\pi i} \int_{  \hat{ \Gamma} } \rho(w) \frac{w^m}{w - z} dw.
    \end{split}
  \end{equation}

  Then we show that $F(iy)$ vanishes faster than exponential function as $y \to \pm \infty$. To see it, we note that
  \begin{equation}
    \lvert \rho(w) \rvert \leq M e^{-\epsilon \lvert w \rvert^2}
  \end{equation}
  for some $M, \epsilon > 0$. Then for large enough $y$
  \begin{equation} \label{eq:F(z)_est}
    \lvert F(iy) \rvert \leq \int^{\infty}_0 e^{-\epsilon x^2} x^m dx \lvert y \rvert^{-m}. 
  \end{equation}
  We take $m $ the integer such that $ m\leq 2 \epsilon \lvert y \rvert^2<m+1$. 
  Then from \eqref{eq:F(z)_est}, we derive that for large enough $y$
  \begin{equation}\label{eq:Expnential_Decay}
    \lvert F(iy) \rvert \leq e^{-\epsilon  \lvert y \rvert^2}.
  \end{equation}
  Thus, $z^{M_{\alpha}}F(z)$ is analytic and bounded in the sector $\arg (-z) \in (-(\theta - 1/4)\pi/(\theta + 1), (\theta - 1/4)\pi/(\theta + 1))$ and decays faster than exponential function at infinity along the imaginary axis.  Applying  Carlson's theorem, we have $F(z)$ is the zero function, which implies that $f(z)$ is also identically zero. We complete the proof of Lemma \ref{lem:finite_zeros}.
\end{proof}

\section{Proof of Theorem \ref{Pro:LaxPair}: Lax pair and nonlinear differential equations} \label{sec:Lax_pair}

From \eqref{eq:asyNew} and Proposition \ref{lem:existence_H_to_S}, we have the following asymptotic behaviours as $\xi \to \infty$ in $\compC_+$. To simplify the notation, we let $D = \diag(1, \sqrt{2}, 2)$ as in \eqref{eq:ABExpand}, and write \eqref{eq:asy_expansion_H_to_S} for $\theta = 2$ as
\begin{multline} \label{eq:PhiInfty}
  \Phi(\xi) = D (I + 2^{\frac{3}{2}} \frac{M(\tau)}{\xi} + 8 \frac{\M(\tau)}{\xi^2}+\bigO(\xi^{-3})) D^{-1} \diag(1, -\omega \xi^{\frac{1}{3}}, \omega^2 \xi^{\frac{2}{3}}) \\
  \times \begin{pmatrix}
      1 & 1 & 1 \\
      1 & \omega & \omega^2 \\
      1 & \omega^2 & \omega
    \end{pmatrix}
    \diag \left( e^{-\frac{3}{4} \omega^2 \xi^{\frac{2}{3}} - \tau \omega \xi^{\frac{1}{3}}}, e^{-\frac{3}{4} \omega \xi^{\frac{2}{3}} - \tau \omega^2 \xi^{\frac{1}{3}}}, e^{-\frac{3}{4} \xi^{\frac{2}{3}} - \tau \xi^{\frac{1}{3}}} \right), 
\end{multline}
for $\Im \xi>0$, where $M(\tau) = (m_{ij}(\tau))^3_{i, j = 1}$, $\M(\tau) = (\m_{ij}(\tau))^3_{i, j = 1}$ are analytic functions for $\tau\in\mathbb{R}$. 

Therefore, we have that the coefficients $A$ and $B$ in \eqref{eq:LaxPair},
as matrix-valued functions in $\xi$, are analytic in $\xi \in \compC \setminus \{ 0 \}$.  
As $\xi \to \infty$, we have the asymptotic expansion of $A$ and $B$
\begin{align}
  A = {}& 2^{-\frac{3}{2}} D \left( \sum^{\infty}_{k = 0} A_{-k} (2^{-\frac{3}{2}} \xi)^{-k} \right) D^{-1}, & B = {}& \sqrt{2} D \left( \sum^{\infty}_{k = -1} B_{-k} (2^{-\frac{3}{2}} \xi)^{-k} \right) D^{-1},
\end{align}
where $A_k$ and $B_k$ are functions in $\tau$ and  analytic for real $\tau$. Then by taking $\xi \to 0$, we find that $A = \bigO(\xi^{-1})$ and $B = \bigO(1)$. Hence we have the expressions of $A$ and $B$ as given in \eqref{eq:ABExpand}.
From the asymptotic expansion \eqref{eq:PhiInfty}, we have \eqref{eq:Coefficients} with
\begin{equation}
  \begin{aligned}
    a = {}& m_{22} - m_{11} - \frac{\sqrt{2}}{3} \tau m_{23}, & b = {}& m_{12} - \frac{\sqrt{2}}{3} \tau m_{13}, & c = {}& m_{13}, \\
    d = {}& m_{32} - m_{21} + \frac{\sqrt{2}}{3} \tau (m_{11} - m_{33}), & f = {}& -m_{23} + \frac{\sqrt{2}}{3} \tau m_{13}, & k = {}& m_{33} - m_{22} + \frac{\sqrt{2}}{3} \tau m_{12}.
  \end{aligned}
\end{equation}
The above functions are analytic for $\tau\in\mathbb{R}$.

Furthermore, we have
\begin{equation}
  \begin{split}
    & A_{-2} =
                 \left[\M, \begin{pmatrix}
                             0 & 0 & 0 \\
                             1 & 0 & 0 \\
                             -\frac{\sqrt{2}}{3} \tau & 1 & 0
                           \end{pmatrix} \right]-\left[M,\begin{pmatrix}
                                                           0 & 0 & 0 \\
                                                           1 & 0 & 0 \\
                                                           -\frac{\sqrt{2}}{3} \tau & 1 & 0
                                                         \end{pmatrix} \right]M+
                 \left[M, \begin{pmatrix}
                            0 & \frac{\sqrt{2}}{3} \tau & -1 \\
                            0 & \frac{1}{3} & \frac{\sqrt{2}}{3} \tau \\
                            0 & 0 & \frac{2}{3}
                          \end{pmatrix} \right]-
                 M\\
    & = {} \textstyle
          \begin{pmatrix}
            \m_{12} & \m_{13} & 0 \\
            \m_{22} - \m_{11} & \m_{23} - \m_{12} & -\m_{13} \\
            \m_{32} - \m_{21} & \m_{33} - \m_{22} & -\m_{23}
          \end{pmatrix}-\frac{\sqrt{2}}{3} \tau \begin{pmatrix}
                                                  \m_{13} & 0 & 0 \\
                                                  \m_{23} & 0 & 0 \\
                                                  \m_{33} - \m_{11} & -\m_{12} & -\m_{13}
                                                \end{pmatrix}\\
               & \textstyle
                 +\left(- \begin{pmatrix}
                            m_{12}+1 & m_{13} & 0 \\
                            m_{22} - m_{11} & m_{23} - m_{12}+1 & -m_{13} \\
                            m_{32} - m_{21} & m_{33} - m_{22} & -m_{23}+1
                          \end{pmatrix}+\frac{\sqrt{2}}{3} \tau  \begin{pmatrix}
                                                                   m_{13} & 0 & 0 \\
                                                                   m_{23} & 0 & 0 \\
                                                                   m_{33} - m_{11} & -m_{12} & -m_{13}
                                                                 \end{pmatrix}\right)M\\
               &\textstyle
                 +\frac{1}{3} \begin{pmatrix}
                               3m_{31} & m_{12}+3m_{32} & -3m_{11}+2m_{13}+3m_{33} \\
                               - m_{21} &0 &-3m_{21}+m_{23} \\
                               -2m_{31}  & - m_{32} & -3m_{31}
                             \end{pmatrix}
                 +\frac{\sqrt{2}}{3} \tau   \begin{pmatrix}
                                              -m_{21} & m_{11} - m_{22} & m_{12} - m_{23} \\
                                              -m_{31} & m_{21} - m_{32} & m_{22} - m_{33} \\
                                              0 & m_{31} & m_{32}
                                            \end{pmatrix}.
  \end{split}
\end{equation}
Since $A_{-2} = 0$, we have 
\begin{equation}\label{eq:Relation_1}
  (A_{-2})_{13}= a+k+c(f-b- \frac{\sqrt{2}}{3} c\tau-\frac{1}{3})=0. 
\end{equation}
Similarly, we have
\begin{align}\label{eq:Relation_2}
  (A_{-2})_{12}+ (A_{-2})_{23}&\textstyle =m_{13}(m_{11}-2m_{22}+m_{33}) +m_{23}(m_{12}-m_{23}-\frac{2}{3})-m_{12}(m_{12}+\frac{2}{3})\nonumber\\
   &~~~~~~~~~~~~~~~~~~~~~~~\textstyle +(m_{32}-m_{21})+\frac{\sqrt{2}}{3} \tau((m_{11}-m_{33})+m_{13}(m_{12}+m_{23})) \nonumber \\
 &\textstyle =d+c(k-a)-(b^2+bf+f^2)+\frac{2}{3}(f-b)+\frac{\sqrt{2}}{3} c\tau(f-b-\frac{\sqrt{2}}{3} c\tau-\frac{4}{3}) \nonumber \\
  \textstyle &=0.
\end{align}
From  \eqref{eq:defn_N(xi)} with $\theta = 2$, we have that the eigenvalues of $A_{-1}$ are $1/2 - \alpha/3$, $\alpha/6$ and $1/2 + \alpha/6$. Hence, by computing the characteristic function of $A_{-1}$ and comparing its linear term with that of $(z - (1/2 - \alpha/3))(z - \alpha/6)(z - (1/2 + \alpha/6))$, we obtain the relation \eqref{eq:Consatant_Equation_3}.
From  \eqref{eq:Consatant_Equation_3},  \eqref{eq:Relation_1} and  \eqref{eq:Relation_2}, we have \eqref{eq:Constant_Equation_1}.
Substituting \eqref{eq:Constant_Equation_1} into \eqref{eq:Relation_1}, we have \eqref{eq:Constant_Equation_2}.

The compatibility of the Lax pair yields the zero-curvature equation
\begin{equation}\label{eq:ZeroC}
  \frac{dA}{d\tau} - \frac{dB}{d\xi} + AB - BA = 0,
\end{equation}
from which we derive the system of nonlinear differential equations \eqref{eq:NonEqs_c}-\eqref{eq:NonEqs_d}.
Using \eqref{eq:Constant_Equation_1}-\eqref{eq:NonEqs_f}, we may express $a$, $b$,  $f$ and $k$ in terms of $c$, $c'$ and $c''$
as follows
\begin{align}
  b = {}& \textstyle -\frac{1}{2}\left(\frac{c'}{\sqrt{2}}+c^2+ \frac{\sqrt{2}}{3} \tau c\right)+\frac{1}{2}\gamma, \label{eq:Exp_b} \\
  f= {}& \textstyle  -\frac{1}{2}\left(\frac{c'}{\sqrt{2}}+c^2-\frac{\sqrt{2}}{3} \tau c\right)-\frac{1}{2}\gamma, \label{eq:Exp_f} \\
  a= {}& \textstyle -\frac{1}{4}c''-\frac{3}{2\sqrt{2}}cc'-\frac{1}{2}c^3+\frac{c}{2}\left(-\frac{2}{9} \tau^2+\gamma+\frac{1}{3}\right)-\frac{1}{3\sqrt{2}}\gamma\tau, \label{eq:Exp_a} \\
  k = {}& \textstyle \frac{1}{4}c''+\frac{3}{2\sqrt{2}}cc'+\frac{1}{2}c^3+\frac{c}{2}\left(\frac{2}{9} \tau^2+\gamma+\frac{1}{3}\right)+\frac{1}{3\sqrt{2}}\gamma\tau,\label{eq:Exp_k} 
 \end{align}
where $\gamma= \frac{1}{36} + \frac{\alpha}{12} - \frac{\alpha^2}{12}$.
From  \eqref{eq:Consatant_Equation_3}, \eqref{eq:NonEqs_a} and  \eqref{eq:NonEqs_k}, we have
\begin{equation}\label{eq:NonEqs_ak}
  \textstyle  \frac{(a-k)'}{\sqrt{2}}=3(f+b)^2+3c(a-k)-\frac{4}{9} \tau^2(b+f)-\frac{1}{3}(b-f)+ \frac{4}{9} \tau^2c^2+ \frac{\sqrt{2}}{3} \tau c(b-f)+\frac{2\sqrt{2}}{3} \tau(a+k)+\frac{4\sqrt{2}}{9}\tau c+2\gamma.
\end{equation}
Substituting \eqref{eq:Exp_b}-\eqref{eq:Exp_k} into \eqref{eq:NonEqs_ak}, 
 we obtain \eqref{eq:Third_order_Eqs_c}. 
The  Chazy-I equation \eqref{eq:Chazy_Eqs_y} follows directly from  \eqref{eq:Third_order_Eqs_c}.  And the equation \eqref{eq:Chazy_Eqs_u}
can be derived from \eqref{eq:Consatant_Equation_3},  \eqref{eq:Exp_b}-\eqref{eq:Exp_k} and the fact that $\det(A_{-1})=-(\frac{\alpha^{3}}{108} + \frac{\alpha^{2}}{72} - \frac{\alpha}{24})$.
We complete the proof of Theorem \ref{Pro:LaxPair}.

\section{Asymptotics of $\Phi^{(\tau)}(\xi)$ and $K^{(\tau)}(x, y)$ as $\tau\to-\infty$} \label{sec:asy_tau_to_negative}

Let
\begin{equation}
  P^{(\infty)}(\xi) = \Upsilon(\xi) \Omega_{\pm} e^{-\Theta(\xi)},
\end{equation}
where $ \Upsilon(\xi)$, $\Omega_{\pm}$ and $\Theta(\xi)$ are the same as in \eqref{eq:asyNew}. 
Then we denote the contour $\Sigma^{(\tau)}_R$ as $\partial D(0, (-\tau)^{-1}) \cup \Sigma^{(\tau)}_{R, 1} \cup \Sigma^{(\tau)}_{R, 2}$ where $\Sigma^{(\tau)}_{R, 1} = \{ iy : y \in (\tau^{-1}, +\infty) \}$ and $\Sigma^{(\tau)}_{R, 2} = \{ iy : y \in (-\infty, -\tau^{-1}) \}$, such that $\partial D(0, (-\tau)^{-1})$ is oriented clockwise, $\Sigma^{(\tau)}_{R, 1}$ upward, and $\Sigma^{(\tau)}_{R, 2}$ downward.
Let
\begin{equation}
  R(\xi) = \diag((-\tau)^{\frac{k}{\theta + 1}})^{\theta}_{k = 0} \Phi^{(\tau)}(\xi) \times
  \begin{cases}
    P^{(\infty)}(\xi)^{-1} \diag((-\tau)^{-\frac{k}{\theta + 1}})^{\theta}_{k = 0}, & \lvert \xi \rvert > (-\tau)^{-1}, \\
    \Phi^{(\Mei)}((-\tau)^{\theta + 1 }\xi)^{-1} \diag((-\tau)^{\frac{\theta k}{\theta + 1}})^{\theta}_{k = 0}, & \lvert \xi \rvert < (-\tau)^{-1},
  \end{cases}
\end{equation}
where $\Phi^{(\Mei)}$ is defined in \eqref{eq:defn_Phi^Mei}. Although $\Phi^{(\tau)}(\xi)$ is not continuous on $\realR$ and $\{ iy : \lvert y \rvert < (-\tau)^{-1} \}$, it is clear that $R(\xi)$ is analytic there by simple analytic continuation. Furthermore, $R(\xi)$ satisfies the following RH problem:
\begin{RHP}
  $R(\xi)$ is analytic in $\compC \setminus \Sigma^{(\tau)}_R$, and is continuous up to the boundary. It satisfies
  \begin{enumerate}
  \item
    For $\xi \in \Sigma^{(\tau)}_R$,
    \begin{equation}
      R_+(\xi) = R_-(\xi) J_R(\xi), 
    \end{equation}
    where
    \begin{multline}
      J_R(\xi) = \\
      \begin{cases}
        \diag((-\tau)^{\frac{k}{\theta + 1}})^{\theta}_{k = 0}  \Upsilon(\xi) \Omega_+ \left(
        \begin{pmatrix}
          1 & e^{\beta \pi i} \exp(i g^{(\tau)}(\xi)) \\
          0 & 1
        \end{pmatrix}
        \oplus I_{\theta - 1}
        \right) & \\
        \times \Omega^{-1}_+ \Upsilon(\xi)^{-1} \diag((-\tau)^{-\frac{k}{\theta + 1}})^{\theta}_{k = 0}, & \xi \in \Sigma^{(\tau)}_{R, 1}, \\
        \diag((-\tau)^{\frac{k}{\theta + 1}})^{\theta}_{k = 0}  \Upsilon(\xi) \Omega_- \left(
        \begin{pmatrix}
          1 & -e^{-\beta \pi i} \exp(-i g^{(\tau)}(\xi)) \\
          0 & 1
        \end{pmatrix}
        \oplus I_{\theta - 1}
        \right) & \\
        \times \Omega^{-1}_- \Upsilon(\xi)^{-1} \diag((-\tau)^{-\frac{k}{\theta + 1}})^{\theta}_{k = 0}, & \xi \in \Sigma^{(\tau)}_{R, 2}, \\
        \diag((-\tau)^{\frac{k}{\theta + 1}})^{\theta}_{k = 0} P^{(\infty)}(\xi) \Phi^{(\Mei)}((-\tau)^{\theta + 1 }\xi)^{-1} \diag((-\tau)^{\frac{\theta k}{\theta + 1}})^{\theta}_{k = 0}, & \lvert \xi \rvert = (-\tau)^{-1},
      \end{cases}
    \end{multline}
    where
    \begin{equation}
      g^{(\tau)}(\xi)= \frac{\theta + 1}{\theta} \sin\left( \frac{2\pi}{\theta+1} \right) \xi^{\frac{2}{\theta + 1}} - 2\tau \sin\left( \frac{\pi}{\theta + 1} \right) \xi^{\frac{1}{\theta + 1}}.
    \end{equation}
  \item
    As $\xi \to \infty$, 
    \begin{equation}
      R(\xi) = I + \bigO(\xi^{-1}).
    \end{equation}
  \end{enumerate}
\end{RHP}

It is straightforward to check that $\lvert J_R(\xi) - I \rvert = \bigO(\exp(-\epsilon (-\tau)^{\frac{\theta}{\theta + 1}}) \lvert \xi \rvert^{\frac{1}{\theta + 1}})$ for some $\epsilon > 0$ on $\Sigma^{(\tau)}_{R, 1} \cup \Sigma^{(\tau)}_{R, 2}$. We also have $\lvert J_R(\xi) - I \rvert = \bigO((-\tau)^{-2/(\theta + 1)})$ on $\partial D(0, (-\tau)^{-1})$. To see this, we note
\begin{align}
  \diag((-\tau)^{\frac{k}{\theta + 1}})^{\theta}_{k = 0} P^{(\infty)}(\xi) = {}& \Upsilon((-\tau) \xi) \Omega_{\pm} e^{-\Theta(\xi)}, \\
  \Phi^{(\Mei)}((-\tau)^{\theta + 1 }\xi)^{-1} \diag((-\tau)^{\frac{\theta k}{\theta + 1}})^{\theta}_{k = 0} = {}& e^{-\tau \Lambda(\xi)} \Omega^{-1}_{\pm} \Upsilon((-\tau) \xi)^{-1} (I + \bigO(\tau^{-\frac{\theta}{\theta + 1}})),
\end{align}
where $\Lambda(\xi)$ is defined in \eqref{eq:defn_Lambda}, and $e^{-\Theta(\xi)} e^{-\tau \Lambda(\xi)} = I + \bigO(\tau^{-2/(\theta + 1)})$. Hence, by the standard small norm argument, we have that $R(\xi) = I + \bigO((-\tau)^{-2/(\theta + 1)})$ uniformly in $\xi$, if $(-\tau)$ is large enough. We then conclude that
\begin{equation}
  \Phi^{(\tau)}(\xi) = \diag((-\tau)^{-\frac{k}{\theta + 1}})^{\theta}_{k=0} (I + \bigO((-\tau)^{-\frac{2}{\theta+1}})) \diag((-\tau)^{-\frac{\theta k}{\theta + 1}})^{\theta}_{k=0} \Phi^{(\Mei)}((-\tau)^{\theta + 1 }\xi)
\end{equation}
for all $\lvert \xi \rvert < (-\tau)^{-1}$.

As a consequence, we have that for all $k = 0, 1, \dotsc, \theta$, and $\xi$ in a compact subset of $\compC$,
\begin{align}
  \Phi^{(\tau)}_{0, k}((-\tau)^{-\theta - 1} \xi) = {}& \Phi^{(\Mei)}_{0, k}(\xi) (1 + \bigO((-\tau)^{-\frac{2}{\theta + 1}})), \\
  (\Phi^{(\tau)}((-\tau)^{-\theta - 1} \xi))^{-1}_{k, \theta} = {}& (-\tau)^{\theta} (\Phi^{(\Mei)}(\xi))^{-1}_{k, \theta} (1 + \bigO((-\tau)^{-\frac{2}{\theta + 1}})).
\end{align}
Recall that $\phi^{(\tau)}$ is defined by the first row of $\Phi^{(\tau)}$, and $\phitilde^{(\tau)}$ is defined by the last column of $(\Phi^{(\tau)})^{-1}$. Also recall the relation between $\Phi^{(\Mei)}$ and $\Psi^{(\Mei)}$ given in \eqref{eq:Phi^Mei_pre_and_Psi_Mei}, and the relation between $(\Phi^{(\Mei)})^{-1}$ and $\Psitilde^{(\Mei)}$ given in \eqref{eq:Phi^Mei_inverse_pre_and_Psitilde_Mei}, \eqref{eq:defn_Phitilde^Mei} and \eqref{eq:Phitilde^Mei_and_Phi^Mei_inverse}. We have that as $\tau \to -\infty$, the limits of $\phi^{(\tau)}$ and $\phitilde^{(\tau)}$ can be expressed by the first row of $\Psi^{(\Mei)}$ and the first row of $\Psitilde^{(\Mei)}$, respectively. Hence, we obtain \eqref{eq:expr_phi^Mei} and \eqref{eq:expr_phitilde^Mei} and prove part \ref{enu:lem:phi_phitilde_sy:1} of Lemma \ref{lem:phi_phitilde_sy}.

Now we prove part \ref{enu:thm:kernel_asy:1} of Theorem \ref{thm:kernel_asy}. For $\xi, \eta \in (0, \infty)$ and a large $C > 0$, we divide the integral domain on the right-hand side of \eqref{eq:kernel_limit_formula} that defines $K^{(\tau)}(x, y)$ into three parts: $(-\tau, -C) \cup [-C, C] \cup (C, +\infty)$. On $(-\tau, -C)$, as $\tau \to -\infty$, by part \ref{enu:lem:phi_phitilde_sy:1} of Lemma \ref{lem:phi_phitilde_sy},
\begin{equation}
  \begin{split}
    & ((-\tau)^{-\frac{\theta + 1}{\theta}} \xi)^{\alpha} \int^{-C}_{\tau} \phi^{(\sigma)}((-\tau)^{-\frac{\theta + 1}{\theta}} \xi) \phitilde^{(\sigma)}((-\tau)^{-\frac{\theta + 1}{\theta}} \eta) d\sigma \\
    = {}& 2\pi (\theta + 1) (-\tau)^{\frac{1}{\theta}} \int^{-C}_{\tau} \left( \frac{\sigma}{\tau} \right)^{\frac{1}{\theta}} \left( \left( \frac{\sigma}{\tau} \right)^{\frac{\theta + 1}{\theta}} \xi \right)^{\alpha} \\
    & \phantom{2\pi (\theta + 1) (-\tau)^{\frac{1}{\theta}}}
      \times \phi^{(\Mei)} \left( \left( \frac{\sigma}{\tau} \right)^{\frac{\theta + 1}{\theta}} \xi \right) \phitilde^{(\Mei)} \left( \left( \frac{\sigma}{\tau} \right)^{\frac{\theta + 1}{\theta}} \eta \right) (1 + \bigO((-\sigma)^{-\frac{2}{\theta + 1}})) d\sigma \\
    = {}& 2\pi \theta (-\tau)^{\frac{\theta + 1}{\theta}} \left[ \int^1_{(-C/\tau)^{\frac{\theta + 1}{\theta}}} (u\xi)^{\alpha} \phi^{(\Mei)}(u \xi) \phitilde^{(\Mei)}(u \eta) du + \epsilon(C) \right],
  \end{split}
\end{equation} 
where the error term $\epsilon(C) \to 0$ as $C \to +\infty$, uniformly in $\tau$. Next, we consider the integral
\begin{equation}
  \int_I \phi^{(\sigma)}((-\tau)^{-\frac{\theta + 1}{\theta}} \xi) \phitilde^{(\sigma)}((-\tau)^{-\frac{\theta + 1}{\theta}} \eta) d\sigma
\end{equation}
for $I = (C, +\infty)$ or $I = [-C, C]$. By estimate \eqref{eq:estimate_phi_phitilde_prod} at the end of Section \ref{sec:asy_tao_to_+infty}, the integral over $(C, +\infty)$ is bounded by $e^{-\epsilon C^2}$ for some $\epsilon > 0$. Finally, the integral over $[-C, C]$ is bounded. Summing over the three estimates, we thus finish the proof.

\section{Asymptotics of $\Phi^{(\tau)}(\xi)$ and $K^{(\tau)}(x, y)$ as $\tau\to\infty$} \label{sec:asy_tao_to_+infty}

\subsection{Properties of $g$-functions}

We use notation set up in Appendix \ref{sec:limiting_J}. Let
\begin{equation} \label{eq:defn_M(s)_Cg}
  M(s) = - C_g \left[ (\theta^{-\frac{1}{\theta + 1}} s)^2 + \frac{2}{\theta} (\theta^{-\frac{1}{\theta + 1}} s) + \frac{1}{\theta} - 1 \right], \quad C_g = \frac{1}{2} \frac{\theta^3}{(\theta + 1)(\theta - 1)^2}.
\end{equation}
We introduce the $g$-functions $g_0(z), g_1(z), \dotsc, g_{\theta}(z)$ on $\{ \arg z \in (0, \pi) \cup (-\pi, 0) \}$
\begin{align} \label{eq: g_another}
  g_0(z) = {}& M(\Iinvhat^{(\pre)}_2(z^{\frac{1}{\theta}})), & g_k(z) = M(\Iinvhat^{(\pre)}_1(e^{\frac{2(k - 1)\pi i}{\theta}} z^{\frac{1}{\theta}})), \quad k = 1, \dotsc, \theta.
\end{align}
Then,  the $g$-functions satisfy the following properties. 
 
\begin{lemma} \label{lem:g_limit}
  The functions $g_j(z),0\leq k\leq \theta$ defined in \eqref{eq: g_another} satisfy the properties below.
  \begin{enumerate} 
  \item
    $g_0(z)$ is analytic on $\compC \setminus [1, +\infty)$, $g_1(z)$ is analytic on $\compC \setminus \{ [1, +\infty) \cup (-\infty, 0] \}$, and $g_k(z)$ are analytic on $\compC \setminus (-\infty, 0]$ for $k = 2, \dotsc, \theta$, and they satisfy the relations (with $g_{\theta + 1}(z) = g_1(z)$ in \eqref{eq: gR2})
    \begin{align}
      g_{0,\pm}(x) = {}& g_{1,\mp}(x), & x \in {}& (1,\infty), \label{eq: gR1} \\
      g_{k,+}(x) = {}& g_{k+1,-}(x), & x \in {}& (-\infty,0), \quad 1\leq k\leq  \theta, \label{eq: gR2}
    \end{align}
  \item
    As $z \to \infty$, we have (with $C_g$ defined in \eqref{eq:defn_M(s)_Cg})
    \begin{align}
      g_0(z) = {}&
                   \begin{cases}
                     -C_g \left(  e^{-\frac{2\pi i}{\theta+1}}\left(\frac{z}{\theta}\right)^{\frac{2}{\theta+1}}-\frac{2}{\theta}(\theta-1) e^{-\frac{\pi i}{\theta+1}}\left(\frac{z}{\theta}\right)^{\frac{1}{\theta+1}}+\bigO(z^{-\frac{1}{\theta+1}})\right), & \arg z\in(0,\pi),
                     \\
                     -C_g \left(  e^{\frac{2\pi i}{\theta+1}}\left(\frac{z}{\theta}\right)^{\frac{2}{\theta+1}}-\frac{2}{\theta}(\theta-1) e^{\frac{\pi i}{\theta+1}}\left(\frac{z}{\theta}\right)^{\frac{1}{\theta+1}}+\bigO(z^{-\frac{1}{\theta+1}})\right), & \arg z\in(-\pi,0),
                   \end{cases} \label{eq: g0Expand} \\
      g_1(z) = {}&
                   \begin{cases}
                     -C_g \left(  e^{\frac{2\pi i}{\theta+1}}\left(\frac{z}{\theta}\right)^{\frac{2}{\theta+1}}-\frac{2}{\theta}(\theta-1) e^{\frac{\pi i}{\theta+1}}\left(\frac{z}{\theta}\right)^{\frac{1}{\theta+1}}+\bigO(z^{-\frac{1}{\theta+1}})\right), & \arg z\in(0,\pi),
                     \\
                     - C_g \left(  e^{-\frac{2\pi i}{\theta+1}}\left(\frac{z}{\theta}\right)^{\frac{2}{\theta+1}}-\frac{2}{\theta}(\theta-1) e^{-\frac{\pi i}{\theta+1}}\left(\frac{z}{\theta}\right)^{\frac{1}{\theta+1}}+\bigO(z^{-\frac{1}{\theta+1}})\right), & \arg z\in(-\pi,0),
                   \end{cases} \label{eq: g1Expand} \\
      \intertext{and for $2\leq k\leq \theta$}
      g_k(z) = {}&  \textstyle{- C_g \left(  e^{\frac{2(2k-1)\pi i}{\theta+1}}\left(\frac{z}{\theta}\right)^{\frac{2}{\theta+1}}-\frac{2}{\theta}(\theta-1) e^{\frac{(2k-1)\pi i}{\theta+1}}\left(\frac{z}{\theta}\right)^{\frac{1}{\theta+1}}+\bigO(z^{-\frac{1}{\theta+1}})\right)}, \quad \arg z\in(-\pi,\pi), \label{eq: gkExpand}
    \end{align}
    Moreover, all the $\bigO(z^{-1/(\theta + 1)})$ terms have asymptotic expansions as power series of $z^{-1/(\theta + 1)}$.
  \item
    In a small neighbourhood of $1$, we have an analytic function $f(z)$, such that
    \begin{align} \label{eq:defn_f_Airy}
      g_0(z) - g_1(z) = {}& \frac{4}{3} f(z)^{\frac{3}{2}}, & f(1) = {}& 0, & f'(1) = {}& -\frac{2^{\frac{1}{3}} \theta}{(\theta - 1)^{\frac{2}{3}} (\theta + 1)^{\frac{5}{3}}}.
    \end{align}
  \item
    We have
    \begin{equation}\label{eq:Inequlity_g0_g1}
      g_0(z)-g_1(z)>0 ,\quad z\in(0,1),
    \end{equation}
    and there exists a contour $\Gamma_1 \subseteq \compC_+$ that connects $1$ and $i \cdot \infty$, such that
    \begin{equation}\label{eq:Inequlity_Re_g0_g1}
      \Re (g_0(z)-g_1(z)) = \Re (g_0(\bar{z}) - g_1(\bar{z})) < 0, \quad z \in \Gamma_1.
    \end{equation}
  \end{enumerate}
\end{lemma}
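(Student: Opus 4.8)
\textbf{Plan of proof for Lemma \ref{lem:g_limit}.} The statement collects four properties of the $g$-functions built from the limiting map $J^{(\pre)}$ of Appendix \ref{sec:limiting_J}, so the proof is essentially a bookkeeping exercise translating known properties of $J^{(\pre)}$ and its inverses $\Iinvhat^{(\pre)}_1, \Iinvhat^{(\pre)}_2$ through the quadratic polynomial $M(s)$ in \eqref{eq:defn_M(s)_Cg}. I would organize it around the branch structure of $\Iinvhat^{(\pre)}_1$ and $\Iinvhat^{(\pre)}_2$, which by construction map $\compC\setminus\overline{\D^{(\pre)}}$ onto $\compC\setminus[\aend^{(\pre)},\bend^{(\pre)}]$ and $\D^{(\pre)}\setminus[-1,0]$ onto $\halfH\setminus[\aend^{(\pre)},\bend^{(\pre)}]$ respectively, with the same $\theta$-fold symmetry $z\mapsto z\,e^{2\pi i/\theta}$ that generates $g_1,\dots,g_\theta$ from a single branch. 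This reduces all four parts to computations with $M$ and the conformal maps.

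For part (1), the analyticity domains follow directly from the domains of the inverse maps: $g_0$ inherits a cut along $[1,+\infty)$ from $\Iinvhat^{(\pre)}_2$ (after the $z\mapsto z^{1/\theta}$ substitution the cut $[\aend^{(\pre)},\bend^{(\pre)}]$ with $\bend^{(\pre)}$ normalized to $1$ becomes $[1,+\infty)$ in the $z$-plane), and similarly for $g_1,\dots,g_\theta$. The jump relations \eqref{eq: gR1} and \eqref{eq: gR2} come from the boundary correspondence: on $(1,+\infty)$ the two boundary values of $\Iinvhat^{(\pre)}_2$ coincide with the two boundary values of $\Iinvhat^{(\pre)}_1$ restricted to $\gammahat^{(\pre)}_1$ versus $\gammahat^{(\pre)}_2$, and on $(-\infty,0)$ the cyclic relabelling $z^{1/\theta}\mapsto z^{1/\theta}e^{2\pi i/\theta}$ permutes the $\theta$ preimage sheets, giving $g_{k,+}=g_{k+1,-}$. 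Part (2) is an asymptotic expansion: near $z=\infty$ one has $\Iinvhat^{(\pre)}_j(w)\sim$ (a constant multiple of $w^{\theta/(\theta+1)}$), and substituting into the quadratic $M(s)= -C_g[\theta^{-2/(\theta+1)}s^2 + \tfrac{2}{\theta}\theta^{-1/(\theta+1)}s + \tfrac{1}{\theta}-1]$ and then replacing $w=z^{1/\theta}$ (with the appropriate root of unity for $g_k$) produces the $z^{2/(\theta+1)}$ and $z^{1/(\theta+1)}$ terms with the stated phases $e^{\pm 2\pi i/(\theta+1)}$, $e^{\pm\pi i/(\theta+1)}$, $e^{2(2k-1)\pi i/(\theta+1)}$; the constant term $\tfrac1\theta-1$ gets absorbed and the remainder is a convergent power series in $z^{-1/(\theta+1)}$ because $\Iinvhat^{(\pre)}_j$ has such an expansion.

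Part (3) localizes near $z=1$: since $\Iinvhat^{(\pre)}_2(w)\to s_{b}^{(\pre)}$ (the critical point, which is a double point of $J^{(\pre)}$) as $w\to 1$, and $M'$ vanishes at $s_b^{(\pre)}$ precisely because $s_b^{(\pre)}=-1/\theta$ by \eqref{eq:defn_M(s)_Cg} (that is where $M'(s)=-C_g(2\theta^{-2/(\theta+1)}s + \tfrac2\theta\theta^{-1/(\theta+1)})$ vanishes after the substitution), the difference $g_0-g_1$ vanishes to order $3/2$ in $z-1$. I would compute $f'(1)$ by expanding $J^{(\pre)}$ to second order about $s_b^{(\pre)}$, inverting to get $\Iinvhat^{(\pre)}_j(w)-s_b^{(\pre)}\sim\text{const}\,(w-1)^{1/2}$, plugging into the second-order Taylor expansion of $M$, and matching $\tfrac43 f^{3/2}$; this is where the explicit constant $-2^{1/3}\theta/((\theta-1)^{2/3}(\theta+1)^{5/3})$ in \eqref{eq:defn_f_Airy} comes out, and it is the most calculation-heavy step but still routine. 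Finally part (4): the inequality $g_0-g_1>0$ on $(0,1)$ is a sign computation — $\Iinvhat^{(\pre)}_2$ maps $(0,1)$ into $\D^{(\pre)}\cap\realR=(-1,s_b^{(\pre)})$ while $\Iinvhat^{(\pre)}_1$ maps it to a point outside, and evaluating the real quadratic $M$ at these two real arguments gives a positive difference; the existence of $\Gamma_1$ with $\Re(g_0-g_1)<0$ follows from the local behaviour $\tfrac43 f^{3/2}$ near $1$ (which has a sector of negative real part emanating into $\compC_+$) together with the growth in part (2) showing $\Re(g_0-g_1)\to-\infty$ along the imaginary direction, so a steepest-descent-type contour can be drawn connecting $1$ to $i\infty$; the reflection symmetry $g_j(\bar z)=\overline{g_j(z)}$ (inherited from the real-analyticity of $J^{(\pre)}$) gives the conjugate statement. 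The main obstacle I anticipate is keeping the branch choices of the many fractional powers consistent across the $\theta$ sheets while verifying \eqref{eq: gR1}--\eqref{eq: gR2}; everything else is direct substitution into $M$.
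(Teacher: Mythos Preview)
Your plan follows essentially the same route as the paper's proof: jump relations from the boundary correspondence of $\Iinvhat^{(\pre)}_1,\Iinvhat^{(\pre)}_2$; asymptotics by inverting $J^{(\pre)}$ at infinity and substituting into the quadratic $M$; the $(1-z)^{3/2}$ behaviour at $z=1$ from the local square-root expansions of the inverses; and the sign/contour argument for part (4). Your observation that $M'$ vanishes at the critical point of $J^{(\pre)}$ is exactly the mechanism behind the $3/2$ exponent, and the paper does not spell this out, so that is a useful addition.

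That said, several details of the geometry of $J^{(\pre)}$ in your sketch are off and would need fixing before the computations go through. The limiting map $J^{(\pre)}$ in Appendix~\ref{sec:limiting_J} is not of the same shape as $J_c$: its ``cut'' is the unbounded ray $(1,+\infty)$, not a bounded interval $[\aend^{(\pre)},\bend^{(\pre)}]$; the region $\D^{(\pre)}$ lies to the right of the unbounded contour $\gammahat^{(\pre)}$ and has $\D^{(\pre)}\cap\realR=(\sigma_0,+\infty)$, not $(-1,s_b^{(\pre)})$; and the critical point is $\sigma_0=-\theta^{-\theta/(\theta+1)}$, not $-1/\theta$. For part (4) the paper uses the explicit real images $\Iinvhat^{(\pre)}_1((0,1))=(-(\theta+1)\theta^{-\theta/(\theta+1)},\sigma_0)$ and $\Iinvhat^{(\pre)}_2((0,1))=(\sigma_0,0)$, and the sign of $g_0-g_1$ on $(0,1)$ then comes from comparing the distances of these two real points to the vertex $\sigma_0$ of the parabola $M$; your description ``$\Iinvhat^{(\pre)}_1$ maps it to a point outside'' is not quite the right picture. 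Once you correct these to match Appendix~\ref{sec:limiting_J}, your argument is the paper's.
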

\begin{proof} 
  Equation \eqref{eq:PreImage} together with \eqref{eq: g_another}, implies the jump relation \eqref{eq: gR1}. The other relation \eqref{eq: gR2} follows directly from the definition of  $\Iinvhat^{(\pre)}_1$ and \eqref{eq: g_another}. 
  
  From \eqref{eq:J_mapping} and \eqref{eq:PreImage}, we can derive the large $z$ expansion of   $\Iinvhat^{(\pre)}_1$ and $\Iinvhat^{(\pre)}_2$. Using this and \eqref{eq: g_another}, we obtain the large $z$ expansion of  $g_k(z)$ as given in \eqref{eq: g0Expand}-\eqref{eq: gkExpand}. Similarly, from \eqref{eq:J_mapping},  \eqref{eq: g_another} and \eqref{eq:PreImage}, we have \eqref{eq:defn_f_Airy}.
  
  It follows from the  definition of the mappings $\Iinvhat^{(\pre)}_1$ and $\Iinvhat^{(\pre)}_2$ that  
  \begin{equation}
    \Iinvhat^{(\pre)}_{1}((0,1))=(-(\theta + 1)\theta^{-\frac{\theta}{\theta + 1}},-\theta^{-\frac{\theta}{\theta + 1}}), ~~\Iinvhat^{(\pre)}_{2 }((0,1))=(-\theta^{-\frac{\theta}{\theta + 1}},0).\end{equation}
  This, together with \eqref{eq: g_another}, implies \eqref{eq:Inequlity_g0_g1}. 

  The existence of a contour $\Gamma_1 \subseteq \compC_+$ such that \eqref{eq:Inequlity_Re_g0_g1} holds follows from the analyticity of $g_0$ and $g_1$ in $\mathbb{C}\setminus\mathbb{R}$ and the behaviours of $g_0$ and $g_1$ given in \eqref{eq: g0Expand}, \eqref{eq: g1Expand} and  \eqref{eq:Inequlity_g0_g1}.
\end{proof} 

\subsection{Normalization: $\Phi\to T$}

Recall the contour $\Gamma_1$ in Lemma \ref{lem:g_limit}. Let $\Gamma_2 = \{ z \in \compC_- : \bar{z} \in \Gamma_1 \}$ be the conjugate of $\Gamma_2$, and orient both of them from $1$ to $\infty$. We denote $\Gamma = \Gamma_1 \cup \Gamma_2 \cup (-\infty, 0) \cup (0, 1) \cup (1, +\infty)$, and orient the real part of $\Gamma$ from left to right. We divide $\compC$ into four infinite regions: $\Omega_1$ between $[1, +\infty)$ and $\Gamma_1$, $\Omega_2$ between $\Gamma_1$ and $(-\infty, 1]$, $\Omega_3$ between $(-\infty, 1]$ and $\Gamma_2$, and $\Gamma_4$ between $\Gamma_2$ and $[1, +\infty)$. Each $\Omega_i$ overlaps with the $i$-th quadrant. 

Let the $(\theta + 1) \times (\theta + 1)$ matrix-valued function $T(\xi)$ be defined on $\compC \setminus \Gamma$, such that, with $c_1$ defined in \eqref{eq:defn_c_1_c_2},
\begin{equation} \label{eq:Phi_to_T}
  T^{(\pre)}(\xi) = \diag((c_1\tau)^{-k})_{k=0}^{\theta}\Phi((c_1\tau)^{\theta+1}\xi) \diag \left( e^{-\tau^2g_k(\xi)} \right)_{k=0}^{\theta}, 
\end{equation}     
for $z$ in the intersection between $\Omega_i$ and the $i$-th quadrant for $i = 1, 2, 3, 4$, and by analytic continuation in other parts of its domain.

Then, $T^{(\pre)}(\xi)$ satisfies the following RH problem:
\begin{RHP} \label{RHP:T^pre}
  \begin{enumerate}
  \item \label{enu:RHP:T^pre_1}
    On $\Gamma$, $T^{(\pre)}(\xi)$ satisfies, with $\sigma_1 =(
    \begin{smallmatrix}
      0 & 1 \\
      1 & 0
    \end{smallmatrix})
$,
    \begin{equation} \label{eq:Tjumps}
      T^{(\pre)}_+(\xi) = T^{(\pre)}_-(\xi) J_{T}(\xi), \quad J_{T}(\xi) =
      \begin{cases}
        \begin{pmatrix}
          1 & e^{\pi i\beta}  e^{\tau^2(g_0(\xi)-g_1(\xi))}  \\
          0 & 1  \\
        \end{pmatrix}\oplus I_{\theta-1},
        & \xi \in \Gamma_1, \\
        \begin{pmatrix}
          1 & -e^{-\pi i\beta}  e^{\tau^2(g_0(\xi)-g_1(\xi))}  \\
          0 & 1  \\
        \end{pmatrix}\oplus I_{\theta-1},
        & \xi \in \Gamma_2, \\
        \begin{pmatrix}
          -e^{-\pi i\beta}  &0 \\
          e^{\tau^2(g_1(\xi)-g_0(\xi))} & e^{\pi i\beta}   \\
        \end{pmatrix}\oplus I_{\theta-1},
        & \xi \in (0,1), \\
        \sigma_1\oplus I_{\theta-1},
            & \xi \in (1,\infty), \\
        \Mcyclic,
            & \xi \in \realR_-.
      \end{cases}
    \end{equation}
  \item \label{enu:RHP:T^pre_3}
    $T^{(\pre)}(\xi)$ has the following boundary condition as $\xi \to 0$
    \begin{equation}
      T^{(\pre)} (\xi) =  \xi^{-\frac{\beta}{2\theta}} T^{(\pre)}_0(\xi) \diag \left( \xi^{-\frac{\alpha + 1 - \theta}{\theta}}, 1, \xi^{\frac{1}{\theta}}, \xi^{\frac{2}{\theta}}, \dotsc, \xi^{\frac{\theta - 1}{\theta}} \right)E\diag \left( e^{-\tau^2g_k(\xi)} \right)_{k=0}^{\theta},
    \end{equation}
    where $T^{(\pre)}_0$ is analytic near $\xi=0$  and $E$ the matrix defined in RH problem \ref{RHP:general_model} in the way that $E$ in $\Omega_2$ (resp.~ $\Omega_3$) here is equal to $E$ in the second (resp.~the third) quadrant there.
  \item \label{enu:RHP:T^pre_2}
    $T^{(\pre)}(\xi)$ has the following boundary condition as $\xi \to \infty$ 
      
     \begin{equation}\label{eq:Tinfty_pre}
      T^{(\pre)}(\xi) = \left( I + \bigO(\xi^{-1}) \right) \Upsilon(\xi) \Omega_{\pm} \left( I + \bigO(\xi^{-\frac{1}{\theta + 1}}) \right),
    \end{equation}
    where the $\bigO(\xi^{-1})$ has an asymptotic expansion as power series of $\xi^{-1}$, and the $\bigO(\xi^{-1/(\theta + 1)})$ has an asymptotic expansion as power series of $\xi^{-1/(\theta + 1)}$.
  \end{enumerate}
\end{RHP}

From the asymptotic expansion \eqref{eq:Tinfty_pre}, we find that $T(\xi)$ has an asymptotic expansion
\begin{equation}
  T(\xi) = \Upsilon(\xi) \Omega_{\pm} \left( I+\sum^{\infty}_{n = 1} B_n \xi^{-\frac{n}{\theta + 1}} \right),
\end{equation}
and the right-hand side, as a formal Puiseux series, satisfies the condition in Lemma \ref{lem:RH_for_Phi_0}. Hence, we can use Lemma \ref{lem:RH_for_Phi_0} and get a constant lower triangular matrix $C_0(\tau)$ that is independent of $\xi$ whose diagonal entries are all $1$, such that if we define
\begin{equation} \label{eq:T_from_T^pre}
  T(\xi) = C_0(\tau) T^{(\pre)}(\xi),
\end{equation}
then $T(x)$ satisfies the following RH problem:
\begin{RHP} \label{enu:RHP:T_2}
  \begin{enumerate}
  \item 
    $T(\xi)$ satisfies part \ref{enu:RHP:T^pre_1} of RH problem \ref{RHP:T^pre}, with $T^{(\pre)}(\xi)$ replaced by $T(\xi)$ in \eqref{eq:Tjumps}.
  \item
    $T(\xi)$ satisfies part \ref{enu:RHP:T^pre_3} of RH problem \ref{RHP:T^pre}, with $T^{(\pre)}(\xi)$ replaced by $T(\xi)$ and $T_0(\xi)$ replaced by $T_0(\xi) := C_0(\tau) T^{(\pre)}_0(\xi)$ in \eqref{eq:Tjumps}.
  \item
    $T(\xi)$ satisfies a stronger version of part \ref{enu:RHP:T^pre_2} of RH problem \ref{RHP:T^pre}, such that
    \begin{equation}\label{eq:Tinfty}
      T(\xi) = \left( I + \bigO(\xi^{-1}) \right) \Upsilon(\xi) \Omega_{\pm}.
    \end{equation}
  \end{enumerate}
\end{RHP}

\subsection{Global parametrix}

Let, for $j = 0, \dotsc, \theta$,
\begin{align}
  N_j(s) = {}& (s + (\theta + 1)\theta^{-\frac{\theta}{\theta + 1}})^j \left( \frac{s}{s + (\theta + 1)\theta^{-\frac{\theta}{\theta + 1}}} \right)^{\frac{\beta}{2}} \left( \frac{s}{s + \theta^{-\frac{\theta}{\theta + 1}}} \right)^{\frac{1}{2}}, \\
  \N_j(s) = {}& (s + (\theta + 1)\theta^{-\frac{\theta}{\theta + 1}})^j \left( \frac{s}{s + (\theta + 1)\theta^{-\frac{\theta}{\theta + 1}}} \right)^{-\frac{\betatilde + 1}{2}} \left( \frac{s}{s + \theta^{-\frac{\theta}{\theta + 1}}} \right)^{\frac{1}{2}},
\end{align}
be defined on $\compC \setminus [-(\theta + 1) \theta^{-\frac{\theta}{\theta + 1}}, 0]$ such that both $N_j(s)$ and $\N_j(s)$ are $s^j + \bigO(s^{j - 1})$ as $s \to \infty$. Then define the $(\theta + 1) \times (\theta + 1)$ matrix-valued functions $P^{(\infty, \pre)}(\xi) = (P^{(\infty, \pre)}(\xi))^{\theta}_{j, k = 0}$ and $\widetilde{P}^{(\infty, \pre)}(\xi) = (\widetilde{P}^{(\infty, \pre)}(\xi))^{\theta}_{j, k = 0}$ on $\compC \setminus \realR$ by
\begin{align}
  P^{(\infty, \pre)}_{j, 0}(\xi) = {}& N_j(\Iinvhat^{(\pre)}_2(\xi^{\frac{1}{\theta}})), & P^{(\infty, \pre)}_{j, k}(\xi) = N_j(\Iinvhat^{(\pre)}_1(e^{\frac{2(k - 1)\pi i}{\theta}} \xi^{\frac{1}{\theta}}), \quad k = 1, \dotsc, \theta, \\
  \widetilde{P}^{(\infty, \pre)}_{j, 0}(\xi) = {}& \N_j(\Iinvhat^{(\pre)}_2(\xi^{\frac{1}{\theta}})), & \widetilde{P}^{(\infty, \pre)}_{j, k}(\xi) = \N_j(\Iinvhat^{(\pre)}_1(e^{\frac{2(k - 1)\pi i}{\theta}} \xi^{\frac{1}{\theta}}), \quad k = 1, \dotsc, \theta.
\end{align}
Then we find that $P^{(\infty, \pre)}(\xi)$ satisfies the following RH problem:
\begin{RHP} \label{RHP:Pinfty_limit}
  \begin{enumerate} 
  \item \label{enu:RHP:Pinfty_limit_1}
    On the real axis oriented from left to right, we have
    \begin{align}
      P^{(\infty, \pre)}_+(\xi) = {}& P^{(\infty, \pre)}_{-}(\xi)
      \begin{cases}
        \sigma_1\oplus I_{\theta-1},
        & \xi \in (1,\infty), \\
        \diag(
        -e^{-\pi i\beta} , e^{\pi i\beta})\oplus I_{\theta-1},
        & \xi \in (0,1), \\
        \Mcyclic,
        & \xi \in (-\infty,0).
      \end{cases} \label{eq:PinftyJump} \\
      \widetilde{P}^{(\infty, \pre)}_+(\xi) = {}& \widetilde{P}^{(\infty, \pre)}_{-}(\xi)
      \begin{cases}
        \sigma_1\oplus I_{\theta-1},
        & \xi \in (1,\infty), \\
        \diag(
        e^{\pi i\betatilde} , -e^{-\pi i\betatilde})\oplus I_{\theta-1},
        & \xi \in (0,1), \\
        \Mcyclic,
        & \xi \in (-\infty,0).
      \end{cases} \label{eq:PtildeinftyJump}
    \end{align}
    
  \item \label{enu:RHP:Pinfty_limit_3}
    $P^{(\infty, \pre)}(\xi)$ and $\widetilde{P}^{(\infty, \pre)}(\xi)$ have the following boundary conditions as $\xi \to 0$
    \begin{align}
      P^{(\infty, \pre)}(\xi)= {}& \xi^{-\frac{\beta}{2\theta}} P^{(\infty, \pre)}_0(\xi) \diag \left( \xi^{-\frac{\alpha + 1 - \theta}{\theta}}, 1, \xi^{\frac{1}{\theta}}, \xi^{\frac{2}{\theta}}, \dotsc, \xi^{\frac{\theta - 1}{\theta}} \right) E^{(\infty)}, \label{eq:Pzero} \\
      \widetilde{P}^{(\infty, \pre)}(\xi)= {}& \widetilde{P}^{(\infty, \pre)}_0(\xi) \diag \left( \xi^{-\frac{\betatilde}{2}}, \xi^{\frac{\betatilde + 2\theta - 1}{2\theta}}, \dotsc, \xi^{\frac{\betatilde + 5}{2\theta}} \xi^{\frac{\betatilde + 3}{2\theta}}, \xi^{\frac{\betatilde + 1}{2\theta}} \right) ((E^{(\infty)})^{-1})^T \Xicheck^{-1}_{\pm}, \label{eq:Ptildezero}
    \end{align}
    where $P^{(\infty, \pre)}_0(\xi)$ and $\widetilde{P}^{(\infty, \pre)}_0(\xi)$ are analytic near the origin, $\arg(\xi)\in(-\pi,\pi)$, $\Xicheck^{-1}_{\pm}$ is defined in \eqref{eq:defn_Xicheck_real}, and $E^{(\infty)}$ is a constant matrix in $\compC_+$ and $\compC_-$, such that $E^{(\infty)} = I_{1 \times 1} \oplus C_{\theta \times \theta}$ in $\compC_+$, where $C_{\theta \times \theta}$ is defined in \eqref{eq:defn_E_region_II}. Moreover, $P^{(\infty, \pre)}(\xi)$ and $\widetilde{P}^{(\infty, \pre)}(\xi)$ have the following boundary conditions as $\xi \to 1$:
    \begin{align}
      \choice{P}^{(\infty, \pre)}_{1, k}(\xi) = {}& \bigO((\xi - 1)^{-1/4}), & \choice{P}^{(\infty, \pre)}_{2, k}(\xi) = {}& \bigO((\xi - 1)^{-1/4}), & \choice{P}^{(\infty, \pre)}_{j, k}(\xi) = {}& \bigO(1),
    \end{align}
     where $\bullet = \widetilde{\,}$ or empty, $k = 0, 1, \dotsc, \theta$ and $j = 2, \dotsc, \theta$.
    
  \item \label{enu:RHP:Pinfty_limit_2}
    $ P^{(\infty, \pre)}(\xi)$ and $\widetilde{P}^{(\infty, \pre)}(\xi)$ have the following boundary condition as $\xi \to \infty$
    \begin{equation}\label{eq:Pinfty_pre}
      \choice{P}^{(\infty, \pre)}(\xi) = \left( I + \bigO(\xi^{-1}) \right) \Upsilon(\xi) \Omega_{\pm} \left( I + \bigO(\xi^{-\frac{1}{\theta + 1}}) \right), \quad \text{where $\bullet = \widetilde{\,}$ or empty}. 
    \end{equation}
    where the $\bigO(\xi^{-1})$ has an asymptotic expansion as power series of $\xi^{-1}$, and the $\bigO(\xi^{-1/(\theta + 1)})$ has an asymptotic expansion as power series of $\xi^{-1/(\theta + 1)}$.
  \end{enumerate}
\end{RHP}

Then like \eqref{eq:T_from_T^pre}, we find that there exist lower triangular matrices $C'_0$ and $\widetilde{C}'_0$ that are independent of $\xi$ whose diagonal entries are all $1$, such that if we define
\begin{align}
  P^{(\infty)}(\xi) = {}& C'_0 P^{(\infty, \pre)}(\xi), & \widetilde{P}^{(\infty)}(\xi) = \widetilde{C}'_0 P^{(\infty, \pre)}(\xi),
\end{align}
then $P^{(\infty)}(x)$ and $\widetilde{P}^{(\infty)}(x)$ satisfy the following RH problem:
\begin{RHP} \label{RHP:P^infty_Ptilde^infty}
  \begin{enumerate}
  \item 
    $P^{(\infty)}(\xi)$ and $\widetilde{P}^{(\infty)}(\xi)$ satisfy part \ref{enu:RHP:Pinfty_limit_1} of RH problem \ref{RHP:Pinfty_limit}, with $P^{(\infty, \pre)}(\xi)$ replaced by $P^{(\infty)}(\xi)$ in \eqref{eq:PinftyJump} and $\widetilde{P}^{(\infty, \pre)}(\xi)$ replaced by $\widetilde{P}^{(\infty)}(\xi)$ in \eqref{eq:PtildeinftyJump}.
  \item
    $P^{(\infty)}(\xi)$ and $\widetilde{P}^{(\infty)}(\xi)$ satisfy part \ref{enu:RHP:Pinfty_limit_3} of RH problem \ref{RHP:Pinfty_limit}, with $P^{(\infty, \pre)}(\xi)$ replaced by $P^{(\infty)}(\xi)$ in \eqref{eq:Pzero} and $P^{(\infty, \pre)}_0(\xi)$ replaced by $P^{(\infty)}_0(\xi) := C'_0 P^{(\infty, \pre)}_0(\xi)$ in \eqref{eq:Ptildezero}.
  \item
    $P^{(\infty)}(\xi)$ and $\widetilde{P}^{(\infty)}(\xi)$ satisfy a stronger version of part \ref{enu:RHP:Pinfty_limit_2} of RH problem \ref{RHP:Pinfty_limit}, such that
    \begin{align} 
      P^{(\infty)}(\xi) = {}& \left( I + \bigO(\xi^{-1}) \right) \Upsilon(\xi) \Omega_{\pm}, & \widetilde{P}^{(\infty)}(\xi) = {}& \left( I + \bigO(\xi^{-1}) \right) \Upsilon(\xi) \Omega_{\pm},
    \end{align}
  \end{enumerate}
\end{RHP}

We note that by an argument analogous to the proof of the uniqueness part of Proposition \ref{lem:existence_H_to_S}, $P^{(\infty)}(\xi)$ and $\widetilde{P}^{(\infty)}(\xi)$ are uniquely determined by RH problem \ref{RHP:P^infty_Ptilde^infty}.

At last, we define
\begin{equation}
  \check{P}^{(\infty)}(\xi) = \frac{1}{\theta + 1} e^{\frac{\theta}{\theta + 1} \pi i} \xi^{-\frac{\theta}{\theta + 1}} J_{\theta + 1} \widetilde{P}^{(\infty)}(\xi) \Xicheck_{\pm},
\end{equation}
and find that $\check{P}^{(\infty)}(\xi)$ satisfies the RH problem that $(P^{(\infty)}(\xi)^{-1})^T$ satisfies. Hence, $\check{P}^{(\infty)}(\xi) = (P^{(\infty)}(\xi)^{-1})^T$, or equivalently,
\begin{equation} \label{eq:relation_Pintfy_Pinfty_tilde}
  \widetilde{P}^{(\infty)}(\xi) = (\theta + 1) e^{-\frac{\theta}{\theta + 1} \pi i} \xi^{\frac{\theta}{\theta + 1}} J_{\theta + 1} P^{(\infty)}(\xi) \Xicheck^{-1}_{\pm}.
\end{equation}
   
\subsection{Local parametrices}

Recall the analytic function $f(z)$ defined in a neighbourhood of $1$, as defined in \eqref{eq:defn_f_Airy}. Let $\delta > 0$ be a small enough constant, and we assume that $\{ \tau^{4/3} f(\xi) : \xi \in \Gamma \}$ overlaps with $\Gamma_{\Ai}$ in the neighbourhood $D(1,\delta)$. In $D(1,\delta)$, we construct a local parametrix $P^{(1)}(\xi)$ satisfying the following RH problem as follows.
\begin{equation}
  P^{(1)}(\xi)=E(\xi)\left( \left(\sigma_1 \Phi^{(\Ai)}(\tau^{\frac{4}{3}}f(\xi))\sigma_1 \diag(e^{\frac{1}{2}(1\pm 1)\pi i}e^{\mp \frac{\pi i \beta}{2}}, e^{\pm \frac{\pi i \beta}{2}}) e^{\frac{1}{2}\tau^2(g_{1}(\xi)-g_{0}(\xi))\sigma_3}\right)\oplus I_{\theta-1}\right),
\end{equation} 
where the sign follows $\xi \in \compC_{\pm}$, and with $\Phi^{(\Ai)}_{\infty}$ defined in \eqref{eq:defn_Psi^Ai_infty},
\begin{equation}
  E(\xi)= P^{(\infty)}(\xi) \left( \left(\sigma_1 \Phi^{(\Ai)}_{\infty}(\tau^{\frac{4}{3}}f(\xi))\sigma_1 \diag(e^{\frac{1}{2}(1\pm 1)\pi i}e^{\mp \frac{\pi i \beta}{2}}, e^{\pm \frac{\pi i \beta}{2}})\right)^{-1}\oplus I_{\theta-1}\right),
\end{equation} 
Also we define the local parametrix $P^{(0)}(\xi) = (P^{(0)}_{j, k}(\xi))^{\theta}_{j, k = 0}$ in $D(0, \delta) \setminus \realR$ as 
\begin{equation}
  P^{(0)}_{j, k}(\xi) = P^{(\infty)}_{j, k}(\xi) +
  \begin{cases}
    \frac{-\xi^{\frac{\beta + 1}{2}}}{2\pi i} \int^{2\delta}_0 \eta^{-\frac{\beta + 1}{2}} e^{\tau^2(g_1(\eta)-g_0(\eta))} P^{(\infty)}_{j, 1}(\eta) \frac{d\eta}{\eta - \xi}, & k = 0,\ j = 0, 1, \dotsc, \theta, \\
    0, & \text{otherwise}.
  \end{cases}
\end{equation}

\subsection{Small norm argument}

Finally, let
\begin{equation}
  R(\xi) = T(\xi) \times
  \begin{cases}
    P^{(0)}(\xi)^{-1}, & \lvert \xi \rvert < \delta, \\
    P^{(1)}(\xi)^{-1}, & \lvert \xi - 1 \rvert < \delta, \\
    P^{(\infty)}(\xi)^{-1}, & \text{otherwise}.
  \end{cases}
\end{equation}
We find that $R(\xi)$ is well defined and analytic in $D(0, \delta)$ and $D(1, \delta)$, and satisfies the following RH problem:
\begin{RHP}
  $R(\xi)$ is a  matrix-valued function defined on $\compC \setminus \Gamma^{(R)}$, where $\Gamma^{(R)} = (\delta, 1 - \delta) \cup \partial D(0, \delta) \cup \partial D(1, \delta) \cup \Gamma_{1, \delta} \cup \Gamma_{2, \delta}$, with $\Gamma_{i, \delta} = \Gamma_i \setminus \overline{D(1, \delta)}$ for $i = 1, 2$. Let $\partial D(0, \delta)$ and $\partial D(1, \delta)$ be oriented clockwise, and other contours have the same orientation as $\Gamma$ in RH problems \ref{RHP:T^pre} and \ref{enu:RHP:T_2}.
  \begin{itemize}
  \item
    For $\xi \in \Gamma^{(R)}$,
    \begin{multline}
      R_+(\xi) = R_-(\xi) J_R(\xi), \quad \text{where} \quad
      J_R(\xi) = \\
      \begin{cases}
        P^{(\infty)}(\xi) \left(
        \begin{pmatrix}
          1 & e^{\pi i\beta}  e^{\tau^2(g_0(\xi)-g_1(\xi))}  \\
          0 & 1  \\
        \end{pmatrix}\oplus I_{\theta-1} \right) P^{(\infty)}(\xi)^{-1}, & \xi \in \Gamma_{1, \delta}, \\
        P^{(\infty)}(\xi) \left(
        \begin{pmatrix}
          1 & -e^{-\pi i\beta}  e^{\tau^2(g_0(\xi)-g_1(\xi))} \\
          0 & 1  \\
        \end{pmatrix}\oplus I_{\theta-1} \right) P^{(\infty)}(\xi)^{-1}, & \xi \in \Gamma_{2, \delta}, \\
        P^{(\infty)}_-(\xi) \left(
        \begin{pmatrix}
          1  &0 \\
          -e^{\pi i \beta} e^{\tau^2(g_1(\xi)-g_0(\xi))} & 1 \\
        \end{pmatrix}\oplus I_{\theta-1} \right) P^{(\infty)}_-(\xi)^{-1},
        & \begin{aligned} \xi \in {}& (\delta,1 - \delta)  \end{aligned} \\
        P^{(\infty)}(\xi) P^{(0)}(\xi)^{-1}, & \lvert \xi \rvert = \delta, \\
        P^{(\infty)}(\xi) P^{(1)}(\xi)^{-1}, & \lvert \xi - 1 \rvert = \delta.
      \end{cases}
    \end{multline}
  \item
    As $\xi \to \infty$,
    \begin{equation}
      R(\xi) = I + \bigO(\xi^{-1}).
    \end{equation} 
  \end{itemize}
\end{RHP}
Since as $\tau \to +\infty$, we have that $J_{R}(\xi) = \bigO(\tau^{-2})$, we use the standard small norm argument to find that $\lvert R(\xi) - I \rvert = \bigO(\tau^{-2})$ everywhere in $\compC \setminus \Gamma^{(R)}$. Below we use this estimate in two regions.

\paragraph{Approximation in $D(1, \delta)$}

In this region, $T(\xi) = R(\xi) P^{(1)}(\xi)$, and by \eqref{eq:Phi_to_T}, we have that for $\xi \in D(1, \delta) \cap \Omega_2$, if we denote $Q(\xi) = R(\xi) P^{(\infty)}(\xi)$ there, then
\begin{multline} \label{eq:first_est_region_D1_delta}
  \Phi_{0, 1}((c_1\tau)^{\theta+1}\xi) + e^{\beta \pi i} \Phi_{0, 0}((c_1\tau)^{\theta+1}\xi) = T_{0, 1}(\xi) e^{\tau^2 g_0(\xi)} \\
  = e^{\frac{\pi i \beta}{2}} \sqrt{\pi} e^{\frac{\tau^2}{2} (g_0(\xi) + g_1(\xi))}
  \times \left\{ \tau^{\frac{1}{3}} f(\xi)^{\frac{1}{4}} \left[ (i e^{\frac{\pi i \beta}{2}})  Q(\xi)_{0, 0} + (e^{-\frac{\pi i \beta}{2}}) Q(\xi)_{0, 1} \right] \Ai(\tau^{\frac{4}{3}} f(\xi)) \right. \\
    \left. - \tau^{-\frac{1}{3}} f(\xi)^{-\frac{1}{4}} \left[ (-i e^{\frac{\pi i \beta}{2}}) Q(\xi)_{0, 0} + (e^{-\frac{\pi i \beta}{2}}) Q(\xi)_{0, 1} \right] \Ai'(\tau^{\frac{4}{3}} f(\xi)) \right\},
\end{multline}
and
\begin{multline}
     e^{-\frac{1}{2}(1 + \beta) \pi i} (\Phi((c_1\tau)^{\theta+1}\xi))^{-1}_{0, \theta} + e^{\frac{1}{2}(1 + \beta) \pi i} (\Phi((c_1\tau)^{\theta+1}\xi))^{-1}_{1, \theta} = e^{-\frac{1}{2}(1 + \beta) \pi i} (c_1\tau)^{-\theta} T^{-1}_{0, \theta}(\xi) e^{-\tau^2 g_0(\xi)} \\
    = (c_1\tau)^{-\theta} \sqrt{\pi} e^{-\frac{\tau^2}{2} (g_0(\xi) + g_1(\xi))}
  \times \left\{ \tau^{\frac{1}{3}} f(\xi)^{\frac{1}{4}} \left[ (-i e^{-\frac{\pi i \beta}{2}})  Q(\xi)^{-1}_{0, \theta} + (-e^{\frac{\pi i \beta}{2}}) Q(\xi)^{-1}_{1, \theta} \right] \Ai(\tau^{\frac{4}{3}} f(\xi)) \right. \\
    \left. - \tau^{-\frac{1}{3}} f(\xi)^{-\frac{1}{4}} \left[ (i e^{-\frac{\pi i \beta}{2}}) Q(\xi)^{-1}_{0, \theta} + (-e^{\frac{\pi i \beta}{2}}) Q(\xi)^{-1}_{1, \theta} \right] \Ai'(\tau^{\frac{4}{3}} f(\xi)) \right\}.
\end{multline}

By direct calculation using \eqref{eq:Iinvhat_1_at_1} and \eqref{eq:Iinvhat_2_at_1}, we have, as $\xi \to 1$ in $\compC_+$ and $\tau \to +\infty$,
\begin{align}
  Q_{0, 0}(\xi) = {}& -i e^{-\frac{\pi i \beta}{2}} \theta^{-\frac{\beta + 1}{2}} \left( \frac{\theta + 1}{2} \right)^{\frac{1}{4}} (1 - \xi^{\frac{1}{\theta}})^{-\frac{1}{4}} (1 + \bigO(\tau^{-2}) + \bigO(\xi - 1)), \\
  Q_{0, 1}(\xi) = {}& e^{\frac{\pi i \beta}{2}} \theta^{-\frac{\beta + 1}{2}} \left( \frac{\theta + 1}{2} \right)^{\frac{1}{4}} (1 - \xi^{\frac{1}{\theta}})^{-\frac{1}{4}} (1 + \bigO(\tau^{-2}) + \bigO(\xi - 1)), \\
  \intertext{and by relation \eqref{eq:relation_Pintfy_Pinfty_tilde}, we also have}
  Q^{-1}_{0, \theta}(\xi) = {}& \frac{1}{\theta + 1} e^{\frac{\theta}{\theta + 1} \pi i} e^{\frac{\pi i \betatilde}{2}} \theta^{\frac{\betatilde}{2}} \left( \frac{\theta + 1}{2} \right)^{\frac{1}{4}} (1 - \xi^{\frac{1}{\theta}})^{-\frac{1}{4}} (1 + \bigO(\tau^{-2}) + \bigO(\xi - 1)), \\
  Q^{-1}_{1, \theta}(\xi) = {}& \frac{-i}{\theta + 1} e^{\frac{\theta}{\theta + 1} \pi i} e^{-\frac{\pi i \betatilde}{2}} \theta^{\frac{\betatilde}{2}} \left( \frac{\theta + 1}{2} \right)^{\frac{1}{4}} (1 - \xi^{\frac{1}{\theta}})^{-\frac{1}{4}} (1 + \bigO(\tau^{-2}) + \bigO(\xi - 1)).
\end{align}

We then have, if $\lvert \xi - 1 \rvert < C \tau^{-4/3}$ for some $C$ and $\xi \in \Omega_2$,
\begin{multline} \label{eq:Phi_Omega_2}
  \Phi_{0, 1}((c_1\tau)^{\theta+1}\xi) + e^{\beta \pi i} \Phi_{0, 0}((c_1\tau)^{\theta+1}\xi) = 2 e^{\frac{\pi i \beta}{2}} \sqrt{\pi} \theta^{-\frac{\beta + 1}{2}} e^{\frac{\tau^2}{2} (g_0(\xi) + g_1(\xi))} \\
  \times \left( \frac{\theta(\theta + 1)}{2} \right)^{\frac{1}{4}} [-f'(1)]^{\frac{1}{4}} \tau^{\frac{1}{3}} \left( \Ai \left( \frac{2^{\frac{1}{3}} \theta}{(\theta - 1)^{\frac{2}{3}} (\theta + 1)^{\frac{5}{3}}} \tau^{\frac{4}{3}} (1 - \xi) \right) + \bigO(\tau^{-\frac{2}{3}}) \right),
\end{multline}
\begin{multline} \label{eq:Phi_inverse_Omega_2}
  e^{-\frac{1}{2}(1 + \beta) \pi i} (\Phi((c_1\tau)^{\theta+1}\xi))^{-1}_{0, \theta} + e^{\frac{1}{2}(1 + \beta) \pi i} (\Phi((c_1\tau)^{\theta+1}\xi))^{-1}_{1, \theta} = (c_1\tau)^{-\theta}\frac{2}{\theta + 1} \sqrt{\pi} \theta^{\frac{\betatilde}{2}} e^{-\frac{\tau^2}{2} (g_0(\xi) + g_1(\xi))} \\
  \times \left( \frac{\theta(\theta + 1)}{2} \right)^{\frac{1}{4}} [-f'(1)]^{\frac{1}{4}} \tau^{\frac{1}{3}} \left( \Ai \left( \frac{2^{\frac{1}{3}} \theta}{(\theta - 1)^{\frac{2}{3}} (\theta + 1)^{\frac{5}{3}}} \tau^{\frac{4}{3}} (1 - \xi) \right) + \bigO(\tau^{-\frac{2}{3}}) \right).
\end{multline}
Similar calculations show that \eqref{eq:Phi_Omega_2} and \eqref{eq:Phi_inverse_Omega_2} hold if  $\lvert \xi - 1 \rvert < C \tau^{-4/3}$ and $\xi \in \Omega_1 \cup \Omega_3 \cup \Omega_4$. 

\paragraph{Approximation in $D(0, \delta)$ and $(\delta, 1 - \delta)$}

For $\xi \in D(0, \delta)$, $T(\xi) = R(\xi) P^{(0)}(\xi)$, and for $\xi \in (\delta, 1 - \delta)$, $T(\xi) = R_+(\xi) P^{(\infty)}_+(\xi) = R_-(\xi) P^{(\infty)}_-(\xi)$. Similar to the derivations in \eqref{eq:first_est_region_D1_delta} -- \eqref{eq:Phi_inverse_Omega_2}, we have the approximation formula for $\phi^{(\tau)}((c_1 \tau)^{\theta + 1} \xi)$ and $\phitilde^{(\tau)}((c_1 \tau)^{\theta + 1} \xi)$ when $\lvert \xi \rvert < \delta$ or $\xi \in [\delta, 1 - \delta]$. We are not going to give all the details, but only state the result that for all $\xi, \eta \in D(0, 1 - \delta) \cup (\delta, 1 - \delta)$, there exists $\epsilon, C > 0$, such that if $\lvert \xi - \eta \rvert < \epsilon$, then
\begin{align} \label{eq:estimate_phi_phitilde_prod}
  \lvert \phi^{(\tau)}((c_1 \tau)^{\theta + 1} \xi) \phitilde^{(\tau)}((c_1 \tau)^{\theta + 1} \eta) \rvert < {}& e^{-\epsilon \tau^2}, && \text{for all $\tau > C$}.
\end{align}

\subsection{Proof of part \ref{enu:lem:phi_phitilde_sy:2} of Lemma \ref{lem:phi_phitilde_sy} and part \ref{enu:thm:kernel_asy:2} of Theorem \ref{thm:kernel_asy}}

From \eqref{eq:Phi_Omega_2} and \eqref{eq:Phi_inverse_Omega_2}, and their counterparts for  $\lvert \xi - 1 \rvert < C \tau^{-4/3}$ and $\xi \in \Omega_1 \cup \Omega_3 \cup \Omega_4$, we derive \eqref{eq:phi_asy_airy} and \eqref{eq:phitilde_asy_airy}. Thus we prove part \ref{enu:lem:phi_phitilde_sy:2} of Lemma \ref{lem:phi_phitilde_sy}.

To prove part \ref{enu:thm:kernel_asy:2} of Theorem \ref{thm:kernel_asy}, we divide the integral domain on the right-hand side of \eqref{eq:kernel_limit_formula} that defines $K^{(\tau)}(x, y)$ into $[\tau, \tau + \epsilon]$ and $(\tau + \epsilon, +\infty)$. For $\sigma \in [\tau, \tau + \epsilon]$, we use \eqref{eq:Phi_Omega_2} and \eqref{eq:Phi_inverse_Omega_2} to estimate the integrand $\phi^{(\sigma)}((c_1 \tau)^{\frac{\theta + 1}{\theta}} (1 - c_2 \tau^{-\frac{4}{3}})x) \phitilde^{(\sigma)}((c_1 \tau)^{\frac{\theta + 1}{\theta}} (1 - c_2 \tau^{-\frac{4}{3}})y)$, and for $\sigma \in (\tau + \epsilon, +\infty)$, we use \eqref{eq:estimate_phi_phitilde_prod} to estimate the same integrand. In this way, we prove \eqref{eq:thm:kernel_asy:2}.
\appendix

\section{The Airy parametrix} \label{app:Airy}

In this subsection, let $y_0$, $y_1$ and $y_2$ be the functions defined by
\begin{equation}
  y_0(\zeta) = \sqrt{2\pi}e^{-\frac{\pi i}{4}} \Ai(\zeta), \quad y_1(\zeta) = \sqrt{2\pi}e^{-\frac{\pi i}{4}} \omega\Ai(\omega \zeta), \quad y_2(\zeta) = \sqrt{2\pi}e^{-\frac{\pi i}{4}} \omega^2\Ai(\omega^2 \zeta),
\end{equation}
where $\Ai$ is the usual Airy function (cf. \cite[Chapter 9]{Boisvert-Clark-Lozier-Olver10}) and $\omega=e^{2\pi i/3}$. We then define a $2\times 2$ matrix-valued function $\Psi^{(\Ai)}$ by
\begin{equation} \label{eq:defn_Psi_Ai}
  \Psi^{(\Ai)}(\zeta)
  = \left\{
    \begin{array}{ll}
      \begin{pmatrix}
        y_0(\zeta) &  -y_2(\zeta) \\
        y_0'(\zeta) & -y_2'(\zeta)
      \end{pmatrix}, & \hbox{$\arg \zeta \in (0,\frac{2\pi}{3})$,} \\
      \begin{pmatrix}
        -y_1(\zeta) &  -y_2(\zeta) \\
        -y_1'(\zeta) & -y_2'(\zeta)
      \end{pmatrix}, & \hbox{$\arg \zeta \in (\frac{2\pi}{3},\pi)$,} \\
      \begin{pmatrix}
        -y_2(\zeta) &  y_1(\zeta) \\
        -y_2'(\zeta) & y_1'(\zeta)
      \end{pmatrix}, & \hbox{$\arg \zeta \in (-\pi,-\frac{2\pi}{3})$,} \\
      \begin{pmatrix}
        y_0(\zeta) &  y_1(\zeta) \\
        y_0'(\zeta) & y_1'(\zeta)
      \end{pmatrix}, & \hbox{$\arg \zeta \in  (-\frac{2\pi}{3},0)$.}
    \end{array}
  \right.
\end{equation}
It is well-known that $\det (\Psi^{(\Ai)}(z))=1$ and $\Psi^{(\Ai)}(\zeta)$ is the unique solution of the following $2 \times 2$ RH problem; cf. \cite[Section 7.6]{Deift99}.
\begin{RHP} \hfill\label{rhp:Ai}
\begin{enumerate}
\item
  $ \Psi^{(\Ai)}(\zeta)$ is analytic in $\mathbb{C} \setminus \Gamma_{\Ai}$, where the contour $\Gamma_{\Ai}$ is defined in
    \begin{equation} \label{def:AiryContour}
      \Gamma_{\Ai}:=e^{-\frac{2\pi i}{3}}[0,+\infty) \cup \mathbb{R} \cup e^{\frac{2\pi i}{3}}[0,+\infty)
    \end{equation}
 with all rays oriented from left to right.
\item
  For $z \in \Gamma_{\Ai}$, we have
  \begin{equation} \label{eq:Ai_jump}
     \Psi^{(\Ai)}_+(\zeta) =  \Psi^{(\Ai)}_-(\zeta)
    \begin{cases}
      \begin{pmatrix}
        1 & 1 \\
        0 & 1
      \end{pmatrix},
      & \arg \zeta =0, \\
      \begin{pmatrix}
        1 & 0 \\
        1 & 1
      \end{pmatrix},
      & \arg \zeta = \pm \frac{2\pi }{3}, \\
      \begin{pmatrix}
        0 & 1  \\
        -1 & 0
      \end{pmatrix},
      & \arg \zeta = \pi.
    \end{cases}
  \end{equation}
\item
  As $\zeta \to \infty$, we have
    \begin{equation} \label{eq:defn_Psi^Ai_infty}
      \Psi^{(\Ai)}(\zeta) = \Psi^{(\Ai)}_{\infty}(\zeta) (I+\bigO(\zeta^{-\frac32}))e^{-\frac23 \zeta^{\frac32}\sigma_3}. \quad \Psi^{(\Ai)}_{\infty}(\zeta) = \zeta^{-\frac{1}{4} \sigma_3} \frac{1}{\sqrt{2}}
      \begin{pmatrix}
        1 & 1 \\
        -1 & 1
      \end{pmatrix}
      e^{-\frac{\pi i}{4} \sigma_3}
    \end{equation}
  \item
    As $\zeta \to 0$, we have $\Psi^{(\Ai)}_{i, j}(\zeta) = \bigO(1)$, where $i, j = 1, 2$.
\end{enumerate}
\end{RHP}
%

\section{Hard edge local parametrix} \label{sec:Phi^Mei}

Recall the $(\theta + 1) \times (\theta + 1)$ matrix-valued function $\Psi^{(\Mei)}(\xi)$ defined in \cite[RH Problem 3.12]{Wang-Zhang21} (where the variable is denoted $\zeta$). Let 
\begin{multline} \label{eq:Phi^Mei_pre_and_Psi_Mei}
  \Phi^{(\Mei, \pre)}(\xi) = (2\pi)^{1 - \frac{\theta}{2}} \frac{\sqrt{\theta + 1}}{\theta} \xi^{-\frac{\beta}{2\theta}} \Psi^{(\Mei)}(\xi) \\
  \times
  \begin{cases}
    \diag \left( \theta e^{-\frac{\beta}{2} \pi i} \xi^{\frac{\theta - \alpha - 1}{\theta}}, e^{\frac{\beta}{2} \pi i} \right) \oplus \diag \left( e^{-\beta \left( \frac{k - 1}{\theta} - \frac{1}{2} \right) \pi i}\right)^{\theta}_{k = 2}, & \xi \in \compC_+, \\
    \diag \left( -\theta e^{\frac{\beta}{2} \pi i} \xi^{\frac{\theta - \alpha - 1}{\theta}}, e^{-\frac{\beta}{2} \pi i} \right) \oplus \diag \left( e^{-\beta \left( \frac{k - 1}{\theta} - \frac{1}{2} \right) \pi i}\right)^{\theta}_{k = 2}, & \xi \in \compC_-.
  \end{cases}
\end{multline}
We find that $\Phi^{(\Mei, \pre)}(\xi)$ satisfies the boundary condition as $\xi \to \infty$
\begin{equation} \label{eq:Phi^Mei_pro_asy_exp}
   \Phi^{(\Mei, \pre)}(\xi) = \Upsilon(\xi) \Omega_{\pm} \left( I + \sum^{\infty}_{n = 1} A_n \xi^{-\frac{n}{\theta + 1}} \right) e^{-\Lambda(\xi)},
\end{equation}
where $\Upsilon(\xi)$ and $\Omega_{\pm}$ are the same as in \eqref{eq:asyNew}, $A_n$ are constant matrices, and
\begin{equation} \label{eq:defn_Lambda}
  \Lambda(\xi) = (\theta + 1) \xi^{\frac{1}{\theta + 1}} \times
  \begin{cases}
    \begin{pmatrix}
      e^{-\frac{\pi i}{\theta + 1}} & 0 \\
      0 & e^{\frac{\pi i}{\theta + 1}}
    \end{pmatrix}
    \oplus \diag \left( e^{\frac{2j - 1}{\theta + 1} \pi i} \right)^{\theta}_{j = 2}, & \xi \in \compC_+, \\
    \begin{pmatrix}
      e^{\frac{\pi i}{\theta + 1}} & 0 \\
      0 & e^{-\frac{\pi i}{\theta + 1}}
    \end{pmatrix}
    \oplus \diag \left( e^{\frac{2j - 1}{\theta + 1} \pi i} \right)^{\theta}_{j = 2}, & \xi \in \compC_-.
  \end{cases}
\end{equation}
We note that the series in \eqref{eq:Phi^Mei_pro_asy_exp} is an asymptotic expansion and does not converge.

Entries of $\Psi^{(\Mei)}$ and $\Phi^{(\Mei, \pre)}$ are expressed by Meijer G-functions. The asymptotic formula \eqref{eq:Phi^Mei_pro_asy_exp} is due to the asymptotic expansion of Meijer G-functions in \cite[Theorem 5 in Section 5.7 and Theorem 2 in Section 5.10]{Luke69}. The leading terms of the asymptotic expansion are given in \cite[Equations (3.64), (3.79), (3.80) and (3.81)]{Wang-Zhang21}.

We have the following lemma whose proof is by term-by-term calculation:
\begin{lemma} \label{lem:RH_for_Phi_0}
  Suppose $\Phi_0(\xi)$ is a formal Puiseux series of a $(\theta + 1) \times (\theta + 1)$ matrix-valued function
  \begin{equation}
    \Phi_0(\xi) = I+\sum^{\infty}_{n = 1} B_n \xi^{-\frac{n}{\theta + 1}},
  \end{equation}
  and it satisfies, if the power functions all take the principal branch, for $x \in \realR \setminus \{ 0 \}$,
  \begin{equation} \label{eq:RH_for_Phi_0}
    \Upsilon_+(x) \Omega_+ \Phi_{0, +}(x) = \Upsilon_-(x) \Omega_- \Phi_{0, -}(x) \times
    \begin{cases}
      \begin{pmatrix}
        0 &  1 \\
        1 & 0
      \end{pmatrix}
      \oplus I_{\theta - 1}, & x > 0, \\
      M_{\cyclic}, & x < 0,
    \end{cases}
  \end{equation}
  then there exists a constant lower triangular matrix $C$ whose diagonal entries are all $1$, such that
  \begin{equation} \label{eq:lower_triangular_C}
    C\Upsilon(\xi) \Omega_{\pm} \Phi_0(\xi) = (I + \bigO(\xi^{-1})) \Upsilon(\xi) \Omega_{\pm}.
  \end{equation}
\end{lemma}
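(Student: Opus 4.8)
The proof is the ``term-by-term calculation'' announced in the statement, organised around the observation that the left factor $\Upsilon(\xi)\Omega_{\pm}$ already carries all the jumps appearing in \eqref{eq:RH_for_Phi_0}. Precisely, by \eqref{eq:UpsilonOmegajump} the ``bare'' function $B(\xi):=\Upsilon(\xi)\Omega_{\pm}$ satisfies $B_+(\xi)=B_-(\xi)\bigl[\left(\begin{smallmatrix}0&1\\1&0\end{smallmatrix}\right)\oplus I_{\theta-1}\bigr]$ for $\xi>0$ and $B_+(\xi)=B_-(\xi)\Mcyclic$ for $\xi<0$, which are exactly the jump matrices in \eqref{eq:RH_for_Phi_0}. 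Hence, setting $H(\xi):=\Upsilon(\xi)\Omega_{\pm}\Phi_0(\xi)$, the hypothesis \eqref{eq:RH_for_Phi_0} reads $H_+=H_-J$ with the same $J$, so the (formal) conjugate
\[
  G(\xi):=H(\xi)B(\xi)^{-1}=\Upsilon(\xi)\Omega_{\pm}\,\Phi_0(\xi)\,\Omega_{\pm}^{-1}\Upsilon(\xi)^{-1}
\]
has no jump across $\realR\setminus\{0\}$: $G_+=H_+B_+^{-1}=H_-JJ^{-1}B_-^{-1}=G_-$.

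The plan is then to identify $G$ as a formal Laurent series in $\xi$ with a lower-triangular leading term. Writing $\Upsilon(\xi)=C_{\Upsilon}\,\xi^{\mathsf D/(\theta+1)}$ with $C_{\Upsilon}=\diag(e^{-k\pi i/(\theta+1)})_{k=0}^{\theta}$ and $\mathsf D=\diag(0,1,\dots,\theta)$, and inserting $\Phi_0(\xi)=I+\sum_{n\ge1}B_n\xi^{-n/(\theta+1)}$, one finds that the $(j,k)$ entry of the $n$-th summand of $G$ is a constant (the $(j,k)$ entry of $C_{\Upsilon}\Omega_{\pm}B_n\Omega_{\pm}^{-1}C_{\Upsilon}^{-1}$) times $\xi^{(j-k-n)/(\theta+1)}$. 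Since $0\le j,k\le\theta$ and $n\ge1$, the exponent numerator $m=j-k-n$ ranges over integers $\le\theta-1<\theta+1$, so $G$ is a formal series in the powers $\xi^{m/(\theta+1)}$, $m\le\theta-1$. Grouping terms by the residue of $m$ modulo $\theta+1$ and using that in the principal branch crossing $(-\infty,0)$ multiplies $\xi^{m/(\theta+1)}$ by $e^{2\pi i m/(\theta+1)}\ne1$ whenever $(\theta+1)\nmid m$, the vanishing of the jump of $G$ forces — by linear independence of the distinct monomials $|\xi|^{m/(\theta+1)}$ — all coefficients with $(\theta+1)\nmid m$ to vanish. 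Thus $G(\xi)$ is a formal Laurent series in $\xi$. Among the powers $m\le\theta-1$ divisible by $\theta+1$ only $m=0$ is non-negative, so $G(\xi)=G_0+\bigO(\xi^{-1})$ for a constant matrix $G_0$; and a term contributes to $m=0$ only when $n=j-k$ with $n\ge1$, i.e. $j>k$, so together with the leading $I$ we get $G_0=I+(\text{strictly lower triangular})$. In particular $G_0$ is lower triangular with all diagonal entries $1$, hence invertible with $G_0^{-1}$ of the same form.

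Finally I would take $C:=G_0^{-1}$: then $CG(\xi)=I+\bigO(\xi^{-1})$, and right-multiplying by $B(\xi)=\Upsilon(\xi)\Omega_{\pm}$ gives $C\Upsilon(\xi)\Omega_{\pm}\Phi_0(\xi)=(I+\bigO(\xi^{-1}))\Upsilon(\xi)\Omega_{\pm}$, which is \eqref{eq:lower_triangular_C}. The only genuinely delicate point is making precise the meaning of ``a formal Puiseux series satisfying a jump relation'' and justifying the linear-independence step that upgrades ``no jump'' to ``Laurent series''; once that is settled, checking that the relation $\Omega_-=\Omega_+\bigl[\left(\begin{smallmatrix}0&1\\1&0\end{smallmatrix}\right)\oplus I_{\theta-1}\bigr]$ makes the two branches $\xi\in\compC_{\pm}$ yield the same constant $C$, and the explicit power-counting above, are routine.
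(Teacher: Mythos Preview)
Your argument is correct and supplies exactly the term-by-term calculation the paper announces but omits. One refinement: since $G$ is built with $\Omega_+$ in $\compC_+$ and $\Omega_-$ in $\compC_-$, the no-jump condition on $\realR_+$ (where $\Upsilon$ and $\Phi_0$ are continuous but $\Omega_+\ne\Omega_-$) is what first identifies the two coefficient sequences $A_m^{\pm}$, and only then does the $\realR_-$ monodromy argument force $A_m\sin\bigl(m\pi/(\theta+1)\bigr)=0$; you allude to this at the end as a routine check, but it belongs in the main chain of reasoning rather than as an afterthought.
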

It is straightforward to check that the Puiseux series in \eqref{eq:Phi^Mei_pro_asy_exp} satisfies \eqref{eq:RH_for_Phi_0} in Lemma \ref{lem:RH_for_Phi_0} with $\Phi_0$ replaced by this Puiseux series. Thus we have a constant lower triangular matrix $C$ whose diagonal entries are all $1$ as in \eqref{eq:lower_triangular_C}, and define
\begin{equation} \label{eq:defn_Phi^Mei}
  \Phi^{(\Mei)}(\xi) = C \Phi^{(\Mei, \pre)}(\xi),
\end{equation}
and it satisfies the following RH problem:
\begin{RHP} \label{RHP:Phi_Mei}
  $\Phi^{(\Mei)}(\xi)$ is a $(\theta + 1) \times (\theta + 1)$ matrix-valued function on $\compC$ except for $\realR$ and $i\realR$. It satisfies
  \begin{enumerate}
  \item $\Phi^{(\Mei)}_+(\xi) = \Phi^{(\Mei)}_-(\xi) J^{(\theta)}_{\Phi}(\xi)$, where $J^{(\theta)}_{\Phi}(\xi)$ is defined in \eqref{eq:defn_J_hard_to_soft}.
  \item
    $\Phi^{(\Mei)}(\xi)$ has the following boundary condition as $\xi \to \infty$
    \begin{equation}
      \Phi^{(\Mei)}(\xi) = \left(I+ \bigO(\xi^{-1} ) \right) \diag \left( e^{-\frac{k}{\theta + 1} \pi i} \xi^{\frac{k}{\theta + 1}} \right)^{\theta}_{k = 0} \Omega_{\pm} e^{-\Lambda(\xi)},
    \end{equation}
    where $\Lambda(\xi)$ is defined in \eqref{eq:defn_Lambda}, and $\Omega_{\pm}$ are defined in \eqref{def:Lpm}
  \item 
    $\Phi^{(\Mei)}(\xi)$ has the following boundary condition as $z \to 0$
    \begin{equation}
      \Phi(\xi) =  \xi^{-\frac{\beta}{2\theta}} N^{(\Mei)}(\xi) \diag \left( \xi^{-\frac{\alpha + 1 - \theta}{\theta}}, 1, \xi^{\frac{1}{\theta}}, \xi^{\frac{2}{\theta}}, \dotsc, \xi^{\frac{\theta - 1}{\theta}} \right)E,
    \end{equation}
    where $N^{(\Mei)}(\xi)$ is analytic at $0$, and $E$ is the same as in RH problem \ref{RHP:general_model}. In the sector $\arg \xi \in (\pi/2, \pi)$, $E$ is defined by \eqref{eq:defn_E_region_II} and \eqref{eq:E_log}.
  \end{enumerate}
\end{RHP}

Similar to the argument above, we also define, from the $(\theta + 1) \times (\theta + 1)$ matrix-valued function $\Psitilde^{(\Mei)}(\xi)$ defined in \cite[RH Problem 4.11]{Wang-Zhang21}, that
\begin{multline} \label{eq:Phi^Mei_inverse_pre_and_Psitilde_Mei}
  \Phitilde^{(\Mei, \pre)}(\xi) = (2\pi)^{\frac{\theta}{2}} \sqrt{\theta + 1} e^{\left( \frac{\alpha}{\theta} - \frac{1}{2\theta}(\betatilde + 1) \right) \pi i} \xi^{\frac{\betatilde + 1}{2\theta}} \Psitilde^{(\Mei)}(\xi) \\
  \times
  \begin{cases}
    \diag \left( e^{-\frac{2\alpha}{\theta} \pi i} \xi^{\frac{\alpha}{\theta}}, e^{\frac{1}{\theta}(\betatilde + 1) \pi i} \right) \oplus \diag \left( e^{\frac{k}{\theta}(\betatilde + 1) \pi i}\right)^{\theta}_{k = 2}, & \xi \in \compC_+, \\
    \diag \left( e^{\frac{1}{\theta}(\betatilde + 1) \pi i} \xi^{\frac{\alpha}{\theta}}, -e^{-\frac{2\alpha}{\theta} \pi i} \right) \oplus \diag \left( e^{\frac{k}{\theta}(\betatilde + 1) \pi i}\right)^{\theta}_{k = 2}, & \xi \in \compC_-.
  \end{cases}
\end{multline}
Then there exists a lower triangular matrix $\Ctilde$ whose diagonal entries are all $1$, such that
\begin{equation} \label{eq:defn_Phitilde^Mei}
  \Phitilde^{(\Mei)}(\xi) = \Ctilde \Phitilde^{(\Mei, \pre)}(\xi),
\end{equation}
and we find that it satisfies the following RH problem:
\begin{RHP}
  $\Phitilde^{(\Mei)}(\xi)$ is a $(\theta + 1) \times (\theta + 1)$ matrix-valued function on $\compC$ except for $\realR$ and $i\realR$. It satisfies
  \begin{enumerate}
  \item $\Phitilde^{(\Mei)}_+(\xi) = \Phitilde^{(\Mei)}_-(\xi) J^{(\theta)}_{\Phitilde}(\xi)$, where $J^{(\theta)}_{\Phitilde}(\xi)$ is defined in \eqref{eq:defn_J_hard_to_soft_q}.
  \item
    $\Phitilde^{(\Mei)}(\xi)$ has the following boundary condition as $\xi \to \infty$
    \begin{equation}
      \Phitilde^{(\Mei)}(\xi) = \left(I+ \bigO(\xi^{-1} ) \right) \diag \left( e^{-\frac{k}{\theta + 1} \pi i} \xi^{\frac{k}{\theta + 1}} \right)^{\theta}_{k = 0} \Omega_{\pm} e^{\Lambda(\xi)},
    \end{equation}
    where $\Lambda(\xi)$ is defined in \eqref{eq:defn_Lambda}, and $\Omega_{\pm}$ are defined in \eqref{def:Lpm}
  \item 
    $\Phitilde^{(\Mei)}(\xi)$ has the following boundary condition as $z \to 0$
    \begin{equation}
      \Phitilde^{(\Mei)}(\xi) = \N^{(\Mei)}(\xi) \diag \left( \xi^{-\frac{\betatilde}{2}}, \xi^{\frac{\betatilde + 2\theta - 1}{2\theta}}, \dotsc, \xi^{\frac{\betatilde + 5}{2\theta}} \xi^{\frac{\betatilde + 3}{2\theta}}, \xi^{\frac{\betatilde + 1}{2\theta}}, \right) (E^{-1})^T \Xicheck^{-1}_{\pm},
    \end{equation}
    where $\N^{(\Mei)}(\xi)$ is analytic at $0$, and $E$ is the same as in RH problem \ref{RHP:general_model}. In the sector $\arg \xi \in (\pi/2, \pi)$, $E$ is defined by \eqref{eq:defn_E_region_II} and \eqref{eq:E_log}.
  \end{enumerate}
\end{RHP}

We note that the first row of $\Phi^{(\Mei)}(\xi)$ (resp.~$\Phitilde^{(\Mei)}(\xi)$) is the same as that of $\Psi^{(\Mei)}(\xi)(\xi)$ (resp.~$\Psi^{(\Mei)}(\xi)$).

Also, by arguments similar to those in Section \ref{subsubsec:Phi_q}, we have that
\begin{align} \label{eq:Phitilde^Mei_and_Phi^Mei_inverse}
  \Phitilde^{(\Mei)}(\xi) = (\theta + 1) e^{-\frac{\theta}{\theta + 1} \pi i} \xi^{\frac{\theta}{\theta + 1}} J_{\theta + 1} (\Phi^{(\Mei)}(\xi)^{-1})^T \Xicheck^{-1}_{\pm}.
\end{align}

\section{Limiting mapping $J^{(\pre)}$} \label{sec:limiting_J}

We consider the mapping
\begin{equation}\label{eq:J_mapping}
  J^{(\pre)}(\sigma) = -(-\sigma)^{\frac{\theta + 1}{\theta}} + (\theta + 1) \theta^{-\frac{\theta}{\theta + 1}} (-\sigma)^{\frac{1}{\theta}},
\end{equation}
We find that
\begin{align}
  J^{(\pre)}(\sigma_0) = {}& 1, & \frac{d}{d\sigma} J^{(\pre)}(\sigma_0) = {}& 0, & \frac{d^2}{d\sigma^2} J^{(\pre)}(\sigma_0) = {}& -(\theta + 1) \theta^{-\frac{2}{\theta + 1}}, & \sigma_0 = {}& -\theta^{-\frac{\theta}{\theta +1}}.
\end{align}
Also we have a contour $\gammahat^{(\pre)}$ that is from $e^{-\pi i/(\theta + 1)} \cdot \infty$, passing through $\sigma_0$, and to $e^{\pi i/(\theta + 1)} \cdot \infty$, such that $\Im J^{(\pre)}(\sigma) = 0$ for $\sigma \in \gammahat^{(\pre)}$. We denote $\gammahat^{(\pre)}_{\pm} := \gammahat^{(\pre)} \cap \compC_{\pm}$. We also denote the region to the right of $\gammahat^{(\pre)}$ as $\D^{(\pre)}$. It is straightforward to check that $J^{(\pre)}(\sigma)$ maps the region $\compC \setminus \overline{\D^{(\pre)}}$ univalently to $\compC \setminus (1, +\infty)$, and it maps the region $\D^{(\pre)} \setminus (0, +\infty)$ univalently to $\halfH \setminus (1, +\infty)$. We denote the inverse mapping of $J^{(\pre)}(\sigma)$ in these two regions $\Iinvhat^{(\pre)}_1$ and $\Iinvhat^{(\pre)}_2$, respectively. It follows from the  definition of the mappings $\Iinvhat^{(\pre)}_1$ and $\Iinvhat^{(\pre)}_2$ that  
\begin{align} \label{eq:PreImage}
  \Iinvhat^{(\pre)}_{1,\pm }((1,\infty)) = {}& \gammahat^{(\pre)}_{\pm}, & \Iinvhat^{(\pre)}_{2,\pm }((1,\infty)) = {}& \gammahat^{(\pre)}_{\mp}.
\end{align}
We also have, as $z \to 1$,
\begin{align}
  \Iinvhat^{(\pre)}_1(z) = {}& \sigma_0 - (1 - z)^{\frac{1}{2}} \theta^{\frac{1}{\theta + 1}} \sqrt{\frac{2}{\theta + 1}} \left( 1 + \frac{\theta - 1}{3} \sqrt{\frac{2}{\theta + 1}} (1 - z)^{\frac{1}{2}} + \bigO(1 - z) \right), \label{eq:Iinvhat_1_at_1} \\
  \Iinvhat^{(\pre)}_2(z) = {}& \sigma_0 + (1 - z)^{\frac{1}{2}} \theta^{\frac{1}{\theta + 1}} \sqrt{\frac{2}{\theta + 1}} \left( 1 - \frac{\theta - 1}{3} \sqrt{\frac{2}{\theta + 1}} (1 - z)^{\frac{1}{2}} + \bigO(1 - z) \right). \label{eq:Iinvhat_2_at_1}
\end{align}

\end{document}